%
%
%
%
%
%
%
\documentclass[%
reprint,
amsmath,amssymb,
aps,
]{revtex4-1}

\usepackage{graphicx}
\usepackage{dcolumn}
\usepackage{bm}



\usepackage{stmaryrd} 
\usepackage{amsthm}  
\usepackage{IEEEtrantools} 
\usepackage{dsfont} 

\usepackage{mathtools} 

\newtheorem{lem}{Lemma}
\newtheorem{theorem}{Theorem}

\newcommand{\Id}[1]{\boldsymbol{\mathbb{I #1}}}

\begin{document}

\preprint{APS/123-QED}

\title{Scattering matrix pole expansions \\ \& \\ invariance with respect to R-matrix parameters}


\author{Pablo Ducru}
\email{p\_ducru@mit.edu ; pablo.ducru@polytechnique.org}
\altaffiliation[Also from ]{\'Ecole Polytechnique, France.}
\affiliation{%
Massachusetts Institute of Technology\\
Department of Nuclear Science \& Engineering\\
77 Massachusetts Avenue, Cambridge, MA, 02139 U.S.A.\\
}%
\author{Vladimir Sobes}%
\email{sobesv@ornl.gov}
\affiliation{%
Oak Ridge National Laboratory\\
Directorate of Nuclear Science \& Engineering\\
1 Bethel Valley Road, Oak Ridge, TN, 37831 U.S.A.
}%

\author{Gerald Hale}
\email{ghale@lanl.gov}
\author{Mark Paris}%
\email{mparis@lanl.gov}
\affiliation{%
Los Alamos National Laboratory\\
Theoretical Division (T-2)\\
MS B283, Los Alamos, NM 87454 U.S.A.\\
}%

\author{Benoit Forget}
\email{bforget@mit.edu}
\affiliation{%
Massachusetts Institute of Technology\\
Department of Nuclear Science \& Engineering\\
77 Massachusetts Avenue, Cambridge, MA, 02139 U.S.A.\\
}%


%


\date{\today}

\begin{abstract}

A plethora of nuclear interactions have been modeled with R-matrix theory, upon which our current nuclear data libraries are based \cite{ENDF_VIII}.
The R-matrix scattering model can parametrize the energy dependence of the scattering matrix in different ways. This article studies and establishes new results on three such sets of parameters --- the Wigner-Eisenbud, the Brune, and the Siegert-Humblet parameters --- showing how the latter two arise from the invariance of the scattering matrix to the Wigner-Eisenbud boundary condition $B_c$.
We exhibit how these parametrizations strike an arbitrage between the complexity of their parameters, and the complexity of the energy dependence they entail for the scattering matrix. In this, our work bridges R-matrix theory with the well-known pole expansions of the scattering matrix developed by Humblet and Rosenfeld \cite{Theory_of_Nuclear_Reactions_I_resonances_Humblet_and_Rosenfeld_1961}.

With respect to invariance to the channel radius $a_c$, we establish for the first time that the scattering matrix's invariance to $a_c$ sets a partial differential equation on the widths of the Kapur-Peierls operator $\boldsymbol{R}_{L}$. This enables us to derive an explicit transformation of the Siegert-Humblet radioactive residue widths $\left\{r_{j,c}\right\}$ under a change of channel radius $a_c$.

Considering the continuation of the scattering matrix to complex energies, several new results are established.
We unveil there exist more Brune parameters than previously thought, depending on the way the scattering matrix is continued to complex energies.
This points to a broader conundrum in the field: how to analytically continue the scattering matrix $\boldsymbol{U}(E)$ while opening and closing channels at the threshold energies.
On this question, we show that the legacy of Lane \& Thomas to force-close channels below threshold -- by defining the shift $S_c$ and penetration $P_c$ functions as the real and imaginary parts of the outgoing wave function reduced logarithmic derivative $L_c$ in order to annul elements of the scattering matrix $U_{cc'}$ below threshold -- presents serious drawbacks when considering complex wavenumbers: this introduces nonphysical, spurious poles to the scattering matrix, while breaking its analytic properties. We show that the roots of the outgoing wave functions $O_c$ should introduce spurious poles to the scattering matrix, but that analytic continuation of the scattering matrix $U_{cc'}$ with constant Wronskian cancels them out.
Moreover, we show that analytic continuation of R-matrix operators enforces the generalized unitarity conditions described by Eden \& Taylor \cite{Eden_and_Taylor}.
Finally, in the case of massive particles, we propose a solution to the conundrum of how to close the channels below thresholds, by invoking both a quantum tunneling argument, whereby the transmission matrix is evanescent bellow threshold, and a physical argument based on the definition of the cross section as the ratio of probability currents.

\end{abstract}

\maketitle


\section{\label{sec:Introduction}Introduction}

Myriad scattering phenomena in nuclear physics are modeled with R-matrix theory, with applications to nuclear simulation, radiation transport, astrophysics and cosmology, and extending to particle physics or atomistic and molecular simulation \cite{Blatt_and_Weisskopf_Theoretical_Nuclear_Physics_1952}\cite{Lane_and_Thomas_1958}\cite{atomistic_R_matrix_2011}\cite{ Computational_methods_in_R-matrix-2007, polyatomic-molecules-using-R-matrix-Royal-Society,  Burke-R-matrix-theory-of-atomic-processes, electron-molecule-scattering-R-matrix-Burke-1979, QB-method_R-matrix_1996}.
In essence, a scattering event takes different incoming particle-waves and lets them interact through a given Hamiltonian to produce different possible outcomes.
R-matrix theory studies the particular two-body-in/two-body-out model of this scattering event, with the fundamental assumption that the total Hamiltonian is the superposition of a short-range, interior Hamiltonian, which is null after a given channel radius $a_c$, and a long-range, exterior Hamiltonian which is well known (free particles or Coulomb potential, for instance).
This partitioning, along with an orthogonality assumption of channels at the channel boundary, is what we could call the \textit{R-matrix scattering model}, described by Kapur and Peierls in their seminal article \cite{Kapur_and_Peierls_1938}, unified by Bloch in \cite{Bloch_1957}, and reviewed by Lane and Thomas in \cite{Lane_and_Thomas_1958}.
It is possible to study the general energy and wavenumber dependence of the scattering matrix emerging from this model, and such expansions were thoroughly studied by Humblet and Rosenfeld in \cite{Theory_of_Nuclear_Reactions_I_resonances_Humblet_and_Rosenfeld_1961,Theory_of_Nuclear_Reactions_II_optical_model_Rosenfeld_1961,Theory_of_Nuclear_Reactions_III_Channel_radii_Humblet_1961_channel_Radii,Theory_of_Nuclear_Reactions_IV_Coulomb_Humblet_1964,Theory_of_Nuclear_Reactions_V_low_energy_penetrations_Jeukenne_1965,Theory_of_Nuclear_Reactions_VI_unitarity_Humblet_1964,Theory_of_Nuclear_Reactions_VII_Photons_Mahaux_1965,Theory_of_Nuclear_Reactions_VIII_evolutions_Rosenfeld_1965,Theory_of_Nuclear_Reactions_IX_few_levels_approx_Mahaux_1965}. Alternatively, this wave-number dependence of the different possible outgoing particle-waves stemming from the R-matrix model can be parametrized, for calculablility reasons, in several ways.

Over the years, different such parametrizations have been proposed for this same R-matrix model. The one that has come to prevail in the nuclear physics community is the Wigner-Eisenbud parametrization \cite{Wigner_and_Eisenbud_1947, Bloch_1957, Lane_and_Thomas_1958}.
There are good reasons for this: the Wigner-Eisenbud parameters are unconstrained real parameters --- i.e. though physically and statistically correlated, any set of real parameters is physically acceptable (though not necessarily likely nor present in nature) ---
that separate and parametrize the interior interaction Hamiltonian (usually an intractable many-body nuclear problem) from the exterior one (usually a well-known free-body or Coulomb Hamiltonian with analytic Harmonic expansions).
Thus, Wigner and Eisenbud constructed a parametrization of the scattering matrix for calculability purposes: introducing simple real parameters that help de-correlate what happens in the inner interaction region from the asymptotic outer region.

But there exists other parametrizations of the R-matrix model, all with their advantages and disadvantages for interpretability or for subsequent treatment.
In particular, we shall here focus on three such parametrizations: the Wigner-Eisenbud (which we will refer to as the $\boldsymbol{R}_B$ for reasons that will become evident)\cite{Wigner_and_Eisenbud_1947}; the Brune (which we will henceforth refer to as $\boldsymbol{R}_S$)\cite{Brune_2002}; and the Siegert-Humblet (henceforth $\boldsymbol{R}_{L}$)\cite{Siegert, Humblet_thesis}.
Since they all parametrize the wave-number dependence of the same R-matrix scattering model, theses three parameteric representations are physically equivalent, and one can go from one set of parameters to another through mathematical transformations we will hereafter explicit. For the same reasons, these parametrizations must also all be physically equivalent to the scattering matrix expansions of Humblet and Rosenfeld.

This article focuses on various properties of the Wigner-Eisenbud, Brune, and Siegert-Humblet respective parametrizations of the R-matrix model.
We bring new insights to the invariance properties of the three different parametrizations and show that they entail a tradeoff between the complexity of the parameter space and the simplicity of the energy (or wave-number) dependence.
Although invariance to the channel boundary condition $B_c$ has been well studied and is at the core of Brune's alternative parametrization, we here unveil new properties of the Brune transform, in theorems \ref{theo::shadow_Brune_poles}, \ref{theo::analytic_Brune_poles}, and \ref{theo::Choice of Brune poles}, as well as of the poles of the scattering matrix, invariant to the representation chosen. In this, we explicitly bridge the R-matrix parameters to the Humblet-Rosenfeld expansions of the scattering matrix, though equations (\ref{eq::U Mittag Leffler}) and (\ref{eq::u_j scattering residue width}).
We also investigate invariance with respect to the channel radius $a_c$, which is a much less thoroughly explored topic in the literature, leading us to new insights with respect to the residues of the Siegert-Humblet $\boldsymbol{R}_{L}$ parametrization, established in theorems \ref{theo::r_j,c uncer change of a_c} and \ref{theo::r_j,c uncer change of a_c explicit}.
Finally, we argue, through theorems \ref{theo::Analytic continuation of scattering matrix cancels spurious poles}, \ref{theo:: poles of U are Siegert-Humblet poles}, \ref{theo::satisfied generalized unitarity condition}, \ref{theo::evanescence of sub-threshold transmission matrix}, and \ref{theo::Analytic continuation annuls sub-threshold cross sections}, that contrary to what Lane \& Thomas prescribed \cite{Lane_and_Thomas_1958}, the R-matrix parametrization should be analytically continued to complex wave-numbers $k_c\in\mathbb{C}$.

\section{\label{sec:R-matrix equations}R-matrix Wigner-Eisenbud parametrization}

We here recall some fundamental definitions and equations of the Wigner-Eisenbud R-matrix parameters \cite{Wigner_and_Eisenbud_1947, Bloch_1957, Lane_and_Thomas_1958}.
As described in Lane \& Thomas, for each channel $c$, the two-body-in/two-body-out R-matrix model allows one to reduce the many-body system into a reduced one-body system.
All the study is then performed in the reduced system and we consider the wave-number of each channel $k_c$, which we can render dimensionless using the channel radius $a_c$ and defining $\boldsymbol{\rho}=\boldsymbol{\mathrm{diag}}\left(\rho_c\right)$ with $\rho_c = k_c a_c$.

\subsection{\label{subsec:Energy dependence and wavenumber mapping} Energy dependence and wavenumber mapping}

All of the channel wavenumbers link back to one unique total system energy $E$, eigenvalue of the total Hamiltonian.
Conservation of energy entails that this energy $E$ must be the total energy of any given channel $c$ (c.f. equation (5.12), p.557 of \cite{Theory_of_Nuclear_Reactions_I_resonances_Humblet_and_Rosenfeld_1961}):
\begin{equation}
\begin{IEEEeqnarraybox}[][c]{rcl}
E  & \ = \ & E_c = E_{c'} = \hdots \; \; , \; \forall \; c
\IEEEstrut\end{IEEEeqnarraybox}
\label{eq:conservation of energy E = E_c = E_c'}
\end{equation}
Each channel's total energy $E_c$ is then linked to the wavenumber $k_c$ of the channel by its corresponding relation (\ref{eq:rho_c(E) mapping}), say (\ref{eq:rho_c EDA}) and (\ref{eq:E_c as a function of s_c}).

In the semi-classical model described in Lane \& Thomas \cite{Lane_and_Thomas_1958}, we can separate on the one hand massive particles, for which the wavenumber $k_c$ is related to the center-of-mass energy $E_c$ of relative motion of channel $c$ particle pair with masses $m_{c,1}$ and $ m_{c,2}$ as
\begin{equation}
\begin{IEEEeqnarraybox}[][c]{rcl}
k_c  & \ = \ & \sqrt{\frac{2m_{c,1} m_{c,2}}{\left(m_{c,1}+m_{c,2}\right) \mathrm{\hbar}^2} \left(E_c - E_{T_c}\right)}
\IEEEstrut\end{IEEEeqnarraybox}
\label{eq:rho_c massive}
\end{equation}
where $E_{T_c}$ denotes a threshold energy beyond which the channel $c$ is closed, as energy conservation cannot be respected ($E_{T_c} = 0 $ for reactions without threshold).
On the other hand, for a photon particle interacting with a massive body of mass $m_{c,1}$ the center-of-mass wavenumber $k_c$ is linked to the total center-of-mass energy $E_c$ of channel $c$ according to:
\begin{equation}
k_c = \frac{\left( E_c - E_{T_c} \right)}{2 \mathrm{\hbar}\mathrm{c}}\left[ 1 + \frac{m_{c,1} \mathrm{c}^2}{\left( E - E_{T_c} \right) + m_{c,1}\mathrm{c}^2}\right]
\label{eq:rho_c photon}
\end{equation}

Alternatively, in a more unified approach, one can perform a relativistic correction and smooth these differences away by means of the special relativity Mandelstam variable $s_c = (p_{c,1} + p_{c,2})$, also known as the square of the center-of-mass energy, where $p_{c,1}$ and $p_{c,2}$ are the Minkowsky metric four-momenta of the two bodies composing channel $c$, with respective masses $m_{c,1}$ and $ m_{c,2}$ (null for photons). The channel wavenumber $k_c$ can then be expressed as:
\begin{equation}
k_c =  \sqrt{\frac{\left[ s_c - (m_{c,1} + m_{c,2})^2\mathrm{c}^2\right]\left[ s_c - (m_{c,1} - m_{c,2})^2\mathrm{c}^2\right]}{4 \mathrm{\hbar}^2s_c}}
\label{eq:rho_c EDA}
\end{equation}
and the Mandelstam variable $s_c$ can be linked to the center-of-mass energy of the channel $E_c$ through
\begin{equation}
E_c = \frac{ s_c - (m_{c,1} + m_{c,2})^2\mathrm{c}^2}{2 (m_{c,1} + m_{c,2})}
\label{eq:E_c as a function of s_c}
\end{equation}
Interestingly, this is identical to the non-relativistic expression for the center-of-mass energy in terms of the lab energy in whichever channel the total mass $(m_{c,1} + m_{c,2})$ is chosen to be the reference for $E$ (but not in any other).
This special relativistic correction to the non-relativistic R-matrix theory is the approach taken by the EDA code in use at the Los Alamos National Laboratory \cite{EDA_2008, EDA_2015}.

Regarless of the approach taken to link the channel energy $E_c$ to the channel wavenumber $k_c$, conservation of energy (\ref{eq:conservation of energy E = E_c = E_c'}) entails there exists a complex mapping linking the total center-of-mass energy $E$ to the wavenumbers $k_c$, or their associated dimensionless variable $\rho_c= k_c r_c$:
\begin{equation}
\rho_c(E)  \quad \longleftrightarrow \quad E
\label{eq:rho_c(E) mapping}
\end{equation}

Critical properties throughout this article will stem from the analytic continuation of R-matrix operators.
As the outgoing $O_c$ and incoming $I_c$ wave functions are defined according to $\rho_c$ (c.f. section \ref{subsec:External_region_waves} below), the natural variable to perform analytic continuation is thus $\rho_c$, which is equivalent to extending the wavenumbers into the complex plane $k_c \in \mathbb{C}$.
We can see that the mapping (\ref{eq:rho_c(E) mapping}) from complex $k_c$ to complex energies is non-trivial, specially since the wavenumbers are themselves all interconnected.
This creates a multi-sheeted Riemann surface, with branchpoints at each threshold $E_{T_c}$, well documented by Eden \& Taylor \cite{Eden_and_Taylor} (also c.f. section 8 of \cite{Theory_of_Nuclear_Reactions_I_resonances_Humblet_and_Rosenfeld_1961}). 
More precisely, when calculating $\rho_c$ from $E$ one has to chose which sign to assign to $\pm \sqrt{E - E_{T_c}}$ in (\ref{eq:rho_c massive}), or more generally to the mapping (\ref{eq:rho_c EDA}). Figure \ref{fig:mapping rho - E} shows this for the semi-classical case of massive particles (\ref{eq:rho_c massive}), with zero threshold $E_{T_c} = 0$.
Each channel $c$ thus introduces two choices, and hence there are $2^{N_c}$ sheets to the Riemann surface mapping (\ref{eq:conservation of energy E = E_c = E_c'}) to (\ref{eq:rho_c(E) mapping}), with the branch points close or equal to the threshold energies $E_{T_c}$. As we will see, the choice of the sheet will have an impact when finding different R-matrix parameters.

\begin{figure}[ht!!] 
  \centering
  \includegraphics[width=0.5\textwidth]{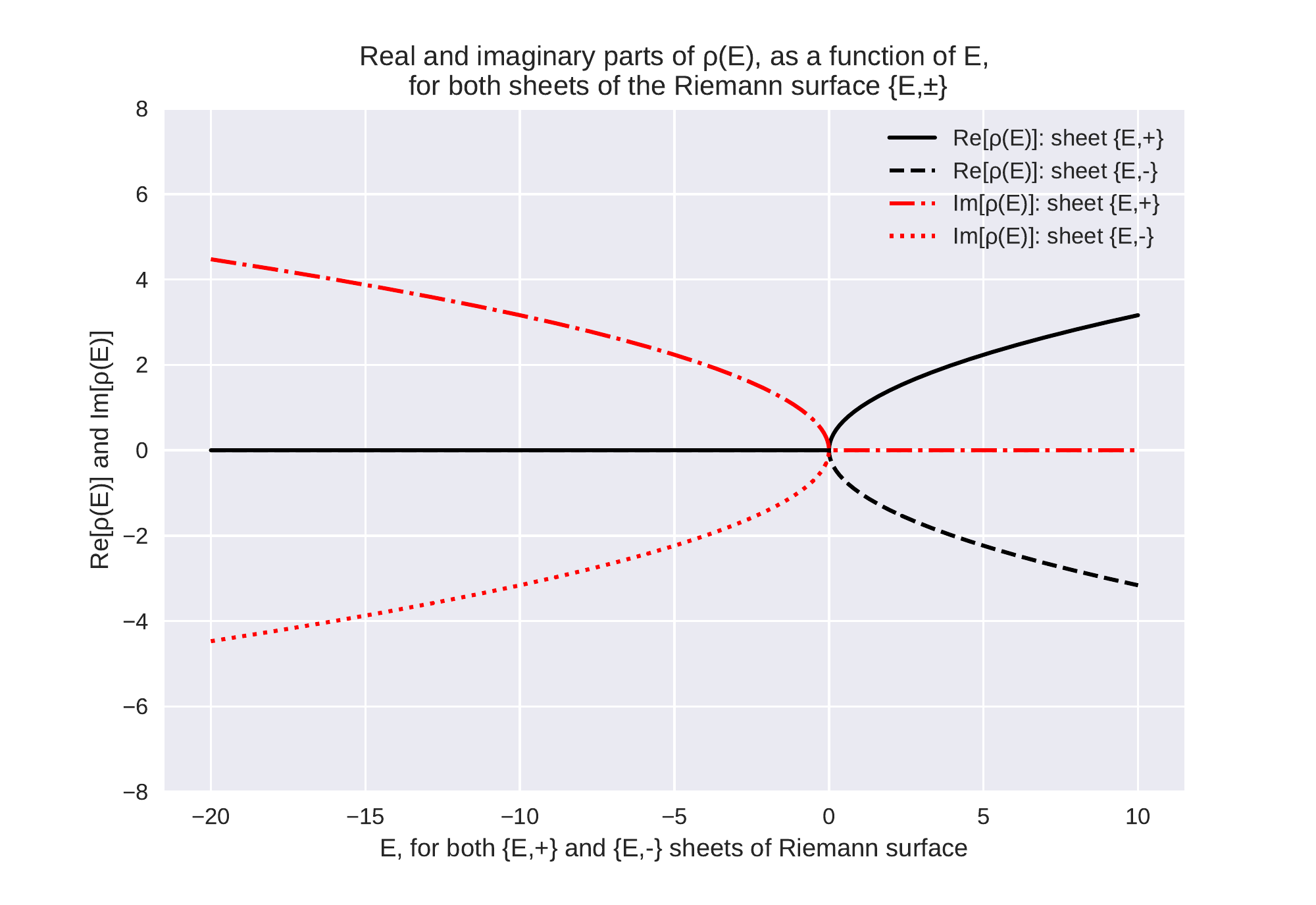}
  \caption{\small{$\rho(E)$ mapping for massive particles in the semi-classical limit (\ref{eq:rho_c massive}). The square root $\rho_c(E) = \pm \rho_0\sqrt{E - E_{T_c}} $ gives rise to two sheets: $\left\{ E,+\right\}$ and $\left\{ E,-\right\}$.}}
  \label{fig:mapping rho - E}
\end{figure}

\subsection{\label{subsec:External_region_waves} External region wave functions}

In the R-matrix model, the external region is subject to either a Coulomb interaction or a free particle movement. In either case, the solutions form a two-dimensional vector space, a basis of which is composed of the incoming and outgoing wave functions: $\boldsymbol{O}(\boldsymbol{k} ) \coloneqq \boldsymbol{\mathrm{diag}}\left(O_c(k_c)\right)$, $\boldsymbol{I}(\boldsymbol{k} ) \coloneqq \boldsymbol{\mathrm{diag}}\left(I_c(k_c)\right)$.
These are Whittaker or confluent hypergeometric function whose analytic continuation is discussed in section II.2.b and the appendix of \cite{Lane_and_Thomas_1958}, and for whose elemental properties and calculation we refer to chapter 14 of \cite{Abramowitz_and_Stegun} and chapter 33 of \cite{NIST_DLMF}, as well as Powell \cite{Powell_1949}, Thompson \cite{Thompson_1986}, and Michel \cite{Michel_2006}.

Note that the incoming and outgoing wave functions are only dependent on the wavenumber of the given channel $k_c$, this is a fundamental hypothesis of the R-matrix model.
For clarity of writing, we will not explicitly write the $k_c$ dependence of these operators unless it is of importance for the argument.

Importantly, the Wronksian of the system is constant: $\forall c, \; \; w_c \ = \ O_c^{(1)}I_c - I_c^{(1)}O_c$, or with identity matrix $\Id{}$:
\begin{equation}
\begin{IEEEeqnarraybox}[][c]{rcl}
\boldsymbol{w} & \ \coloneqq \ & \boldsymbol{O}^{(1)}\boldsymbol{I} - \boldsymbol{I}^{(1)}\boldsymbol{O}\\
& \ = \ & 2\mathrm{i} \Id{}
\IEEEstrut\end{IEEEeqnarraybox}
\label{eq:wronksian expression}
\end{equation}

Of central importance to R-matrix theory is the \textit{Bloch operator}, $\mathcal{L}$, which Claude Bloch introduced as the \textit{op\'erateur de contitions aux limites} in equation (35) of \cite{Bloch_1957}, and that projects the system radially onto the channel boundaries for each channel, at the channel radius $r_c = a_c$. The Bloch operator $\mathcal{L}$ is then added to the Hamiltonian to form a compact Hermitian operator in the internal region (c.f. equation (34) of \cite{Bloch_1957}), from which one can extract a complete discrete generative eigenbasis of the Hilbert space. This is the essence of R-matrix theory, as best described by Claude Bloch in \cite{Bloch_1957}.

This projection on the channel boundaries at $r_c = a_c$, gives rise to the as yet unnamed quantity $\boldsymbol{L}^0$, introduced in equation (1.6a), section VII.1. p.289 of \cite{Lane_and_Thomas_1958}, and which can be recognized in equation (57) of \cite{Bloch_1957}, that is defined for each channel as:
\begin{equation}
\begin{IEEEeqnarraybox}[][c]{rcl}
L_c^0(\rho_c) & \ \coloneqq \ & L_c(\rho_c) - B_c
\IEEEstrut\end{IEEEeqnarraybox}
\label{eq:Bloch L expression}
\end{equation}
where $\rho_c = k_c a_c$ has been projected on the channel surface, $B_c$ is the arbitrary outgoing-wave boundary condition, and $L_c(\rho_c)$ is the dimensionless reduced logarithmic derivative of the outgoing-wave function at the channel surface:
\begin{equation}
\begin{IEEEeqnarraybox}[][c]{rcl}
L_c(\rho_c) & \ \coloneqq \ & \frac{\rho_c}{O_c} \frac{\partial O_c}{\partial \rho_c}
\IEEEstrut\end{IEEEeqnarraybox}
\label{eq:L expression}
\end{equation}
or, equivalently, in matrix notation, and where $\left[ \; \cdot \; \right]^{(1)}$ designates the derivative with respect to $\rho_c$:
\begin{equation}
\begin{IEEEeqnarraybox}[][c]{rcl}
\boldsymbol{L} = \boldsymbol{\mathrm{diag}}\left(L_c\right) = \boldsymbol{\rho}\boldsymbol{O}^{-1}\boldsymbol{O}^{(1)}  \IEEEstrut\end{IEEEeqnarraybox}
\label{eq:L expression matrix}
\end{equation}
so that the $\boldsymbol{L^0}$ matrix function is written: $\boldsymbol{L^0} \coloneqq \boldsymbol{L} - \boldsymbol{B}$.

Using the Powell recurrence formulae \cite{Powell_1949}, R.G. Thomas established the following scheme to calculate the outgoing-wave reduced logarithmic derivatives $L_c$ for different angular momenta $\ell$ values in the Coulomb case (c.f. p.350, appendix of \cite{Lane_and_Thomas_1958}, eqs.(A.12) and (A.13))
\begin{equation}
\begin{IEEEeqnarraybox}[][c]{rcl}
\IEEEstrut
L_\ell & \ = \ & \frac{a_\ell}{b_\ell - L_{\ell-1}} - b_\ell
\end{IEEEeqnarraybox}
\label{eq::L_ell recurrence formula}
\end{equation}
with
\begin{equation}
a_\ell \coloneqq \rho^2 + \left(\frac{\rho \eta}{\ell}\right)^2 \quad , \quad b_\ell \coloneqq \ell + \left(\frac{\rho \eta}{\ell}\right)
\end{equation}
where $\eta = \frac{Z_1 Z_2 e^2 M_\alpha a_c}{\hbar^2 \rho_c}$ is the dimensionless Coulomb field parameter.

In general, both $O_c(\rho)$ and $L_{\ell}(\rho)$ are meromorphic functions of $\rho$ with \textit{a priori} an infinity of poles, and for whose computation we refer to \cite{Powell_1949, Thompson_1986, Michel_2006}.
In lemma \ref{lem::Mittag-Leffler of L_c Lemma}, we here establish the Mittag-Leffler expansion of $L_c(\rho)$.

\begin{lem}\label{lem::Mittag-Leffler of L_c Lemma}
\textsc{Outgoing-wave reduced logarithmic derivative $L_c(\rho)$ Mittag-Leffler Expansion}. \\
The outgoing-wave reduced logarithmic derivative $L_c(\rho)$, defined in (\ref{eq:L expression}), admits the following Mittag-Leffler pole expansion:
\begin{equation}
\begin{IEEEeqnarraybox}[][c]{rcl}
    \frac{L_c(\rho)}{\rho} & = & \frac{- \ell}{\rho} + \mathrm{i} + \sum_{n \geq 1}  \frac{1}{\rho-\omega_n}
\IEEEstrut\end{IEEEeqnarraybox}
\label{eq::Explicit Mittag-Leffler expansion of L_c}
\end{equation}
where $\left\{\omega_n \right\}$ are the roots of the outgoing wavefunctions $O_c(\rho)$.
For neutral particles, there are a finite number of such roots, reported in table \ref{tab::roots of the outgoing wave functions}.
\end{lem}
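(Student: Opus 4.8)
The plan is to recognize that the quantity $L_c(\rho)/\rho$ is nothing but the logarithmic derivative of the outgoing-wave function. Dividing the definition (\ref{eq:L expression}) by $\rho$ gives
\begin{equation}
\frac{L_c(\rho)}{\rho} \;=\; \frac{1}{O_c}\frac{\partial O_c}{\partial \rho} \;=\; \frac{\partial}{\partial \rho}\ln O_c(\rho),
\end{equation}
so the pole structure of $L_c/\rho$ is entirely dictated by the zeros and poles of $O_c$: each simple zero $\omega_n$ of $O_c$ contributes a simple pole of residue $+1$, and a pole of $O_c$ of order $m$ contributes a simple pole of residue $-m$. The strategy is therefore to catalogue the global analytic structure of $O_c(\rho)$ and assemble the Mittag-Leffler series from its zeros and poles, checking that the only entire part is the constant $\mathrm{i}$.

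Next I would pin down three features of $O_c(\rho)$. First, its behaviour at the origin: the outgoing wave is built from the irregular Coulomb (or, for vanishing Coulomb parameter, Riccati--Neumann) solution, which diverges like $\rho^{-\ell}$ as $\rho\to 0$, so $O_c$ has a pole of order $\ell$ at $\rho=0$ and produces the $-\ell/\rho$ term. Second, its behaviour at infinity: by construction $O_c(\rho)\sim e^{\,\mathrm{i}(\rho-\eta\ln 2\rho-\ell\pi/2+\sigma_\ell)}$, whence $\frac{\partial}{\partial\rho}\ln O_c\to \mathrm{i}$ as $\rho\to\infty$ (the $\eta/\rho$ piece vanishing), which fixes the additive constant $\mathrm{i}$. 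Third, its zeros are precisely the $\{\omega_n\}$, each contributing $1/(\rho-\omega_n)$.

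For neutral particles this becomes completely explicit. Using the closed form of the Riccati--Hankel function one has $O_\ell(\rho)=c_\ell\,\rho^{-\ell}e^{\mathrm{i}\rho}\,p_\ell(\rho)$, where $p_\ell$ is a polynomial of degree $\ell$ with $p_\ell(0)\neq 0$; equivalently $\rho^{\ell}e^{-\mathrm{i}\rho}O_\ell(\rho)$ is a degree-$\ell$ polynomial. The zeros $\{\omega_n\}_{1\le n\le \ell}$ are then exactly its $\ell$ roots, which establishes finiteness and yields the count reported in table \ref{tab::roots of the outgoing wave functions}. Taking the logarithmic derivative of this finite factorisation,
\begin{equation}
\frac{L_\ell(\rho)}{\rho}=\frac{\partial}{\partial\rho}\!\left[\mathrm{i}\rho-\ell\ln\rho+\sum_{n=1}^{\ell}\ln(\rho-\omega_n)\right]=\mathrm{i}-\frac{\ell}{\rho}+\sum_{n=1}^{\ell}\frac{1}{\rho-\omega_n},
\end{equation}
reproduces (\ref{eq::Explicit Mittag-Leffler expansion of L_c}) exactly, with a finite sum. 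As an independent check, the same expression follows by induction on $\ell$ from the Thomas recurrence (\ref{eq::L_ell recurrence formula}) with vanishing Coulomb parameter, starting from $L_0=\mathrm{i}\rho$.

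The genuine difficulty lies in the charged-particle case, where $O_c$ is a Whittaker function possessing infinitely many zeros, so that the sum in (\ref{eq::Explicit Mittag-Leffler expansion of L_c}) is a true infinite series. The main obstacle is then to justify the Mittag-Leffler representation \emph{without} convergence-generating subtractions: one must control the asymptotic distribution and density of the zeros $\omega_n$ and bound $\frac{\partial}{\partial\rho}\ln O_c$ on a sequence of expanding contours avoiding the poles, so as to show that the difference between $L_c/\rho$ and the proposed series extends to a bounded entire function, hence a constant, and that this constant is exactly the $\mathrm{i}$ already extracted from the asymptotics. Establishing the requisite growth estimates on the Whittaker function and its zero-counting function is where the real analytic work sits; the origin term $-\ell/\rho$ and the constant $\mathrm{i}$ carry over unchanged from the neutral analysis.
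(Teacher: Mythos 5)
Your proposal is correct and follows essentially the same route as the paper: both rest on the observation that $L_c(\rho)/\rho = \frac{\partial}{\partial\rho}\ln O_c(\rho)$, extract the $-\ell/\rho$ term from the order-$\ell$ pole of $O_c$ at the origin, the constant $\mathrm{i}$ from the large-$\rho$ asymptotics of $O_c$, and the simple unit-residue poles from the zeros $\{\omega_n\}$. The one substantive difference is how the convergence of the series is handled. The paper invokes the Mittag-Leffler theorem with the standard counterterms, writing $\sum_n\left[\frac{1}{\rho-\omega_n}+\frac{1}{\omega_n}\right]$, and then absorbs $L_c^{(1)}(0)+\sum_n 1/\omega_n$ into the single constant $\mathrm{i}$ by evaluating $\lim_{\rho\to\infty}L_c(\rho)/\rho$. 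You instead use the explicit finite factorization $O_\ell(\rho)\propto \rho^{-\ell}e^{\mathrm{i}\rho}p_\ell(\rho)$ in the neutral case — which is precisely the paper's equation (\ref{eq::Outgoing wavefuntion O_ell expression for neutral particles}), obtained there as a corollary of the lemma rather than used as its input — and you candidly flag that the charged-particle case requires growth bounds on $\frac{\partial}{\partial\rho}\ln O_c$ along expanding contours and control of the zero density to justify dropping the counterterms. That flag is well placed: the paper's identification of $L_c^{(1)}(0)+\sum_k 1/\omega_k$ with the limit at infinity tacitly assumes $\sum_k 1/\omega_k$ converges and that the subtracted series may be rearranged, which is immediate only when the number of zeros is finite; your write-up is, if anything, the more careful of the two on this point, and for the neutral case it is complete.
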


\begin{proof}
From definition (\ref{eq:L expression}), $L_c$ is the reduced logrithmic derivative of the outgoing wavefunction $L_c(\rho) \coloneqq  \rho \frac{O_c^{(1)}(\rho)}{O_c(\rho)}$. In both the Coulomb and the neutral particle case, the outgoing wavefunction $O_c(\rho)$ is a confluent hypergeometric function with simple roots $\left\{\omega_n \right\}$. Moreover, their logarithmic derivative $ \frac{O_c^{(1)}(\rho)}{O_c(\rho) }$ is bound at infinity. Thus, the following hypotheses stand:
\begin{itemize}
    \item $L_{\ell}(\rho)$ has simple poles $\left\{ \omega_n \right\}$, zeros of the $O_c(\rho)$,
    \item $L_{\ell}(\rho)$ has residues $\left\{ \omega_n \right\}$ at the $\left\{ \omega_n \right\}$ pole,
    \item $\exists M\in \mathbb{R}$ such as $\left| L_{\ell}(\rho) \right| < M |z|  $ on circles $\mathcal{C}_D$ as $D \longrightarrow \infty$ 
\end{itemize}
By removing the pole of $ \frac{O_c^{(1)}(\rho)}{O_c(\rho) }$ at zero, these hypotheses ensure Mittag-Leffler expansion (\ref{eq::Mittag-Leffler expansion of L_c}) to be verified:
\begin{equation}
\begin{IEEEeqnarraybox}[][c]{rcl}
\frac{L_c(\rho)}{\rho} & = & \frac{L_c(0)}{\rho} + L_c^{(1)}(0) + \sum_{n \geq 1} \left[ \frac{1}{\rho-\omega_n} +  \frac{1}{\omega_n} \right]
\IEEEstrut\end{IEEEeqnarraybox}
\label{eq::Mittag-Leffler expansion of L_c}
\end{equation}
R.G. Thomas' recurrence formula (\ref{eq::L_ell recurrence formula}) implies that $L_c(\rho_c)$ satisfies $L_\ell(0) = - \ell$, for both neutral and charged particles.
Moreover, evaluating $ \frac{O_c^{(1)}(\rho)}{O_c(\rho) }$ at the limit of infinity yeilds:
\begin{equation}
     L_c^{(1)}(0) + \sum_{k\geq 1}\frac{1}{\omega_k} = \underset{\rho \to \infty }{\mathrm{Lim}}\left( \frac{L_c(\rho)}{\rho }\right) = \underset{\rho \to \infty }{\mathrm{Lim}}\left( \frac{O_c^{(1)}(\rho)}{O_c(\rho) }\right) = \mathrm{i}
\end{equation}
so that the Mittag-Leffler expansion (\ref{eq::Mittag-Leffler expansion of L_c}) takes the desired form of (\ref{eq::Explicit Mittag-Leffler expansion of L_c}).






\end{proof}

\begin{table*}
\caption{\label{tab::L_values_neutral} Reduced logarithmic derivative $L_\ell(\rho) \coloneqq \frac{\rho}{O_\ell} \frac{\partial O_\ell}{\partial r}(\rho)$ of outgoing wavefunction $O_\ell(\rho)$, and $L_\ell^0(\rho) \coloneqq L_\ell(\rho) - B_\ell $ using $B_\ell = - \ell$, irreducible forms and Mittag-Leffler pole expansions for neutral particles, for angular momenta $0 \leq \ell \leq 4$. }
\begin{ruledtabular}
\begin{tabular}{c|c|c|c|c}
\ \ & $L_\ell(\rho)$ from recurrence (\ref{eq::L_ell recurrence formula}) & $\begin{array}{c}
     L_\ell^0(\rho) \coloneqq L_\ell(\rho) - B_\ell  \\
     \text{using } B_\ell = - \ell \text{ in (\ref{eq::L_ell recurrence formula})} 
\end{array}$ & $\begin{array}{c}
    L_\ell(\rho) \text{ from lemma  \ref{lem::Mittag-Leffler of L_c Lemma},}  \\
    \text{poles } \big\{ \omega_n\big\} \text{ from table \ref{tab::roots of the outgoing wave functions}} 
\end{array} $ & $\begin{array}{c}
     \text{Outgoing wavefunction } \\
     O_{\ell}(\rho) \text{ from  (\ref{eq::Outgoing wavefuntion O_ell expression for neutral particles})}
\end{array} $ \tabularnewline
\hline
$\ell$  &  $ L_\ell(\rho) = \frac{\rho^2 }{\ell - L_{\ell-1}(\rho)} - \ell $ & $ L_\ell^0(\rho) =  \frac{\rho^2}{2\ell -1 - L_{\ell-1}^0(\rho) }$   & $L_\ell(\rho) = -\ell + \mathrm{i} \rho + \sum_{n\geq 1} \frac{\rho}{\rho - \omega_n}$ &$ O_\ell(\rho)  =  \mathrm{e}^{\mathrm{i}\left(\rho + \frac{1}{2}\ell \pi \right)}\frac{\prod_{n \geq 1}\left(\rho-\omega_n\right)}{\rho^\ell}$  \tabularnewline
\hline \hline
0  &  $\mathrm{i}\rho$  & $\mathrm{i}\rho$ & $\left\{ \emptyset \right\}$ &  $\mathrm{e}^{\mathrm{i}\rho}  $\tabularnewline
1  &  $ \frac{-1 + \mathrm{i}\rho + \rho^2}{1-\mathrm{i}\rho}$ & $ \frac{\rho^2}{1-\mathrm{i}\rho}$  &  $\omega_{1,2}^{\ell = 2} = -\mathrm{i} $ & $\mathrm{e}^{\mathrm{i}\rho}\left(\frac{1}{\rho} - \mathrm{i}\right)  $ \tabularnewline
2   &  $ \frac{- 6 + 6\mathrm{i}\rho + 3 \rho^2 - \mathrm{i}\rho^3 }{3 - 3\mathrm{i}\rho - \rho^2} $ & $ \frac{ \rho^2 - \mathrm{i}\rho^3 }{3 - 3\mathrm{i}\rho - \rho^2} $ & $\omega_{1,2}^{\ell = 2} \approx \pm 0.86602 - 1.5\mathrm{i} $ &  $\mathrm{e}^{\mathrm{i}\rho}\left(\frac{3}{\rho^2} -\frac{3\mathrm{i}}{\rho} - 1\right)  $ \tabularnewline
3  &  $\frac{- 45 + 45 \mathrm{i} \rho + 21\rho^2 - 6\mathrm{i}\rho^3 -\rho^4}{15 - 15\mathrm{i}\rho - 6\rho^2 + \mathrm{i}\rho^3}$ & $\frac{3\rho^2 - 3\mathrm{i}\rho^3 -\rho^4}{15 - 15\mathrm{i}\rho - 6\rho^2 + \mathrm{i}\rho^3}$ & $\begin{array}{rl}
     \omega_1^{\ell = 3} & \approx - 2.32219 \mathrm{i}  \\
     \omega_{2,3}^{\ell = 3}  & \approx \pm 1.75438 - 1.83891 \mathrm{i}  
\end{array} $ &  $\mathrm{e}^{\mathrm{i}\rho}\left(\frac{15}{\rho^3} -\frac{15\mathrm{i}}{\rho^2} -\frac{6}{\rho} + \mathrm{i}\right)  $  \tabularnewline
4  &  $\frac{ - 420 + 420\mathrm{i}\rho + 195\rho^2 -55\mathrm{i}\rho^3 - 10\rho^4 + \mathrm{i}\rho^5}{105 - 105\mathrm{i}\rho -45\rho^2 + 10\mathrm{i}\rho^3 + \rho^4}$ & $\frac{ 15\rho^2 -15\mathrm{i}\rho^3 - 6\rho^4 + \mathrm{i}\rho^5}{105 - 105\mathrm{i}\rho -45\rho^2 + 10\mathrm{i}\rho^3 + \rho^4}$  &  $ \begin{array}{rl}
   \omega_{1,2}^{\ell = 4} & \approx \pm 2.65742 - 2.10379\mathrm{i}   \\
    \omega_{3,4}^{\ell = 4} & \approx \pm 0.867234 - 2.89621 \mathrm{i} 
       \end{array}$ & $\mathrm{e}^{\mathrm{i}\rho}\left(\frac{105}{\rho^4} - \frac{105\mathrm{i}}{\rho^3} -\frac{45}{\rho^2} + \frac{10 \mathrm{i}}{\rho} + 1 \right)  $ \tabularnewline
\end{tabular}
\end{ruledtabular}
\end{table*}

Lemma \ref{lem::Mittag-Leffler of L_c Lemma} establishes the Mittag-Leffer expansion of $L^0_c(\rho_c)$ as a function of the roots $\left\{\omega_n \right\}$ of the outgoing wavefunctions $O_c(\rho)$, which are Hankel functions in the neutral particle case, and Whittaker funtions in the more general case of charged particles (c.f. equations (2.14b) and (2.17) section III.2.b. p.269 of \cite{Lane_and_Thomas_1958}).
Extensive literature covers these functions \cite{Abramowitz_and_Stegun, NIST_DLMF}. In the neutral particules case of Hankel functions \cite{Zeros_of_Hankel_Functions_and_Poles_1963, Complex_zeros_of_cylinder_functions_1966, Zeros_of_Hankel_Functions_1982, Reduced_Logarithmic_Dervative_of_Hankel_functions_1983, Calculating_zeros_of_Hankel_functions_USSR_1985, Partial_fraction_expansion_Bessel_2005} the search for their zeros established that the reduced logarithmic derivative of the outgoing wave function is a rational function of $k_c$ of degree $\ell$.
In the general case there are indeed $\ell$ zeros to the Hankel function for $\left| \Re[\rho] \right| < \ell$, but for $\left| \Re[\rho] \right| > \ell$ there exists an infinity of zeros, on or close to the real axis (c.f. FIG.1\&2 of \cite{Complex_zeros_of_cylinder_functions_1966}).
However, in our particular case of physical (i. e. integer) angular momenta $\ell \in \mathbb{Z}$, the order of the Hankel function happens to be a half-integer: $H_{\ell+1/2}$. Cruicially, Hankel functions of half integer order constitute a very special case: they have only a finite number of zeros in the finite complex plane, where all but $\ell$ of them have migrated to infinity.
This behavior is reported in \cite{Zeros_of_Hankel_Functions_1982}, where one can observe how the zeros of $H_{\nu}$ as $\nu$ varies between two consecutive integer values.
Here, we report in table \ref{tab::roots of the outgoing wave functions} all the algebraically solvable cases of up to $\ell=4$, past which Neils Abel and Evariste Galois theorems do not guarantee solvability of $\left\{\omega_n\right\}$ by radicals (c.f. Abel-Ruffini theorem and Galois theory).

Another perspective over this property is that in the neutral particle case, $\eta = 0$ and $L_{\ell=0}(\rho) = \mathrm{i}\rho$, so that recurrence relation (\ref{eq::L_ell recurrence formula}) entails $L_c(\rho_c)$ -- and thus the $\boldsymbol{L^0}$ function -- is a rational fraction in $\rho_c$, whose irreducible expressions are reported in table \ref{tab::L_values_neutral} along with their partial fraction decomposition, established in lemma \ref{lem::Mittag-Leffler of L_c Lemma}, and whose poles are documented in table \ref{tab::roots of the outgoing wave functions}.
Moreover, since definition (\ref{eq:L expression}) entails $\frac{\partial O_c}{\partial \rho}(\rho) = \frac{L_c}{\rho}(\rho) O_c(\rho)$, a direct integration of (\ref{eq::Mittag-Leffler expansion of L_c}) yields (with the correct multiplicative constant):
\begin{equation}
\begin{IEEEeqnarraybox}[][c]{rcl}
O_\ell(\rho) & = & \mathrm{e}^{\mathrm{i}\left(\rho + \frac{1}{2}\ell \pi \right)}\frac{\prod_{n \geq 1}\left(\rho-\omega_n\right)}{\rho^\ell} 
\IEEEstrut\end{IEEEeqnarraybox}
\label{eq::Outgoing wavefuntion O_ell expression for neutral particles}
\end{equation}
This expression converges for neutral particles as the number of poles is finite (c.f. discussion after theorem \ref{theo::r_j,c uncer change of a_c explicit}), and using Vieta's formulas with the denominator of $L_\ell(\rho)$ enables to construct the developed forms reported in table \ref{tab::L_values_neutral}.

Similar results do not hold for the charged particules case of Whittaker functions, where there always exists an infinity of zeros to the outgoing wavefunction \cite{Zeros_of_Whittaker_function_1999, New_Asymptotics_Whittaker_zeros_2001}, and where a Coulomb phase shift would be present for any Weierstrass expansion in infinite product of type (\ref{eq::Outgoing wavefuntion O_ell expression for neutral particles}).

\subsection{\label{subsec:Internal region parameters} Internal region parameters}

Projections upon the orthonormal basis formed by the eigenvectors of the Hamiltonian completed by the Bloch operator $\mathcal{L}$ allow for the parametrization of the interaction Hamiltonian in the internal region by means of the Wigner-Eisenbud \textit{resonance parameters} \cite{Bloch_1957}, composed of both the real \textit{resonance energies} $E_\lambda \in \mathbb{R}$, and the real \textit{resonance widths} $\gamma_{\lambda,c} \in \mathbb{R}$.
From the latter, and using Brune's notation $\boldsymbol{e} \coloneqq \boldsymbol{\mathrm{diag}}\left( E_\lambda \right)$ and $\boldsymbol{\gamma}\coloneqq \boldsymbol{\mathrm{mat}}\left(\gamma_{\lambda,c}\right)_{\lambda,c}$, the \textit{Channel R matrix}, $\boldsymbol{R}$, is defined as
\begin{equation}
\begin{IEEEeqnarraybox}[][c]{rcl}
\boldsymbol{R} & \; \coloneqq \; &  \sum_{\lambda=1}^{N_\lambda}\frac{\gamma_{\lambda,c}\gamma_{\lambda,c'}}{E_\lambda - E} \; = \; \boldsymbol{\gamma}^\mathsf{T} \left(\boldsymbol{e} - E\Id{}\right)^{-1}\boldsymbol{\gamma}
\IEEEstrut\end{IEEEeqnarraybox}
\label{eq:R expression}
\end{equation}
and the \textit{Level A matrix}, $\boldsymbol{A}$, is defined through its inverse:
\begin{equation}
\begin{IEEEeqnarraybox}[][c]{rcl}
\boldsymbol{A^{-1}} & \ \coloneqq \ & \boldsymbol{e} - E\Id{} - \boldsymbol{\gamma}\left( \boldsymbol{L} - \boldsymbol{B} \right)\boldsymbol{\gamma}^\mathsf{T}
\IEEEstrut\end{IEEEeqnarraybox}
\label{eq:inv_A expression}
\end{equation}
where $\boldsymbol{B} = \boldsymbol{\mathrm{diag}}\left( B_c \right)$ is the outgoing-wave boundary condition, which is arbitrary, constant (non-dependent on the wavenumber), and for which Bloch demonstrated that if it is real (i.e. $B_c \in \mathbb{R}$), then the Wigner-Eisenbud resonance parameters are also real \cite{Bloch_1957}.
From this, one can view the Wigner-Eisenbud parameters as the set of channel radii $a_c$, boundary conditions $B_c$, resonance widths $\gamma_{\lambda, c}$, resonance energies $E_\lambda$ and thresholds $E_{T_c}$. This set of parameters fully determine the energy (or wavenumber) dependence of the scattering matrix $\boldsymbol{U}$ through equation (\ref{eq:U expression}).

\subsection{\label{subsec:Scattering matrix} Scattering matrix}

As explained by Claude Bloch, the genius of the R-matrix theory is that it can combine the internal region with the external region to simply express the resulting scattering matrix $\boldsymbol{U}$ (also called \textit{collision matrix}, and often noted $\boldsymbol{S}$, though we here stick to the Lane \& Thomas scripture $\boldsymbol{U}$ for the scattering matrix) as:
\begin{equation}
\begin{IEEEeqnarraybox}[][c]{rcl}
\boldsymbol{U} & \ = \ & \boldsymbol{O}^{-1}\boldsymbol{I} + \boldsymbol{w} \boldsymbol{\rho}^{1/2} \boldsymbol{O}^{-1}\left[ \boldsymbol{R}^{-1} + \boldsymbol{B} -  \boldsymbol{L} \right]^{-1}  \boldsymbol{O}^{-1} \boldsymbol{\rho}^{1/2} \\
& \ = \ & \boldsymbol{O}^{-1}\boldsymbol{I} + 2\mathrm{i} \boldsymbol{\rho}^{1/2} \boldsymbol{O}^{-1} \boldsymbol{\gamma}^\mathsf{T} \boldsymbol{A} \boldsymbol{\gamma} \boldsymbol{O}^{-1} \boldsymbol{\rho}^{1/2} \\
& \ = \ & \boldsymbol{O}^{-1}\boldsymbol{I} + 2\mathrm{i} \boldsymbol{\rho}^{1/2} \boldsymbol{O}^{-1} \boldsymbol{R}_{L} \boldsymbol{O}^{-1} \boldsymbol{\rho}^{1/2}
\IEEEstrut\end{IEEEeqnarraybox}
\label{eq:U expression}
\end{equation}
The equivalence between these channel and level matrix expressions stems from the identity $ \left[\boldsymbol{R}^{-1} + \boldsymbol{B} - \boldsymbol{L}\right]^{-1} = \boldsymbol{\gamma}^\mathsf{T} \boldsymbol{A} \boldsymbol{\gamma}$ which defines the \textit{Kapur-Peirels operator}, $\boldsymbol{R}_{L}$:
\begin{equation}
\begin{IEEEeqnarraybox}[][c]{rcl}
\boldsymbol{R}_{L}  & \ \coloneqq  \ &  \left[ \boldsymbol{R}^{-1} - \boldsymbol{L^0} \right]^{-1}  = \left[ \boldsymbol{R}^{-1} + \boldsymbol{B} -  \boldsymbol{L} \right]^{-1}  \\ & = &  \boldsymbol{\gamma}^\mathsf{T} \boldsymbol{A} \boldsymbol{\gamma}
\IEEEstrut\end{IEEEeqnarraybox}
\label{eq:Kapur-Peirels Operator and Channel-Level equivalence}
\end{equation}
The Kapur-Peirels operator $\boldsymbol{R}_{L}$ will play a central role in this article and is thoroughly discussed in section \ref{sec:R_L Siegert and Humblet}. 
Identity (\ref{eq:Kapur-Peirels Operator and Channel-Level equivalence}) can be proved by means of the \textit{Woodbury identity}:
\begin{equation}
\begin{IEEEeqnarraybox}[][c]{rcl}
\left[ \boldsymbol{A} + \boldsymbol{B}\boldsymbol{D}^{-1}\boldsymbol{C}\right]^{-1}   & \ = \ & \boldsymbol{A}^{-1} - \boldsymbol{A}^{-1}\boldsymbol{B}\left[ \boldsymbol{D} + \boldsymbol{C}\boldsymbol{A}^{-1}\boldsymbol{B} \right]^{-1}\boldsymbol{C}\boldsymbol{A}^{-1}
\IEEEstrut\end{IEEEeqnarraybox}
\label{eq:Woodbury identity}
\end{equation}
Indeed, the application of the Woodbury identity (\ref{eq:Woodbury identity}) to equality (\ref{eq:Kapur-Peirels Operator and Channel-Level equivalence}), with $\boldsymbol{A}_{\mathrm{Wood}} = \boldsymbol{R}^{-1}$, $\boldsymbol{B}_{\mathrm{Wood}} = \boldsymbol{L^0}$, and $\boldsymbol{C}_{\mathrm{Wood}} = \boldsymbol{D}_{\mathrm{Wood}} = \Id{}$ yields
\begin{equation*}
\begin{IEEEeqnarraybox}[][c]{r}
\IEEEstrut
\left[\boldsymbol{R}^{-1} - \boldsymbol{L^0}\right]^{-1}   =  \boldsymbol{R} + \boldsymbol{R} \boldsymbol{L^0} \left[\Id{} - \boldsymbol{R}\boldsymbol{L^0} \right]^{-1}\boldsymbol{R} \\
=  \boldsymbol{\gamma}^\mathsf{T} \left[  \left(\boldsymbol{e} - E\Id{}\right)^{-1} +  \left(\boldsymbol{e} - E\Id{}\right)^{-1}\boldsymbol{\gamma} \boldsymbol{L^0} \times \right. \\
\left. \left[\Id{} - \boldsymbol{\gamma}^\mathsf{T} \left(\boldsymbol{e} - E\Id{}\right)^{-1}\boldsymbol{\gamma} \boldsymbol{L^0}  \right]^{-1}\boldsymbol{\gamma}^\mathsf{T} \left(\boldsymbol{e} - E\Id{}\right)^{-1} \right] \boldsymbol{\gamma}
\end{IEEEeqnarraybox}
\label{eq:R_L Woodbury channel-level equivalence 1}
\end{equation*}
and then reversely applying the Woodbury identity with $\boldsymbol{A}_{\mathrm{Wood}} = \left(\boldsymbol{e} - E\Id{}\right)$, $\boldsymbol{B}_{\mathrm{Wood}} = -\boldsymbol{\gamma} \boldsymbol{L^0}$, $\boldsymbol{C}_{\mathrm{Wood}} = \boldsymbol{\gamma}^\mathsf{T}$, and $ \boldsymbol{D}_{\mathrm{Wood}} = \Id{}$ one now recognizes
\begin{equation*}
\begin{IEEEeqnarraybox}[][c]{rCl}
\IEEEstrut
\left[\boldsymbol{R}^{-1} - \boldsymbol{L^0}\right]^{-1} 
& = & \boldsymbol{\gamma}^\mathsf{T}   \left[ \left(\boldsymbol{e} - E\Id{}\right) - \boldsymbol{\gamma} \boldsymbol{L^0} \boldsymbol{\gamma}^\mathsf{T}\right]^{-1} \boldsymbol{\gamma}  \\
& = & \boldsymbol{\gamma}^\mathsf{T} \boldsymbol{A} \boldsymbol{\gamma}
\end{IEEEeqnarraybox}
\label{eq:R_L Woodbury channel-level equivalence 2}
\end{equation*}

Considering the multi-sheeted Riemann surface stemming from the analytic continuation of mapping (\ref{eq:rho_c(E) mapping}),
a truly remarkable and seldom noted property of the Wigner-Eisenbud formalism is that it completely de-entagles the branch points and the multi-sheeted structure --- entirely present in the outgoing $\boldsymbol{O}$ and incoming $\boldsymbol{I}$ wave functions in the scattering matrix expression (\ref{eq:U expression}) --- from the resonance parameters --- which are the poles and residues of the channel matrix $\boldsymbol{R}$ as of equation (\ref{eq:R expression}), and these poles and residues live on a simple complex energy $E$ sheet, with no branch points, and furthermore are all real.
This de-entanglement of the branch-point structure gives the $\boldsymbol{R}$ matrix all its uniqueness in R-matrix theory.
For instance, it does not translate to the level matrix $\boldsymbol{A}$, whose analytic continuation entails a multi-sheeted Riemann surface due to the introduction of the $\boldsymbol{L^0}(\boldsymbol{\rho}(E)))$ matrix function in its definition (\ref{eq:inv_A expression}). The same is true for the Brune or the Siegert-Humblet parameters, as will be discussed throughout this article.

\section{\label{sec:Invariance to B_c, Brune and Siegert-Humblet}Consequences of invariance with respect to boundary condition parameters.}

Having recalled essential results from R-matrix theory and the Wigner-Eisenbud parameters $\left\{a_c , B_c , \gamma_{\lambda,c}, E_\lambda, E_{T_c} \right\}$, we here focus on the fact that the fundamental physical operator describing the scattering event is the scattering matrix $\boldsymbol{U}$, and while the threshold energies $E_{T_c}$ are intrinsic physical properties of the system, all the other Wigner-Eisenbud parameters $a_c$, $B_c$, $\gamma_{\lambda,c}$, and $E_\lambda$ are interrelated and depend on arbitrary values of the channel radius $a_c$, or the boundary condition $B_c$.
Though the channel radius $a_c$ can have some physical interpretation, this is not the case of the boundary condition $B_c$. This section explains how the search to remove the explicit dependence of the parameters on the arbitrary boundary condition $B_c$ has lead to both the Brune $\boldsymbol{R}_S$ parameters and the Siegert-Humblet $\boldsymbol{R}_{L}$ parameters.

\subsection{\label{sec:Invariance to B_c}Invariance to $B_c$}

The dependence of the Wigner-Eisenbug parameters to the boundary condition $B_c$ can be made explicit by fixing the channel radius $a_c$ and performing a change of boundary condition $\boldsymbol{B} \to \boldsymbol{B'}$. This must entail a change in resonance parameters $E_\lambda \rightarrow E_\lambda'$ and $\gamma_{\lambda,c} \rightarrow \gamma_{\lambda,c}'$ which leaves the scattering matrix $\boldsymbol{U}$ unchanged.

As described by Barker in \cite{Boundary_condition_Barker_1972}, such change of variables can be performed by noticing that $ \boldsymbol{e} - \boldsymbol{\gamma}\left( \boldsymbol{B}' - \boldsymbol{B} \right)\boldsymbol{\gamma}^\mathsf{T}$ is a real symmetric matrix when both $\boldsymbol{B}$ and $\boldsymbol{B'}$ are real.
The spectral theorem thus assures there exists a real orthogonal matrix $\boldsymbol{K}$ and a real diagonal matrix $\boldsymbol{D}$ such that
\begin{equation}
\boldsymbol{e} - \boldsymbol{\gamma}\left( \boldsymbol{B}' - \boldsymbol{B} \right)\boldsymbol{\gamma}^\mathsf{T} = \boldsymbol{K}^\mathsf{T}\boldsymbol{D}\boldsymbol{K}
\end{equation}
The new parameters are then defined as
\begin{equation}
\boldsymbol{e}' \coloneqq  \boldsymbol{D} \quad \quad , \quad \quad \boldsymbol{\gamma}' \coloneqq \boldsymbol{K}\boldsymbol{\gamma}
\label{eq: Wigner-Eisenbud parameters transformations under change of Bc}
\end{equation}
This change of variables satisfies:
\begin{equation}
{\boldsymbol{\gamma}'}^\mathsf{T} \boldsymbol{A}_{B'} \boldsymbol{\gamma}' =  \boldsymbol{\gamma}^\mathsf{T} \boldsymbol{A}_{B} \boldsymbol{\gamma}
\label{eq:: gAg invariance for B'}
\end{equation}
and thus leaves the scattering matrix unaltered through equation (\ref{eq:U expression}). Here $\boldsymbol{A}_{B'}$ designates the level matrix from parameters $\boldsymbol{e}'$, $\boldsymbol{\gamma}'$ and $\boldsymbol{B}'$.
Equivalently, using the Woodbury identity (\ref{eq:Woodbury identity}) shows that this change of variables verifies (c.f. eq.(4) of \cite{Boundary_condition_Barker_1972} or eq. (3.27) of \cite{Descouvemont_2010}):
\begin{equation}
\boldsymbol{R}_B^{-1} + \boldsymbol{B} \ = \ \boldsymbol{R}_{B'}^{-1} + \boldsymbol{B'}
\label{eq:: R_B invariance for B'}
\end{equation}
If the change of variable is infinitesimal, this invariance property translates into the following equivalent differential equations on the Wigner-Eisenbud $\boldsymbol{R}_B$ matrix,
\begin{equation}
\frac{\partial \boldsymbol{R}_B^{-1}}{\partial \boldsymbol{B}} + \Id{} \ = \ \boldsymbol{0} \quad \text{i.e.} \quad  \frac{\partial \boldsymbol{R}_B}{\partial \boldsymbol{B}} - \boldsymbol{R}_B^2 \ = \ \boldsymbol{0}
\label{eq:: R_B invariance for infinitesimal B'}
\end{equation}
(c.f. eq (2.5b) section IV.2. p.274 of \cite{Lane_and_Thomas_1958}) where we made use of the following property to prove the equivalence:
\begin{equation}
\frac{\partial \boldsymbol{M^{-1}}}{\partial z}(z) = - \boldsymbol{M^{-1}}(z)  \left( \frac{\partial \boldsymbol{M}}{\partial z}(z) \right) \boldsymbol{M^{-1}}(z)
\label{eq::inv M derivatie property}
\end{equation}

\subsection{\label{sec:R_S Brune transform}Real invariant $\boldsymbol{R}_S$ parameters: Brune's alternative parametrization}

Since the physics of the system are invariant with the choice of the arbitrary $B_c$ boundary condition, Brune built on Barker's work \cite{Boundary_condition_Barker_1972} to propose an alternative parametrization of R-matrix theory in which the alternative parameters, $\boldsymbol{\widetilde{e}}$ and $\boldsymbol{\widetilde{\gamma}}$, are boundary-condition independent \cite{Brune_2002}.

\subsubsection{\label{sec:R_S def}Definition of Brune's $\boldsymbol{R}_S$ parametrization}

Key to Brune's alternative parametrization is the splitting of the outgoing-wave reduced logarithmic derivative -- and thus the $\boldsymbol{L^0}$ matrix function -- into real and imaginary parts, respectively the shift $\boldsymbol{S}$ and penetration $\boldsymbol{P}$ factors:
\begin{equation}
\boldsymbol{L} \ = \ \boldsymbol{S} + i \boldsymbol{P}
\label{eq:: L = S + iP}
\end{equation}

From there, and with slight changes from the notation in \cite{Brune_2002}, the \textit{physical level matrix} $\boldsymbol{\widetilde{A}}$ is defined as:
\begin{equation}
\boldsymbol{\widetilde{A}^{-1}}(E) \ = \ \boldsymbol{\widetilde{G}} + \boldsymbol{\widetilde{e}} - E\left[\Id{} + \boldsymbol{\widetilde{H}} \right] - \boldsymbol{\widetilde{\gamma}}\boldsymbol{L}(E)\boldsymbol{\widetilde{\gamma}}^\mathsf{T}
\label{eq::Brune physical level matrix}
\end{equation}
with
\begin{equation}
\widetilde{G}_{\lambda \mu} \ = \ \frac{\widetilde{\gamma_\mu}\left( S_\mu \widetilde{E_\lambda} -  S_\lambda \widetilde{E_\mu}  \right)\widetilde{\gamma_\lambda}}{ \widetilde{E_\lambda} - \widetilde{E_\mu} }
\end{equation}
and
\begin{equation}
\widetilde{H}_{\lambda \mu} \ = \ \frac{\widetilde{\gamma_\mu}\left( S_\mu -  S_\lambda \right)\widetilde{\gamma_\lambda}}{ \widetilde{E_\lambda} - \widetilde{E_\mu} }
\end{equation}
such that with the new \textit{alternative resonance parameters}, $\widetilde{E_i}$ and $\widetilde{\gamma_{i,c}}$, the following equality stands,
\begin{equation}
\boldsymbol{\gamma^\mathsf{T} A \gamma} = \boldsymbol{\widetilde{\gamma}^\mathsf{T} \widetilde{A} \widetilde{\gamma}}
\label{eq:R_L unchanged by Brune}
\end{equation}
and thus the scattering matrix $\boldsymbol{U}$ is left unchanged.

These alternative Brune parameters $\widetilde{e}$ and $\widetilde{\gamma}$ are no longer $\boldsymbol{B}$ dependent since the arbitrary boundary condition does not appear in the definition of the physical level matrix, and from there in the parametrization of the scattering matrix.

Brune explains how to compute his parameters from the Wigner-Eisenbud ones by finding the $\left\{\widetilde{E_i}\right\}$ scalars and $\left\{\boldsymbol{g_i}\right\}$ vectors (noted $\left\{\boldsymbol{a_i}\right\}$ in \cite{Brune_2002}) that solve the generalized eigenproblem \cite{Brune_2002}:
\begin{equation}
\begin{IEEEeqnarraybox}[][c]{rCl}
\left[\boldsymbol{e} - \boldsymbol{\gamma} \left( \boldsymbol{S}(\widetilde{E_i}) - \boldsymbol{B} \right)\boldsymbol{ \gamma}^\mathsf{T} \right]\boldsymbol{g_i} = \widetilde{E_i}\boldsymbol{g_i}
\IEEEstrut\end{IEEEeqnarraybox}
\label{eq:Brune eigenproblem}
\end{equation}
and defining the Brune parameters as:
\begin{equation}
\boldsymbol{\widetilde{e}} \coloneqq  \boldsymbol{\mathrm{diag}}(\widetilde{E_i}) \quad \quad , \quad \quad \boldsymbol{\widetilde{\gamma}} \coloneqq \boldsymbol{g}^\mathsf{T} \boldsymbol{\gamma}
\label{eq:Brune parameters}
\end{equation}
where $\boldsymbol{g}$ is the matrix composed of the column eigenvectors: $\boldsymbol{g} \coloneqq \left[\boldsymbol{g_1}, \hdots, \boldsymbol{g_i} , \hdots \right]$.
The physical level matrix is then defined as (c.f. equation (30), \cite{Brune_2002}):
\begin{equation}
\boldsymbol{\widetilde{A}}^{-1} \ \coloneqq \ \boldsymbol{g}^\mathsf{T}\boldsymbol{A}^{-1}\boldsymbol{g}
\label{eq:: Brune invA}
\end{equation}
which guarantees
\begin{equation}
\boldsymbol{A} \ \coloneqq \ \boldsymbol{g}\boldsymbol{\widetilde{A}}\boldsymbol{g}^\mathsf{T}
\label{eq:: Brune A = aAa}
\end{equation}
and thus (\ref{eq:R_L unchanged by Brune}), and whose explicit expression is (\ref{eq::Brune physical level matrix}).

Note that searching for the general eigenvalues in (\ref{eq:Brune eigenproblem}) is equivalent to solving (apply the Sylvester determinant identity theorem, or c.f. eq. (49)-(50) in \cite{Brune_2002}):
\begin{equation}
\left.\mathrm{det}\left(\boldsymbol{R}_S^{-1}(E)\right)\right|_{E = \widetilde{E_i}} = 0
\label{eq:R_S by Brune det search}
\end{equation}
i.e. solving for the poles of the $\boldsymbol{R}_S$ operator defined as
\begin{equation}
\boldsymbol{R}_S^{-1} \coloneqq \boldsymbol{R}^{-1} + \boldsymbol{B} - \boldsymbol{S}
\label{eq:R_S by Brune}
\end{equation}
hence our dubbing of the Brune parametrization as the $\boldsymbol{R}_S$ parametrization. 
We here provide additional insights into this alternative $\boldsymbol{R}_S$ parametrization of the R-matrix scattering model, which will be foretelling of the structure of the $\boldsymbol{R}_L$ parametrization discussed in section \ref{sec:R_L Siegert and Humblet}.

The key insight is that in equation (22) of \cite{Brune_2002}, Brune builds a square matrix $\boldsymbol{g}\coloneqq \left[\boldsymbol{g_1}, \hdots, \boldsymbol{g_i} , \hdots, \boldsymbol{g_{N_\lambda}} \right]$, from which he is able to built the inverse physical level matrix in his equation (30) of \cite{Brune_2002}. Brune justifies that this matrix is indeed square in the paragraphs between equations (46) and (47) by a three-step monotony argument depicted in FIG. 1 of \cite{Brune_2002}: 1) he assumes $S_c(E)$ is continuous (i.e. has no real poles); 2) he assumes $\frac{ \partial S_c}{\partial E} \geq 0$, which is always true for negative energies and was just proved to be true for positive energies in the case of repulsive Coulomb interactions \cite{Brune_Mark_monotonic_properties_of_shift_2018} (a general proof is lacking for positive energy attractive Coulomb channels but has always been verified in practice); 3) he invokes the eigenvalue repulsion behavior (no-crossing rule) of which we hereafter provide a discussion in section \ref{subsubsec:: Semi-simple poles in R-matrix theory}.
If these three assumption are true, since the left-hand-side of (\ref{eq:Brune eigenproblem}) is a real symmetric matrix for any real energy value, then the spectral theorem guarantees there exists $N_\lambda$ real eigenvalues to it, and Brune's three assumptions above elegantly guarantee that there exists exactly $N_\lambda$ real solutions to the generalized eiganvalue problem (\ref{eq:Brune eigenproblem}).

\subsubsection{\label{subsubsec::Ambiguity in shift and penetration}Ambiguity in shift and penetration factors definition for complex wavenumbers}

There is a subtlety, however. A careful analysis reveals that the assumption that $S_c(E)$ is continuous or monotonously increasing is not unequivocal, and points to an open discussion in the field of nuclear cross section evaluations: the way of continuing the scattering matrix $\boldsymbol{U}$ to complex wavenumbers $k_c \in \mathbb{C}$.
Indeed, there is an ambiguity in the definition of the shift $S_c(E)$ and penetration $P_c(E)$ functions: two approaches are possible, and the community is not clear on which is correct.

The first, Lane \& Thomas approach is to define the shift and penetration functions as the real and imaginary parts of the the outgoing-wave reduced logarithmic derivative:
\begin{equation}
\forall E \in \mathbb{C}\;, \left\{ \quad \begin{array}{cc}
     \boldsymbol{S}(E) \coloneqq & \Re\left[\boldsymbol{L}(E)\right]  \; \; \in \; \mathbb{R} \\
     \boldsymbol{P}(E) \coloneqq & \Im\left[\boldsymbol{L}(E)\right]  \; \; \in \; \mathbb{R}
\end{array}   \right.
\label{eq:: Def S = Re[L], P = Im[L]}
\end{equation}
This definition, introduced in \cite{Lane_and_Thomas_1958} III.4.a. from equations (4.4) to (4.7c), finds its justification in the discussion between equations (2.1) and (2.2) of \cite{Lane_and_Thomas_1958} VII.2, as it presents the advantage of automatically closing the sub-threshold channels since:
\begin{equation}
\forall E<E_{T_c}\;, \quad  \Im\left[L_c(E)\right]= 0
\label{eq:: Im[L] = 0 for E<0}
\end{equation}
This elegant closure of channels comes at two costs: 1) the scattering matrix $\boldsymbol{U}$ is no longer analytic for complex wavenumbers $k_c \in \mathbb{C}$; 2) artificial poles are introduced to the scattering matrix, as we will show in theorem \ref{theo::Analytic continuation of scattering matrix cancels spurious poles} that analytic continuation of $\boldsymbol{O}$ and $\boldsymbol{I}$ --- and thus the operators $\boldsymbol{L}$, $\boldsymbol{S}$, and $\boldsymbol{P}$ --- is necessary to cancel the poles of $\boldsymbol{O}$ out of the scattering matrix $\boldsymbol{U}$.
In this Lane \& Thomas approach (\ref{eq:: Def S = Re[L], P = Im[L]}), the function calculated for $\boldsymbol{S}$ changes from $S(E) \coloneqq S_c(E)$ above threshold ($E\geq E_{T_c}$), to $S(E) \coloneqq L_c(E)$ below threshold ($E <  E_{T_c}$), because of (\ref{eq:: Im[L] = 0 for E<0}). 
Moreover, definition (\ref{eq:: Def S = Re[L], P = Im[L]}) induces ramifications for both the shift and the penetration factors, as we show in lemma \ref{lem:: Lane and Thomas S_c ramification properties}.

\begin{lem}\label{lem:: Lane and Thomas S_c ramification properties}
\textsc{Branch-point definition of shift $S_c(E)$ and penetration $P_c(E)$ functions}.\\
Definition (\ref{eq:: Def S = Re[L], P = Im[L]}) of the shift $S_c(E)$ and penetration $P_c(E)$ functions, legacy of Lane \& Thomas, entails:
\begin{itemize}
    \item branch-points for both $S_c(E)$ and $P_c(E)$, induced by the multi-sheeted nature of mapping (\ref{eq:rho_c(E) mapping}),
    \item on the $\big\{ E, - \big\}$ sheet below threshold $E < E_{T_c}$, the shift function $S_c(E)$ can present discontinuities and areas where $\frac{ \partial S_c }{\partial E}(E) < 0$,
    \item in particular, for neutral particles of odd angular momenta $\ell_c \equiv 1 \; (\mathrm{mod} \; 2)$, there is exactly one real sub-threshold pole to $S_c(E)$ on the $\big\{ E, - \big\}$ sheet,
    \item everywhere other than sub-threshold $\big\{ E, - \big\}$ sheet, and in particular on the $\big\{ E, + \big\}$ sheet, the shift function $S_c(E)$ is continuous and monotonously increasing:  $\frac{ \partial S_c }{\partial E}(E) \geq 0$.
\end{itemize}
\end{lem}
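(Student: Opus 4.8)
The plan is to reduce every assertion to the explicit Mittag-Leffler expansion of Lemma \ref{lem::Mittag-Leffler of L_c Lemma}, namely $L_c(\rho) = -\ell + \mathrm{i}\rho + \sum_{n\geq 1}\frac{\rho}{\rho-\omega_n}$, composed with the two-sheeted wavenumber map $\rho_c(E) = \pm\rho_0\sqrt{E-E_{T_c}}$ of (\ref{eq:rho_c massive})--(\ref{eq:rho_c(E) mapping}). First I would establish the branch-point claim: $L_c$ is meromorphic in $\rho$, so the only non-analytic feature of $E\mapsto L_c(\rho_c(E))$ comes from the square-root branchpoint of $\rho_c(E)$ at $E=E_{T_c}$; the two sheets $\{E,\pm\}$ correspond to $\rho\mapsto\pm\rho$, and since the expansion gives $L_c(-\rho)=-\ell-\mathrm{i}\rho+\sum_n\frac{\rho}{\rho+\omega_n}\neq L_c(\rho)$, the values of $S_c=\Re[L_c]$ and $P_c=\Im[L_c]$ genuinely differ across the two sheets, so both inherit a branchpoint at the threshold.

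Next I would analyse the sub-threshold regime, where $\rho_c(E)$ is purely imaginary, $\rho=\pm\mathrm{i}\kappa$ with $\kappa=\rho_0\sqrt{E_{T_c}-E}>0$. Using the reflection symmetry $\omega_n\leftrightarrow-\overline{\omega_n}$ of the neutral outgoing-wave roots, which follows from the explicit polynomial structure visible in table \ref{tab::roots of the outgoing wave functions} (coefficients real up to powers of $\mathrm{i}$), I would show that each off-axis conjugate pair and each purely-imaginary root contributes a real quantity to $\sum_n\frac{\pm\mathrm{i}\kappa}{\pm\mathrm{i}\kappa-\omega_n}$, so that $L_c$ is real on the imaginary axis, recovering $P_c=0$ of (\ref{eq:: Im[L] = 0 for E<0}) and identifying $S_c=L_c$ below threshold. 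The key computation is that a purely-imaginary root $\omega=\mathrm{i}c$ (necessarily $c<0$, all roots lying in the open lower half-plane) contributes $\frac{\mathrm{i}\kappa}{\mathrm{i}\kappa-\mathrm{i}c}=\frac{\kappa}{\kappa-c}$ on the $\{E,+\}$ sheet---regular since $\kappa-c>0$---but $\frac{\kappa}{\kappa+c}$ on the $\{E,-\}$ sheet, which has a pole at $\kappa=-c>0$, i.e. a genuine real sub-threshold pole of $S_c$. The pairing symmetry forces the number of imaginary-axis roots to have the parity of $\ell$, so odd $\ell$ yields an odd (hence at least one) such pole; that it is \emph{exactly} one for neutral particles I would read off table \ref{tab::roots of the outgoing wave functions} for $\ell\leq 4$ and otherwise invoke the established location of the $\ell$ zeros of half-integer Hankel functions \cite{Zeros_of_Hankel_Functions_1982}. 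The presence of this pole yields both the discontinuity and, since $S_c$ diverges to oppositely signed infinities on either side of it, the region where $\partial S_c/\partial E<0$.

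Finally, for the monotonicity claim everywhere else, I would write, for real $\rho$ and $\omega_n=a_n+\mathrm{i}b_n$ with $b_n<0$, the expression $S_c(\rho)=\Re[L_c(\rho)]=-\ell+\sum_n\frac{\rho(\rho-a_n)}{(\rho-a_n)^2+b_n^2}$; pairing $a_n\leftrightarrow-a_n$ makes this even in $\rho$, so above threshold $S_c(E)$ is single-valued across both sheets and the branchpoint affects $S_c$ only below threshold. Differentiating term by term and pairing conjugate roots then reduces $\partial S_c/\partial E\geq 0$ to a sign estimate exploiting $b_n<0$ and $\partial\rho/\partial E\geq0$; on the $\{E,+\}$ sub-threshold branch the same expansion gives a manifestly regular, monotone function, as the $\ell=1$ check $S_1(\mathrm{i}\kappa)=-\frac{1+\kappa+\kappa^2}{1+\kappa}$ illustrates. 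I expect the hard part to be twofold: (i) upgrading the parity count to the sharp statement that there is \emph{exactly one} imaginary-axis root for every odd $\ell$, which is not forced by the pairing symmetry alone and genuinely requires control of the half-integer Hankel zeros; and (ii) the monotonicity $\partial S_c/\partial E\geq 0$ for charged particles, for which---as noted after (\ref{eq:Brune eigenproblem})---no complete proof exists for attractive Coulomb channels, so that part I would restrict to neutral particles and the cases covered by \cite{Brune_Mark_monotonic_properties_of_shift_2018}.
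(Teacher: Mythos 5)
Your proposal is correct, and it reaches every bullet of the lemma, but it does so by a more self-contained route than the paper. The paper's own proof is essentially a deferral: it inserts the branch structure of the $\rho_c(E)$ mapping into definition (\ref{eq:: Def S = Re[L], P = Im[L]}), cites \cite{Brune_Mark_monotonic_properties_of_shift_2018} for monotonicity on the $\big\{E,+\big\}$ sheet, and points \emph{forward} to the proof of lemma \ref{lem:: analytic S_c and P_c lemma} for the existence and location of the sub-threshold discontinuities; the sheet-coincidence statements are then obtained by noting that the analytic-continuation shift function is a function of $\rho_c^2$. You instead work everything out directly from the Mittag-Leffler expansion of lemma \ref{lem::Mittag-Leffler of L_c Lemma}: the explicit evaluation of $\frac{\rho}{\rho-\omega_n}$ at $\rho=\pm\mathrm{i}\kappa$, showing that a purely imaginary root $\omega=\mathrm{i}c$ ($c<0$) is regular on $\big\{E,+\big\}$ but produces a real pole at $\kappa=-c$ on $\big\{E,-\big\}$, is exactly the computation the paper leaves implicit when it translates the $\frac{1}{\rho^2+x_n^2}$ pole of the analytic $S_c$ into a pole on only one Lane--Thomas sheet. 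Your route avoids the forward reference and makes the reality of $L_c$ on the imaginary axis (hence $S_c=L_c$ sub-threshold and $P_c=0$) an explicit consequence of the $\omega_n\leftrightarrow-\omega_n^*$ pairing, which the paper only asserts. Two caveats you flag are worth keeping: the parity argument gives only \emph{at least} one imaginary-axis root for odd $\ell$, with ``exactly one'' read off table \ref{tab::roots of the outgoing wave functions} or the half-integer Hankel-zero literature --- the paper is in the same position, its lemma \ref{lem:: analytic S_c and P_c lemma} also deriving only the parity count; and the monotonicity $\frac{\partial S_c}{\partial E}\geq 0$ away from the sub-threshold $\big\{E,-\big\}$ sheet is, in both your write-up and the paper, ultimately borrowed from \cite{Brune_Mark_monotonic_properties_of_shift_2018} rather than proved term by term, so you should not oversell the ``sign estimate exploiting $b_n<0$'' --- individual pair contributions need not be monotone and the reduction is not automatic. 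With those hedges stated, your argument matches the paper's level of rigor while being more explicit about which sheet carries the pole.
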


\begin{proof}
The proof simply introduces the branch-structure of the $\rho_c(E)$ mapping (\ref{eq:rho_c(E) mapping}), observable in figure \ref{fig:mapping rho - E}, into the Lane \& Thomas definition (\ref{eq:: Def S = Re[L], P = Im[L]}).
Historically, the study of the properties emanating from this definition have neglected the $\big\{ E, - \big\}$ sheet. 
Importantly, it was recently proved that $\frac{\partial S_c}{\partial E}(E)  \geq 0$ was true for most cases \cite{Brune_Mark_monotonic_properties_of_shift_2018}. This proof did not consider the $\big\{ E, - \big\}$ sheet of mapping (\ref{eq:rho_c(E) mapping}).
However, their proof of $\frac{\partial S_c}{\partial E} (E) \geq 0$ should still stand on the $\big\{ E, + \big\}$ sheet.
Moreover, the proof of lemma \ref{lem:: analytic S_c and P_c lemma} establishes that all the discontinuity points, i.e. the real-energy poles, happen at sub-threshold energies, and in particular that neutral particles with odd angular moment introduce exactly one such sub-threshold discontinuity. 
This means that above threshold, both the shift $S_c(E)$ and penetration $P_c(E)$ functions are continuous. 
These behaviors are depicted in figure \ref{fig:Lane_and_Thomas_shift_and_penetration_factors_with_branch_points}.
Finally, one will notice that the $\big\{ E, + \big\}$ and $\big\{ E, - \big\}$ sheets coincide above threshold for the shift function $S_c(E)$, and below threshold for the penetration function $P_c(E)$.
For $P_c(E)$, this is because of property (\ref{eq:: Im[L] = 0 for E<0}).
For $S_c(E)$, this is because for real energies above threshold, both definitions (\ref{eq:: Def S = Re[L], P = Im[L]}) and (\ref{eq:: Def S and P analytic continuation from L}) coincide, and lemma \ref{lem:: analytic S_c and P_c lemma} shows the analytic continuation definition of $S_c(E)$ is function of $\rho_c^2(E)$, which unfolds the sheets of the Rieman mapping (\ref{eq:rho_c(E) mapping}). Hence, for above-threshold energies, this property still stands for the Lane \& Thomas definition of the shift factor $S_c(E)$.
\end{proof}

\begin{figure}[ht!!] 
  \centering
  \includegraphics[width=0.50\textwidth]{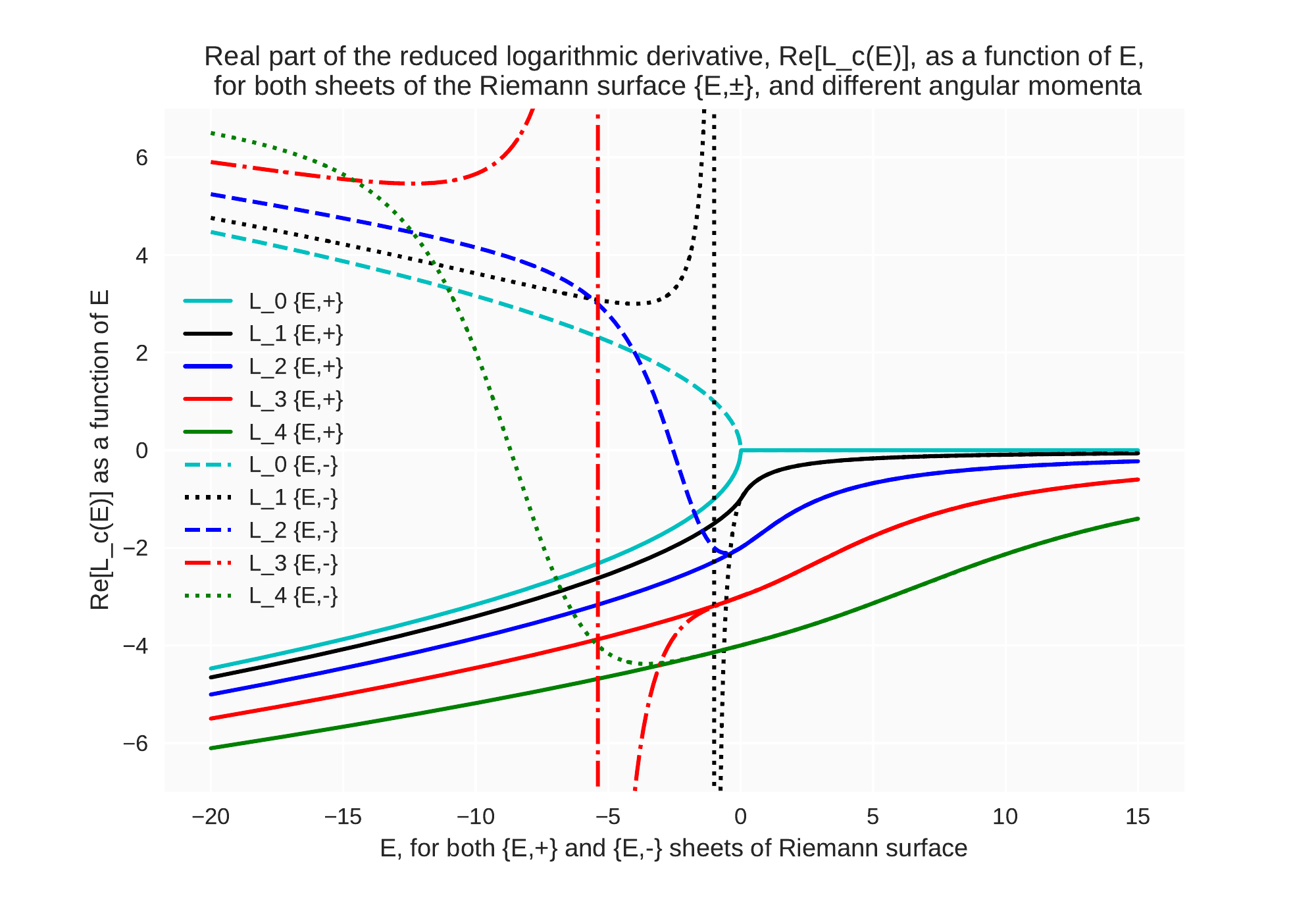}
  \includegraphics[width=0.50\textwidth]{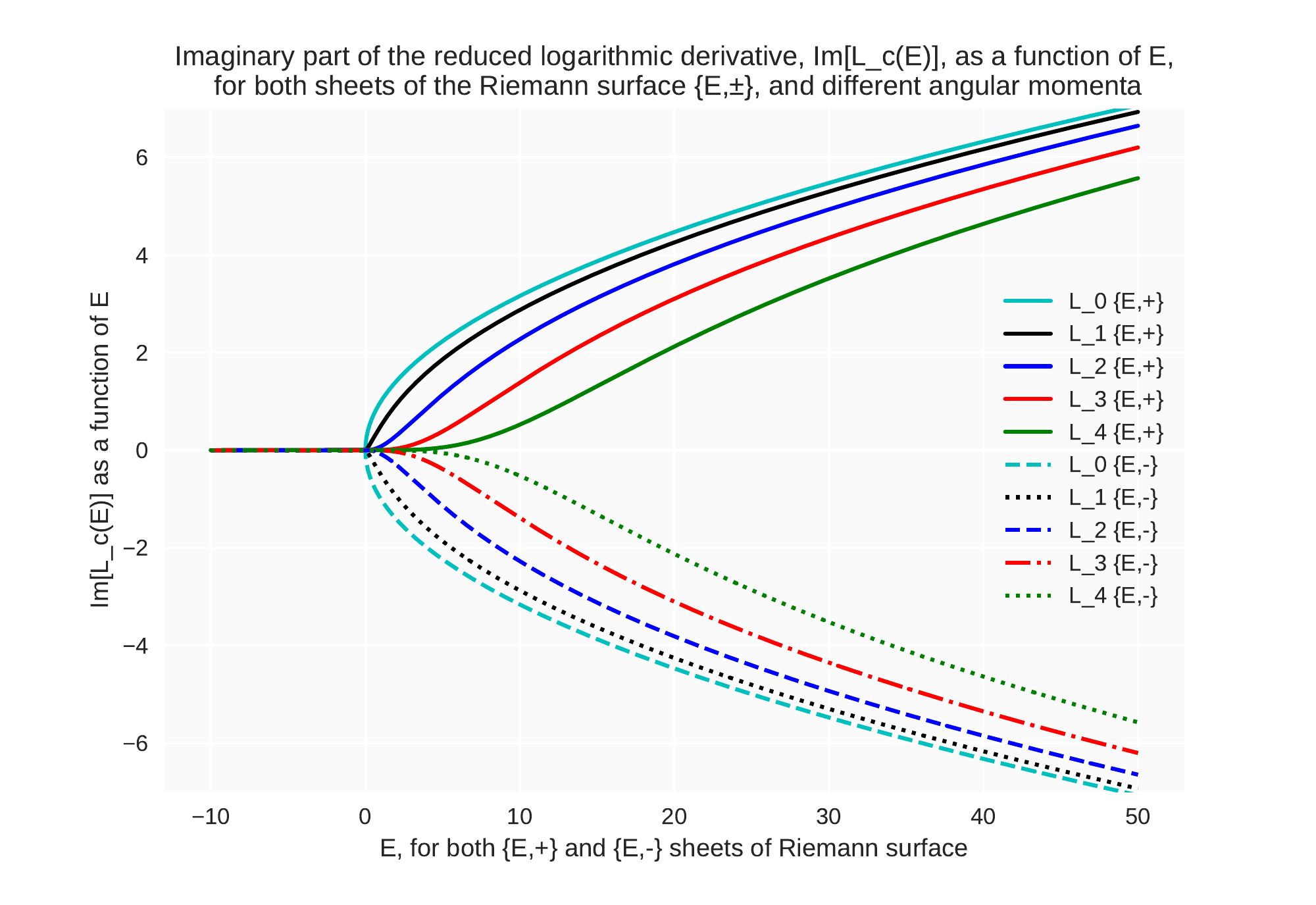}
  \caption{\small{Real and imaginary parts of the massive neutral particles reduced logarithmic derivative of the outgoing wavefunction, $\Re\big[L_\ell(E)\big]$ and $\Im\big[L_\ell(E)\big]$, for different angular momenta $\ell \in \llbracket 1, 4 \rrbracket$. This quantity was used by Lane \& Thomas to define the shift and penetration functions, $S_\ell(E) $ and $P_\ell(E)$, as (\ref{eq:: Def S = Re[L], P = Im[L]}). This definition commands branch points from mapping (\ref{eq:rho_c massive}) (c.f. figure \ref{fig:mapping rho - E}). $\Re\big[L_\ell(E)\big]$ presents sub-threshold discontinuities (for odd $\ell$) and non-monotonic behavior (for even $\ell$) below threshold on the $\big\{ E, - \big\}$ sheet.}}
  \label{fig:Lane_and_Thomas_shift_and_penetration_factors_with_branch_points}
\end{figure}

The second approach to defining the shift and penetration functions, $\boldsymbol{S}$ and $\boldsymbol{P}$, consists of performing analytic continuation of the scattering matrix $\boldsymbol{U}$ to complex energies $E\in\mathbb{C}$. This is implicit in the Kapur-Peirls or Siegert-Humblet expansions (c.f. section \ref{sec:R_L Siegert and Humblet}), and an abundant literature revolves around the analytic properties of the scattering matrix in the complex plane, including the vast Theory of Nuclear Reaction of Humblet and Rosenfeld \cite{Theory_of_Nuclear_Reactions_I_resonances_Humblet_and_Rosenfeld_1961,Theory_of_Nuclear_Reactions_II_optical_model_Rosenfeld_1961,Theory_of_Nuclear_Reactions_III_Channel_radii_Humblet_1961_channel_Radii,Theory_of_Nuclear_Reactions_IV_Coulomb_Humblet_1964,Theory_of_Nuclear_Reactions_V_low_energy_penetrations_Jeukenne_1965,Theory_of_Nuclear_Reactions_VI_unitarity_Humblet_1964,Theory_of_Nuclear_Reactions_VII_Photons_Mahaux_1965,Theory_of_Nuclear_Reactions_VIII_evolutions_Rosenfeld_1965,Theory_of_Nuclear_Reactions_IX_few_levels_approx_Mahaux_1965}, or the general unitarity condition on the multi-sheeted Riemann surface introduced by Eden and Taylor in \cite{Eden_and_Taylor}.
In this approach, energy dependence of the shift and penetration factors for positive energies are analytically continued into the complex plane, i.e.
\begin{equation}
\boldsymbol{S} : \left\{ \begin{array}{rcl}
\mathbb{C} & \mapsto & \mathbb{C} \\
E & \to & S_c(E)
\end{array}\right. \mathrm{s.t.} \; \; S(E) = S_c(E) , \; \forall (E-E_{T_c}) \in \mathbb{R}_+
\label{eq:: Def S analytical}
\end{equation}
so that they can be computed from the outgoing wavefunction reduced logarithmic derivative $\boldsymbol{L}$ by analytic continuation in wavenumber space $k_c \in \mathbb{C}$: 
\begin{equation}
\forall \rho_c \in \mathbb{C}\;, \left\{ \quad \begin{array}{cc}
     S_c(\rho_c) \coloneqq & \frac{L_c(\rho_c) +\left[ L_c(\rho_c^*)\right]^* }{2}  \; \; \in \; \mathbb{C} \\
     P_c(\rho_c) \coloneqq &  \frac{L_c(\rho_c) -\left[ L_c(\rho_c^*)\right]^* }{2\mathrm{i}} \; \; \in \; \mathbb{C}
\end{array}   \right.
\label{eq:: Def S and P analytic continuation from L}
\end{equation}
From this definition (\ref{eq:: Def S and P analytic continuation from L}), and using the recurrence relation (\ref{eq::L_ell recurrence formula}), one readily finds the expressions for the neutral particles shift and penetration factors documented in table \ref{tab::S_and_P_expressions_neutral}.
Critically, both definitions (\ref{eq:: Def S = Re[L], P = Im[L]}) and (\ref{eq:: Def S and P analytic continuation from L}) will yield the same shift $S_c(E)$ and penetration $P_c(E)$ functions for real energies above threshold $E \geq E_{T_c}$. 
Moreover, definition (\ref{eq:: Def S and P analytic continuation from L}) bestows interesting analytic properties onto the shift and penetration functions, here established in lemma \ref{lem:: analytic S_c and P_c lemma}.

\begin{table*}
\caption{\label{tab::S_and_P_expressions_neutral} Shift $S_\ell(\rho)$, $S_\ell^0(\rho) \coloneqq S_\ell(\rho) - B_\ell$ using $B_\ell = - \ell$, and $P_\ell(\rho)$ irreducible forms for neutral particles, for angular momenta $0 \leq \ell \leq 4$, all defined from analytic continuation (\ref{eq:: Def S and P analytic continuation from L}).}
\begin{ruledtabular}
\begin{tabular}{c|c|c|c}
\ \ & $S_\ell(\rho)$ & $S_\ell^0(\rho) \coloneqq S_\ell(\rho) - B_\ell$ (recurrence for $B_\ell = - \ell$ ) & $P_\ell(\rho)$  \tabularnewline
\hline
$\ell$  &  $ S_\ell(\rho) = \frac{\rho^2 \left(\ell - S_{\ell-1}(\rho)  \right)}{\left(\ell - S_{\ell-1}(\rho)  \right)^2 + P_{\ell-1}(\rho)^2} - \ell $  & $ S_\ell^0(\rho) \coloneqq S_\ell(\rho) + \ell =  \frac{\rho^2 \left(2\ell - 1 - S_{\ell-1}^0(\rho)  \right)}{\left(2\ell -1 - S_{\ell-1}^0(\rho)  \right)^2 + P_{\ell-1}(\rho)^2}$   & $ P_\ell(\rho) = \frac{\rho P_{\ell-1}(\rho) }{\left(\ell - S_{\ell-1}(\rho)  \right)^2 + P_{\ell-1}(\rho)^2}$  \tabularnewline
\hline \hline
0  &  $0$  & $0$ & $\rho$ \tabularnewline
1  &  $ - \frac{1}{1+\rho^2}$ & $\frac{\rho^2}{1+\rho^2}$ & $\frac{\rho^3}{1+\rho^2}$ \tabularnewline
2   &  $ - \frac{18 + 3 \rho^2}{9 + 3\rho^2 + \rho^4}$ & $\frac{ 3 \rho^2 + 2 \rho^4}{9 + 3\rho^2 + \rho^4}$ & $\frac{\rho^5}{9 + 3\rho^2 + \rho^4}$ \tabularnewline
3  &  $ - \frac{675 + 90 \rho^2 + 6\rho^4}{225 + 45\rho^2 + 6\rho^4 + \rho^6}$ & $\frac{45 \rho^2 + 12\rho^4 + 3 \rho^6}{225 + 45\rho^2 + 6\rho^4 + \rho^6}$ &  $\frac{\rho^7}{225 + 45\rho^2 + 6\rho^4 + \rho^6}$ \tabularnewline
4  &  $ - \frac{44100 + 4725 \rho^2 + 270\rho^4 + 10 \rho^6}{11025 + 1575\rho^2 + 135\rho^4 + 10\rho^6 + \rho^8}$ & $ \frac{1575 \rho^2 + 270\rho^4 + 30 \rho^6 + 4\rho^8}{11025 + 1575\rho^2 + 135\rho^4 + 10\rho^6 + \rho^8}$ &  $\frac{\rho^9}{11025 + 1575\rho^2 + 135\rho^4 + 10\rho^6 + \rho^8}$\tabularnewline
\end{tabular}
\end{ruledtabular}
\end{table*}

\begin{lem}\label{lem:: analytic S_c and P_c lemma}
\textsc{Analytic continuation definition of shift $S_c(E)$ and penetration $P_c(E)$ functions}. \\
When defined by analytic continuation (\ref{eq:: Def S and P analytic continuation from L}), the shift function, $S_c(\rho)$, satisfies the Mittag-Leffler expansion:
\begin{equation}
\begin{IEEEeqnarraybox}[][c]{rcl}
    S_c(\rho) & \;  = \; & - \ell + \underset{\mathrm{arg}(\omega_n)\in \left[-\frac{\pi}{2},0\right]}{\sum_{n \geq 1}} \frac{\rho^2}{\rho^2 - \omega_n^2} + \frac{\rho^2}{\rho^2 - {\omega_n^*}^2}
\IEEEstrut\end{IEEEeqnarraybox}
\label{eq::S_c expansion in rho^2}
\end{equation}
where the poles $\big\{ \omega_n \big\}$ are only the lower-right-quadrant roots -- i.e. such that $\mathrm{arg}(\omega_n) \in [-\frac{\pi}{2},0]$ -- of the outgoing wave function $O_c(\rho_c)$.
In the neutral particles cases, these are reported in table \ref{tab::roots of the outgoing wave functions}.
Given $\rho_c(E)$ mapping (\ref{eq:rho_c(E) mapping}), this entails $S_c(E)$:
\begin{itemize}
    \item unfolds the sheets of $\rho_c(E)$ mapping (\ref{eq:rho_c(E) mapping}),
    \item is purely real for real energies: $\forall E\in\mathbb{R}, \; S_c(E) \in \mathbb{R}$.
\end{itemize}
The penetration function, $P_c(\rho)$, satisfies the Mittag-Leffler expansion:
\begin{equation}
\begin{IEEEeqnarraybox}[][c]{rcl}
    P_c(\rho) & \;  = \; & \rho \Bigg[ 1 - \mathrm{i} \underset{\mathrm{arg}(\omega_n)\in \left[1 \frac{\pi}{2},0\right]}{\sum_{n \geq 1}} \frac{\omega_n}{\rho^2 - \omega_n^2} - \frac{{\omega_n^*}}{\rho^2 - {\omega_n^*}^2} \Bigg]
\IEEEstrut\end{IEEEeqnarraybox}
\label{eq::P_c expansion in rho^2}
\end{equation}
which in turn entails that $P_c(E)$:
\begin{itemize}
    \item is purely real for above threshold energies: $\forall E > E_{T_c}, \;  P_c(E) \in \mathbb{R}$,
    \item is purely imaginary for sub-threshold energies: $\forall E < E_{T_c}, \;  P_c(E) \in \mathrm{i}\mathbb{R}$,
\end{itemize}
In the neutral particles case, Mittag-Leffler expansions (\ref{eq::S_c expansion in rho^2}) and (\ref{eq::P_c expansion in rho^2}) are the partial fraction decompositions of the rational fractions reported in table \ref{tab::S_and_P_expressions_neutral}, and for all odd angular momenta $\ell_c \equiv 1 \; (\mathrm{mod} \; 2)$, both have one, shared, real sub-threshold pole.
\end{lem}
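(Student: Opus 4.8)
The plan is to feed the Mittag-Leffler expansion of $L_c$ from Lemma~\ref{lem::Mittag-Leffler of L_c Lemma}, i.e. $L_c(\rho) = -\ell + \mathrm{i}\rho + \sum_{n\geq 1}\tfrac{\rho}{\rho-\omega_n}$ following (\ref{eq::Explicit Mittag-Leffler expansion of L_c}), directly into the analytic-continuation definition (\ref{eq:: Def S and P analytic continuation from L}). The first move is to identify the conjugate-reflected object $\left[L_c(\rho^*)\right]^*$ with the reduced logarithmic derivative of the \emph{incoming} wave $I_c$. By the Schwarz reflection principle applied to the external solutions, $I_c(\rho) = \overline{O_c(\rho^*)}$, whence $\left[L_c(\rho^*)\right]^* = \rho\,I_c^{(1)}(\rho)/I_c(\rho)$; this function carries the same expansion as $L_c$ but with each pole $\omega_n$ replaced by its conjugate $\omega_n^*$ (the zeros of $I_c$, sitting in the upper half-plane) and $+\mathrm{i}\rho$ replaced by $-\mathrm{i}\rho$. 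Substituting into (\ref{eq:: Def S and P analytic continuation from L}) then gives term-by-term $S_c(\rho) = -\ell + \tfrac12\sum_n\big[\tfrac{\rho}{\rho-\omega_n} + \tfrac{\rho}{\rho-\omega_n^*}\big]$ and $P_c(\rho) = \rho + \tfrac{1}{2\mathrm{i}}\sum_n\big[\tfrac{\rho}{\rho-\omega_n} - \tfrac{\rho}{\rho-\omega_n^*}\big]$, these being finite linear combinations of a convergent expansion.

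To reach the announced $\rho^2$-form I would invoke the reflection symmetry of the zero set of $O_c$ about the imaginary axis, $\omega_n\mapsto -\omega_n^*$. For neutral particles this follows from the explicit form (\ref{eq::Outgoing wavefuntion O_ell expression for neutral particles}): writing $O_\ell(\rho) = \mathrm{e}^{\mathrm{i}\rho}p_\ell(\rho)/\rho^\ell$, the polynomial $p_\ell$ satisfies $\overline{p_\ell(-\rho^*)} = p_\ell(\rho)$, proved by induction on the recurrence (\ref{eq::L_ell recurrence formula}), so that $\omega$ a root forces $-\omega^*$ to be a root. Pairing each lower-right-quadrant root $\omega$ (with $\mathrm{arg}(\omega)\in[-\tfrac{\pi}{2},0]$) with its partner $-\omega^*$ groups four simple-pole terms, which collapse through $\tfrac{\rho}{\rho-\omega}+\tfrac{\rho}{\rho+\omega} = \tfrac{2\rho^2}{\rho^2-\omega^2}$ into $\tfrac{\rho^2}{\rho^2-\omega^2}+\tfrac{\rho^2}{\rho^2-{\omega^*}^2}$ for $S_c$, and analogously, via $\tfrac{\rho}{\rho-\omega}-\tfrac{\rho}{\rho+\omega} = \tfrac{2\rho\omega}{\rho^2-\omega^2}$, into the bracket of (\ref{eq::P_c expansion in rho^2}) for $P_c$. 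This yields (\ref{eq::S_c expansion in rho^2}) and (\ref{eq::P_c expansion in rho^2}).

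The stated analytic properties then fall out of the observation that both expansions depend on $\rho$ only through $\rho^2$, up to an overall factor $\rho$ for $P_c$. Since $\rho_c^2\propto (E-E_{T_c})$ is a single-valued, branch-free function of $E$ under (\ref{eq:rho_c(E) mapping}), $S_c(E)$ is single-valued — it \emph{unfolds} the two sheets $\{E,\pm\}$ — and for real $E$ one has $\rho^2\in\mathbb{R}$, so the paired terms satisfy $\tfrac{\rho^2}{\rho^2-{\omega^*}^2} = \big(\tfrac{\rho^2}{\rho^2-\omega^2}\big)^*$ and sum to a real number: $S_c(E)\in\mathbb{R}$. For $P_c$ the bracketed sum is likewise real whenever $\rho^2\in\mathbb{R}$ (conjugate pairing), so $P_c = \rho\cdot(\text{real})$ is real above threshold, where $\rho$ is real, and purely imaginary below threshold, where $\rho\in\mathrm{i}\mathbb{R}$. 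Finally, for odd $\ell$ the symmetry $\omega\mapsto-\omega^*$ forces exactly one self-conjugate root on the negative imaginary axis, $\omega=-\mathrm{i}b$; then $\omega^2=-b^2<0$ seats a pole at $\rho^2=-b^2$, i.e. at a real energy strictly below threshold, shared by $S_c$ and $P_c$, whereas for even $\ell$ all roots lie strictly inside the quadrants, $\omega^2\notin\mathbb{R}_-$, and no real sub-threshold pole occurs.

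The delicate point — and the main obstacle — is the bookkeeping of this self-conjugate root in tandem with the reflection symmetry. Because $\omega=-\mathrm{i}b$ gives $\omega^2={\omega^*}^2=-b^2$, the single term $\tfrac{\rho^2}{\rho^2+b^2}$ already supplies poles at both $\rho=-\mathrm{i}b$ (zero of $O_c$) and $\rho=+\mathrm{i}b$ (zero of $I_c$), so it must be counted once rather than as $\tfrac{\rho^2}{\rho^2-\omega^2}+\tfrac{\rho^2}{\rho^2-{\omega^*}^2}$; one pins the normalization by checking the $\ell=1$ entry of table~\ref{tab::S_and_P_expressions_neutral}, where the formula must reproduce $S_1(\rho) = -1/(1+\rho^2)$. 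The second obstacle is the Coulomb case: there the zero set of the Whittaker $O_c$ is infinite, so the pairing-and-regrouping must be justified as a conditionally convergent rearrangement, and the reflection symmetry $\omega\mapsto-\omega^*$ re-established for Whittaker functions where the $\rho$-dependence of $\eta$ intervenes. For neutral particles both difficulties vanish, the sums being finite, and (\ref{eq::S_c expansion in rho^2})–(\ref{eq::P_c expansion in rho^2}) reduce to the partial-fraction decompositions of the rational functions tabulated in table~\ref{tab::S_and_P_expressions_neutral}.
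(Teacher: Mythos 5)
Your proposal is correct and follows essentially the same route as the paper: plug the Mittag-Leffler expansion of $L_c$ from Lemma \ref{lem::Mittag-Leffler of L_c Lemma} into definition (\ref{eq:: Def S and P analytic continuation from L}), use the specular symmetry $\omega_n \leftrightarrow -\omega_n^*$ of the roots of $O_c$ to pair terms into functions of $\rho^2$, and read off the sheet-unfolding and reality properties. Your explicit handling of the self-conjugate root $\omega=-\mathrm{i}b$ for odd $\ell$ (counting its contribution once, normalized against the $\ell=1$ entry of table \ref{tab::S_and_P_expressions_neutral}) is a point the paper's proof passes over silently, and is worth making.
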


\begin{proof}
The proof uses lemma \ref{lem::Mittag-Leffler of L_c Lemma}, where we establish the Mittag-Leffler expansion (\ref{eq::Mittag-Leffler expansion of L_c}) of the reduced logarithmic derivative $L_c(\rho_c)$.
Using the conjugacy properties (\ref{eq:Conjugacy for O and I}) on the poles $\big\{ \omega_n \big\}$ means each pole $\omega_n$ on the lower right quadrant of the complex plane -- i.e. such that $\mathrm{arg}(\omega_n) \in [-\frac{\pi}{2},0]$ -- induces a specular pole $-\omega_n^*$.
Dividing the poles in specular pairs, we can re-write the Mittag-Leffler expansion (\ref{eq::Mittag-Leffler expansion of L_c}) as:
\begin{equation}
\begin{IEEEeqnarraybox}[][c]{rcl}
    L_c(\rho) & \;  = \; & - \ell + \mathrm{i}\rho + \underset{\mathrm{arg}(\omega_n)\in \left[-\frac{\pi}{2},0\right]}{\sum_{n \geq 1}} \frac{\rho}{\rho - \omega_n} + \frac{\rho}{\rho + {\omega_n^*}}
\IEEEstrut\end{IEEEeqnarraybox}
\label{eq::L_c Mittag-Lellfer expansion in pairs}
\end{equation}
Plugging-in expression (\ref{eq::L_c Mittag-Lellfer expansion in pairs}) into the shift function definition (\ref{eq:: Def S and P analytic continuation from L}) readily yields (\ref{eq::S_c expansion in rho^2}) and (\ref{eq::P_c expansion in rho^2}).

Note that (\ref{eq::S_c expansion in rho^2}) unfolds the Riemann surface of mapping (\ref{eq:rho_c(E) mapping}), whereas (\ref{eq::P_c expansion in rho^2}) factors-out the branch points so that all its branches are symmetric.
In (\ref{eq::P_c expansion in rho^2}) we recognize the odd powers of $\rho$ in the neutral particles case of table \ref{tab::S_and_P_expressions_neutral}, which do not unfold the Riemann sheets of mapping (\ref{eq:rho_c(E) mapping}).
These behaviors are illustrated in figure \ref{fig:Analytic_continuation_S_and_P}.

In the neutral particles case, $L_c$ is a rational fraction in $\rho_c$, and its denominator is of degree $\ell_c$, as can be observed in table \ref{tab::L_values_neutral}, thus inducing $\ell_c $ poles, reported in table \ref{tab::roots of the outgoing wave functions}.
Since these poles $\big\{ \omega_n \big\}$ must respect the specular symmetry: $ \omega \longleftrightarrow -\omega_n^*$; it thus entails that these poles come in symmetric pairs. 
For neutral particles, odd angular momenta mean there is an odd number of poles $\big\{ \omega_n \big\}$. For them to come in pairs thus imposes one is exactly imaginary $\omega_n = - \mathrm{i} x_n$, with $x_n \in \mathbb{R}_+$.
When squared, this purely imaginary pole will introduce a purely sub-threshold pole in both (\ref{eq::S_c expansion in rho^2}) and (\ref{eq::P_c expansion in rho^2}), though: $ \frac{1}{\rho^2 + x_n^2}$.
\end{proof}

\begin{figure}[ht!!] 
  \centering
  \includegraphics[width=0.50\textwidth]{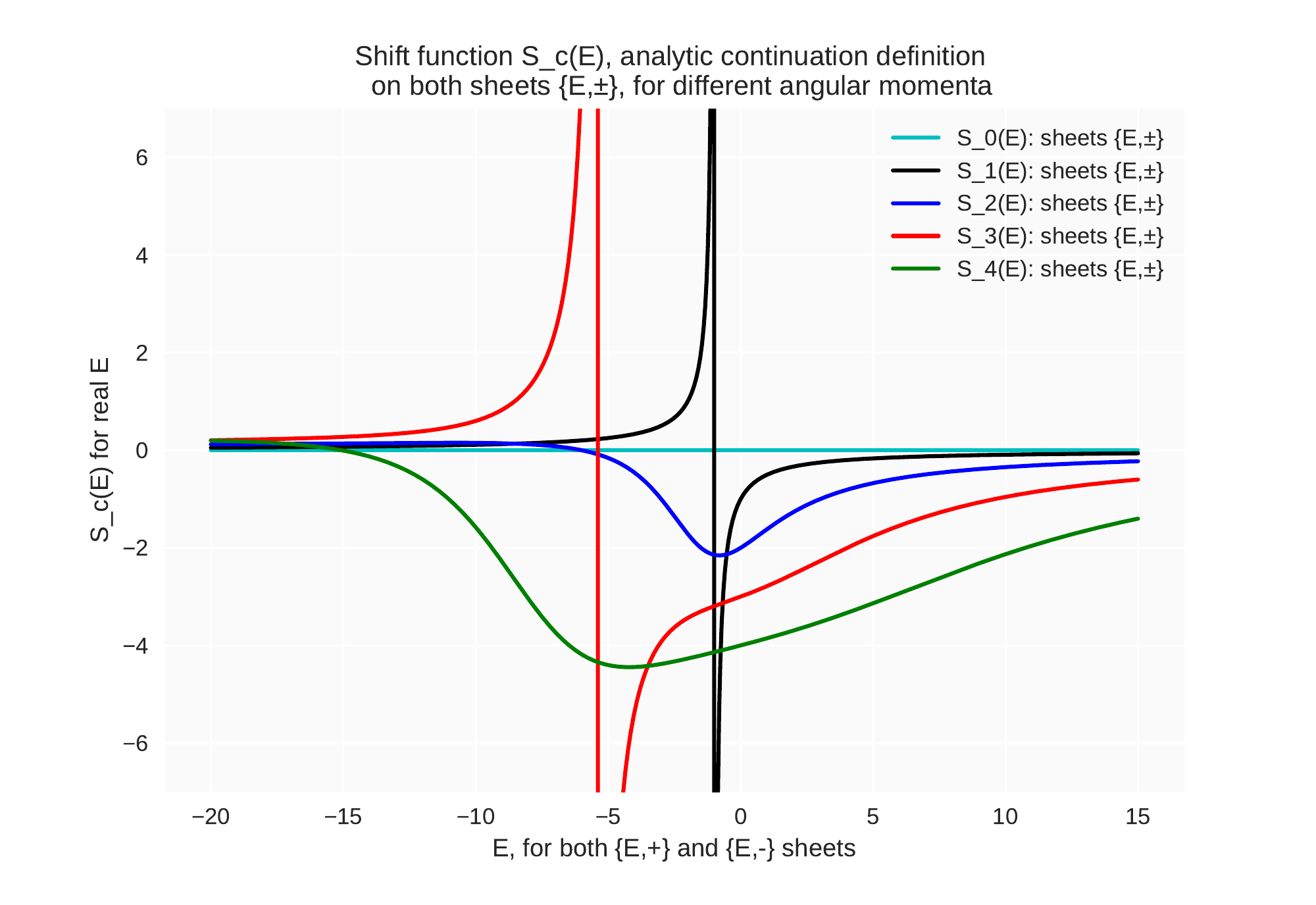}
  \includegraphics[width=0.50\textwidth]{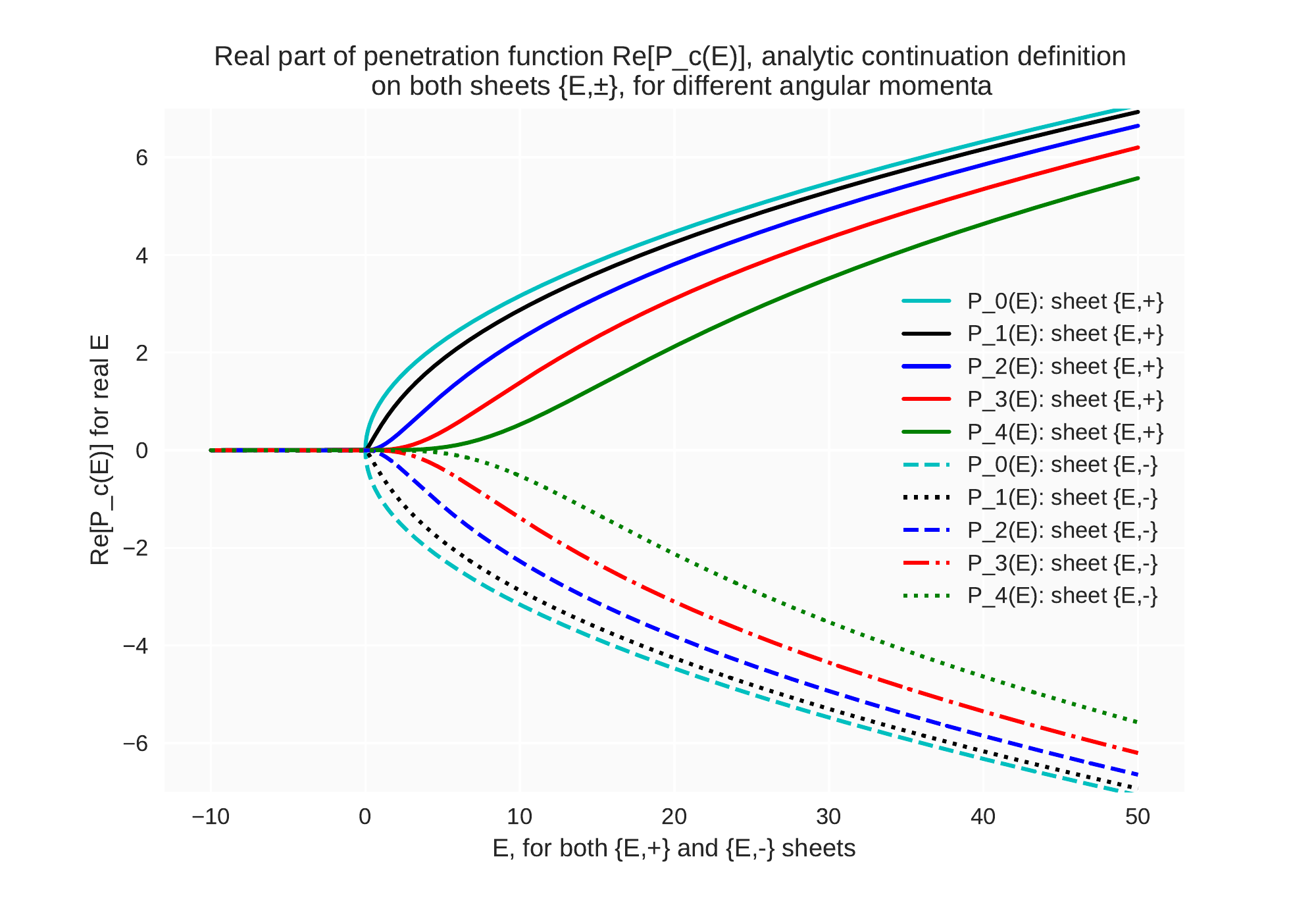}
  \includegraphics[width=0.50\textwidth]{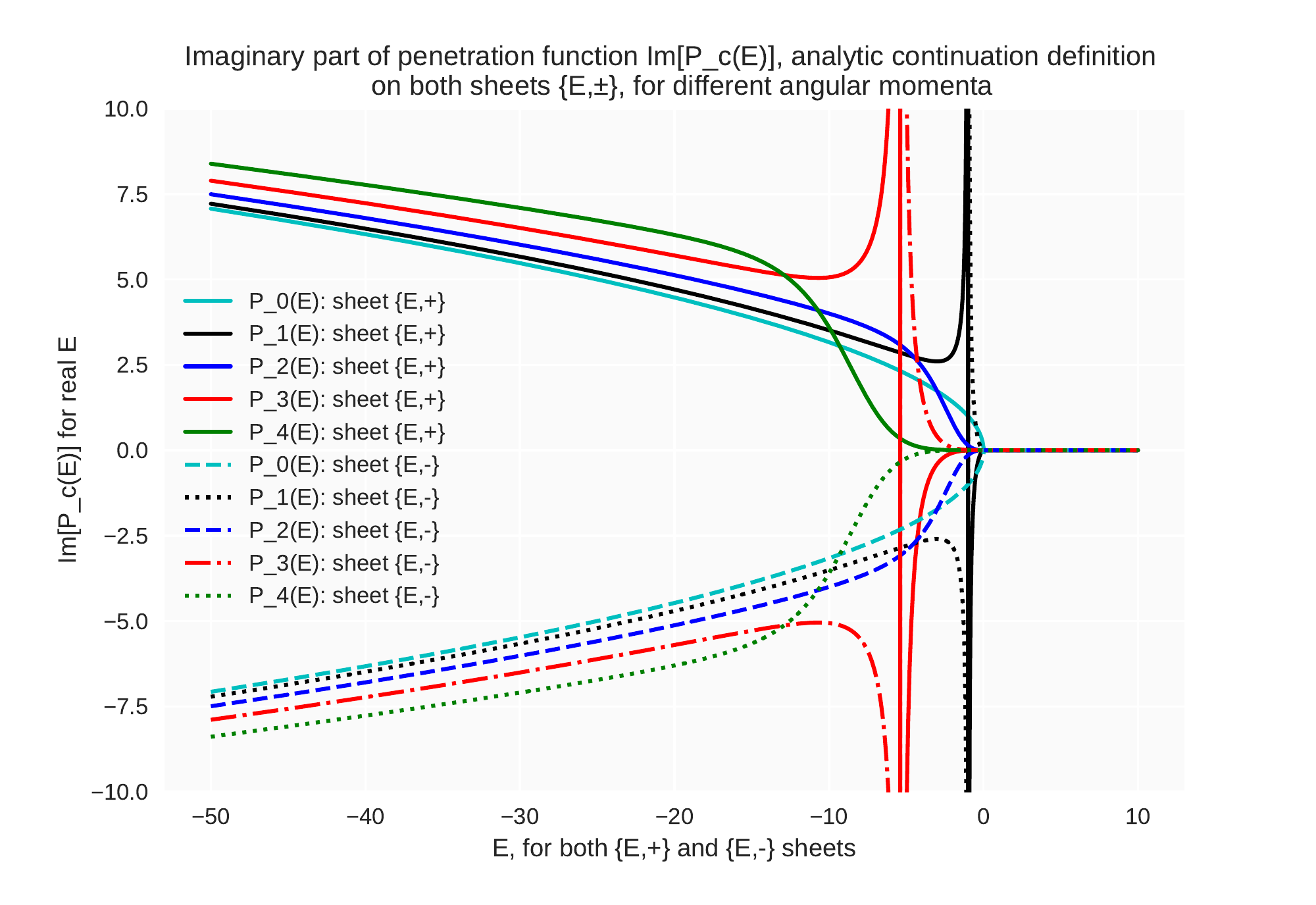}
  \caption{\small{Shift $S_c(E)$ and penetration $P_c(E)$ functions for massive neutral particles, as defined by analytic continuation (\ref{eq:: Def S and P analytic continuation from L}), for different angular momenta $\ell_c \in \llbracket 0, 4 \rrbracket$. This definition induces no branch points for the shift function $S_c(E)$, as it unfolds the sheets of mapping (\ref{eq:rho_c(E) mapping}), in this non-relativistic massive particles case (\ref{eq:rho_c massive}), as shown in lemma \ref{lem:: analytic S_c and P_c lemma}. One can observe discontinuities (for odd angular momenta) and non-monotonic behavior (for even angular momenta) for sub-threshold energies. $P_c(E)$ is purely real, with branches, above threshold; and purely imaginary, with branches, below threshold.}}
  \label{fig:Analytic_continuation_S_and_P}
\end{figure}

An example to illustrate the difference between definitions (\ref{eq:: Def S = Re[L], P = Im[L]}) and (\ref{eq:: Def S and P analytic continuation from L}) is depicted in figures \ref{fig:Lane_and_Thomas_shift_and_penetration_factors_with_branch_points} and \ref{fig:Analytic_continuation_S_and_P}.
Consider the elemental case of a neutron channel with angular momentum $\ell_c = 1$, and let $\rho_0$ be the proportionality constant so that (\ref{eq:rho_c massive}) is written $\rho(E) = \pm \rho_0 \sqrt{E - E_{T_c}}$. Let us also set a zero threshold $E_{T_c} = 0$, for simplicity.

In this case, the legacy Lane \& Thomas definition (\ref{eq:: Def S = Re[L], P = Im[L]}) corresponds to taking $S(E) \coloneqq S_c(\rho_c(E)) = -\frac{1}{1+\rho_c^2}$ for above-threshold energies $E\geq E_{T_c}$, and switch to $S(E) \coloneqq L_c(\rho_c(E)) = \frac{- 1 + \mathrm{i}\rho_c + \rho_c^2 }{1- \mathrm{i}\rho_c}$ for sub-threshold energies $E < E_{T_c}$.
Since the (\ref{eq:rho_c massive}) mapping $\rho(E) = \pm \rho_0 \sqrt{E - E_{T_c}}$ has two sheets, this means definition (\ref{eq:: Def S = Re[L], P = Im[L]}) entails: $S(E) \coloneqq S_c(E) = -\frac{1}{1+\rho_0^2 E}$ for $E\geq E_{T_c}$, and $S(E) \coloneqq L_c(E) = \frac{- 1 \pm \mathrm{i}\rho_0 \sqrt{E} + \rho_0^2 E }{1  \mp \mathrm{i}\rho_0 \sqrt{E}}$ for $E < E_{T_c}$, which is a real quantity. Definition (\ref{eq:: Def S = Re[L], P = Im[L]}) thus introduces the ramifications reported in figure \ref{fig:Lane_and_Thomas_shift_and_penetration_factors_with_branch_points}. In particular, the full cyan line of our $\Re\big[L_c(E)\big]$ plot corresponds to the uncharged case for angular momentum $\ell = 0$ reported as a black curve in FIG.1, p.6 of \cite{Brune_Mark_monotonic_properties_of_shift_2018}. 
Notice that all the $\big\{ E,+ \big\}$ curves are continuous and monotonically increasing ($\frac{\partial S_c}{\partial E} \geq 0$), which is in accordance to the monotonic properties established in  \cite{Brune_Mark_monotonic_properties_of_shift_2018}. 
However, on the $\big\{E,-\big\}$ sheet below threshold, $\Re\big[L_c(E)\big]$ is no longer monotonic for even angular momenta ($\frac{\partial \Re\big[L_c(E)\big]}{\partial E} \geq 0$ does not hold), and is discontinuous in the case of odd angular momenta.

In contrast, for our same elemental case, the analytic continuation definition (\ref{eq:: Def S and P analytic continuation from L}) simply defines $S(E) \coloneqq S_c(\rho_c(E)) = -\frac{1}{1+\rho_c^2}$ for all real or complex energies $E \in \mathbb{C}$, that is $S(E) \coloneqq -\frac{1}{1+\rho_0^2 E}$. The later happens to have a real pole, which introduces a discontinuity, at $E_{\mathrm{dis.}} = -\frac{1}{\rho_0^2}$, as can be seen in figure \ref{fig:Analytic_continuation_S_and_P}.
One can observe that all odd angular momenta are monotonous but have a real sub-threshold pole.
For even angular momenta, $S_\ell(E)$ is continuous, monotonically increasing above-threshold, but $\frac{\partial S}{\partial E}(E) \geq 0$ does not hold below-threshold.
For the penetration function $P_c(E)$, each ramification is monotonous, but in opposite, mirror direction. 
In figure \ref{fig:Analytic_continuation_S_and_P}, the shift function $S_c(E)$ does not present branch points, as proved in lemma \ref{lem:: analytic S_c and P_c lemma}: it is a function of $\rho^2$ so no $\pm \sqrt{\cdot}$ choice is necessary in $\rho_c(E)$ mapping (\ref{eq:rho_c EDA}).

\subsubsection{\label{subsubsec::Number of Brune poles}Number of Brune poles}

Definitions (\ref{eq:: Def S = Re[L], P = Im[L]}) and (\ref{eq:: Def S analytical}) have a major impact on the Brune parameters (\ref{eq:Brune parameters}): they command that the number $N_S$ of Brune poles $\left\{\widetilde{E_i}\right\}$, solutions to Brune's generalized eigenproblem (\ref{eq:Brune eigenproblem}), is greater than the $N_\lambda$ previously found in \cite{Brune_2002}: i.e. $N_S \geq N_\lambda$.
And this is regardless of whether definition (\ref{eq:: Def S = Re[L], P = Im[L]}) or (\ref{eq:: Def S analytical}) is chosen for the shift factor $S_c(E)$ when searching for these solutions.

The fundamental reason for this is that Brune's three-step monotony argument, which elegantly proved in \cite{Brune_2002} that there are exactly $N_\lambda$ solutions to (\ref{eq:Brune eigenproblem}) and which we here recall in the last paragraph of section \ref{sec:R_S def}, rests on two hypotheses on the shift function $S_c(E)$: 1) it is continuous (i.e. has no real poles), and; 2) it is monotonously increasing, i.e. $\frac{ \partial S_c}{\partial E} \geq 0$.
In \cite{Brune_Mark_monotonic_properties_of_shift_2018}, these two hypotheses have just been proved to hold true for energies above threshold $E \geq E_{T_c}$, i.e. for real wavenumbers $k_c \in \mathbb{R}$.
Yet, we just established in lemmas \ref{lem:: Lane and Thomas S_c ramification properties} and \ref{lem:: analytic S_c and P_c lemma} that proper accounting of the multi-sheeted nature of the Riemann surface created by mapping (\ref{eq:rho_c(E) mapping}) shows these two hypotheses do not hold for sub-threshold energies $E < E_{T_c}$, where the wavenumber is purely imaginary from mapping (\ref{eq:rho_c massive}). 
This engenders additional solutions to Brune's generalized eigenproblem (\ref{eq:Brune eigenproblem}), so that the number $N_S$ of Brune poles $\left\{\widetilde{E_i}\right\}$ is in fact greater than the number of channels: $N_S \geq N_\lambda$.
So how many $N_S$ solutions are there? 
This depends on the R-matrix parameters and on the definition chosen for the shift function $S_c(E)$, as we now show in theorems \ref{theo::shadow_Brune_poles} and \ref{theo::analytic_Brune_poles}, for definitions (\ref{eq:: Def S = Re[L], P = Im[L]}) and (\ref{eq:: Def S analytical}), respectively.

\begin{theorem}\label{theo::shadow_Brune_poles} \textsc{Shadow Brune Poles}. \\
Let the branch Brune poles $\left\{\widetilde{E_i}\right\}$ be the solutions of the Brune generalized eigenproblem (\ref{eq:Brune eigenproblem}), using the legacy Lane \& Thomas definition (\ref{eq:: Def S = Re[L], P = Im[L]}) for the shift $S_c(E)$, and let $N_S$ be the number of such solutions, then:
\begin{itemize}
    \item all the branch Brune poles are real, and live on the $2^{N_c}$ sheets of the Riemann surface from (\ref{eq:rho_c(E) mapping}) mapping: $\left\{\widetilde{E_i}, \underbrace{\pm , \hdots , \pm  }_{N_c}\right\} \in \mathbb{R}^{N_S}$,
    \item exactly $N_\lambda$ branch Brune poles are present on the $\big\{ E, \underbrace{+ , \hdots , + }_{N_c} \big\}$ sheet of mapping (\ref{eq:rho_c(E) mapping}),
    \item additional shadow Brune poles can be found below threshold, $E<E_{T_c}$, on the $\big\{ E, - \big\}$ sheets of mapping (\ref{eq:rho_c(E) mapping}), depending on the values of the resonance parameters $\big\{E_\lambda, \gamma_{\lambda c}, B_c, E_{T_c}, a_c \big\}$ -- though in a way that is invariant under change of boundary-condition $B_c$,
    \item each neutral particle, odd angular momentum $\ell_c \equiv 1 \; (\mathrm{mod}\; 2)$, channel adds at least one shadow Brune pole below threshold on its $\big\{E, - \big\}$ sheet,
\end{itemize}
so that the total number $N_S^{\pm}$ of branch Brune poles on all sheets of mapping (\ref{eq:rho_c(E) mapping}) is greater than the number $N_\lambda$ of levels: $N_S^{\pm} \geq N_\lambda$.
\end{theorem}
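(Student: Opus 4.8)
The plan is to recast the generalized eigenproblem (\ref{eq:Brune eigenproblem}) as the scalar root search (\ref{eq:R_S by Brune det search}): a branch Brune pole is a real energy $E$ at which $E$ is itself an eigenvalue of the matrix $\boldsymbol{M}(E) \coloneqq \boldsymbol{e} - \boldsymbol{\gamma}(\boldsymbol{S}(E) - \boldsymbol{B})\boldsymbol{\gamma}^\mathsf{T}$, and then to count these roots sheet by sheet of the Riemann mapping (\ref{eq:rho_c(E) mapping}). For the reality claim I would note that with the Lane \& Thomas definition (\ref{eq:: Def S = Re[L], P = Im[L]}) the shift $S_c(E) = \Re[L_c(E)]$ is a real-valued, though non-analytic, function for every, even complex, $E$. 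Hence on each fixed sheet $\boldsymbol{M}(E)$ has real entries and is symmetric, so its eigenvalue branches $\{\lambda_j(E)\}$ are real; the self-consistency $E = \lambda_j(E)$ then forces $E \in \mathbb{R}$, the sheet label entering only through the choice of $\rho_c(E)$ inside $L_c$. This pins every branch Brune pole to the real axis of one of the $2^{N_c}$ sheets.

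To count the poles on the physical $\{E,+,\dots,+\}$ sheet, I would restrict to that sheet, where lemma \ref{lem:: Lane and Thomas S_c ramification properties} guarantees $S_c(E)$ is continuous and monotone increasing. Brune's three-step monotony argument recalled in section \ref{sec:R_S def} then applies verbatim: each of the $N_\lambda$ real eigenvalue branches $\lambda_j(E)$ crosses the identity line $y=E$ exactly once, giving precisely $N_\lambda$ solutions. The $B_c$-invariance of the shadow poles follows by writing $\boldsymbol{R}_S^{-1} = (\boldsymbol{R}^{-1} + \boldsymbol{B}) - \boldsymbol{S}$ from (\ref{eq:R_S by Brune}): the combination $\boldsymbol{R}^{-1} + \boldsymbol{B}$ is invariant under change of boundary condition by (\ref{eq:: R_B invariance for B'}), while $\boldsymbol{S}$ does not depend on $B_c$ at all, so $\boldsymbol{R}_S^{-1}(E)$ and hence the zero set of $\det(\boldsymbol{R}_S^{-1})$ depend only on $\{E_\lambda, \gamma_{\lambda c}, E_{T_c}, a_c\}$.

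The heart of the argument is the lower bound contributed by each neutral, odd-$\ell$ channel, which I would treat as follows. Lemmas \ref{lem:: Lane and Thomas S_c ramification properties} and \ref{lem:: analytic S_c and P_c lemma} show that on the $\{E,-\}$ sheet below threshold such a channel contributes exactly one simple real pole of $S_c(E)$, across which $S_c \to \pm\infty$ with opposite signs. Consequently the affected eigenvalue branch $\lambda_j(E)$ diverges to $\mp\infty$ on the two sides of that pole, so the continuous function $f_j(E) \coloneqq \lambda_j(E) - E$ runs off to one signed infinity just beyond the pole while the term $-E$ dominates and drives $f_j$ to the opposite infinity as $E \to -\infty$; the intermediate value theorem on the interval between the pole and $E \to -\infty$ then yields at least one additional real root. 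I would confirm this mechanism on the elemental single-level $\ell_c = 1$ case, where $S(E)$ has its pole at $E = -1/\rho_0^2$ and $f$ changes sign for $E < -1/\rho_0^2$. Assembling the pieces, the $N_\lambda$ physical-sheet roots together with the nonnegative shadow contributions give $N_S^{\pm} \geq N_\lambda$, with strict inequality whenever a neutral odd-$\ell$ channel is present.

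The hard part will be making the third paragraph rigorous in the genuine multi-channel, multi-level setting. One must follow a single eigenvalue branch of $\boldsymbol{M}(E)$ through the divergence of a single $S_c$ while the branches may cross or become degenerate --- the no-crossing heuristic available on the physical sheet fails once monotonicity is lost --- and argue that the swing induced by the pole produces a genuine extra crossing of the identity line rather than one cancelled by a neighbouring branch. A clean way around this may be to work directly with the scalar $\det(\boldsymbol{R}_S^{-1}(E))$ and track its sign changes across the pole of $S_c$, rather than with individual eigenvalue branches.
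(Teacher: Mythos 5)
Your proposal follows essentially the same route as the paper's proof: reality of the branch Brune poles from the real-symmetry of $\boldsymbol{e} - \boldsymbol{\gamma}(\boldsymbol{S}(E)-\boldsymbol{B})\boldsymbol{\gamma}^\mathsf{T}$ under the Lane \& Thomas definition, exactly $N_\lambda$ crossings on the physical sheet via Brune's three-step monotony argument and lemma \ref{lem:: Lane and Thomas S_c ramification properties}, $B_c$-invariance through (\ref{eq:R_S by Brune det search}) with (\ref{eq:: R_B invariance for B'}), and the extra sub-threshold crossing forced by the real pole of $S_c$ on the $\{E,-\}$ sheet for neutral odd-$\ell$ channels. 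Your explicit intermediate-value argument for that last crossing, and your caveat about tracking eigenvalue branches through the divergence in the genuine multi-channel case, are in fact more careful than the paper, which settles that step by appeal to the elemental example and figure \ref{fig:Brune_argument_odd_angular_momenta}.
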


\begin{proof}
Let us go about solving the Brune generalized eigenproblem (\ref{eq:Brune eigenproblem}), following the three-step argument of Brune (c.f. last paragraph of section \ref{sec:R_S def}).
We consider the left-hand side of (\ref{eq:Brune eigenproblem}).
According to definition (\ref{eq:: Def S = Re[L], P = Im[L]}), the shift function is always real, even for complex wavenumbers $k_c \in \mathbb{C}$. 
Since by construction the Wigner-Eisenbud R-matrix parameters $\big\{E_\lambda, \gamma_{\lambda c}, B_c, E_{T_c}, a_c \big\}$ are also all real, this implies the right-hand side must be real to solve (\ref{eq:Brune eigenproblem}).
Thus, all branch Brune poles from definition (\ref{eq:: Def S = Re[L], P = Im[L]}) are real.
To find them, we follow Brune's approach: for any energy $E$, on any of the $2^{N_c}$ sheets of mapping (\ref{eq:rho_c(E) mapping}), the left-hand side is a real symmetric matrix, and its eigenvalue decomposition will thus yield $N_\lambda$ real eigenvalues: $\big\{\widetilde{E_i}(E)\big\} \in \mathbb{R}$. 
We then have to vary the $E$ value until these real eigenvalues cross the $E=E$ identity line in the right-hand side. 
In general, the full accounting of all the Riemann sheets from mapping (\ref{eq:rho_c(E) mapping}) will entail solutions of the generalized Brune eigenproblem (\ref{eq:Brune eigenproblem}) on all sheets.
These branch Brune poles should thus be reported with the choice of sheet from the mapping (\ref{eq:rho_c(E) mapping}) for each channel, as in (\ref{eq:: pole E_j sheet reporting}) for the poles of the Kapur-Peierls operator of section \ref{sec:R_L parameters}: $\left\{\widetilde{E_i}, +, -, \hdots, + \right\} $.

We state in lemma \ref{lem:: Lane and Thomas S_c ramification properties} than on the $\big\{ E, + \big\}$ sheet, $S_c(E)$ is indeed continuous and monotonously increasing. 
We can thus apply Brune's three-step argument: the $N_\lambda$ eigenvalues of the left-hand side of (\ref{eq:Brune eigenproblem}) will satisfy $\frac{ \partial \widetilde{E_i}}{\partial E} (E) \leq 0$, and thus each and every one of them will eventually cross the $E=E$ identity line exactly once as $E$ varies continuously. 
On the $\big\{ E, + \big\}$ sheet for all channels, there are thus exactly $N_\lambda $ Brune poles: $\left\{\widetilde{E_i}, \underbrace{+ , \hdots , +  }_{N_c}\right\} \in \mathbb{R}^{N_\lambda}$

However, we showed in lemma \ref{lem:: Lane and Thomas S_c ramification properties} that $S_c(E)$ is not monotonous and can be discontinuous for sub-threshold energies $E < E_{T_c}$ on the $\big\{ E, - \big\}$ sheet.
So how many Brune poles are there on all sheets? 
Unfortunately, the number of solutions to Brune's generalized eigenproblem (\ref{eq:Brune eigenproblem}) will depend on the values of the resonance parameters $\big\{E_\lambda, \gamma_{\lambda c}, B_c, E_{T_c}, a_c \big\}$ -- though in a way that is invariant under change of boundary-condition $B_c$, as made evident in (\ref{eq:R_S by Brune det search}) when considering invariance (\ref{eq:: R_B invariance for B'}).
That the number of solutions to (\ref{eq:Brune eigenproblem}) depends on the parameters can be observed in figure \ref{fig:Brune_argument_even_angular_momenta}. 

For neutral particles odd momenta $\ell_c \equiv 1 \; (\mathrm{mod} \; 2)$ channels, lemma \ref{lem:: Lane and Thomas S_c ramification properties} also showed there exist exactly one sub-threshold pole to $S_c(E)$ on the $\big\{ E, - \big\}$ sheet of mapping (\ref{eq:rho_c(E) mapping}).
This pole will automatically cross the $E=E$ line of Brune's three-step argument twice, once below and once above threshold, adding an additional shadow Bune pole to the $N_\lambda$ Brune found in \cite{Brune_2002}.
This proves that there exists shadow Brune poles, just as shadow poles in the Siegert-Humblet parameters were revealed by G.Hale in \cite{Hale_1987, anti-Hale_1987}. 
This behavior is illustrated in figure \ref{fig:Brune_argument_odd_angular_momenta}.
\end{proof}

\begin{figure}[ht!!] 
  \centering
  \includegraphics[width=0.50\textwidth]{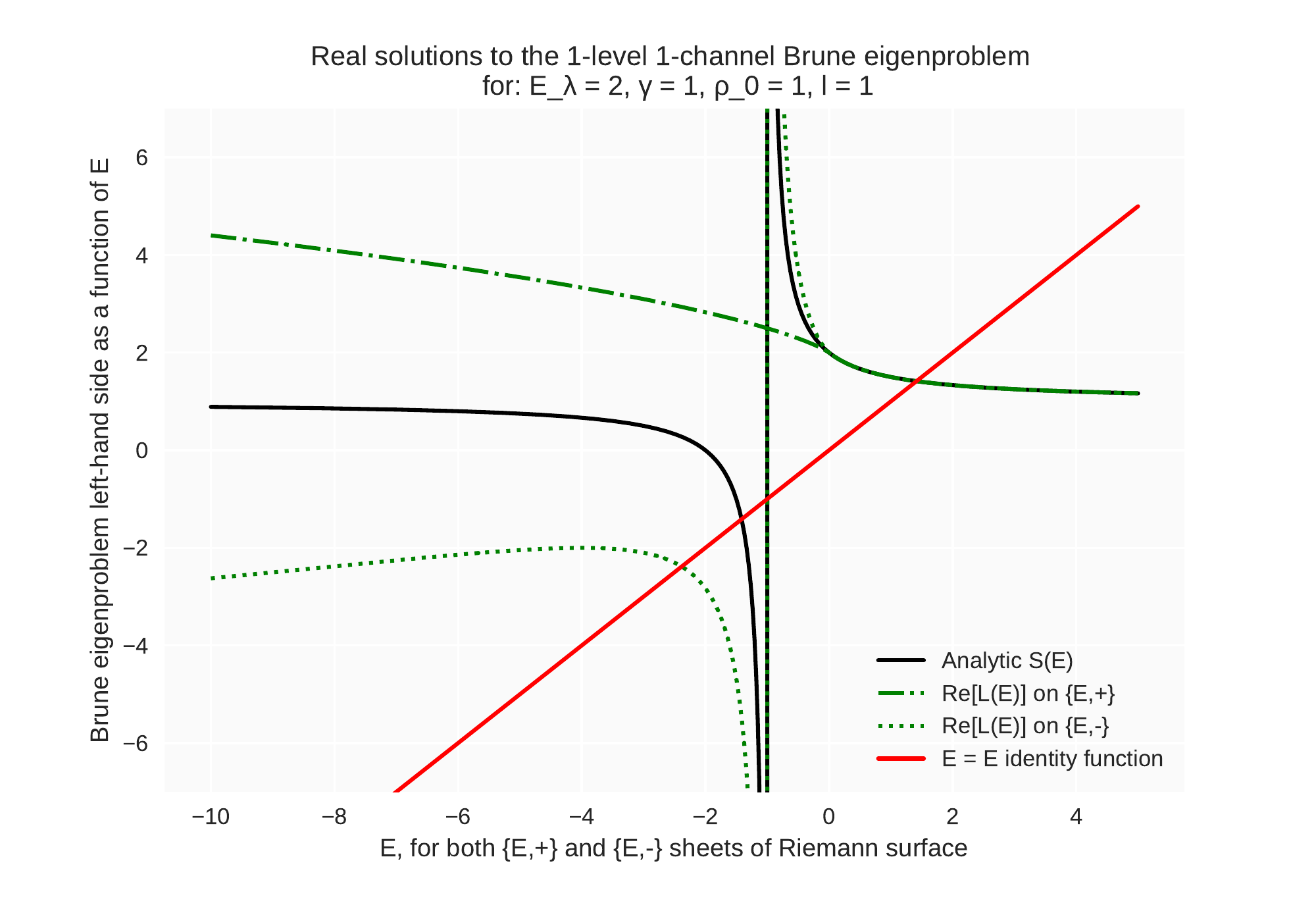}
  \includegraphics[width=0.50\textwidth]{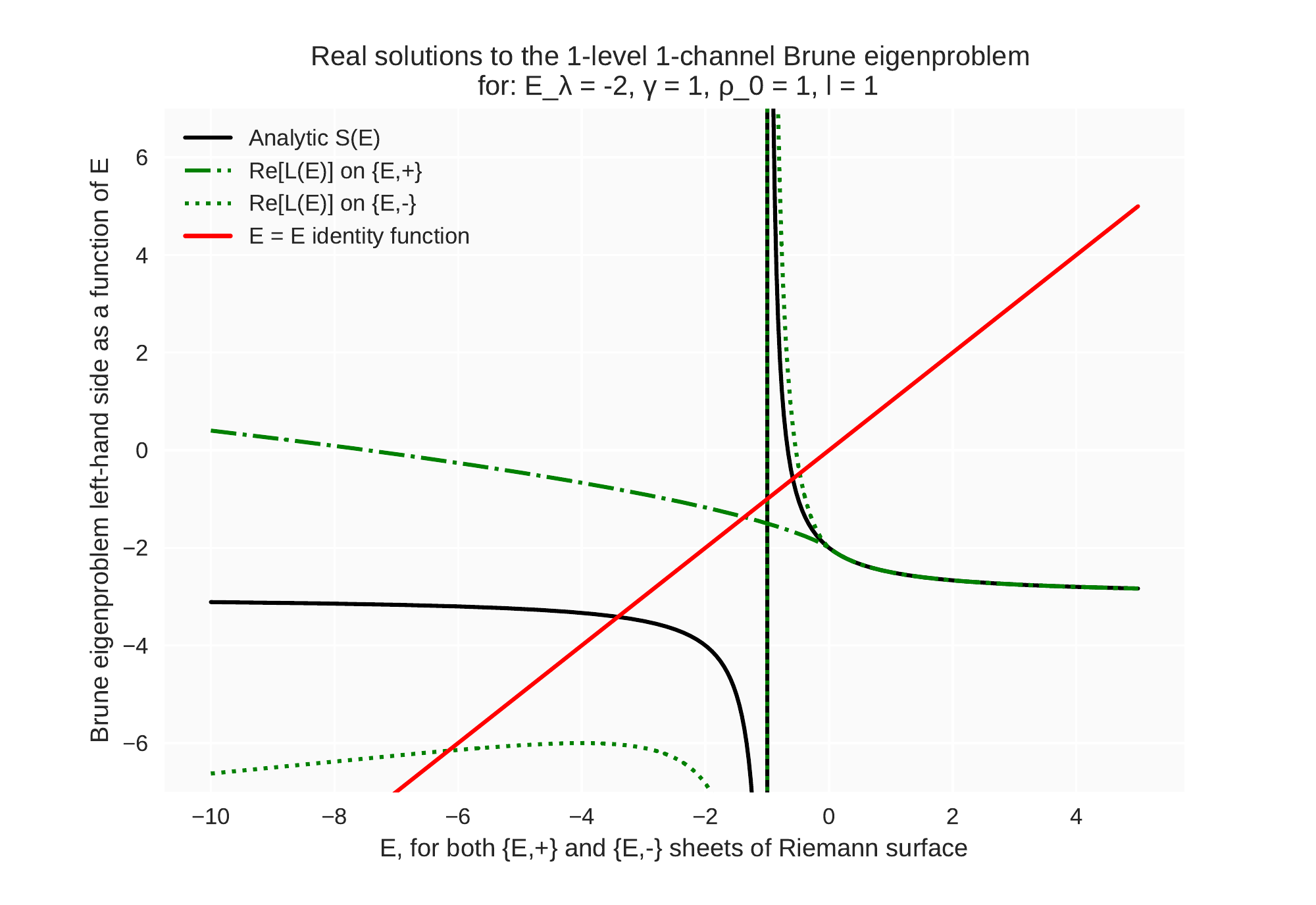}
  \caption{\small{Elemental Brune eigenproblem (\ref{eq:: elemental Brune problem}): comparison of solutions from definitions (\ref{eq:: Def S = Re[L], P = Im[L]}) versus (\ref{eq:: Def S and P analytic continuation from L}), for angular momentum $\ell_c = 1$, neutral particles, using $B_c = - \ell_c$ convention and zero threshold $E_{T_c}$.  Since both have a real sub-threshold poles, both will yield two solutions (crossing the E=E diagonal), one above and one below the discontinuity. If at threshold energy $E_{T_c}$ the left hand side of (\ref{eq:: elemental Brune problem}) is above the E=E diagonal, then the above-threshold solutions from both definitions coincide. In any case, the sub-threshold solutions differ. Behavior is analogous for all odd angular momenta $\ell_c \equiv 1 (\mathrm{mod} 2)$.}}
  \label{fig:Brune_argument_odd_angular_momenta}
\end{figure}

Theorem \ref{theo::shadow_Brune_poles} establishes the existence of sub-threshold shadow Brune poles when the legacy Lane \& Thomas definition (\ref{eq:: Def S = Re[L], P = Im[L]}) is chosen for the shift function $S_c(E)$. 
If instead the analytic continuation definition (\ref{eq:: Def S analytical}) is chosen, we now show in theorem \ref{theo::analytic_Brune_poles} that this unfolds the Riemann surface for the shift function $S_c(E)$ so that no branch points are required to define the Brune parameters.
In contrast, a drawback of definition (\ref{eq:: Def S and P analytic continuation from L}) is that it does not automatically close the channels of the scattering matrix for negative energies, though we propose a solution to this problem in theorems \ref{theo::evanescence of sub-threshold transmission matrix} and \ref{theo::Analytic continuation annuls sub-threshold cross sections}.
We argue in this article that the analytic continuation approach (\ref{eq:: Def S and P analytic continuation from L}) is the physically correct one.
This is because of the commendable argument that analytic continuation cancels out of the scattering matrix $\boldsymbol{U}(E)$ non-physical poles otherwise introduced by the Lane \& Thomas approach (\ref{eq:: Def S = Re[L], P = Im[L]}), as we will demonstrate in theorem \ref{theo::Analytic continuation of scattering matrix cancels spurious poles}.
Though there is no absolute consensus yet amongst the community as to which approach ought to be valid, both yield identical results for real energies above threshold (real wavenumbers $k_c \in \mathbb{R}$).

\begin{theorem}\label{theo::analytic_Brune_poles} \textsc{Analytic Brune Poles}. \\
Let the analytic Brune poles $\left\{\widetilde{E_i}\right\}$ be the solutions of the Brune generalized eigenproblem (\ref{eq:Brune eigenproblem}), using the analytic continuation definition (\ref{eq:: Def S and P analytic continuation from L}) for the shift $S_c(E)$, and let $N_S$ be the number of such solutions, then:
\begin{itemize}
    \item the analytic Brune poles are in general complex, and live on the single sheet of the unfolded Riemann surface from (\ref{eq:rho_c(E) mapping}) mapping: $\left\{\widetilde{E_i} \right\} \in \mathbb{C}^{N_S}$, 
    \item in the neutral particle case, there are exactly $N_S$ complex analytic Brune poles with:
\begin{equation}
N_S = N_\lambda + \sum_{c=1}^{N_c}\ell_c
\label{eq:N_S Brune poles}
\end{equation}
    \item in the charged particles case, there is a countable infinity of complex analytic Brune poles: $N_S = \infty$,
    \item the number $N_S^{\mathbb{R}}$ of real analytic Brune poles, $\left\{\widetilde{E_i} \right\} \in \mathbb{R}^{N_S^{\mathbb{R}}}$, is greater than the number of levels, $ N_S^{\mathbb{R}} \geq N_\lambda$, and depends on the values of the resonance parameters $\big\{E_\lambda, \gamma_{\lambda c}, B_c, E_{T_c}, a_c \big\}$ -- though in a way that is invariant under change of boundary-condition $B_c$,
    \item each neutral particle, odd angular momentum $\ell_c \equiv 1 \; (\mathrm{mod}\; 2)$, channel adds at least one real analytic Brune pole below threshold,
\end{itemize}
so that the number $N_S$ of complex and $N_S^{\mathbb{R}}$ of real analytic Brune poles is greater than the number $N_\lambda$ of levels: $ N_S \geq N_S^{\mathbb{R}} \geq N_\lambda$.
\end{theorem}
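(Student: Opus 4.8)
The plan is to recast the generalized eigenproblem (\ref{eq:Brune eigenproblem}) as the root-finding problem (\ref{eq:R_S by Brune det search}), i.e.\ to count the zeros of $F(E) \coloneqq \det\boldsymbol{M}(E)$ where $\boldsymbol{M}(E) \coloneqq \boldsymbol{e} - E\Id{} - \boldsymbol{\gamma}\left(\boldsymbol{S}(E) - \boldsymbol{B}\right)\boldsymbol{\gamma}^\mathsf{T}$ is the $N_\lambda \times N_\lambda$ level-space matrix appearing in (\ref{eq:Brune eigenproblem}), whose determinant coincides with $\det\boldsymbol{R}_S^{-1}$ of (\ref{eq:R_S by Brune}) through the Sylvester determinant identity already invoked in (\ref{eq:R_S by Brune det search}). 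The first, easy observation settles item one: under the analytic-continuation definition (\ref{eq:: Def S and P analytic continuation from L}), lemma \ref{lem:: analytic S_c and P_c lemma} shows each $S_c$ is a single-valued function of $\rho_c^2$, and since $\rho_c^2 \propto (E - E_{T_c})$ is single-valued in $E$ (no $\pm\sqrt{\cdot}$ choice survives, c.f.\ (\ref{eq:rho_c massive})), $F(E)$ is a genuine single-valued meromorphic function of $E$ on one unfolded sheet; its zeros $\left\{\widetilde{E_i}\right\}$ are therefore generically complex and carry no sheet label, in contrast with the branch Brune poles of theorem \ref{theo::shadow_Brune_poles}.

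For the neutral-particle count (item two) I would turn $F$ into a polynomial and read off its degree. By lemma \ref{lem:: analytic S_c and P_c lemma} and table \ref{tab::S_and_P_expressions_neutral}, each $S_c(E) - B_c$ is a rational fraction in $E$ whose denominator $D_c(E)$ has degree exactly $\ell_c$, its $\ell_c$ poles being the squared lower-quadrant roots $\omega_n^2$ of $O_c$ in (\ref{eq::S_c expansion in rho^2}). The entries of $\boldsymbol{M}(E)$ are thus rational with poles only at the roots of the $D_c$; at each such simple pole the principal part of $\boldsymbol{M}$ is the rank-one matrix $-\mathrm{Res}(S_c)\,\boldsymbol{\gamma}_{\cdot c}\boldsymbol{\gamma}_{\cdot c}^\mathsf{T}$, so by the matrix-determinant lemma $F=\det\boldsymbol{M}$ has at most a simple pole there. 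Consequently $\mathcal{N}(E) \coloneqq F(E)\prod_{c}D_c(E)$ is a \emph{polynomial} whose roots are precisely the zeros of $F$ (the multiplication clears the poles without creating new roots, coincident poles being absorbed by the corresponding higher-order vanishing of $\prod_c D_c$). Its degree follows from two limits: since $S_c(E)-B_c$ tends to a finite constant as $E\to\infty$, one has $\boldsymbol{M}(E) = -E\Id{} + \mathcal{O}(1)$ and hence $F(E)\sim(-E)^{N_\lambda}$, while $\prod_c D_c(E)\sim E^{\sum_c \ell_c}$; therefore $\deg\mathcal{N} = N_\lambda + \sum_c \ell_c = N_S$, which is exactly (\ref{eq:N_S Brune poles}). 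The fundamental theorem of algebra then yields exactly $N_S$ complex analytic Brune poles.

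For charged particles (item three) the same $F(E)$ is no longer rational: the Whittaker outgoing functions possess a countable infinity of zeros (c.f.\ the discussion following lemma \ref{lem::Mittag-Leffler of L_c Lemma}), so $S_c(E)$, and thus $F(E)$, is meromorphic with infinitely many poles, forcing a countable infinity of zeros $\left\{\widetilde{E_i}\right\}$. The real-root statements (items four and five) then follow by running Brune's three-step monotony argument on the real axis of the unfolded sheet: for real $E$ the matrix $\boldsymbol{e} - \boldsymbol{\gamma}(\boldsymbol{S}(E)-\boldsymbol{B})\boldsymbol{\gamma}^\mathsf{T}$ is real symmetric (lemma \ref{lem:: analytic S_c and P_c lemma} gives $S_c(E)\in\mathbb{R}$), hence has $N_\lambda$ real eigenvalue branches $\widetilde{E_i}(E)$; wherever $S_c$ is continuous with $\partial S_c/\partial E \geq 0$ --- established above threshold in \cite{Brune_Mark_monotonic_properties_of_shift_2018} --- each branch is non-increasing and crosses the diagonal $y=E$ once, reproducing the $N_\lambda$ baseline solutions so that $N_S^{\mathbb{R}} \geq N_\lambda$. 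For odd angular momentum $\ell_c \equiv 1 \,(\mathrm{mod}\,2)$, lemma \ref{lem:: analytic S_c and P_c lemma} furnishes exactly one real sub-threshold pole of $S_c$, across which a branch diverges and re-enters, forcing at least one extra real crossing below threshold --- the analytic analogue of the shadow crossing of theorem \ref{theo::shadow_Brune_poles}. Collecting these gives the chain $N_S \geq N_S^{\mathbb{R}} \geq N_\lambda$.

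The main obstacle I anticipate is making the degree count \emph{tight} rather than merely an upper bound: one must verify that no cancellation drops the leading coefficient of $\mathcal{N}(E)$, so that $\deg\mathcal{N}$ equals $N_\lambda + \sum_c\ell_c$ and not less, and that every $D_c$-pole of $F$ is genuinely present and simple, i.e.\ that $\boldsymbol{\gamma}_{\cdot c}\neq \boldsymbol{0}$ and $\boldsymbol{\gamma}_{\cdot c}^\mathsf{T}\boldsymbol{M}_{\mathrm{reg}}^{-1}\boldsymbol{\gamma}_{\cdot c}\neq 0$ at the pole; these hold generically but deserve a clean genericity statement. The second delicate point is the real-root accounting near the sub-threshold poles, where Brune's monotonicity is unavailable and the crossing must be argued from the $\pm\infty$ divergence of $S_c$ alone, so that only the lower bound $N_S^{\mathbb{R}} \geq N_\lambda$ can be claimed --- the exact real count genuinely depends on the resonance parameters, though invariantly in $B_c$ by (\ref{eq:R_S by Brune det search}).
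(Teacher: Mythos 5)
Your overall route is the same as the paper's: reduce the generalized eigenproblem (\ref{eq:Brune eigenproblem}) to the determinant root-count (\ref{eq:R_S by Brune det search}), clear the denominators of the shift functions to obtain a polynomial, read off its degree, and recover the real-root statements from Brune's three-step monotony argument together with the odd-$\ell$ sub-threshold pole of lemma \ref{lem:: analytic S_c and P_c lemma}. Your degree count via the leading asymptotics $F(E)\sim(-E)^{N_\lambda}$ and $\prod_c D_c(E)\sim E^{\sum_c\ell_c}$ is cleaner than the paper's term-by-term $n$-linearity expansion of the determinant, and your rank-one principal-part observation correctly establishes that an isolated pole of a single $S_c$ contributes only a simple pole to $F$.

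There is, however, one concrete step that fails: the claim that multiplying by $\prod_c D_c(E)$ clears the poles \emph{without creating new roots}, with coincident poles ``absorbed.'' When $m_c$ channels share the same shift function (same $\ell_c$ and same $\rho_0$, a situation that is common in practice since evaluators reuse channel radii), the principal part of $\boldsymbol{M}(E)$ at the common pole is a sum of $m_c$ rank-one terms, so it has rank at most $\min\{m_c,N_\lambda\}$ and $F=\det\boldsymbol{M}$ has a pole of order at most $\min\{m_c,N_\lambda\}$ there --- not $m_c$. Multiplying by $\prod_c D_c$, which vanishes to order $m_c$ at that point, then manufactures $m_c-\min\{m_c,N_\lambda\}$ spurious roots of $\mathcal{N}$ that are not zeros of $F$, and the true count drops below $N_\lambda+\sum_c\ell_c$. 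This is exactly the content of the paper's diagonal-divisibility and capped-multiplicities lemma \ref{lem::diagonal divisibility and capped multiplicities}, which refines (\ref{eq:N_S Brune poles}) to the capped formula (\ref{eq: capped multiplicities N_S}) in which each repeated $S_c$ contributes its $\ell_c$ poles only $\min\{m_c,N_\lambda\}$ times. Your ``obstacles'' paragraph gestures at genericity of $\boldsymbol{\gamma}_{\cdot c}$ but does not identify this degeneracy, which is not a measure-zero accident of the widths but a structural consequence of repeated channel data; to make the count tight you need the capping argument (or an explicit hypothesis that all $S_c$ are distinct). The remainder of your argument --- single-sheetedness from the $\rho_c^2$ dependence, the countable infinity for Coulomb channels, and the lower bound $N_S^{\mathbb{R}}\geq N_\lambda$ with the extra sub-threshold crossing for odd $\ell_c$ --- matches the paper's reasoning.
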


\begin{proof}

The proof follows the one of theorem \ref{theo::shadow_Brune_poles}. 
However, when considering the left-hand side of (\ref{eq:Brune eigenproblem}), the shift function is now defined from analytic continuation definition (\ref{eq:: Def S and P analytic continuation from L}), which in general entails $S_c(E)$ is a complex number. 
This entails the left-hand side of (\ref{eq:Brune eigenproblem}) is now a complex symmetric matrix. 
In general, a complex symmetric matrix is not diagonalizable, has no special properties on its spectrum, and we refer to reference literature on its Jordan canonical form and other properties 
\cite{Craven_complex_symmetric_1969, Nondefective_complex_symmetric_matrices_1985, complex_symmetric_matrix_SVD_1988, Scott_complex_symmetric_1993, fast_diag_of_complex_symmetric_matrices_for_quantum_applications_1997, Complex_symmetric_operators_2005, Complex_symmetric_operators_II_2007}.
Nonetheless, we know the left-hand side of (\ref{eq:Brune eigenproblem}) will be real-symmetric, thus diagonalizable, for real energies above threshold, which hints (but does not prove) it is probably a good assumption to assume the complex symmetric matrix to be non-defective in general. 
Regardless of the eigenvectors, we can search for the Brune poles $\left\{\widetilde{E_i} \right\} $ by solving problem (\ref{eq:R_S by Brune det search}) directly (c.f. discussion around equation (51) in \cite{Brune_2002}).
Here, the analytic properties of definition (\ref{eq:: Def S and P analytic continuation from L}), established in lemma \ref{lem:: analytic S_c and P_c lemma}, entail the determinant in (\ref{eq:R_S by Brune det search}) is a meromorphic operator of $\rho^2$, which unfolds mapping (\ref{eq:rho_c(E) mapping}) so that all the solutions of (\ref{eq:R_S by Brune det search}) live on one single sheet. 

In the case of $N_c$ massive neutral channels, the shift factor $S_c(\rho)$ is a rational fraction in $\rho^2$ with a degree of $\ell_c$ (in $E$ space) in the denominator, where $\ell_c$ is the angular momentum of the channel (c.f. table \ref{tab::S_and_P_expressions_neutral} and lemma \ref{lem:: analytic S_c and P_c lemma} with table \ref{tab::roots of the outgoing wave functions}).
The search for the poles of the $\boldsymbol{R}_S$ operator (\ref{eq:R_S by Brune det search}) will then yield $N_S$ complex Brune poles $\left\{\widetilde{E_i}\right\} \in \mathbb{C}$ with $N_S = N_\lambda + \sum_{c=1}^{N_c}\ell_c$, as stated in (\ref{eq:N_S Brune poles}).
The intuition behind this number $N_S$ is that both the R-matrix (\ref{eq:R expression}) and the diagonal matrix of shift functions, $\boldsymbol{S}(E) \coloneqq \mathrm{\boldsymbol{diag}}\left( S_c(E)\right)$, will each contribute their number of poles, $N_\lambda$ and $\sum_c \ell_c$ respectively, adding them up to yield $N_S = N_\lambda + \sum_{c=1}^{N_c}\ell_c$ solutions (\ref{eq:N_S Brune poles}) to the determinant problem (\ref{eq:R_S by Brune det search}).
We achieved a formal proof of result (\ref{eq:N_S Brune poles}), though it is somewhat technical. 
It rests on the diagonal divisibility and capped multiplicities lemma \ref{lem::diagonal divisibility and capped multiplicities}, which we apply to the developed rational fraction $\mathrm{det}\left( \boldsymbol{R}_S^{-1}(E)\right)$ in (\ref{eq:R_S by Brune det search}), or directly to (\ref{eq:Brune eigenproblem}), depending on whether $N_\lambda \geq N_c$ or $N_c \geq N_\lambda$. 
In the (most common) case of $N_\lambda \geq N_c$, we develop $\mathrm{det}\left( \boldsymbol{R}_S^{-1}\right)(E) = \mathrm{det}\left( \boldsymbol{R}^{-1} - \boldsymbol{S^0}\right)(E)$ by n-linearity:
  $ \mathrm{det}\left( \boldsymbol{R}^{-1} - \boldsymbol{S^0}\right)= \mathrm{det}\left( \boldsymbol{R}^{-1}\right) \mathrm{det}\left( \Id{} - \boldsymbol{R}\boldsymbol{S^0}\right)$ with $\mathrm{det}\left( \Id{} - \boldsymbol{R}\boldsymbol{S^0}\right) = 1 - \mathrm{Tr}\left( \boldsymbol{R}\boldsymbol{S^0}\right) + \hdots + \mathrm{Tr}\left( \boldsymbol{\mathrm{Adj}}\left(-\boldsymbol{R}\boldsymbol{S^0}\right)\right) + \mathrm{det}\left(-\boldsymbol{R}\boldsymbol{S^0}\right) $, so that: 
  $\mathrm{det}\left( \boldsymbol{R}_S^{-1}\right) = \mathrm{det}\left( \boldsymbol{R}^{-1}\right)  - \mathrm{Tr}\left( \boldsymbol{\mathrm{Adj}}\left(\boldsymbol{R}^{-1}\right)\boldsymbol{S^0}\right) + \hdots - \mathrm{Tr}\left(\boldsymbol{R}^{-1}\boldsymbol{\mathrm{Adj}}\left(\boldsymbol{S^0}\right)\right) +  (-1)^{N_c} \mathrm{det}\left(\boldsymbol{S^0}\right)  $.
  In the latter expression, $\boldsymbol{R}^{-1}(E) = \boldsymbol{\gamma}^+ \left(\boldsymbol{e} - E \Id{} \right) {\boldsymbol{\gamma}^\mathsf{T}}^+$ has no poles, so its determinant is a polynomial $\mathrm{det}\left( \boldsymbol{R}^{-1}\right)(E) \in \mathbb{C}[X]$. 
 The rational fraction with greatest degree in the denominator is $\mathrm{det}\left(\boldsymbol{S^0}\right) (E) \in \mathbb{C}(X)$.
 For neutral particles $S_c^0(E) = \frac{s^0_c(E)}{d_c(E)} $, where the denominator is of degree $\ell_c = \mathrm{deg}\left(d_c(E) \right)$ in $E$ space (c.f. table \ref{tab::S_and_P_expressions_neutral}), so that to rationalize the rational fraction $\mathrm{det}\left( \boldsymbol{R}_S^{-1}\right)(E) \in \mathbb{C}(X)$, we must multiply it by the denominator of $\mathrm{det}\left(\boldsymbol{S^0}\right) (E)$, which is $\prod_{c=1}^{N_c} d_c(E)$, a polynomial of degree $\sum_c \ell_c$. 
 That is $\left(\prod_{c=1}^{N_c} d_c(E) \right) \times \mathrm{det}\left( \boldsymbol{R}_S^{-1}\right)(E) = \left(\prod_{c=1}^{N_c} d_c(E) \right) \times \mathrm{det}\left( \boldsymbol{R}^{-1}\right)(E)  + \hdots + (-1)^{N_c}\prod_{c=1}^{N_c} s^0_c(E)  \in \mathbb{C}[X]$.
 The dominant degree polynomial in this expression is $\left(\prod_{c=1}^{N_c} d_c(E) \right) \times \mathrm{det}\left( \boldsymbol{R}^{-1}\right)(E) $. In this expression, the total degree of the polynomial is the sum of the degrees of the product terms.
 We readily have $\mathrm{deg}\left(\prod_{c=1}^{N_c} d_c(E) \right) = \sum_c \ell_c$. 
For the degree of the determinant term $\mathrm{det}\left( \boldsymbol{R}^{-1}\right)(E)$, the application of diagonal divisibility and capped multiplicities lemma \ref{lem::diagonal divisibility and capped multiplicities} stipulates that if $E_{\lambda_1} = E_{\lambda_2} = \hdots = E_{\lambda_{m_\lambda}} $, this multiplicity ${m_\lambda}$ of the resonance energy value $E_\lambda $ will be capped by $N_c$. In practice, this does not happen because the Wigner-Eisenbud resonance parameters $E_\lambda$ are defined as different from each other $E_\lambda \neq E_{\mu\neq \lambda}$.
 This is no longer true in the case $N_c \geq N_\lambda$, where developing the determinant of (\ref{eq:Brune eigenproblem}) directly will similarly yield by n-linearity, and denoting $\boldsymbol{\Delta} \coloneqq \boldsymbol{e} - E \Id{}$ for clarity of scripture: 
 $\mathrm{det}\left( \boldsymbol{\Delta} - \boldsymbol{\gamma} \boldsymbol{S^0} \boldsymbol{ \gamma}^\mathsf{T}\right) = \mathrm{det}\left( \boldsymbol{\Delta}\right)  - \mathrm{Tr}\left( \boldsymbol{\mathrm{Adj}}\left(\boldsymbol{\Delta}\right)\boldsymbol{\gamma} \boldsymbol{S^0} \boldsymbol{ \gamma}^\mathsf{T}\right) + \hdots - \mathrm{Tr}\left(\boldsymbol{\Delta} \; \boldsymbol{\mathrm{Adj}}\left(\boldsymbol{\gamma} \boldsymbol{S^0} \boldsymbol{ \gamma}^\mathsf{T}\right)\right) + (-1)^{N_\lambda} \mathrm{det}\left(\boldsymbol{\gamma} \boldsymbol{S^0} \boldsymbol{ \gamma}^\mathsf{T}\right)  $.
Again, in the latter expression the rational fraction with the highest-degree denominator is $\mathrm{det}\left(\boldsymbol{\gamma} \boldsymbol{S^0} \boldsymbol{ \gamma}^\mathsf{T}\right)(E) \in \mathbb{C}(X)$. 
Applying the diagonal divisibility and capped multiplicities lemma \ref{lem::diagonal divisibility and capped multiplicities} to it commands that if there are various channels with the same $S_c(E)$, for instance with the same $\ell_c$ and ${\rho_0}_c$, their multiplicity of occurrence is capped by $N_\lambda$ when rationalizing the fraction  $\mathrm{det}\left(\boldsymbol{\gamma} \boldsymbol{S^0} \boldsymbol{ \gamma}^\mathsf{T}\right)(E) \in \mathbb{C}(X)$, so that $Q(E) \times \mathrm{det}\left(\boldsymbol{\gamma} \boldsymbol{S^0} \boldsymbol{ \gamma}^\mathsf{T}\right)(E) \in \mathbb{C}[X]$ is a polynomial, with $Q(E) \coloneqq \left(\prod_{\begin{array}{c}
     c = 1  \\
     d_c \neq d_{c \neq c'}
\end{array}}^{N_c} d_c(E)  \right) \times  \left( \prod_{\begin{array}{c}
     c' = 1  \\
     d_c = d_{c \neq c'}
\end{array}}^{\mathrm{mim}\left\{N_c,N_\lambda\right\}} d_c(E) \right)$.
In the developed expression of the polynomial $ Q(E) \times \mathrm{det}\left( \boldsymbol{\Delta} - \boldsymbol{\gamma} \boldsymbol{S^0} \boldsymbol{ \gamma}^\mathsf{T}\right)$, the dominant degree term is now: $Q(E) \times \mathrm{det}\left( \boldsymbol{\Delta}\right) $, the degree of which is the sum of the degree of each term. The degree of $ \mathrm{det}\left( \boldsymbol{\Delta}\right)$ is $N_\lambda$, whereas the degree of $Q(E)$ is $\mathrm{deg}\left(Q(E)\right) = \sum_{c=1 | \ell_c \neq \ell_{c'}}^{N_c} \ell_c + \sum_{c=1 | \ell_c = \ell_{c'}}^{\mathrm{min}\left\{N_\lambda, N_c\right\}} \ell_c $.
Hence, we find back the expression (\ref{eq:N_S Brune poles}) to be proved: $N_S = N_\lambda + \sum_{c=1}^{N_c}\ell_c$, but with the additional subtlety that the multiplicities (repeating occurrences) are capped, both for $\sum_{\begin{array}{c}
     \tiny{E_\lambda \mathrm{\, multiplicity }}  \\
    \tiny{\mathrm{ capped \, at} \, N_c}
 \end{array}} \mathrm{deg}\left(E_\lambda - \rho^2(E)\right)$ and for $\sum_{\begin{array}{c}
   \tiny{S_c \mathrm{\, multiplicity }}  \\
    \tiny{\mathrm{ capped \, at} \, N_\lambda}
 \end{array}} \mathrm{deg}\left(d_c(\rho(E))\right)$, so that the final, exact number of complex eigenvalues to Brune's generalized eigenproblem (\ref{eq:Brune eigenproblem}) in the neutral channels case is: 
\begin{equation}
\begin{IEEEeqnarraybox}[][c]{rcl}
 N_S =  N_\lambda + \sum_{\begin{array}{c}
   \tiny{S_c \mathrm{\, multiplicity }}  \\
    \tiny{\mathrm{ capped \, at} \, N_\lambda}
 \end{array}} \ell_c 
\IEEEstrut\end{IEEEeqnarraybox}
\label{eq: capped multiplicities N_S}
\end{equation}
This means that if many channels, say $m_c$, have the same shift function $S_c = S_{c'}$, the resulting $\ell_c = \ell_{c'} $ will only be added $\mathrm{min}\left\{m_c, N_\lambda\right\}$ times in the sum (\ref{eq: capped multiplicities N_S}). \\
A final technical note to state that this number $N_S$ of poles (\ref{eq: capped multiplicities N_S}) is true in $E$ space, as we have showed in lemma \ref{lem:: analytic S_c and P_c lemma} that definition (\ref{eq:: Def S and P analytic continuation from L}) unfolds the Riemann sheet of (\ref{eq:rho_c(E) mapping}). 
If we were performing this in $\rho$ space, we would thus simply multiply the degrees by 2.
This is not true if we were searching for the poles of the Kapur-Peierls operator $\boldsymbol{R}_L$, as the mapping of $\rho(E)$ is not one-to-one anymore. 
From table \ref{tab::L_values_neutral}, we would be able to perform the same analysis that yielded (\ref{eq: capped multiplicities N_S}), but it would have to be in $\rho$ space, as we did to establish (\ref{eq::NL number of poles}).

In the charged particles case, $S_c(E)$ has an infinity of poles (c.f. our discussion in section \ref{subsec::Elemental solutions under pole expansion}). Extending our proof of (\ref{eq: capped multiplicities N_S}) from the neutral particles to the charged particles ones would thus yield a countable infinity of complex Brune poles. 

The key question is: how many of the $N_S$ complex Brune poles are real?
To address it, we come back to the three-step Brune argument and look for real eigenvalues from the left-hand-side of (\ref{eq:Brune eigenproblem}) that will cross the right-hand side identity line $E=E$ for real values. 
Here again, Brune's three-step argument will guarantee at least $N_\lambda$ real solutions.
There are in general more solutions however, and as for the shadow Brune poles of theorem \ref{theo::shadow_Brune_poles}, the number of real analytic Brune poles, solutions to (\ref{eq:Brune eigenproblem}), will depend on the R-matrix parameters $\big\{E_\lambda, \gamma_{\lambda c}, B_c, E_{T_c}, a_c \big\}$, in a way that is invariant under change of boundary-condition $B_c$ (plugg-in invariance (\ref{eq:: R_B invariance for B'}) into (\ref{eq:R_S by Brune det search})).
We illustrate various such cases in figure \ref{fig:Brune_argument_even_angular_momenta}. 
However, each neutral particle channel with odd angular momentum $\ell_c \equiv 1 \; (\mathrm{mod} \; 2 )$ will add at least one real sub-threshold solution to the $N_\lambda$ ones, due to the real sub-threshold pole of $S_c(E)$ unveiled in lemma \ref{lem:: analytic S_c and P_c lemma}.
This behavior is depicted in figure \ref{fig:Brune_argument_odd_angular_momenta}.
\end{proof}

\begin{lem}\label{lem::diagonal divisibility and capped multiplicities}
\textsc{Diagonal divisibility and capped multiplicities}.\\
Let $\boldsymbol{M} \in \mathbb{C}^{m\times n}$ be a complex matrix and $\boldsymbol{D}(z) \in \boldsymbol{\mathrm{Diag}}_{n}\left(\mathbb{C}\left(X\right)\right)$ be a diagonal matrix of complex rational functions with simple poles, that is $D_{ij}(z) = \delta_{ij} \frac{R_i(z) \in \mathbb{C}\left[ X \right] }{P_i(z) \in \mathbb{C}\left[ X \right]}$, with $\mathbb{C}\left[ X \right]$ designating the set of polynomials and $\mathbb{C}\left(X\right)$ the set of rational expressions, and we assume $P_i(z)$ has simple roots.\\
Let $Q(z) \in \mathbb{C}\left[ X \right]$ be the denominator of $\mathrm{det}\left( \boldsymbol{D} \right)(z)$, but with all multiplicities capped by $m$, i.e.
\begin{equation}
Q(z) \coloneqq \prod_{\begin{array}{c}
     j = 1  \\
     P_j \neq P_{i \neq j}
\end{array}}^n P_j(z)\prod_{\begin{array}{c}
     i = 1  \\
     P_i = P_{i \neq j}
\end{array}}^{\mathrm{mim}\left\{n,m\right\}} P_i(z)
\end{equation}
then $Q(z)$ is the denominator of $\mathrm{det}\left( \boldsymbol{M} \boldsymbol{D}(z) \boldsymbol{M}^\mathsf{T} \right) $, so that:
\begin{equation}
Q(z)\cdot \mathrm{det}\left( \boldsymbol{M} \boldsymbol{D}(z) \boldsymbol{M}^\mathsf{T} \right)  \in \mathbb{C}\left[ X \right]
\end{equation}
\end{lem}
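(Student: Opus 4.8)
The plan is to reduce the determinant of the $m\times m$ matrix $\boldsymbol{M}\boldsymbol{D}\boldsymbol{M}^\mathsf{T}$ to a single explicit sum by means of the Cauchy--Binet formula. Writing $d_k(z) = R_k(z)/P_k(z)$ for the diagonal entries of $\boldsymbol{D}$ and setting $\boldsymbol{N} \coloneqq \boldsymbol{M}\boldsymbol{D}$, so that $N_{ak} = M_{ak}d_k$, one has $\boldsymbol{M}\boldsymbol{D}\boldsymbol{M}^\mathsf{T} = \boldsymbol{N}\boldsymbol{M}^\mathsf{T}$, a product of an $m\times n$ by an $n\times m$ matrix. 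Cauchy--Binet, together with factoring the scalar $d_k$ out of each selected column, then yields the key identity
\[
\det\!\left(\boldsymbol{M}\boldsymbol{D}\boldsymbol{M}^\mathsf{T}\right) = \sum_{\substack{S\subseteq\{1,\dots,n\}\\ |S|=m}} \left(\prod_{k\in S} \frac{R_k(z)}{P_k(z)}\right)\det\!\left(\boldsymbol{M}_{S}\right)^2,
\]
where $\boldsymbol{M}_S$ is the $m\times m$ submatrix built from the columns of $\boldsymbol{M}$ indexed by $S$ (the sum being empty, so the determinant vanishing, whenever $m>n$). This turns the problem into a statement about the denominators of rank-$m$ minors weighted by products of exactly $m$ of the scalar functions $d_k$.

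The clearing direction is then immediate from a pigeonhole bound on subset sizes. Each summand has denominator dividing $\prod_{k\in S}P_k$; since the $P_i$ have simple roots, the order of any fixed irreducible factor $P$ (equivalently, of any root $z_0$) in $\prod_{k\in S}P_k$ equals $\#\{k\in S : P_k=P\}$. Because $|S|=m$ and at most $c_P \coloneqq \#\{k : P_k=P\}$ indices carry that factor, this order is bounded by $\min\{c_P,m\}$ for every $S$. Taking a common denominator over all $m$-subsets, the denominator of $\det(\boldsymbol{M}\boldsymbol{D}\boldsymbol{M}^\mathsf{T})$ therefore contains each factor $P$ to a power at most $\min\{c_P,m\}$ --- which is exactly its multiplicity in $Q(z)$ by the definition of $Q$ (each distinct $P_j$ contributing once, and each repeated factor capped at $\min\{n,m\}$, so at $\min\{c_P,m\}$ in the nontrivial case $m\le n$). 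Hence $Q(z)\cdot\det(\boldsymbol{M}\boldsymbol{D}(z)\boldsymbol{M}^\mathsf{T})\in\mathbb{C}[X]$, which establishes the displayed conclusion of the lemma.

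To show that $Q$ is the \emph{exact} reduced denominator, and not merely a clearing multiple, I would compute the leading pole coefficient at a fixed root $z_0$ of $P$ with capped multiplicity $\mu\coloneqq\min\{c_P,m\}$. Multiplying by $(z-z_0)^\mu$ and letting $z\to z_0$ annihilates every summand except those $S$ with $|S\cap T|=\mu$, where $T=\{k:P_k=P\}$; writing such an $S$ as a disjoint union $S=A\cup B$ with $A\subseteq T$, $|A|=\mu$, the surviving residue is $\sum_{A,B}\big(\prod_{k\in A}R_k(z_0)/P_k'(z_0)\big)\big(\prod_{k\in B}d_k(z_0)\big)\det(\boldsymbol{M}_{A\cup B})^2$, a fixed algebraic expression in the data. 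The main obstacle lies precisely here: this residue can vanish for special, non-generic choices of $\boldsymbol{M}$ and of the $R_k$, so the phrase ``$Q$ is the denominator'' must be read as the universal minimal denominator --- it always clears the poles, and for generic $\boldsymbol{M}$ the full order $\mu$ is attained, since the residue is a nonzero polynomial in the matrix entries (it does not vanish identically, as one sees by choosing $\boldsymbol{M}$ so that a single minor $\det(\boldsymbol{M}_{A\cup B})$ dominates). For the degree count in the proof of Theorem~\ref{theo::analytic_Brune_poles}, only the clearing direction and the value $\deg Q=\sum$ of the capped multiplicities are actually needed, so this genericity subtlety does not affect the enumeration of Brune poles.
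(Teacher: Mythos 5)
Your proof is correct, and it takes a genuinely different route from the paper's. The paper expands $\mathrm{det}\left( \boldsymbol{M} \boldsymbol{D} \boldsymbol{M}^\mathsf{T} \right)$ via the Leibniz formula and then distributes the product $\prod_{i=1}^{m}\sum_{j=1}^{n}(\cdot)$ into a sum over \emph{tuples} $(j_1,\hdots,j_m)\in\llbracket 1,n\rrbracket^m$; since each resulting product has exactly $m$ factors, no $P_j$ can occur more than $m$ times in any term's denominator, which gives the cap. Your Cauchy--Binet decomposition replaces the tuple sum by a sum over $m$-element \emph{subsets} $S$, weighted by $\left(\prod_{k\in S}R_k/P_k\right)\det\left(\boldsymbol{M}_S\right)^2$. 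This buys two things. First, because indices in $S$ are distinct, each summand's denominator carries a given polynomial $P$ to order at most $\min\{c_P,m\}$ \emph{directly}, whereas the paper's tuple expansion admits terms with repeated indices (e.g.\ $j_1=j_2$) whose individual denominators exceed that bound; those terms only disappear after cancellation over permutations, a point the paper's proof glosses over when it asserts the multiplicity is also bounded by its occurrence in $\mathrm{det}\left(\boldsymbol{D}\right)$. Second, your residue computation at a capped root cleanly identifies the remaining subtlety in the lemma's statement: $Q$ is a universal clearing multiple, and is the exact reduced denominator only generically in $\boldsymbol{M}$, since the leading coefficient $\sum_{A,B}\bigl(\prod_{k\in A}R_k(z_0)/P_k'(z_0)\bigr)\bigl(\prod_{k\in B}d_k(z_0)\bigr)\det\left(\boldsymbol{M}_{A\cup B}\right)^2$ can vanish for special data. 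You are also right that only the clearing direction and $\deg Q$ are used in the degree counts of theorems \ref{theo::shadow_Brune_poles} and \ref{theo::analytic_Brune_poles}, so this caveat does not affect the applications.
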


\begin{proof}
Leibniz's determinant formula yields:
\begin{equation*}
\begin{IEEEeqnarraybox}[][c]{rcl}
\mathrm{det}\left( \boldsymbol{M} \boldsymbol{D}(z) \boldsymbol{M}^\mathsf{T} \right)  & = & \sum_{\sigma \in S_m} \epsilon(\sigma) \prod_{i=1}^{m}\sum_{j=1}^{n}M_{ij}M_{\sigma{i} j} \frac{R_j(z)}{P_j(z)}  \\
\IEEEstrut\end{IEEEeqnarraybox}
\label{eq:: Leibniz formula}
\end{equation*}
Let us now develop the product using the formula:
\begin{equation*}
\begin{IEEEeqnarraybox}[][c]{rcl}
\prod_{i=1}^{m}\sum_{j=1}^{n}x_{i,j} = \sum_{j_1, \hdots , j_m \in \llbracket 1, n \rrbracket ^m } \prod_{i=1}^{m}x_{i,j_i}
\IEEEstrut \end{IEEEeqnarraybox}
\label{eq:: Yoann's touch formula}
\end{equation*}
which leads to:
\begin{equation}
\begin{IEEEeqnarraybox}[][c]{rcl}
\mathrm{det}\left( \boldsymbol{M} \boldsymbol{D} \boldsymbol{M}^\mathsf{T} \right)  & = & \sum_{\sigma \in S_m} \epsilon(\sigma)\sum_{\tiny{\begin{array}{c}
     j_1, \hdots , j_m  \\
      \in \llbracket 1, n \rrbracket ^m 
\end{array}}}  \prod_{i=1}^{m} M_{ij_i}M_{\sigma{i} j_i} \frac{R_{j_i}(z)}{P_{j_i}(z)}  \\
\IEEEstrut\end{IEEEeqnarraybox}
\label{eq:: Developed determinant for divisibility Lemma}
\end{equation}
We here have a sum of products of $m$ terms; thus, the $ \frac{R_{j}(z)}{P_{j}(z)}$ never appear more than $m$ times in each product -- nor more than their multiplicity in $\mathrm{det}\left( \boldsymbol{D} \right)(z)$.
It thus suffices to account for each $P_j(z)$ a number of times that is the maximum between its multiplicity and $m$ in order to rationalize the $\mathrm{det}\left( \boldsymbol{M} \boldsymbol{D}(z) \boldsymbol{M}^\mathsf{T} \right) \in \mathbb{C}(X)$ fraction.
\end{proof}

\begin{figure}[ht!!] 
  \centering
  \includegraphics[width=0.50\textwidth]{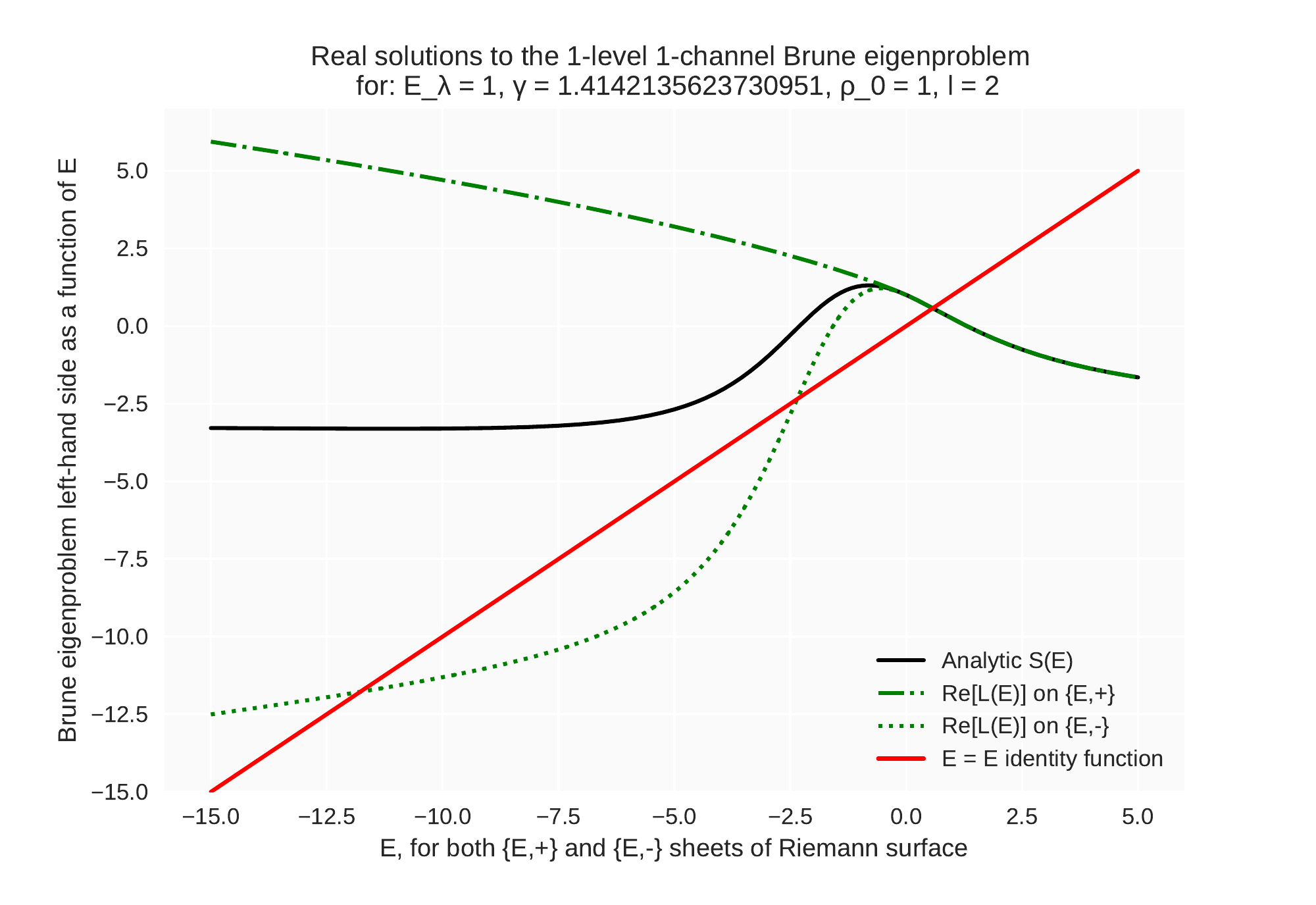}
  \includegraphics[width=0.50\textwidth]{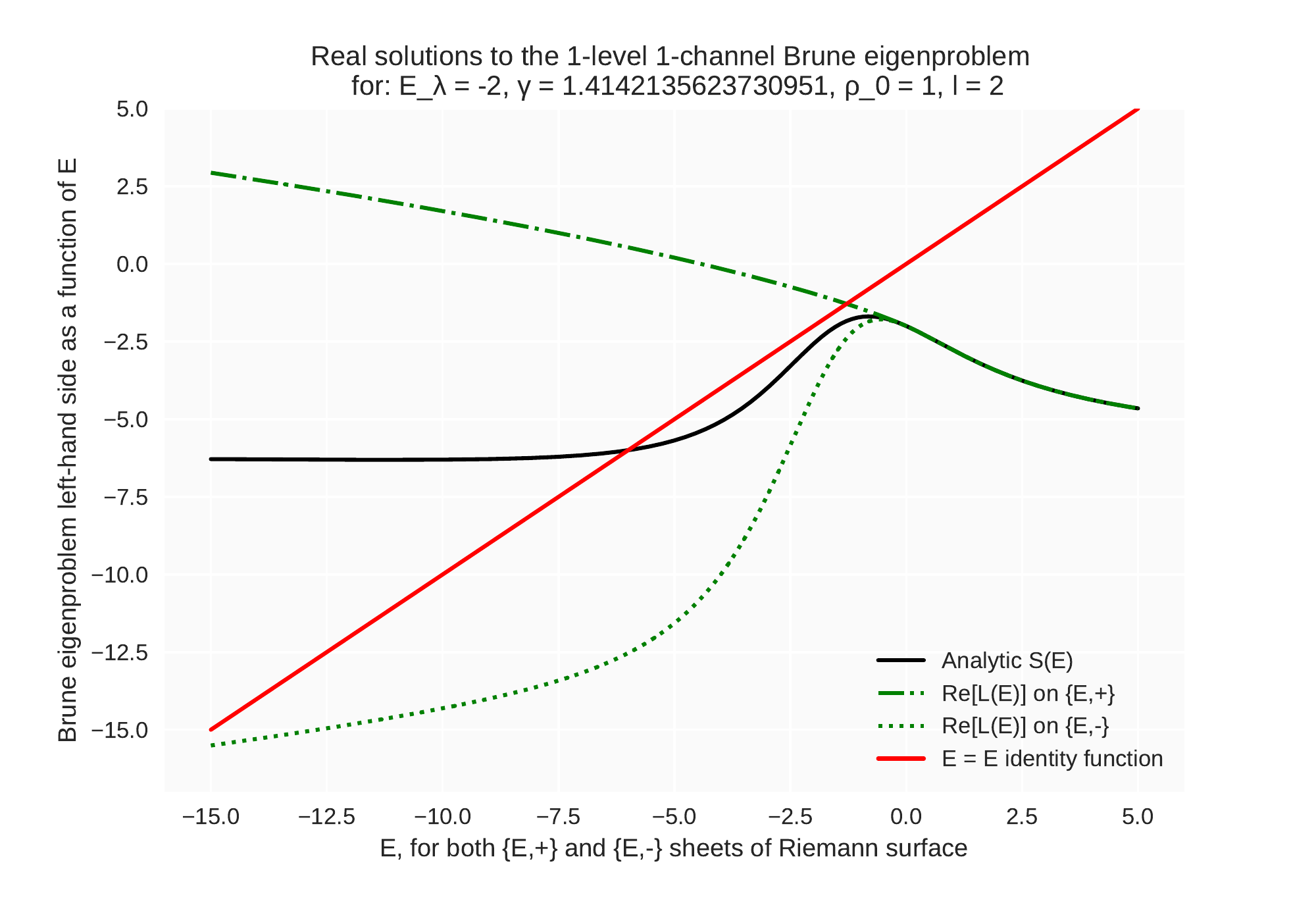}
  \caption{\small{Elemental Brune eigenproblem (\ref{eq:: elemental Brune problem}): comparison of solutions from definitions (\ref{eq:: Def S = Re[L], P = Im[L]}) versus (\ref{eq:: Def S and P analytic continuation from L}), for angular momentum $\ell_c = 2$, neutral particles, using $B_c = - \ell_c$ convention and zero threshold $E_{T_c} = 0$.  Since there are no real sub-threshold poles, both can yield one, two, or three solutions (crossing the E=E diagonal), depending on the values of the resonance parameters. If at threshold energy $E_{T_c}$ the left hand side of (\ref{eq:: elemental Brune problem}) is above the E=E diagonal, then the above-threshold solutions from both definitions coincide. In any case, the sub-threshold solutions differ. Behavior is analogous for all even angular momenta $\ell_c \equiv 0 (\mathrm{mod} 2)$.}}
  \label{fig:Brune_argument_even_angular_momenta}
\end{figure}

Importantly, since both shift function $S_c(E)$ definitions (\ref{eq:: Def S = Re[L], P = Im[L]}) and (\ref{eq:: Def S analytical}) coincide above threshold, the solutions to (\ref{eq:Brune eigenproblem}) will be the same above thresholds. 
The discrepancy in the values of the Brune parameters, solutions to (\ref{eq:Brune eigenproblem}), will only differ when certain channels have to be considered below threshold: $S_c(E)$ with $E<E_{T_c}$.

To illustrate these differences, let us consider the simple example of a one-level, one-channel neutral particle interaction, with a zero-threshold $E_{T_c} = 0$, and set about solving the Brune generalized eigenproblem (\ref{eq:Brune eigenproblem}), which here takes the simple scalar form:
\begin{equation}
    E_\lambda - \gamma_{\lambda,c}\big( S_c(E) - B_c \big) \gamma_{\lambda,c} = E
    \label{eq:: elemental Brune problem}
\end{equation}
In figures \ref{fig:Brune_argument_odd_angular_momenta} and \ref{fig:Brune_argument_even_angular_momenta}, we plotted the left and right hand side of this elemental Brune eigenproblem (\ref{eq:: elemental Brune problem}), for both definitions (\ref{eq:: Def S = Re[L], P = Im[L]}) and (\ref{eq:: Def S and P analytic continuation from L}) of the shift function $S_c(E)$, for various values of resonance parameters $\left\{E_\lambda, \gamma_{\lambda,c}\right\}$ and the convention $B_c = -\ell_c$, for different angular momenta $\ell_c$.

In the case of $\ell_c = 1$, depicted in figure \ref{fig:Brune_argument_odd_angular_momenta}, one can observe that the real sub-threshold pole engendered by odd angular momenta (c.f. section \ref{subsubsec::Ambiguity in shift and penetration}) introduces a sub-threshold Brune parameter, where the left-hand side of (\ref{eq:: elemental Brune problem}) crosses the $E=E$ identity line.
In the case of the Lane \& Thomas legacy definition (\ref{eq:: Def S = Re[L], P = Im[L]}), this sub-threshold shadow Brune pole is on the $\big\{ E, - \big\}$ sheet of mapping (\ref{eq:rho_c massive}),  whereas for analytic continuation definition (\ref{eq:: Def S and P analytic continuation from L}) it is on the same, unique sheet. The same behavior will be observable for all odd angular momenta $\ell_c \equiv 1 \; (\mathrm{mod} \; 2)$.

In the case $\ell_c=2$, depicted in figure \ref{fig:Brune_argument_even_angular_momenta}, the non-purely-imaginary poles $\left\{ \omega_n , \omega_n^* \right\} \not\in \mathrm{i}\mathbb{R}$ (c.f. lemma \ref{lem:: analytic S_c and P_c lemma} and table \ref{tab::roots of the outgoing wave functions}) will impact the shift function $S_c(\rho_c)$ in ways that may or may not produce additional real solutions $\left\{\widetilde{E_i}\right\} \in \mathbb{R}$ to the generalized eigenproblem (\ref{eq:Brune eigenproblem}).
This behavior is reported in figure \ref{fig:Brune_argument_even_angular_momenta}, where one can observe that, depending on the R-matrix parameter values $\big\{ E_\lambda, \gamma_{\lambda,c}, B_c \big\}$, there are either one, two (tangential for the analytic continuation definition), or three solutions to the Brune generalized eigenproblem (\ref{eq:: elemental Brune problem}).
For instance, one can see that definition (\ref{eq:: Def S = Re[L], P = Im[L]}) can yield situations with two sub-threshold branch Brune poles -- one on the $\big\{E,+\big\}$ branch and one shadow pole (i.e. on the $\big\{E,-\big\}$ branch) -- or with two sub-threshold shadow Brune poles -- both sub-threshold on the $\big\{E,-\big\}$ branch -- or situations where only one, above-threshold solution is produced. 
On the other hand, analytic continuation definition (\ref{eq:: Def S and P analytic continuation from L}) can also yield one, two (tangentially) or three solutions, depending on the sub-threshold behavior and the resonant parameters eigenvalues $\big\{ E_\lambda, \gamma_{\lambda,c}, B_c \big\}$.
The number of real solutions $\left\{\widetilde{E_i}\right\} \in \mathbb{R}$ to the Brune generalized eigenproblem (\ref{eq:Brune eigenproblem}) will thus depend on the R-matrix parameters, and is in general comprised between $N_\lambda$ and $N_S$.

To verify the number of complex analytic Brune poles (\ref{eq:N_S Brune poles}), a trivial example is considering (\ref{eq:: elemental Brune problem}) in the $\ell_c = 1$ case, where the analytic shift function takes the wavenumber dependence, $S(\rho) = - \frac{1}{1+\rho^2}$, and thus the poles of the $\boldsymbol{R}_S$ operator are nothing but the solutions to $\frac{E_\lambda - E}{\gamma_{\lambda,c}^2} + B + \frac{1}{1+\rho_0^2(E-E_{T_c})} = 0$.
The fundamental theorem of algebra then guarantees this problem has $N_S = 2$ complex solutions, not $N_\lambda = 1$.
The surprising part is that both are real poles: one above and one below threshold, which again stems from the fact the number of roots $\left\{ \omega_n \right\}$ is odd and that their symmetries thus require one pole to be exaclty imaginary (in wavenumber space), as explained in section \ref{subsubsec::Ambiguity in shift and penetration}. 
For $\ell_c = 2$, we would have $S_2^0(E) = \frac{3E + 2 E^2}{\frac{9}{\rho_0^2} + 3E + E^2}$, so that the fundamental theorem of algebra commands (\ref{eq:: elemental Brune problem}) will have $N_S = 3$ solutions, verifying the $N_S = N_\lambda + \sum_{c=1}^{N_c}\ell_c$ complex poles we establish in (\ref{eq:N_S Brune poles}).
In the general charged-particles case, the shift factor $S_c(\rho)$ is no longer a rational fraction in $\rho^2$ but is a meromorphic operator in $\rho^2$ with an infinity of poles (c.f. lemma \ref{lem:: analytic S_c and P_c lemma}). This means that, in general, there exist $N_\lambda \leq N_S \leq \infty$ complex poles of the $\boldsymbol{R}_S$ operator, and that at least $N_\lambda$ of them are real.

When the left-hand side of (\ref{eq:: elemental Brune problem}) crosses the $E=E$ identity line above threshold, the branch Brune poles coincide with the analytic Brune poles, as can be observed in figures \ref{fig:Brune_argument_odd_angular_momenta} and \ref{fig:Brune_argument_even_angular_momenta}. 
Since the shift function $S_c(E)$ is continuous and monotonically increasing above threshold, the question is whether the eigenvalues of the left-hand side of (\ref{eq:Brune eigenproblem}) are above the $E=E$ line at the threshold value: $E = E_{T_c}$.
If yes, then it would mean that past the last threshold there will be exactly $N_\lambda$ solutions to (\ref{eq:Brune eigenproblem}).
However, nothing guarantees \textit{a priori} that all the eigenvalues of the left hand side of (\ref{eq:Brune eigenproblem}) are above the $E=E$ at the last threshold.
From solving the elemental Brune problem (\ref{eq:: elemental Brune problem}), we observed that it seems to require negative resonance levels $E_\lambda < 0$ to induce the left-hand side of (\ref{eq:Brune eigenproblem}) to be below the $E=E$ line at the threshold value, as illustrated in figures \ref{fig:Brune_argument_odd_angular_momenta} and \ref{fig:Brune_argument_even_angular_momenta}.
When this happens, the Brune poles will be sub-threshold, and thus depend on the (\ref{eq:: Def S = Re[L], P = Im[L]}) or (\ref{eq:: Def S and P analytic continuation from L}) definition for the shift function $S_c(E)$.
However, the fact that different channels will have different threshold levels $E_{T_c} \neq E_{T_{c'}}$, and that nothing stops R-matrix parameters from displaying negative resonance levels $E_\lambda < 0$, mean no definitive conclusion can be reached as to the number of real Brune parameters.

\subsubsection{\label{subsubsec::choice of Brune poles}Choice of Brune poles}

Brune defined his alternative Brune parameters in (\ref{eq:Brune parameters}) and (\ref{eq:: Brune invA}) by building the square matrix $\boldsymbol{g}$, and then inverting it to guarantee (\ref{eq:: Brune A = aAa}) (c.f. section \ref{sec:R_S def}).
We just demonstrated in theorems \ref{theo::shadow_Brune_poles} and \ref{theo::analytic_Brune_poles} that there are in general more Brune poles $N_S$ -- either branch Brune poles or analytic Brune poles -- than the number $N_\lambda$ of resonance levels: $N_S \geq N_\lambda$.
Yet the fact that there are more than $N_\lambda$ solutions to (\ref{eq:Brune eigenproblem}) implies the $\boldsymbol{g} \coloneqq \left[\boldsymbol{g_1}, \hdots, \boldsymbol{g_i} , \hdots , \boldsymbol{g_{N_S}} \right]$ matrix, composed of the $N_S$ solutions to Brune's eigenproblem (\ref{eq:Brune eigenproblem}), is in general not square, and could even be infinite if $N_S = \infty$ (Coulomb channels).
This brings two critical questions: 1) do these additional Brune poles impede us from well defining the Brune parameters? 2) can we still uniquely define the Brune poles?

We here demonstrate in theorem \ref{theo::Choice of Brune poles} the striking property that choosing any finite set of at least $N_\lambda$ different solutions from the $N_\lambda \leq N_S \leq \infty$ solutions of Brune's eigenproblem (\ref{eq:Brune eigenproblem}), suffices, under our new extended definition (\ref{eq:: pseudo invA Brune}), to properly describe the R-matrix scattering model.

\begin{theorem}\label{theo::Choice of Brune poles} \textsc{Choice of Brune poles} \\
If we generalize Brune's definition (\ref{eq:: Brune invA}) of the physical level matrix to its pseudo-inverse $\boldsymbol{\widetilde{A}}^{+} $, setting:
\begin{equation}
\boldsymbol{\widetilde{A}}^{+} \ \coloneqq \ \boldsymbol{g}^\mathsf{T}\boldsymbol{A}^{-1}\boldsymbol{g}
\label{eq:: pseudo invA Brune}
\end{equation}
then the choice of any number $N_S$ of Brune poles, solutions to the Brune generalized eigenproblem (\ref{eq:Brune eigenproblem}), will reconstruct the scattering matrix $\boldsymbol{U}(E)$, as long as we choose at least $N_\lambda$ solutions: $N_S \geq N_\lambda$.
\end{theorem}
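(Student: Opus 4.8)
The plan is to reduce the reconstruction of $\boldsymbol{U}(E)$ to a single operator identity and then settle that identity with elementary pseudo-inverse algebra. By the scattering matrix expression (\ref{eq:U expression}) and the channel--level equivalence (\ref{eq:Kapur-Peirels Operator and Channel-Level equivalence}), it suffices to show that the generalized construction preserves the Kapur--Peierls operator, i.e. that with $\boldsymbol{\widetilde{\gamma}} \coloneqq \boldsymbol{g}^\mathsf{T}\boldsymbol{\gamma}$ and with $\boldsymbol{\widetilde{A}} \coloneqq \left(\boldsymbol{g}^\mathsf{T}\boldsymbol{A}^{-1}\boldsymbol{g}\right)^{+}$ recovered from (\ref{eq:: pseudo invA Brune}) via $\left(\boldsymbol{X}^{+}\right)^{+}=\boldsymbol{X}$, one still has $\boldsymbol{\widetilde{\gamma}}^\mathsf{T}\boldsymbol{\widetilde{A}}\boldsymbol{\widetilde{\gamma}} = \boldsymbol{\gamma}^\mathsf{T}\boldsymbol{A}\boldsymbol{\gamma} = \boldsymbol{R}_{L}$, exactly as in Brune's equality (\ref{eq:R_L unchanged by Brune}). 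The only structural change from Brune's original square case is that, with $N_S \geq N_\lambda$ retained poles, $\boldsymbol{g} \coloneqq \left[\boldsymbol{g_1}, \hdots, \boldsymbol{g_{N_S}}\right]$ is now the rectangular $N_\lambda \times N_S$ matrix whose columns are the chosen eigenvectors, so $\boldsymbol{g}^\mathsf{T}\boldsymbol{A}^{-1}\boldsymbol{g}$ is singular and the Moore--Penrose pseudo-inverse replaces the plain inverse. Substituting the two definitions, the claim collapses onto the single sandwich identity $\boldsymbol{g}\left(\boldsymbol{g}^\mathsf{T}\boldsymbol{A}^{-1}\boldsymbol{g}\right)^{+}\boldsymbol{g}^\mathsf{T} = \boldsymbol{A}$.

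First I would secure the geometric prerequisite that $\boldsymbol{g}$ has full row rank $N_\lambda$. This is where the hypothesis ``at least $N_\lambda$ solutions'' enters: among the Brune poles there is the distinguished family of $N_\lambda$ solutions lying on the all-$+$ sheet, whose existence and linear independence follow from Brune's three-step monotony argument recalled in section \ref{sec:R_S def} (each of the $N_\lambda$ real eigenvalues of the real-symmetric left-hand side of (\ref{eq:Brune eigenproblem}) crosses the identity line exactly once, with non-degenerate eigenvectors). Any retained set that contains $N_\lambda$ such independent vectors therefore spans $\mathbb{C}^{N_\lambda}$, so $\mathrm{rank}(\boldsymbol{g}) = N_\lambda$ and both one-sided inverses $\boldsymbol{g}\boldsymbol{g}^{+} = \Id{}$ and $\left(\boldsymbol{g}^\mathsf{T}\right)^{+}\boldsymbol{g}^\mathsf{T} = \Id{}$ exist. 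I expect this to be the genuinely delicate step, since it must be argued that appending shadow or analytic poles --- which are necessarily linearly dependent, there being more than $N_\lambda$ of them in $\mathbb{C}^{N_\lambda}$ --- can never destroy the full-rank property; the whole content of the theorem is precisely that these supernumerary poles are harmless.

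With full row rank in hand, the sandwich identity is routine. I would factor $\boldsymbol{g}^\mathsf{T}\boldsymbol{A}^{-1}\boldsymbol{g} = \boldsymbol{g}^\mathsf{T}\cdot\left(\boldsymbol{A}^{-1}\boldsymbol{g}\right)$ and apply Greville's reverse-order rule for the Moore--Penrose pseudo-inverse, valid here because $\boldsymbol{g}^\mathsf{T}$ has full column rank while $\boldsymbol{A}^{-1}\boldsymbol{g}$ has full row rank ($\boldsymbol{A}^{-1}$ being invertible). Iterating the rule once more on $\boldsymbol{A}^{-1}\boldsymbol{g}$, and using $\left(\boldsymbol{A}^{-1}\right)^{+}=\boldsymbol{A}$, gives $\left(\boldsymbol{g}^\mathsf{T}\boldsymbol{A}^{-1}\boldsymbol{g}\right)^{+} = \boldsymbol{g}^{+}\boldsymbol{A}\left(\boldsymbol{g}^\mathsf{T}\right)^{+}$, whence
\begin{equation*}
\boldsymbol{g}\left(\boldsymbol{g}^\mathsf{T}\boldsymbol{A}^{-1}\boldsymbol{g}\right)^{+}\boldsymbol{g}^\mathsf{T} = \left(\boldsymbol{g}\boldsymbol{g}^{+}\right)\boldsymbol{A}\left(\left(\boldsymbol{g}^\mathsf{T}\right)^{+}\boldsymbol{g}^\mathsf{T}\right) = \Id{}\,\boldsymbol{A}\,\Id{} = \boldsymbol{A}.
\end{equation*}
A point worth stressing is that the distinction between the ordinary transpose $\mathsf{T}$ of R-matrix theory and the conjugate transpose underlying the pseudo-inverse is immaterial: the derivation only ever invokes the one-sided cancellations $\boldsymbol{g}\boldsymbol{g}^{+} = \Id{}$ and $\left(\boldsymbol{g}^\mathsf{T}\right)^{+}\boldsymbol{g}^\mathsf{T} = \Id{}$, which hold for any full-rank rectangular matrix irrespective of conjugation. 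Sandwiching this identity between $\boldsymbol{\gamma}^\mathsf{T}$ on the left and $\boldsymbol{\gamma}$ on the right then yields $\boldsymbol{\widetilde{\gamma}}^\mathsf{T}\boldsymbol{\widetilde{A}}\boldsymbol{\widetilde{\gamma}} = \boldsymbol{\gamma}^\mathsf{T}\boldsymbol{A}\boldsymbol{\gamma} = \boldsymbol{R}_{L}$, so that the scattering matrix (\ref{eq:U expression}) is reconstructed independently of which $N_S \geq N_\lambda$ poles were retained, and in the square case $N_S = N_\lambda$ one recovers Brune's original definition (\ref{eq:: Brune invA}) as the special instance where the pseudo-inverse is the inverse.
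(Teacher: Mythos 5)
Your proof is correct and follows essentially the same route as the paper's: both reduce the claim to the sandwich identity $\boldsymbol{g}\,\boldsymbol{\widetilde{A}}\,\boldsymbol{g}^\mathsf{T}=\boldsymbol{A}$ and obtain it from the full row rank of $\boldsymbol{g}$ via the pseudo-inverse relation $\boldsymbol{\widetilde{A}}=\boldsymbol{g}^{+}\boldsymbol{A}\,\left(\boldsymbol{g}^\mathsf{T}\right)^{+}$, which you derive explicitly with the reverse-order rule where the paper merely asserts it. Your added care about where the rank-$N_\lambda$ hypothesis actually comes from (a retained subset containing the $N_\lambda$ independent principal-sheet eigenvectors) addresses a point the paper's one-line proof glosses over.
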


\begin{proof}
The proof rests on the pseudo-inverse property for independent columns and rows, and applies it to the $\boldsymbol{g} \coloneqq \left[\boldsymbol{g_1}, \hdots, \boldsymbol{g_i} , \hdots , \boldsymbol{g_{N_S}} \right]$ matrix, constructed by choosing $N_S$ solutions of the generalized eigenproblem (\ref{eq:Brune eigenproblem}).
If $N_S \geq N_\lambda$, then $\boldsymbol{g}$ has independent rows so that its pseudo-inverse will yield: $ \boldsymbol{\widetilde{A}}  = \boldsymbol{g}^{+}\boldsymbol{A}{\boldsymbol{g}^\mathsf{T}}^{+}$.
This property in turn entails (\ref{eq:: Brune A = aAa}) is satisfied, and thus (\ref{eq:R_L unchanged by Brune}) stands, leaving unchanged the Kapur-Peierls operator $\boldsymbol{R}_L$, and hence fully representing the scattering matrix $\boldsymbol{U}(E)$.
\end{proof}

Critically, $N_\lambda$ real solutions to (\ref{eq:Brune eigenproblem}) can always be found -- as shown in theorems \ref{theo::shadow_Brune_poles} and \ref{theo::analytic_Brune_poles} -- meaning the Brune parametrization is always capable of fully reconstructing the scattering matrix energy behavior with real parameters through generalized pseudo-inverse definition (\ref{eq:: pseudo invA Brune}). It is well defined.

Yet, if any choice of $N_\lambda$ Brune poles will yield the same scattering matrix $\boldsymbol{U}(E)$ through definition (\ref{eq:: pseudo invA Brune}), this choice is \textit{a priori} not unique. 
Can we define some conventions on the choice of Brune parameters to make them unique?
Under the legacy Lane \& Thomas definition (\ref{eq:: Def S = Re[L], P = Im[L]}), this can readily be achieved by neglecting the shadow poles and restraining the search to the principal sheet $\big\{ E, + , \hdots , +\big\}$, for all $N_c$ channels, where we have shown in theorem \ref{theo::shadow_Brune_poles} that one will find exactly $N_\lambda$ poles. 
Under the analytic continuation definition (\ref{eq:: Def S and P analytic continuation from L}), one can still uniquely define the $N_\lambda$ ``first" solutions in the following algorithmic way: one starts the search by diagonalizing, at the last threshold energy (greatest $E_{T_c}$ value), the left-hand side of (\ref{eq:Brune eigenproblem}). If all the eigenvalues are above the $E=E$ line, then increase the energy until the eigenvalues cross the $E=E$ diagonal, and we will have $N_\lambda $ uniquely defined real analytic Brune poles. If at the first threshold some eigenvalues are below the $E=E$ line (as we saw could happen if some resonance energies are negative $E_\lambda < 0$), then we can decrease the energy values until those cross the $E=E$ line for the first time, and stop the search there, thus again uniquely defining $N_\lambda$ analytic Brune poles. 
This foray into the algorithmic procedure for solving (\ref{eq:: Def S and P analytic continuation from L}) gives us the occasion to point to the vast literature on methods to solve non-linear eigenvalue problems, in particular \cite{Handbook_of_linear_algebra}. 

In the end, though we argue that the physically correct definition for the shift function $S_c(E)$ ought to be through analytic continuation (\ref{eq:: Def S and P analytic continuation from L}), both approaches enable to set conventions that will uniquely determine $N_\lambda$ real Brune poles.

\subsection{\label{sec:R_L Siegert and Humblet} Complex invariant $\boldsymbol{R}_{L}$ parameters: Siegert-Humblet expansion in radioactive states}

The previous section describes how Brune transformed the real Wigner-Eisenbud resonance parameters $\left\{E_\lambda , \gamma_{\lambda,c}, B_c \right\}$ into a set of real boundary-condition-independent parameters $\left\{\widetilde{E_i} , \widetilde{\gamma_{i,c}}\right\}$. As we saw, this comes at the cost of having to set a convention to uniquely choose $N_\lambda$ Brune parameters (c.f. theorem \ref{theo::Choice of Brune poles}), but at the gain of producing a set of real parameters $\left\{a_c , \widetilde{E_i} , \widetilde{\gamma_{i,c}} , E_{T_c}\right\}$ that entirely characterizes the scattering matrix $\boldsymbol{U}$ and thus the reaction.

In this section we provide new insights into another way of parametrizing the R-matrix model that leads to complex invariant parameters through $\boldsymbol{R}_{L}$: the Siegert-Humblet expansion into radioactive states (c.f. sections IX.2.c-d-e p.297-298 in \cite{Lane_and_Thomas_1958}).
As we will see, these parameters have the advantage of being unique and invariant to boundary condition $B_c$, as well as easily transformed under a change of channel radius $a_c$ (theorem \ref{theo::r_j,c uncer change of a_c} section \ref{sec:Invariance to chanel radius}), and of locally untangling the energy dependence of the scattering matrix as a simple sum of poles and residues. This comes at the cost of greater parameter complexity: the parameters are complex and strongly intertwined; they live on a sub-manifold of the multi-sheeted Riemann surface that the wavenumber-energy mapping (\ref{eq:rho_c(E) mapping}) introduces; and they only allow for a local characterization of the scattering matrix $\boldsymbol{U}$, not a global one $\forall E \in \mathbb{C}$.

\subsubsection{\label{sec:R_L def}Definition of Siegert \& Humblet $\boldsymbol{R}_{L}$ parametrization}

At the heart of the Siegert-Humblet parametrization stands the Kapur-Peierls operator, $\boldsymbol{R}_{L}$, defined in (\ref{eq:Kapur-Peirels Operator and Channel-Level equivalence}):
\begin{equation*}
\boldsymbol{R}_{L}^{-1} \, \coloneqq \, \boldsymbol{R}^{-1} - \boldsymbol{L^0}  \, = \, \boldsymbol{R}^{-1} + \boldsymbol{B} - \boldsymbol{L} = \boldsymbol{\gamma^\mathsf{T} A \gamma}
\end{equation*}
This definition is analogous to Brune's $\boldsymbol{R}_S$ in (\ref{eq:R_S by Brune}).
By analytically continuing the $\boldsymbol{R}$ (\ref{eq:R expression}) and $\boldsymbol{L}$ (\ref{eq:L expression matrix}) operators to complex energies $E\in\mathbb{C}$, the Kapur-Peierls matrix $\boldsymbol{R}_{L}$ becomes a locally meromorphic operator.
The poles of this meroporphic operator can be assumed to have a Laurent expansion of order one, as we will discuss in section \ref{subsubsec:: Semi-simple poles in R-matrix theory}.
Since the Kapur-Peierls $\boldsymbol{R}_{L}$ operator is complex-symmetric, this entails its residues at any given pole value $\mathcal{E}_j \in \mathbb{C}$ are also complex symmetric. For non-degenerate eigenvalues $\mathcal{E}_j \in \mathbb{C}$, the corresponding residues are rank-one and expressed as $\boldsymbol{r_j}\boldsymbol{r_j}^\mathsf{T}$, while for degenerate eigenvalues $\mathcal{E}_j \in \mathbb{C}$ of multiplicity $M_j$, the corresponding residues are rank-$M_j$ and expressed as $ \sum_{m=1}^{M_j}\boldsymbol{r_{j}^m}{\boldsymbol{r_{j}^m}}^\mathsf{T}$.
On a given domain, the Mittag-Leffler theorem \cite{Mittag-Leffler, Pacific_Journal_Mittag_Leffler_and_spectral_theory_1960} then states that $\boldsymbol{R}_{L}$ locally takes the form, in the vicinity $\mathcal{V}(E)$ of any complex energy $E\in\mathbb{C}$ away from the branch points (threshold energies $E_{T_c}$) of mapping (\ref{eq:rho_c(E) mapping}), of a sum of poles and residues and a holomorphic entire part $\boldsymbol{\mathrm{Hol}}_{\boldsymbol{R}_{L}}(E)$:
\begin{equation}
\boldsymbol{R}_{L}(E) \underset{\mathcal{V}(E)}{=} \sum_{j\geq 1} \frac{\sum_{m=1}^{M_j}\boldsymbol{r_{j}^{m}}{\boldsymbol{{r_{j}^{m}}}}^\mathsf{T}}{E - \mathcal{E}_j} + \boldsymbol{\mathrm{Hol}}_{\boldsymbol{R}_{L}}(E)
\label{eq::RL Mittag Leffler degenerate state}
\end{equation}
or, in the particular (but usual) case where $\mathcal{E}_j$ is a non-degenerate eigenvalue (with multiplicity $M_j=1)$,
\begin{equation}
\boldsymbol{R}_{L}(E) \underset{\mathcal{V}(E)}{=} \sum_{j\geq 1} \frac{\boldsymbol{r_j}\boldsymbol{r_j}^\mathsf{T}}{E - \mathcal{E}_j} + \boldsymbol{\mathrm{Hol}}_{\boldsymbol{R}_{L}}(E)
\label{eq::RL Mittag Leffler}
\end{equation}

This is the Siegert-Humblet expansion into so-called \textit{radioactive states} \cite{Siegert, Breit_radioactive_1940, Radioactive_Mahaux_1969, Radioactive_states_Roumania} --- equivalent to equation (2.16) of section IX.2.c. in \cite{Lane_and_Thomas_1958} where we have modified the notation for greater consistency ($\mathcal{E}_j$ corresponds to $H_\lambda$ of \cite{Lane_and_Thomas_1958} and $\boldsymbol{r_j}$ corresponds to $\boldsymbol{\omega_\lambda}$) since there are more complex poles $\mathcal{E}_j$ than real energy levels $E_\lambda$. The Siegert-Humblet parameters are then the poles $\left\{\mathcal{E}_j\right\}$ and residue widths $\left\{\boldsymbol{r_j}\right\}$ of this complex resonance expansion of the Kapur-Peierls operator $\boldsymbol{R}_{L}$.

The Gohberg-Sigal theory provides a method for calculating these poles and residues by solving the generalized eigenvalue problem \cite{Gohberg_Sigal_1971}:
\begin{equation}
\left.\boldsymbol{R}_{L}^{-1}(E)\right|_{E = \mathcal{E}_j} \boldsymbol{q_j} = \boldsymbol{0}
\label{eq:R_L eigenproblem}
\end{equation}
i.e. solving for the poles $\big\{\mathcal{E}_j\big\}$ of the Kapur-Peierls matrix $\boldsymbol{R}_{L}$ operator and their associated eigenvectors $(\boldsymbol{q_j})$. The poles are complex and usually decomposed as:
\begin{equation}
\mathcal{E}_j \coloneqq E_j - \mathrm{i}\frac{\Gamma_j}{2}
\label{eq:E_j pole def}
\end{equation}
It can be shown (c.f. discussion section IX.2.d pp.297--298 in \cite{Lane_and_Thomas_1958}, or section 9.2 eq. (9.11) in \cite{Theory_of_Nuclear_Reactions_I_resonances_Humblet_and_Rosenfeld_1961}) that fundamental physical properties (conservation of probability, causality and time reversal) ensure that the poles reside either on the positive semi-axis of purely-imaginary $k_c \in \mathrm{i}\mathbb{R}_+$ -- corresponding to bound states for real sub-threshold energies, i.e. $E_j < E_{T_c} $ and $\Gamma_j = 0$ -- or that all the other poles are on the lower-half $k_c$ plane, with $\Gamma_j > 0$, corresponding to ``resonance'' or ``radioactively decaying'' states. All poles enjoy the specular symmetry property: if $k_c \in \mathbb{C}$ is a pole of the Kapur-Peierls operator, then $-k_c^*$ is too.

Let $M_j = \mathrm{dim}\left( \mathrm{Ker} \left( \boldsymbol{R}_{L}^{-1}(\mathcal{E}_j) \right) \right) $ be the dimension of the nullspace of the Kapur-Peierls operator at pole value $\mathcal{E}_j$ -- that is $M_j$ is the geometric multiplicity.
We can thus write $\mathrm{Ker} \left( \boldsymbol{R}_{L}^{-1}(\mathcal{E}_j) \right) = \mathrm{vect}\left( \boldsymbol{q_{j}^{1}}, \hdots, \boldsymbol{q_{j}^{m}},\hdots, \boldsymbol{q_{j}^{M_j}} \right)$.
As we discuss in section \ref{subsubsec:: Semi-simple poles in R-matrix theory}, it is physically reasonable to assume that the geometric and algebraic multiplicities are equal (semi-simplicity condition), which entails a Laurent development of order one for the poles -- i.e. no higher powers of $\frac{1}{E - \mathcal{E}_j}$ in expansion (\ref{eq::RL Mittag Leffler degenerate state}).
Since $\boldsymbol{R}_{L}$ is complex symmetric, if we assume we can find non-quasi-null eigenvectors solutions to (\ref{eq:R_L eigenproblem}) -- that is $\forall \; (j,m) \; , \; \; {\boldsymbol{q_{j}^{m}}}^\mathsf{T} \boldsymbol{q_{j}^{m}} \neq 0 $ so it is non-defective \cite{Craven_complex_symmetric_1969, Nondefective_complex_symmetric_matrices_1985, complex_symmetric_matrix_SVD_1988, Scott_complex_symmetric_1993, fast_diag_of_complex_symmetric_matrices_for_quantum_applications_1997, Complex_symmetric_operators_2005, Complex_symmetric_operators_II_2007} -- then the Gohberg-Sigal theory can be adapted to the case of complex symmetric matrices to normalize the rank-$M_j$ residues of $\boldsymbol{R}_{L}$ matrix as:
\begin{equation}
\sum_{m=1}^{M_j} \boldsymbol{r_{j}^{m}} {\boldsymbol{r_{j}^{m}}}^\mathsf{T}  =  \sum_{m=1}^{M_j} \frac{\boldsymbol{q_{j}^{m}}{\boldsymbol{q_{j}^{m}}}^\mathsf{T}}{{\boldsymbol{q_{j}^{m}}}^\mathsf{T} \left( \left. { \frac{\partial \boldsymbol{R}_{L}^{-1}}{\partial E} }\right|_{E=\mathcal{E}_j} \right) {\boldsymbol{q_{j}^{m}}}}
\label{eq:R_L residues for degenerate case}
\end{equation}

In practice, we are most often presented with non-degenerate states where $M_j = 1$, meaning the kernel is an eigenline $\mathrm{Ker} \left( \boldsymbol{R}_{L}^{-1}(\mathcal{E}_j) \right) = \mathrm{vect}\left( \boldsymbol{q_{j}} \right)$, which entails rank-one residues normalized as:
\begin{equation}
\boldsymbol{r_j} \boldsymbol{r_j}^\mathsf{T}  = \frac{\boldsymbol{q_j}\boldsymbol{q_j}^\mathsf{T}}{\boldsymbol{q_j}^\mathsf{T} \left( \left. { \frac{\partial \boldsymbol{R}_{L}^{-1}}{\partial E} }\right|_{E=\mathcal{E}_j} \right) \boldsymbol{q_j}}
\label{eq:R_L residues }
\end{equation}

The residue widths $\left\{\boldsymbol{r_{j}^{m}}\right\}$, here called \textit{radioactive widths}, can thus directly be expressed as:
\begin{equation}
\boldsymbol{r_{j}^{m}}  = \frac{\boldsymbol{q_{j}^{m}}}{\sqrt{{\boldsymbol{q_{j}^{m}}}^\mathsf{T} \left( \left. { \frac{\partial \boldsymbol{R}_{L}^{-1}}{\partial E} }\right|_{E=\mathcal{E}_j} \right) \boldsymbol{q_{j}^{m}}}}
\label{eq:R_L residues widths}
\end{equation}
where $\left. { \frac{\partial \boldsymbol{R}_{L}^{-1}}{\partial E} }\right|_{E=\mathcal{E}_j} $ can readily be calculated by means of property (\ref{eq::inv M derivatie property}) to yield
\begin{equation}
\left. { \frac{\partial \boldsymbol{R}_{L}^{-1}}{\partial E} }\right|_{E=\mathcal{E}_j} = \frac{\partial \boldsymbol{R}^{-1} }{\partial E} (\mathcal{E}_j) - \frac{\partial \boldsymbol{L} }{\partial E} (\mathcal{E}_j)
\end{equation}
with
\begin{equation}
\frac{\partial \boldsymbol{R}^{-1} }{\partial E} (E) = - \boldsymbol{R}^{-1} \boldsymbol{\gamma}^\mathsf{T} \left(\boldsymbol{e} - E\Id{}\right)^{-2} \boldsymbol{\gamma} \boldsymbol{R}^{-1}
\end{equation}
The \textit{radioactive poles}, $\left\{\mathcal{E}_j\right\}$, and residue widths, $\left\{\boldsymbol{r_{j}^{m}} = \left[ r_{{j,c_1}}^{m}, \hdots, r_{{j,c}}^m , \hdots , r_{{j,c_{N_c}}}^{m} \right]^\mathsf{T} \right\}$, are the Siegert-Humblet parameters. They are complex and locally untangle the energy dependence into the sum of poles and residues (\ref{eq::RL Mittag Leffler degenerate state}).
Additional discussion on these poles and residues can be found in \cite{Lane_and_Thomas_1958}, sections IX.2.c-d-e p.297-298, or in \cite{Siegert, Breit_radioactive_1940, Radioactive_Mahaux_1969, Radioactive_states_Roumania}.
Focusing on invariance, since the Kapur-Peierls matrix $\boldsymbol{R}_{L}$ is invariant to a change in boundary conditions $B_c$ --- c.f. equations (\ref{eq:: gAg invariance for B'}) and (\ref{eq:: R_B invariance for B'}) --- this entails the radioactive poles $\left\{\mathcal{E}_j\right\}$ and widths $\left\{\boldsymbol{r_j}\right\}$ are boundary condition $B_c$ independent.
We will also prove (c.f. theorem \ref{theo:: poles of U are Siegert-Humblet poles}) that the poles $\left\{\mathcal{E}_j\right\}$ are exactly the poles of the scattering matrix $\boldsymbol{U}(E)$, which also makes them invariant to channel radii $\left\{a_c\right\}$. From (\ref{eq:U expression}), the radioactive widths $\left\{\boldsymbol{r_j}\right\}$ are not themselves invariant in change of channel radius $a_c$, but we will also show in theorem \ref{theo::r_j,c uncer change of a_c} section \ref{sec:Invariance to chanel radius} how to transform them under a change of channel radius $a_c$.

\subsubsection{\label{sec: Level matrix approach to Siegert-Humblet expansion}Level matrix $\boldsymbol{A}(E)$ approach to Siegert \& Humblet expansion}

An alternative approach to calculating the Siegert-Humblet parameters $\left\{a_c,\mathcal{E}_j,r_{j,c}^m,E_{T_c}\right\}$ from the Wigner-Eisenbud ones $\left\{a_c, B_c, \gamma_{\lambda,c}, E_\lambda, E_{T_c} \right\}$ is through the level matrix $\boldsymbol{A}$.
This strongly mirrors Brune's generalized eigenvalue problem (\ref{eq:Brune eigenproblem}) in that we search for the poles and eigenvectors of the level matrix operator $\boldsymbol{A}$:
\begin{equation}
\left.\boldsymbol{A}^{-1}(E)\right|_{E=\mathcal{E}_j}\boldsymbol{b_j} = \boldsymbol{0}
\label{eq::invA det roots}
\end{equation}
i.e. solve for the eigenvalues $\left\{\mathcal{E}_j\right\}$ and associated eigenvectors $\left\{\boldsymbol{b_j}\right\}$ that satisfy:
\begin{equation}
\begin{IEEEeqnarraybox}[][c]{rCl}
\left[\boldsymbol{e} - \boldsymbol{\gamma} \left( \boldsymbol{L}(\mathcal{E}_j) - \boldsymbol{B} \right)\boldsymbol{ \gamma}^\mathsf{T} \right]\boldsymbol{b_j} = \mathcal{E}_j\boldsymbol{b_j}
\IEEEstrut\end{IEEEeqnarraybox}
\label{eq: invA eigenproblem for RL poles}
\end{equation}
This problem is analogous to (\ref{eq:Brune eigenproblem}), replacing the shift factor $\boldsymbol{S}$ with the outgoing-wave reduced logarithmic derivative $\boldsymbol{L}$.

Again, the same hypotheses as for $\boldsymbol{R}_{L}$ in section \ref{sec:R_L def} allow us to adapt the Gohberg-Sigal theory to the case of complex symmetric operators to locally yield the following formula for the normalized residues in the Mittag-Leffler expansion of the level matrix:
\begin{equation}
\boldsymbol{A}(E) \underset{\mathcal{V}(E)}{=} \sum_{j\geq 1} \frac{ \sum_{m=1}^{M_j}\boldsymbol{a_{j}^{m}}{\boldsymbol{a_{j}^{m}}}^\mathsf{T}}{E - \mathcal{E}_j} + \boldsymbol{\mathrm{Hol}}_{\boldsymbol{A}}(E)
\label{eq::A Mittag Leffler}
\end{equation}
In the most frequent case of non-degenerate eigenvalues to (\ref{eq::invA det roots}), this yields rank-one residues as:
\begin{equation}
\boldsymbol{A}(E) \underset{\mathcal{V}(E)}{=} \sum_{j\geq 1} \frac{ \boldsymbol{a_j}\boldsymbol{a_j}^\mathsf{T}}{E - \mathcal{E}_j} + \boldsymbol{\mathrm{Hol}}_{\boldsymbol{A}}(E)
\label{eq::A Mittag Leffler non-degenerate}
\end{equation}
Again, under non-quasi-null eigenvectors assumption $ {\boldsymbol{b_{j}^{m}}}^\mathsf{T}\boldsymbol{b_{j}^{m}}\neq 0$, Gohberg-Sigal theory ensures the residues are normalized as:
\begin{equation}
\boldsymbol{a_j^m}{\boldsymbol{a_j^m}}^\mathsf{T}  = \frac{\boldsymbol{b_j^m}{\boldsymbol{b_j^m}}^\mathsf{T}}{{\boldsymbol{b_j^m}}^\mathsf{T} \left( \left. { \frac{\partial \boldsymbol{A}^{-1}}{\partial E} }\right|_{E=\mathcal{E}_j} \right) \boldsymbol{b_j^m}}
\label{eq:A residues }
\end{equation}
which is readily calculable from
\begin{equation}
\frac{\partial \boldsymbol{A}^{-1}}{\partial E} (\mathcal{E}_j) = - \Id{} - \boldsymbol{\gamma} \frac{\partial \boldsymbol{L} }{\partial E} (\mathcal{E}_j) \boldsymbol{\gamma}^\mathsf{T}
\end{equation}
Plugging (\ref{eq::A Mittag Leffler}) into (\ref{eq:Kapur-Peirels Operator and Channel-Level equivalence}), and invoking the unicity of the complex residues, implies the radioactive widths (\ref{eq:R_L residues widths}) can be obtained as
\begin{equation}
\boldsymbol{r_j^m} = \boldsymbol{\gamma}^\mathsf{T}\boldsymbol{a_j^m}
\label{eq::radioactive widths rj from aj}
\end{equation}

This is an interesting and novel way to define the Siegert-Humblet parameters, which is similar to the parameter definition of Brune (\ref{eq:Brune parameters}).
From this perspective, the Brune parameters appear as a special case that leave the Siegert-Humblet level-matrix parameters boundary condition $B_c$ invariant.
Indeed, the defining property of Brune's parameters (\ref{eq:R_L unchanged by Brune}) means we can search for the Siegert-Humblet expansion of the Brune parameters, simply by proceeding as in (\ref{eq::invA det roots}) with Brune's alternative physical level matrix $\boldsymbol{\widetilde{A}}$ from (\ref{eq::Brune physical level matrix}) or (\ref{eq:: pseudo invA Brune}):
\begin{equation}
\left.\boldsymbol{\widetilde{A}}^{-1}(E)\right|_{E=\mathcal{E}_j}\boldsymbol{\widetilde{b_j}} = \boldsymbol{0}
\label{eq::invA det roots on Brune}
\end{equation}
The exact same Gohberg-Sigal procedure can then be applied to the Mittag-Leffler expansion of Brune's $\boldsymbol{\widetilde{A}}$ physical level matrix, in the vicinity $\mathcal{V}(E)$ of $E\in\mathbb{C}$ away from branch points $\left\{E_{T_c}\right\}$,
\begin{equation}
\boldsymbol{\widetilde{A}}(E) \underset{\mathcal{V}(E)}{=} \sum_{j\geq 1} \frac{\sum_{m=1}^{M_j}\boldsymbol{\widetilde{a_j^m}}{\boldsymbol{\widetilde{a_j^m}}}^\mathsf{T}}{E - \mathcal{E}_j} + \boldsymbol{\mathrm{Hol}}_{\boldsymbol{\widetilde{A}}}(E)
\label{eq:Brune A Mittag-Leffler}
\end{equation}
yielding the normalized residue widths:
\begin{equation}
\boldsymbol{\widetilde{a_j^m}} {\boldsymbol{\widetilde{a_j^m}}}^\mathsf{T}  = \frac{\boldsymbol{\widetilde{b_j^m}}{\boldsymbol{\widetilde{b_j^m}}}^\mathsf{T}}{{\boldsymbol{\widetilde{b_j^m}}}^\mathsf{T} \left( \left. { \frac{\partial \boldsymbol{\widetilde{A}}^{-1}}{\partial E} }\right|_{E=\mathcal{E}_j} \right) \boldsymbol{\widetilde{b_j^m}}}
\label{eq:A residues Brune}
\end{equation}
where (\ref{eq::Brune physical level matrix}) can be used to calculate the energy derivative.
Then, when plugging (\ref{eq:A residues Brune}) into (\ref{eq:R_L unchanged by Brune}), we obtain the relation between the Brune and the Siegert-Humblet R-matrix parameters:
\begin{equation}
\boldsymbol{r_j^m} = \boldsymbol{\widetilde{\gamma}}^\mathsf{T}\boldsymbol{\widetilde{a_j^m}}
\label{eq::radioactive widths rj from aj Brune}
\end{equation}

This relation (\ref{eq::radioactive widths rj from aj Brune}) is especially enlightening when compared to (\ref{eq::radioactive widths rj from aj}) from the viewpoint of invariance to boundary condition $B_c$. Indeed, we explained that the Siegert-Humblet parameters $\left\{\mathcal{E}_j, \boldsymbol{r_j^m}\right\}$ are invariant with a change of boundary condition $B_c \to B_c'$.
This is however not true of the level matrix residue widths $\left\{\boldsymbol{a_j^m}\right\}$ from (\ref{eq:A residues }). Thus, we can formally write this invariance by differentiating (\ref{eq::radioactive widths rj from aj}) with respect to $B_c$ and noting that $\frac{\partial \boldsymbol{r_j^m}}{\partial \boldsymbol{B}} = \boldsymbol{0}$, yielding:
\begin{equation}
\boldsymbol{0} = \frac{\partial \boldsymbol{\gamma}^\mathsf{T}}{\partial \boldsymbol{B}} \boldsymbol{a_j^m} + \boldsymbol{\gamma}^\mathsf{T}\frac{\partial \boldsymbol{a_j^m}}{\partial \boldsymbol{B}}
\label{eq:: invariance from Bc radioactive widths rj and aj}
\end{equation}
This new relation links the variation of the Wigner-Eisenbud widths $\gamma_{\lambda,c}$ at level values $E_\lambda$ under a change of boundary conditions $B_{c'}$, to the variation of the level matrix residue widths $a_{j,c}^m$ at pole values $\mathcal{E}_j$ under change of boundary condition $B_{c'}$. Since transformations (\ref{eq: Wigner-Eisenbud parameters transformations under change of Bc}) detail how to perform $\frac{\partial \boldsymbol{\gamma}^\mathsf{T}}{\partial \boldsymbol{B}}$, equation (\ref{eq:: invariance from Bc radioactive widths rj and aj}) could be used to update $\boldsymbol{a_j^m}$ under a change $B_c \to B_{c'}$.

Another telling insight from relation (\ref{eq:: invariance from Bc radioactive widths rj and aj}) is when we apply it to the relation between the Brune parameters and the Siegert-Humblet residue widths (\ref{eq::radioactive widths rj from aj Brune}). There, since the Brune parameters $\boldsymbol{\widetilde{\gamma}}$ are invariant to $B_c$, the same differentiation as in (\ref{eq:: invariance from Bc radioactive widths rj and aj}) now yields zero derivatives,
\begin{equation}
\boldsymbol{0} = \boldsymbol{\widetilde{\gamma}}^\mathsf{T}\frac{\partial \boldsymbol{\widetilde{a_j^m}}}{\partial \boldsymbol{B}}
\label{eq:: invariance from Bc radioactive widths rj and aj Brune}
\end{equation}
Though this is obvious from the fact that Brune's physical level matrix $\boldsymbol{\widetilde{A}}$ is invariant under change of boundary condition, it does present the Brune parametrization as the one which leaves the level residue widths $\left\{\boldsymbol{\widetilde{a_j}}\right\}$ invariant to $B_c$ when transforming to Siegert-Humblet parameters though (\ref{eq::radioactive widths rj from aj Brune}).

Conversely, the Siegert-Humblet Kapur-Peierls matrix resonance expansion (\ref{eq::RL Mittag Leffler}) completes Brune's parametrization in that it generates the boundary condition $B_c$ independent poles $\left\{\mathcal{E}_j\right\}$ and radioactive widths $\left\{\boldsymbol{r_j}\right\}$ that explicitly invert Brune's alternative physical level matrix (\ref{eq::Brune physical level matrix}) to yield (\ref{eq:Brune A Mittag-Leffler}).

In practice we are most often presented with the non-degenerate case of rank-one residues (eigenvalue multiplicity of $M_j = 1$), thus for clarity of reading and without loss of generality, we henceforth drop the superscript ``$m$'' and summation over the multiplicity, unless it is of specific interest.

\subsubsection{\label{sec:R_L parameters} $\boldsymbol{R}_{L}$ complex parameters: local expansion, multi-sheeted Riemann surface, poles \& residues }

We here discuss some subtle points in line with section \ref{subsubsec::Ambiguity in shift and penetration}, concerning the continuation of R-matrix operators to complex energies $E\in \mathbb{C}$, which is required in the procedure to calculate the Siegert-Humblet parameters.

We first start with a numerical note.
Numerically, solving the generalized eigenvalue problems (\ref{eq:R_L eigenproblem}) or (\ref{eq: invA eigenproblem for RL poles}) falls into the well-known class of nonlinear eigenvalue problems, for which algorithms we direct the reader to Heinrich Voss's chapter 115 in the Handbook of Linear Algebra \cite{Handbook_of_linear_algebra}.
We will just state that instead of the Rayleigh-quotient type of methods expressed in \cite{Handbook_of_linear_algebra}, it can sometimes be computationally advantageous to first find the poles $\left\{\mathcal{E}_j\right\}$ by solving the channel determinant problem, $\mathrm{det}\left( \left.\boldsymbol{R}_{L}^{-1}(E)\right|_{E = \mathcal{E}_j} \right) = 0$, analogous to (\ref{eq:R_S by Brune det search}), or the corresponding level determinant one, $\mathrm{det}\left( \left.\boldsymbol{A}^{-1}(E)\right|_{E = \mathcal{E}_j} \right) = 0$, and then solve the associated linear eigenvalue problem. Methods tailored to find all the roots of this problem where introduced in \cite{Pavla_PHYSOR_conversion}, or in equations (200) and (204) of \cite{Frohner_Jeff_2000}.
Notwithstanding, from a numerical standpoint, having the two approaches is beneficial in that solving (\ref{eq:R_L eigenproblem}) will be advantageous over solving (\ref{eq: invA eigenproblem for RL poles}) when the number of levels $N_\lambda$ far exceeds the number of channels $N_c$, and conversely.

Let us now provide some remarks on the thorny question of the multi-sheeted nature of the problem.
When solving problem (\ref{eq:R_L eigenproblem}), or (\ref{eq: invA eigenproblem for RL poles}), to obtain the Sieger-Humblet poles $\left\{\mathcal{E}_j\right\}$ and residues $\left\{\boldsymbol{r_j}\right\}$, or $\left\{\boldsymbol{a_j}\right\}$, it is necessary to compute the $\boldsymbol{L^0}$ matrix function $\boldsymbol{L^0}(E) \coloneqq  \boldsymbol{L^0}(\boldsymbol{\rho}(E)) $ for complex energies $E\in \mathbb{C}$. As discussed in \ref{subsec:Energy dependence and wavenumber mapping}, mapping (\ref{eq:rho_c(E) mapping}) generates a multi-sheeted Riemann surface with $2^{N_c}$ branches (with the threshold values $E_{T_c}$ as branch points), corresponding to the choice for each channel $c$, of the sign of the square root in $\boldsymbol{\rho}(E)$. This means that when searching for the poles, one has to keep track of these choices and specify for each pole $\mathcal{E}_j$ on what sheet it is found. Every pole $\mathcal{E}_j$ must thus come with the full reporting of these $N_c$ signs, i.e.
\begin{equation}
\Big\{\mathcal{E}_j, + , +, -, \hdots, +, - \Big\}
\label{eq:: pole E_j sheet reporting}
\end{equation}
The $\left\{\mathcal{E}_j, + , +, \hdots, +, + \right\}$ sheet is called the \textit{physical sheet}, and we here call the poles on that sheet the \textit{principal poles}. All other sheets are called \textit{unphysical} and the poles laying on these sheets are called \textit{shadow poles}. Often, the principal poles are responsible for the resonant behavior, with shadow poles only contributing to background behavior, but cases have emerged where the shadow poles contribute significantly to the resonance structure, as reported in \cite{Hale_1987}, and G. Hale there introduced a quantity called \textit{strength} of a pole (c.f. eq. (7) in \cite{Hale_1987}, or paragraph after eq. (2.11) XI.2.b, p.306, and section XI.4, p.312 in \cite{Lane_and_Thomas_1958}) to quantify the impact a pole $\mathcal{E}_j$ will have on resonance behavior, by comparing the residue $r_{j,c}$ to the Wigner-Eisenbug widths $\gamma_{\lambda,c}$.\\
As discussed in section \ref{sec:R_S Brune transform}, there is an ambiguity of definitions for the shift and penetration functions, which in turn entail various possible Brune parameters. 
When solving (\ref{eq:R_L eigenproblem}) or (\ref{eq: invA eigenproblem for RL poles}) for the Siegert-Humblet poles and residues $\left\{ \mathcal{E}_j, r_{j,c}\right\}$, there are no such ambiguities on the definition of $\boldsymbol{L}$. From (\ref{eq:L expression}), we extend $\boldsymbol{L}(E)$ to complex energies by simply performing analytic continuation of the outgoing wave functions $O_c(\rho_c)$, as for lemma \ref{lem::Mittag-Leffler of L_c Lemma} in section \ref{subsec:External_region_waves}. This means the Siegert-Humblet parameters are uniquely defined, as long as we specify for each channel $c$ what sheet of the Riemann surface from mapping (\ref{eq:rho_c(E) mapping}) was chosen, as in (\ref{eq:: pole E_j sheet reporting}).

As it was the case for the Brune parameters, which counted more solutions than levels ($N_S \geq N_\lambda$), there are more Siegert-Humblet poles $\left\{\mathcal{E}_j\right\}$ than Wigner-Eisenbud levels $\left\{E_\lambda\right\}$. For massive neutral particles, by proceeding in an analogous way as for (\ref{eq: capped multiplicities N_S}), applying the diagonal divisibility and capped multiplicities lemma \ref{lem::diagonal divisibility and capped multiplicities} to the determinant of the Kapurl-Peierls operator $\boldsymbol{R}_L$ in (\ref{eq:R_L eigenproblem}) -- but this time in $\rho_c$ space (c.f. comment after (\ref{eq: capped multiplicities N_S}) in proof of theorem \ref{theo::Choice of Brune poles}) -- and then looking at the order of the resulting rational fractions in $\rho_c$ and the number of times one must square the polynomials to unfold all $\rho_c = \mp \sqrt{\cdot}$ sheets, we where able to establish that the number $N_L$ of poles in wavenumber $\rho$-space is:
\begin{equation}
N_L = \left(2 N_\lambda  + \sum_{c=1}^{N_c} \ell_c \right)\; \times 2^{(N_{E_{T_c} \neq E_{T_{c'}}} - 1 ) }
\label{eq::NL number of poles}
\end{equation}
where $N_{E_{T_c} \neq E_{T_{c'}}}$ designates the number of different thresholds (including the obvious $E_{T_c} = 0$ zero threshold).
Again, as for (\ref{eq: capped multiplicities N_S}), one should add the precision that in the sum over the channels in (\ref{eq::NL number of poles}), the multiplicity of eventual $L_c(\rho_c)$ repeated over many different channels $L_c(\rho_c) = L_{c'\neq c}(\rho_{c'})$ is capped by $N_\lambda$, which in practice would only occur in the rare cases where only one or two levels occurs for many channels with same angular momenta (and, of course, total angular momenta and parity $J^\pi$). 
One can observe that the number $N_L$ of Siegert-Humblet poles adds-up the number of levels $N_\lambda$ and the number of poles of $\boldsymbol{L}$ (which is $\sum_{c=1}^{N_c} \ell_c$ for neutral massive particles, and is infinite in the Coulomb case, c.f. discussion in section \ref{sec:Invariance to chanel radius}). Moreover, $N_L$ is duplicated with each new sheet of the Riemann surface from mapping (\ref{eq:rho_c(E) mapping}) --- that is associated to a new threshold, hence the $N_{E_{T_c} \neq E_{T_{c'}}}$.
Interestingly, comparing $N_L$ from (\ref{eq::NL number of poles}) with the $N_S$ Brune poles from (\ref{eq:N_S Brune poles}) --- which are in $E$-space and must thus be doubled to obtain the number of $\rho$-space poles --- we note that the analytic continuation of the shift factor $\boldsymbol{S}$ adds a virtual pole for each pole of $\boldsymbol{L}$ when unfolding the sheets of mapping (\ref{eq:rho_c(E) mapping}) by being a function of $\rho_c^2(E)$. This can readily be observed in our trivial one level one p-wave ($\ell=1$) channel, where $S(E) = -\frac{1}{1 + \rho^2(E)}$ introduces two poles at $\rho(E) = \pm \mathrm{i}$, while $L(E) = \frac{- 1 + \mathrm{i}\rho(E) + \rho^2(E)}{1 - \mathrm{i}\rho(E)}$ only counts one pole, at $\rho(E) = \mathrm{i}$.

It is important to grasp the meaning of the Mittag-Leffler expansion (\ref{eq::RL Mittag Leffler}) --- or (\ref{eq::A Mittag Leffler}) and (\ref{eq:Brune A Mittag-Leffler}). These are local expressions in that the branch-point structure of the Riemann sheet does not allow these Mittag-Leffler expansions to hold for all complex energy $E\in \mathbb{C}$. However, away from the branch-points $E_{T_c}$ the form locally stands, and the holomorphic part then has an analytic expansion $\boldsymbol{\mathrm{Hol}}_{\boldsymbol{R}_{L}}(E) \coloneqq \sum_{n\geq0} \boldsymbol{c}_n E^n$, which means in a neighborhood $\mathcal{V}(E)$ of $E\in\mathbb{C}$ away from the thrsholds $\left\{E_{T_c}\right\}$ the following expansion holds:
\begin{equation}
\boldsymbol{R}_{L}(E) \underset{\mathcal{V}(E)}{=} \sum_{j\geq 1} \frac{\boldsymbol{r_j}\boldsymbol{r_j}^\mathsf{T}}{E - \mathcal{E}_j} +\sum_{n\geq0} \boldsymbol{c}_n E^n
\label{eq::RL Mittag Leffler expanded}
\end{equation}
This has two major consequences for the Siegert-Humblet expansion.
First, contrarily to Brune's parameters $\left\{\widetilde{E_i}, \widetilde{\gamma_{i,c}}\right\}$, the Siegert-Humblet set of poles and radioactive widths $\left\{\mathcal{E}_j , r_{j,c} \right\}$ do not suffice to uniquely determine the energy behavior of the scattering matrix $\boldsymbol{U}(E)$: one needs to locally add the expansion coefficients $\left[\boldsymbol{c}_n\right]_{c,c'}$ of the entire part $\boldsymbol{\mathrm{Hol}}_{\boldsymbol{R}_{L}}(E) \coloneqq \sum_{n\geq0} \boldsymbol{c}_n E^n$.
Second, since the set of coefficients $\left\{\boldsymbol{c}_n\right\}$ is \textit{a priori} infinite (and the poles set is too for the Coulomb case), this means that numerically the Siegert-Humblet expansion can only be used to compute local approximations of the scattering matrix, which can nonetheless reach any target accuracy by expanding the number of $\left\{  \mathcal{E}_j\right\}_{j \in \llbracket 1,N_L \rrbracket} $ poles included and the order of the truncation $N_{\mathcal{V}(E)}$ in $\left\{\boldsymbol{c}_n\right\}_{n\in\llbracket1,N_{\mathcal{V}(E)}\rrbracket}$.
In practice, this means that to compute the scattering matrix one needs to provide the Siegert-Humblet parameters $\left\{\mathcal{E}_j, r_{j,c}\right\}$, cut the energy domain of interest into local windows $\mathcal{V}(E)$ away from threshold branch-points $\left\{E_{T_c}\right\}$, and provide a set of local coefficients $\left\{\boldsymbol{c}_n\right\}_{n\in\llbracket1,N_{\mathcal{V}(E)}\rrbracket}$ for each window.

\subsubsection{\label{sec:Linking the R-matrix parametrization to the Humblet-Rosenfeld scattering matrix expansion} Linking the R-matrix parametrization to the Humblet-Rosenfeld scattering matrix expansion}

So far, we have started from the R-matrix Wigner-Eisenbud parameters $\left\{E_\lambda, \gamma_{\lambda,c}\right\}$ to construct the poles and residues of the Kapur-Peierls operator $\boldsymbol{R}_{L}$, through (\ref{eq:R_L eigenproblem}) and (\ref{eq:R_L residues widths}).
Plugging its associated expansion (\ref{eq::RL Mittag Leffler}) into the expression of the scattering matrix (\ref{eq:U expression}) then yields the Mitteg-Leffler expansion of the scattering matrix:
\begin{equation}
\boldsymbol{U}(E) \underset{\mathcal{V}(E)}{=} \boldsymbol{w}\sum_{j\geq 1} \frac{\boldsymbol{u_j}\boldsymbol{u_j}^\mathsf{T}}{E - \mathcal{E}_j} + \boldsymbol{\mathrm{Hol}}_{\boldsymbol{U}}(E)
\label{eq::U Mittag Leffler}
\end{equation}
where $\boldsymbol{w} \coloneqq 2\mathrm{i}\Id{}$ is the wronskian (\ref{eq:wronksian expression}), and the scattering residue widths $\boldsymbol{u_j}$ are defined as:
\begin{equation}
\boldsymbol{u_j} \coloneqq \left[\boldsymbol{\rho}^{1/2}\boldsymbol{O}^{-1}\right]_{E = \mathcal{E}_j}\boldsymbol{r_j}
\label{eq::u_j scattering residue width}
\end{equation}
In writing (\ref{eq::U Mittag Leffler}), we have used the fact that all the resonances of the scattering matrix $\boldsymbol{U}(E)$ come from the Kapur-Peierls poles $\left\{\mathcal{E}_j\right\}$: the poles $\left\{\omega_k\right\}$ of the outgoing wave function $\boldsymbol{O}(E)$ cancel out in (\ref{eq:U expression}) and are thus not present in the scattering matrix, this we demonstrate in theorem \ref{theo::Analytic continuation of scattering matrix cancels spurious poles}, section \ref{subsec::Spurious poles cancellation for analytically continued scattering matrix}. Cauchy's residues theorem then allows us to evaluate the residues at the pole value to obtain (\ref{eq::u_j scattering residue width}).
As for (\ref{eq::RL Mittag Leffler}), if a resonance were to be degenerate with multiplicity $M_j$, the residues would no longer be rank-one, but instead the scattering matrix residue associated to pole $\mathcal{E}_j$ would be $\sum_{m=1}^{M_j}\boldsymbol{u_j^m}{\boldsymbol{u_j^m}}^\mathsf{T}$, with $\boldsymbol{u_j^m} \coloneqq \left[\boldsymbol{\rho}^{1/2}\boldsymbol{O}^{-1}\right]_{E = \mathcal{E}_j}\boldsymbol{r_{j}^m}$.

Expression (\ref{eq::U Mittag Leffler}) exhibits the advantage that the energy dependence of the scattering matrix $\boldsymbol{U}(E)$ is untangled in a simple sum. All the resonance behavior stems from the complex poles and residue widths $\left\{\mathcal{E}_j , u_{j,c} \right\}$, which yield the familiar Breit-Wigner profiles (Cauchy-Lorentz distributions) for the cross section. Conversely, all the threshold behavior and the background are described by the holomorphic part $\boldsymbol{\mathrm{Hol}}_{\boldsymbol{U}}(E)$, which can be expanded in various forms, for instance analytically (\ref{eq::Hol_U expansion}).

This establishes the important bridge between the R-matrix parametrizations and the Humblet-Rosenfeld expansions of the scattering matrix. More precisely, Mittag-Leffler expansion (\ref{eq::U Mittag Leffler}) is identical to the Humblet-Rosenfeld expansions (10.22a)-(10.22b) in \cite{Theory_of_Nuclear_Reactions_I_resonances_Humblet_and_Rosenfeld_1961} for the neutral particles case, and (5.4a)-(5.4b) in \cite{Theory_of_Nuclear_Reactions_IV_Coulomb_Humblet_1964} for the Coulomb case. We thus here directly connect the R-matrix parameters with the Humblet-Rosenfeld resonances, parametrized by their partial widths and real and imaginary poles, as described in \cite{Theory_of_Nuclear_Reactions_III_Channel_radii_Humblet_1961_channel_Radii}.
In particular, the poles $\left\{\mathcal{E}_j\right\}$ from (\ref{eq:E_j pole def}), found by solving (\ref{eq:R_L eigenproblem}), are exactly the ones defined by equations (9.5) and (9.8) in \cite{Theory_of_Nuclear_Reactions_I_resonances_Humblet_and_Rosenfeld_1961}.
The scattering residue widths $\left\{u_{j,c}\right\}$, calculated from (\ref{eq::u_j scattering residue width}), then correspond to the Humblet-Rosenfeld complex residues (10.12) in \cite{Theory_of_Nuclear_Reactions_I_resonances_Humblet_and_Rosenfeld_1961}, from which they build their quantities $\left\{G_{c,n}\right\}$ appearing in expansions (10.22a)-(10.22b) in \cite{Theory_of_Nuclear_Reactions_I_resonances_Humblet_and_Rosenfeld_1961}, or (5.4a)-(5.4b) in \cite{Theory_of_Nuclear_Reactions_IV_Coulomb_Humblet_1964}.
Finally, the holomorphic part $\boldsymbol{\mathrm{Hol}}_{\boldsymbol{U}}(E)$ corresponds to the regular function $Q_{c,c'}(E)$ defined between (10.14a) and (10.14b) in \cite{Theory_of_Nuclear_Reactions_I_resonances_Humblet_and_Rosenfeld_1961}.

Just as Humblet and Rosenfeld did with $Q_{c,c'}(E)$ in section 10.2 of \cite{Theory_of_Nuclear_Reactions_I_resonances_Humblet_and_Rosenfeld_1961} and section 4 of \cite{Theory_of_Nuclear_Reactions_IV_Coulomb_Humblet_1964}, we do not give here an explicit way of calculating this holomorphic contribution $\boldsymbol{\mathrm{Hol}}_{\boldsymbol{U}}(E)$ other than stating that it is possible to expand it in various ways. Far from a threshold, an analytic series in $E$ can stand:
\begin{equation}
\boldsymbol{\mathrm{Hol}}_{\boldsymbol{U}}(E) \underset{\mathcal{V}(E)}{=}  \sum_{n\geq0} \boldsymbol{s}_n E^n
\label{eq::Hol_U expansion}
\end{equation}
In the immediate viscinity of a threshold, the asymptotic threshold behavior will prevail (for massive particles, $U_{c,c'}\sim k_c^{\ell_c + 1} k_{c'}^{\ell_{c'}}$, c.f. eq.(10.5) in \cite{Theory_of_Nuclear_Reactions_I_resonances_Humblet_and_Rosenfeld_1961}, or \cite{Wigner_Thresholds_1948}), yielding an expansion of the form:
\begin{equation}
\boldsymbol{\mathrm{Hol}}_{\boldsymbol{U}}(E) \underset{\mathcal{V}(E_{T_c})}{=} \sum_{n\geq0} \boldsymbol{s}_n k_c^n(E)
\label{eq::Hol_U threshold expansion}
\end{equation}
Though there is no explicit way of linking these expansions (\ref{eq::Hol_U threshold expansion}) or (\ref{eq::Hol_U expansion}) to the R-matrix Wigner-Eisenbud parameters $\left\{E_\lambda, \gamma_{\lambda, c}\right\}$, this means that the same approach as that discussed in the paragraph following equation (\ref{eq::RL Mittag Leffler expanded}) can be taken: one can provide a local set of coefficients $\left\{\boldsymbol{s}_n\right\}_{\mathcal{V}(E)}$ to expand the holomorphic part of the scattering matrix $\boldsymbol{\mathrm{Hol}}_{\boldsymbol{U}}(E)$, and then calculate the scattering matrix from the Mittag-Leffler expansion (\ref{eq::U Mittag Leffler}).

An important question is that of the radius of convergence of the Mittag-Leffler expansion, in other terms how big can the vicinity $\mathcal{V}(E)$ be?
Humblet and Rosenfeld analyze this problem in section 1.4 of \cite{Theory_of_Nuclear_Reactions_I_resonances_Humblet_and_Rosenfeld_1961}, and perform the Mittag-Leffer expansion (1.50). In the first paragraph of p.538 it is stated that Humblet demonstrated in his Ph.D. thesis that the Mittag-Leffler series will converge for $M\geq 1$ for $U(k)$, though this does not investigate the multi-channel case, and thus the multi-sheeted nature of the Riemann surface stemming from mapping (\ref{eq:rho_c(E) mapping}). They assume at the beginning of section 10.2 that this property stands in the multi-channel case and yet continue their discussion with a choice of $M=0$ that would leave the residues diverging according to their expansion (1.50). This is one reason why we chose in this article to start from a local Mittag-Leffler expansion, and then search for its domain of convergence.
General mathematical scattering theory shows that the Mittag-Leffler expansion holds at least on the whole physical sheet (c.f. theorem 0.2 p.139 of \cite{Guillope_1989_ENS}).
Though it is possible the Mittag-Leffler expansion might also hold separately on each sheet, in practice this requirement is however not needed since it is often computationally more advantageous to break down an energy region between two consecutive thresholds $\left[ E_{T_c} , E_{T_{c}+1} \right]$ into smaller vicinities.

As we see, by performing the Mittag-Leffler expansion (\ref{eq::U Mittag Leffler}), we have traded-off a finite set of real, unwound, Wigner-Eisenbud parameters $\left\{E_\lambda,\gamma_{\lambda,c}\right\}$ that completely parametrized the energy dependence of the scattering matrix through (\ref{eq:U expression}), with an infinite set of complex Siegert-Humblet parameters $\left\{\mathcal{E}_j,r_{j,c}\right\}$ and some local coefficients $\left\{\boldsymbol{s_n}\right\}_{\mathcal{V}(E)}$ for the holomorphic part --- all of which are intricately intertwined through (\ref{eq:R_L eigenproblem}) which makes them dwell on a sub-manifold of the multi-sheeted Riemann surface of mapping (\ref{eq:rho_c(E) mapping}). This additional complexity of the Siegert-Humblet parameters comes at the gain of a simple parametrization of the energy dependence for the scattering matrix: the poles and residues expansion (\ref{eq::U Mittag Leffler}). For computational purposes, this may be a trade-off worth doing.

\section{\label{sec:Invariance to chanel radius}Invariance with respect to channel radii}

Section \ref{sec:Invariance to B_c, Brune and Siegert-Humblet} provided new insights on the Wigner-Eisenbud, Brune, and Siegert-Humblet parametrizations of R-matrix theory with respect to their invariance to the arbitrary boundary condition $B_c$. As we saw, both the Brune and the Siegert-Humblet parameters are invariant under change of $B_c$, but not under change of channel radius $a_c$.
This section is dedicated to invariance properties of R-matrix parameters to a change in channel radius $a_c$. This problem is both harder and less studied than that of the invariance to the boundary conditions $B_c$. To the best of our knowledge, the only previous results on this topic are the partial differential equations on the Wigner-Eisenbud $\left\{E_\lambda , \gamma_{\lambda,c}\right\}$ parameters Teichmann derived in his Ph.D. thesis (c.f. eq. (2.29) and (2.31) sections III.2. p.27 of \cite{Teichmann_thesis_1949}), and a recent study of the limit case $a_c \to 0$ in \cite{G_Hale_channel_radius_limit_2014_PhysRevC.89.014623}.
We here focus on the Siegert-Humblet parameters $\big\{\mathcal{E}_j , r_{j,c}\big\}$. Our main result of this section resides in theorem \ref{theo::r_j,c uncer change of a_c}, which establishes a way of converting the Siegert-Humblet radioactive residue widths $\left\{ r_{j,c} \right\}$ under a change of channel radius $a_c$.

Before introducing theorem \ref{theo::r_j,c uncer change of a_c}, we bring forth the observation that the scattering matrix $\boldsymbol{U}$ is invariant under change of channel radius $a_c$, i.e. for any channel $c$ we have:
\begin{equation}
\begin{IEEEeqnarraybox}[][c]{rcl}
\frac{\partial \boldsymbol{U}}{\partial a_c}  & \ = \ &  \boldsymbol{0}
\IEEEstrut\end{IEEEeqnarraybox}
\label{eq: dU/dac = 0}
\end{equation}
Since theorem \ref{theo:: poles of U are Siegert-Humblet poles} will show that the poles of the scattering matrix are exactly the ones of the Kapur-Peierls operator $\boldsymbol{R}_{L}$, which are the Siegert-Humblet poles $\left\{\mathcal{E}_j\right\}$, invariance (\ref{eq: dU/dac = 0}) entails that the poles are invariant under change of channel radius $a_c$, i.e.
\begin{equation}
\begin{IEEEeqnarraybox}[][c]{rcl}
\frac{\partial \mathcal{E}_j}{\partial a_c} & \ = \ &  0
\IEEEstrut\end{IEEEeqnarraybox}
\label{eq: dp/dac = 0}
\end{equation}

This is not the case for the other Siegert-Humblet parameters --- the radioactive widths $\left\{ r_{j,c}\right\}$. However, one can use invariance (\ref{eq: dU/dac = 0}) to differentiate the scattering matrix $\boldsymbol{U}$ expression (\ref{eq:U expression}).
Noticing in that process that definition (\ref{eq:L expression}) and the Bloch operator projection onto the channel surface, $\rho_c = k_c a_c$, entail
\begin{equation}
\begin{IEEEeqnarraybox}[][c]{rcl}
\frac{\partial \rho_c^{1/2}O_c^{-1} }{\partial a_c}  & \ = \ & \frac{1}{a_c}\rho_c^{1/2}O_c^{-1}\left[ \frac{1}{2} - L_c\right]
\IEEEstrut\end{IEEEeqnarraybox}
\label{eq: drhoO-1/dac = 0}
\end{equation}
this enables us to establish the following partial differential equations on the Kapur-Peierls matrix operator $\boldsymbol{R}_{L}$ elements:
for the diagonal element,
\begin{equation}
\begin{IEEEeqnarraybox}[][c]{rcl}
a_c \frac{\partial {R_L}_{cc}}{\partial a_c} + (1- 2 L_c) {R_L}_{cc} - 1 & \ = \ &  0
\IEEEstrut\end{IEEEeqnarraybox}
\label{eq: dR//da cc}
\end{equation}
and for off-diagonal ones,
\begin{equation}
\begin{IEEEeqnarraybox}[][c]{rcl}
a_c \frac{\partial {R_L}_{cc'}}{\partial a_c} + (\frac{1}{2}- L_c) {R_L}_{cc'}  & \ = \ &  0
\IEEEstrut\end{IEEEeqnarraybox}
\label{eq: dR//da cc'}
\end{equation}
which can be synthesized into expression,
\begin{equation}
\begin{IEEEeqnarraybox}[][c]{rcl}
\boldsymbol{a} \frac{\partial \boldsymbol{R}_{L}}{\partial \boldsymbol{a}} + ( \frac{1}{2} \Id{} - \boldsymbol{L}) \boldsymbol{R}_{L} + \Id{}\circ\left[ (\frac{1}{2}\Id{} - \boldsymbol{L}) \boldsymbol{R}_{L} - \Id{}  \right] & \ = \ &  \boldsymbol{0}
\IEEEstrut\end{IEEEeqnarraybox}
\label{eq: dRL/da }
\end{equation}
where $\circ$ designates the Hadammard matrix product, and where we used the notation:
\begin{equation}
\begin{IEEEeqnarraybox}[][c]{rcl}
\left[\frac{\partial \boldsymbol{R}_{L}}{\partial \boldsymbol{a}}\right]_{cc'}  & \ \coloneqq \ &  \frac{\partial {R_L}_{cc'}}{\partial a_c}
\IEEEstrut\end{IEEEeqnarraybox}
\label{eq: dRL/da def}
\end{equation}
Equivalently, applying property (\ref{eq::inv M derivatie property}) on partial differential equation (\ref{eq: dRL/da def}) yields:
\begin{equation}
\begin{IEEEeqnarraybox}[][c]{rcl}
\boldsymbol{a} \frac{\partial \boldsymbol{R}_{L}^{-1}}{\partial \boldsymbol{a}} - \boldsymbol{R}_{L}^{-1} ( \frac{1}{2} \Id{} - \boldsymbol{L})  - \Id{}\circ\left[ \boldsymbol{R}_{L}^{-1}(\frac{1}{2}\Id{} - \boldsymbol{L})  - \boldsymbol{R}_{L}^{-2}  \right] & \ = \ &  \boldsymbol{0}
\IEEEstrut\end{IEEEeqnarraybox}
\label{eq: dinvRL/da }
\end{equation}

These first order partial differential equations are inconvenient to solve in that they are channel-dependent, and would thus give rise to equations for each cross term.

\subsection{\label{subsec::Radioactive width transformation under a change of channel radius} Radioactive width transformation under a change of channel radius}

A striking property of the R-matrix parametrizations is that they separate the channel contribution to each resonance, meaning that to compute, for instance, the $R_{c,c'}$ element in (\ref{eq:R expression}), one only requires the widths for each level of each channel, $\gamma_{\lambda,c}$, and not some new parameter for each specific channel pair $c,c'$ combination.
In this spirit, we here demonstrate in theorem \ref{theo::r_j,c uncer change of a_c} that the Siegert-Humblet radioactive widths $r_{j,c}$ play a similar role in that their transformation under a change of channel radius only depends on that given channel.

\begin{theorem}\label{theo::r_j,c uncer change of a_c} \textsc{Siegert-Humblet radioactive residue width $r_{j,c}$ transformation under change of channel radius $a_c$.} \\
Invariance of the scattering matrix to channel radii $a_c$ sets the following first-order linear partial differential equation on the radioactive widths $\left\{r_{j,c}\right\}$ of the Kapur-Peierls $\boldsymbol{R}_{L}$ operator residues,
\begin{equation}
\begin{IEEEeqnarraybox}[][c]{rcl}
a_c \frac{\partial {r}_{j,c}}{\partial a_c} + (\frac{1}{2}- L_c) {r}_{j,c} & \ = \ &  0
\IEEEstrut\end{IEEEeqnarraybox}
\label{eq: drcj/dac }
\end{equation}
which can be formally solved as,
\begin{equation}
\begin{IEEEeqnarraybox}[][c]{rcl}
{r}_{j,c}(a_c) = {r}_{j,c}(a_c^{(0)}) \, \sqrt{\frac{a_c^{(0)}}{a_c}} \, \exp{\left(\int_{a_c^{(0)}}^{a_c} \frac{L_c(k_c x)}{x} \mathrm{d}x\right)}
\IEEEstrut\end{IEEEeqnarraybox}
\label{eq: rcj(ac) integral form}
\end{equation}
and explicitly integrates to:
\begin{equation}
\begin{IEEEeqnarraybox}[][c]{rcl}
\frac{{r}_{j,c}(a_c)}{{r}_{j,c}(a_c^{(0)})} =  \frac{O_c(\rho_c(a_c))}{O_c(\rho_c(a_c^{(0)}))} \sqrt{\frac{a_c^{(0)}}{a_c}}
\IEEEstrut\end{IEEEeqnarraybox}
\label{eq: rcj(ac) explicit}
\end{equation}
\end{theorem}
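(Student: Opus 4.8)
The plan is to descend from the partial differential equations on the Kapur-Peierls operator $\boldsymbol{R}_{L}$, namely (\ref{eq: dR//da cc}) and (\ref{eq: dR//da cc'}), down to its residues by means of the Mittag-Leffler expansion (\ref{eq::RL Mittag Leffler}) and residue calculus. The crucial enabling fact is that the poles $\mathcal{E}_j$ are themselves invariant under change of channel radius, as asserted in (\ref{eq: dp/dac = 0}): since the contour encircling $\mathcal{E}_j$ can be held fixed while $a_c$ varies, the operation of extracting the residue at $\mathcal{E}_j$ commutes with $\frac{\partial}{\partial a_c}$. Because the Laurent development is of order one, the residue of the diagonal element $[\boldsymbol{R}_{L}]_{cc}$ at $\mathcal{E}_j$ is exactly $r_{j,c}^2$.

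First I would take the residue at $E = \mathcal{E}_j$ of the diagonal equation (\ref{eq: dR//da cc}). The constant term $-1$ and the holomorphic part $\boldsymbol{\mathrm{Hol}}_{\boldsymbol{R}_{L}}(E)$ (together with its $a_c$-derivative) are regular at $\mathcal{E}_j$ and contribute nothing; likewise $L_c(E)$ is holomorphic in a neighbourhood of $\mathcal{E}_j$ --- the poles of $\boldsymbol{L}$ at the roots $\left\{\omega_n\right\}$ being precisely the spurious ones cancelled from the scattering matrix in theorem \ref{theo::Analytic continuation of scattering matrix cancels spurious poles} --- so it factors out as $L_c(\mathcal{E}_j)$. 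This yields $a_c \frac{\partial r_{j,c}^2}{\partial a_c} + (1 - 2 L_c) r_{j,c}^2 = 0$, and dividing through by $2 r_{j,c}$ produces exactly the announced first-order linear equation (\ref{eq: drcj/dac }). Repeating the identical residue extraction on the off-diagonal equation (\ref{eq: dR//da cc'}), whose residue is $r_{j,c} r_{j,c'}$, and subtracting the relation just obtained, one finds $\frac{\partial r_{j,c'}}{\partial a_c} = 0$ for every $c' \neq c$; this confirms the channel-locality advertised before the theorem, namely that changing $a_c$ affects only the width $r_{j,c}$ of that same channel.

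Finally I would integrate the ordinary differential equation (\ref{eq: drcj/dac }) in the single variable $a_c$, noting that at the fixed pole $\mathcal{E}_j$ the wavenumber $k_c$ is held constant, since mapping (\ref{eq:rho_c(E) mapping}) depends on $a_c$ only through $\rho_c = k_c a_c$. Separation of variables gives $\frac{d r_{j,c}}{r_{j,c}} = \left( \frac{L_c(k_c a_c)}{a_c} - \frac{1}{2 a_c} \right) d a_c$, whose integration from $a_c^{(0)}$ to $a_c$ immediately delivers the formal solution (\ref{eq: rcj(ac) integral form}). To close the calculation explicitly I would recognise, from definition (\ref{eq:L expression}), that $\frac{L_c(\rho)}{\rho} = \frac{d}{d\rho}\ln O_c(\rho)$; the substitution $\rho = k_c x$ then turns $\frac{L_c(k_c x)}{x}\, dx$ into the exact differential $d \ln O_c(\rho)$, so the integral collapses to $\ln \frac{O_c(\rho_c(a_c))}{O_c(\rho_c(a_c^{(0)}))}$ and reproduces (\ref{eq: rcj(ac) explicit}). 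The hard part will be the residue step: one must justify differentiating under the contour integral using the radius-invariance of $\mathcal{E}_j$, and verify that $L_c$ carries no pole at $\mathcal{E}_j$, so that the partial differential equation on the operator passes cleanly to one on each rank-one residue width.
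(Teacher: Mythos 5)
Your proof is correct, but it takes a mildly different route from the paper's. The paper's proof never touches the operator-level partial differential equations (\ref{eq: dR//da cc})--(\ref{eq: dR//da cc'}): it observes that invariance of $\boldsymbol{U}$ and of the poles $\mathcal{E}_j$ forces the scattering residue widths $\boldsymbol{u_j}$ of the Mittag-Leffler expansion (\ref{eq::U Mittag Leffler}) to satisfy $\frac{\partial \boldsymbol{u_j}}{\partial a_c}=\boldsymbol{0}$, and then applies the product rule to $u_{j,c}=\rho_c^{1/2}O_c^{-1}r_{j,c}$ together with (\ref{eq: drhoO-1/dac = 0}) to land directly on (\ref{eq: drcj/dac }). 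You instead take residues of the already-established PDEs on $\boldsymbol{R}_{L}$ at the fixed pole $\mathcal{E}_j$; since those PDEs were themselves derived from the same two ingredients ($\frac{\partial \boldsymbol{U}}{\partial a_c}=\boldsymbol{0}$ and (\ref{eq: drhoO-1/dac = 0})), the two arguments are close cousins, but yours requires the extra justifications you correctly flag --- commuting $\frac{\partial}{\partial a_c}$ with the contour integral (legitimate precisely because of (\ref{eq: dp/dac = 0})) and regularity of $L_c$ at $\mathcal{E}_j$ (which holds because the poles of $L_c$ are the roots $\omega_n$ of $O_c$, at which $\boldsymbol{R}_{L}$ stays finite, so they are disjoint from the $\mathcal{E}_j$; citing theorem \ref{theo::Analytic continuation of scattering matrix cancels spurious poles} for this is not quite the right reference, since that theorem concerns $\boldsymbol{U}$, not $\boldsymbol{R}_{L}$). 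Your route buys something the paper's does not state: the off-diagonal residue computation yields $\frac{\partial r_{j,c'}}{\partial a_c}=0$ for $c'\neq c$, making the channel-locality explicit rather than implicit. Two small caveats: dividing the equation for $r_{j,c}^2$ by $2r_{j,c}$ silently assumes $r_{j,c}\neq 0$ and a consistent branch of the square root; and in the degenerate case $M_j>1$ the diagonal residue is $\sum_m (r_{j,c}^m)^2$, so your argument only constrains that sum, whereas the paper's route through $\boldsymbol{u_j^m}=\left[\boldsymbol{\rho}^{1/2}\boldsymbol{O}^{-1}\right]\boldsymbol{r_j^m}$ gives (\ref{eq: rcj(ac) explicit degenerate m}) for each $m$ separately. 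The integration step at the end is identical in both proofs and is carried out correctly.
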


\begin{proof}
Since we demonstrated the invariance (\ref{eq: dp/dac = 0}), the Mittag-Leffler expansion (\ref{eq::U Mittag Leffler}) then entails that $\boldsymbol{u_j}$ from (\ref{eq::u_j scattering residue width}) satisfies invariance: $\frac{\partial \boldsymbol{u_j}}{\partial a_c}   \ = \   \boldsymbol{0}$.
Applying result (\ref{eq: drhoO-1/dac = 0}) to the latter then yields partial differential equation (\ref{eq: drcj/dac }), the direct integration of which readily yields (\ref{eq: rcj(ac) integral form}). Since $L_c(\rho_c) \coloneqq \frac{\rho_c}{O_c(\rho_c)}\frac{\partial O_c(\rho_c)}{\partial \rho_c}$, (\ref{eq: rcj(ac) integral form}) integrates explicitly to (\ref{eq: rcj(ac) explicit}). This result also stands for any degenerate state of multiplicity $M_j$, where for each radioactive width $\boldsymbol{r_j^m}$ we have:
\begin{equation}
\begin{IEEEeqnarraybox}[][c]{rcl}
\frac{{r}_{j,c}^m(a_c)}{{r}_{j,c}^m(a_c^{(0)})} =  \frac{O_c(\rho_c(a_c))}{O_c(\rho_c(a_c^{(0)}))} \sqrt{\frac{a_c^{(0)}}{a_c}}
\IEEEstrut\end{IEEEeqnarraybox}
\label{eq: rcj(ac) explicit degenerate m}
\end{equation}
\end{proof}

Note that for neutral particles (massive or massless) s-waves ($\ell=0$), the outgoing wave function is $O_c(\rho(a_c)) = \mathrm{e}^{\mathrm{i}k_ca_c}$ (c.f. table \ref{tab::L_values_neutral}), so that (\ref{eq: rcj(ac) explicit}) yields ${r}_{j,c}(a_c) = {r}_{j,c}(a_c^{(0)}) \, \sqrt{\frac{a_c^{(0)}}{a_c}} \, \mathrm{e}^{\mathrm{i}k_c\left(a_c - a_c^{(0)}\right)}$. Alternatively, direclty integrating (\ref{eq: rcj(ac) integral form}) with the outgoing-wave reduced logarithmic derivative expression $L_c(\rho(a_c)) = \mathrm{i}k_c a_c$ yields back the same result. Thus for s-wave neutral channels subject to a change of channel radius, the modulus of the radioactive widths decreases proportionally to the inverse square root of the channel radius $a_c$, at least for real wavenumbers $k_c \in \mathbb{R}$, i.e. real energies above the channel threshold.

\begin{table*}
\caption{\label{tab::roots of the outgoing wave functions} Roots $\big\{\omega_n\big\}$ of the outgoing wave function $O_{\ell}(\rho)$, algebraic solutions for neutral particles up to $\ell \leq 4$.}
\begin{ruledtabular}
\begin{tabular}{ll}
$\ell$ &  \quad \quad \quad \quad \quad \quad $\big\{\omega_n\big\}$ \tabularnewline
\hline \hline
0  &  \quad \quad \quad \quad \quad \quad $\left\{ \emptyset \right\}$  \tabularnewline
1  &  \quad \quad\quad \quad \quad \quad $ \big\{  -\mathrm{i}\big\}$  \tabularnewline
2  &  \quad \quad \quad \quad\quad \quad  $ \Big\{ \frac{1}{2}\left(-\sqrt{3} - 3 \mathrm{i}\right)  \quad , \quad \frac{1}{2}\left(\sqrt{3} - 3 \mathrm{i}\right) \Big\}$
\tabularnewline
3  & \quad \quad \quad \quad\quad \quad $ \left\{  \omega_1^{\ell = 3}  \approx -1.75438 - 1.83891 \mathrm{i}  \quad , \quad \omega_2^{\ell = 3} \approx - 2.32219 \mathrm{i} \quad , \quad
 \omega_3^{\ell = 3} \approx 1.75438 - 1.83891 \mathrm{i}  \right\}$
\tabularnewline
\multicolumn{2}{c}{$ \begin{array}{l}
\omega_1^{\ell = 3} \coloneqq -2\mathrm{i}-\frac{1}{2}\left(\sqrt{3} - \mathrm{i}\right)\sqrt[3]{\frac{1}{2}\left(1+\sqrt{5}\right)} - \frac{\sqrt{3} + \mathrm{i}}{2^{2/3}\sqrt[3]{1 + \sqrt{5}}}  \\
\omega_2^{\ell = 3} \coloneqq \mathrm{i}\left(-2 + \sqrt[3]{\frac{2}{1+\sqrt{5}}} - \sqrt[3]{\frac{1}{2}\left(1+\sqrt{5} \right)} \right) \\
\omega_3^{\ell = 3} \coloneqq -2\mathrm{i}+\frac{1}{2}\left(\sqrt{3} + \mathrm{i}\right)\sqrt[3]{\frac{1}{2}\left(1+\sqrt{5}\right)} + \frac{\sqrt{3} - \mathrm{i}}{2^{2/3}\sqrt[3]{1 + \sqrt{5}}}
\end{array} $}
\tabularnewline
4  & \quad \quad \quad \quad\quad \quad$ \left\{ \begin{array}{lll}
  \omega_1^{\ell = 4}  \approx -2.65742 - 2.10379\mathrm{i}  & \omega_2^{\ell = 4} \approx -0.867234 - 2.89621 \mathrm{i} \\
  \omega_3^{\ell = 4} \approx 0.867234 - 2.89621 \mathrm{i} & \omega_4^{\ell = 4} \approx 2.65742 - 2.10379\mathrm{i}
\end{array}\right\} $
\tabularnewline
\multicolumn{2}{c}{$ \begin{array}{l}
\omega_1^{\ell = 4} \coloneqq  -\frac{5 \mathrm{i}}{2} - \frac{1}{2}\sqrt{5 + \frac{15^{2/3}}{\sqrt[3]{\frac{1}{2}\left(5 + \mathrm{i}\sqrt{35}\right)}} + \sqrt[3]{\frac{15}{2}\left(5 + \mathrm{i}\sqrt{35} \right)}}  - \frac{1}{2}\sqrt{10-\frac{15^{2/3}}{\sqrt[3]{\frac{1}{2}\left(5 + \mathrm{i}\sqrt{35} \right)}} - \sqrt[3]{\frac{15}{2}\left(5 + \mathrm{i}\sqrt{35}\right)} - \frac{10\mathrm{i}}{\sqrt{5 + \frac{15^{2/3}}{\sqrt[3]{\frac{1}{2}\left(5 + \mathrm{i}\sqrt{35}\right)}} + \sqrt[3]{\frac{15}{2}\left(5 + \mathrm{i}\sqrt{35}\right)} }}}   \\
\omega_2^{\ell = 4} \coloneqq  -\frac{5 \mathrm{i}}{2} - \frac{1}{2}\sqrt{5 + \frac{15^{2/3}}{\sqrt[3]{\frac{1}{2}\left(5 + \mathrm{i}\sqrt{35}\right)}} + \sqrt[3]{\frac{15}{2}\left(5 + \mathrm{i}\sqrt{35} \right)}} + \frac{1}{2}\sqrt{10-\frac{15^{2/3}}{\sqrt[3]{\frac{1}{2}\left(5 + \mathrm{i}\sqrt{35} \right)}} - \sqrt[3]{\frac{15}{2}\left(5 + \mathrm{i}\sqrt{35}\right)} - \frac{10\mathrm{i}}{\sqrt{5 + \frac{15^{2/3}}{\sqrt[3]{\frac{1}{2}\left(5 + \mathrm{i}\sqrt{35}\right)}} + \sqrt[3]{\frac{15}{2}\left(5 + \mathrm{i}\sqrt{35}\right)} }}}  \\
\omega_3^{\ell = 4} \coloneqq  -\frac{5 \mathrm{i}}{2} + \frac{1}{2}\sqrt{5 + \frac{15^{2/3}}{\sqrt[3]{\frac{1}{2}\left(5 + \mathrm{i}\sqrt{35}\right)}} + \sqrt[3]{\frac{15}{2}\left(5 + \mathrm{i}\sqrt{35} \right)}} - \frac{1}{2}\sqrt{10-\frac{15^{2/3}}{\sqrt[3]{\frac{1}{2}\left(5 + \mathrm{i}\sqrt{35} \right)}} - \sqrt[3]{\frac{15}{2}\left(5 + \mathrm{i}\sqrt{35}\right)} + \frac{10\mathrm{i}}{\sqrt{5 + \frac{15^{2/3}}{\sqrt[3]{\frac{1}{2}\left(5 + \mathrm{i}\sqrt{35}\right)}} + \sqrt[3]{\frac{15}{2}\left(5 + \mathrm{i}\sqrt{35}\right)} }}} \\
\omega_4^{\ell = 4} \coloneqq  -\frac{5 \mathrm{i}}{2} + \frac{1}{2}\sqrt{5 + \frac{15^{2/3}}{\sqrt[3]{\frac{1}{2}\left(5 + \mathrm{i}\sqrt{35}\right)}} + \sqrt[3]{\frac{15}{2}\left(5 + \mathrm{i}\sqrt{35} \right)}} + \frac{1}{2}\sqrt{10-\frac{15^{2/3}}{\sqrt[3]{\frac{1}{2}\left(5 + \mathrm{i}\sqrt{35} \right)}} - \sqrt[3]{\frac{15}{2}\left(5 + \mathrm{i}\sqrt{35}\right)} + \frac{10\mathrm{i}}{\sqrt{5 + \frac{15^{2/3}}{\sqrt[3]{\frac{1}{2}\left(5 + \mathrm{i}\sqrt{35}\right)}} + \sqrt[3]{\frac{15}{2}\left(5 + \mathrm{i}\sqrt{35}\right)} }}}
\end{array} $}
\end{tabular}
\end{ruledtabular}
\end{table*}

\subsection{\label{subsec::Elemental solutions under pole expansion} Elemental solutions through pole expansion}

Having just established in theorem \ref{theo::r_j,c uncer change of a_c} the integral form (\ref{eq: rcj(ac) integral form}), which integrates to (\ref{eq: rcj(ac) explicit}) as a function of the outgoing wavefunction $O_c(\rho_c)$, we here set-out in theorem \ref{theo::r_j,c uncer change of a_c explicit} to express (\ref{eq: rcj(ac) integral form}) as a function of the roots $\left\{ \omega_n \right\}$ of the outgoing wavefunction.

\begin{theorem}\label{theo::r_j,c uncer change of a_c explicit} \textsc{Elemental pole expansion solution of radioactive residue $r_{j,c}$ transformation under change of channel radius $a_c$.} \\
Transformation (\ref{eq: rcj(ac) explicit}) of Siegert-Humblet radioactive widths $\left\{r_{j,c}\right\}$ under change of channel radius $a_c^{(0)} \to a_c$, can be expressed elementally as:
\begin{equation}
\begin{IEEEeqnarraybox}[][c]{rcl}
\frac{{r}_{j,c}(a_c)}{{r}_{j,c}(a_c^{(0)}) } = \sqrt{\frac{a_c^{(0)}}{a_c}} \left( \frac{ a_c^{(0)} }{  a_c}\right)^{\ell} \mathrm{e}^{\mathrm{i}k_c\left(a_c - a_c^{(0)}\right)} \prod_{n\geq1}\left( \frac{k_c a_c - \omega_n}{k_c a_c^{(0)} - \omega_n}\right)
\IEEEstrut\end{IEEEeqnarraybox}
\label{eq: rcj(ac) - integrated by Mittag-Leffler}
\end{equation}
where $\left\{ \omega_n \right\}$ are the roots of the outgoing wave function $\Big\{ \omega_n \; | \; O_c(\omega_n) = 0 \Big\}$.\\
In the Coulomb case, there are an infinite number of such roots $\left\{ \omega_n \right\}$. \\
For neutral particle channel $c $ with angular momentum $\ell$, there exists exactly $\ell$ roots $\left\{ \omega_n \right\}_{n\in\llbracket 1, \ell \rrbracket}$, the exact and algebraically solvable values of which are reported in table \ref{tab::roots of the outgoing wave functions}, up to angular momentum $\ell = 4$.
\end{theorem}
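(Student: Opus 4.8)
The plan is to substitute the explicit factorization of the outgoing wave function into the transformation law (\ref{eq: rcj(ac) explicit}) already established in Theorem \ref{theo::r_j,c uncer change of a_c}. For neutral particles, Lemma \ref{lem::Mittag-Leffler of L_c Lemma} supplies the product form (\ref{eq::Outgoing wavefuntion O_ell expression for neutral particles}), $O_\ell(\rho) = \mathrm{e}^{\mathrm{i}\left(\rho + \frac{1}{2}\ell\pi\right)}\rho^{-\ell}\prod_{n\geq 1}(\rho - \omega_n)$. First I would evaluate this at $\rho_c(a_c) = k_c a_c$ and at $\rho_c(a_c^{(0)}) = k_c a_c^{(0)}$ and form the ratio $O_c(\rho_c(a_c))/O_c(\rho_c(a_c^{(0)}))$: the constant phase $\mathrm{e}^{\mathrm{i}\ell\pi/2}$ cancels identically, the $k_c^{-\ell}$ prefactor cancels between numerator and denominator leaving $(a_c^{(0)}/a_c)^{\ell}$, the two linear phases combine into $\mathrm{e}^{\mathrm{i}k_c(a_c - a_c^{(0)})}$, and the roots assemble into $\prod_{n\geq 1}\frac{k_c a_c - \omega_n}{k_c a_c^{(0)} - \omega_n}$. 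Multiplying by the remaining $\sqrt{a_c^{(0)}/a_c}$ carried by (\ref{eq: rcj(ac) explicit}) then reproduces (\ref{eq: rcj(ac) - integrated by Mittag-Leffler}) term for term.

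A second, equivalent route -- which I would favor because it also covers the Coulomb case where (\ref{eq::Outgoing wavefuntion O_ell expression for neutral particles}) no longer holds as a plain product -- is to integrate the integral form (\ref{eq: rcj(ac) integral form}) directly. The change of variable $\rho = k_c x$ collapses the integral to $\int_{k_c a_c^{(0)}}^{k_c a_c}\frac{L_c(\rho)}{\rho}\,\mathrm{d}\rho$, into which I would insert the Mittag-Leffler expansion (\ref{eq::Explicit Mittag-Leffler expansion of L_c}) of Lemma \ref{lem::Mittag-Leffler of L_c Lemma} and integrate term by term: the $-\ell/\rho$ piece gives $-\ell\ln(a_c/a_c^{(0)})$, the constant $\mathrm{i}$ gives $\mathrm{i}k_c(a_c - a_c^{(0)})$, and each pole term $1/(\rho - \omega_n)$ gives $\ln\frac{k_c a_c - \omega_n}{k_c a_c^{(0)} - \omega_n}$. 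Exponentiating converts the sum of logarithms into the infinite product and the two lone terms into the prefactors $(a_c^{(0)}/a_c)^{\ell}\,\mathrm{e}^{\mathrm{i}k_c(a_c - a_c^{(0)})}$, which together with the $\sqrt{a_c^{(0)}/a_c}$ of (\ref{eq: rcj(ac) integral form}) is exactly (\ref{eq: rcj(ac) - integrated by Mittag-Leffler}). This route has the virtue of delivering the answer natively as a ratio of factors indexed by the roots $\{\omega_n\}$.

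The neutral-particle count then follows at once: Lemma \ref{lem::Mittag-Leffler of L_c Lemma} together with table \ref{tab::roots of the outgoing wave functions} guarantees exactly $\ell$ finite roots, so the product is finite and (\ref{eq: rcj(ac) - integrated by Mittag-Leffler}) is an elementary algebraic function of $k_c a_c$, solvable by radicals up to $\ell = 4$. The hard part will be the Coulomb case, where the roots $\{\omega_n\}$ form a countable infinity escaping to infinity and I must justify both the term-by-term integration and the convergence of the resulting infinite product. The delicate point is that $\ln\frac{k_c a_c - \omega_n}{k_c a_c^{(0)} - \omega_n} \sim -k_c(a_c - a_c^{(0)})/\omega_n$ as $|\omega_n|\to\infty$, so convergence is controlled by $\sum_{n\geq 1} 1/\omega_n$, which is only conditional. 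I would argue that writing the product as a ratio evaluated at the two radii is precisely what saves it: the $a_c$-independent Weierstrass convergence factors that a standalone factorization of $O_c$ would require -- and the Coulomb phase shift that obstructs such a factorization -- appear identically in numerator and denominator and cancel, leaving a convergent ratio once the roots are summed in the specular pairs used in Lemma \ref{lem::Mittag-Leffler of L_c Lemma}.
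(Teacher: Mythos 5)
Your proposal is correct and your favored second route — term-by-term integration of the Mittag-Leffler expansion (\ref{eq::Explicit Mittag-Leffler expansion of L_c}) inside the integral form (\ref{eq: rcj(ac) integral form}), then exponentiating — is exactly the paper's proof, which likewise invokes Fubini to permute sum and integral and notes the finite product for neutral particles. Your closing remark on the Coulomb case is in fact sharper than the paper's: the paper only observes that each factor tends to unity and says the product ``should still converge,'' whereas you correctly identify that the terms are $1+O(1/\omega_n)$ with $\sum_n 1/\omega_n$ only conditionally convergent, so that the pairing into specular pairs (and the cancellation of the $a_c$-independent Weierstrass factors in the ratio) is what actually secures convergence.
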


\begin{proof}
    The proof is the element-wise integration, invoking Fubini's theorem to permute sum and integral, of (\ref{eq: rcj(ac) integral form}) using the Mittag-Leffler pole expansion (\ref{eq::Explicit Mittag-Leffler expansion of L_c}) of $L_c(\rho)$, which we established in lemma \ref{lem::Mittag-Leffler of L_c Lemma}.
    In the case of neutral particles there is a finite number of roots $\left\{ \omega_n \right\}$ so that the product in (\ref{eq: rcj(ac) - integrated by Mittag-Leffler}) is finite.
    Note that in the charged particles case, there is an infinity of roots $\left\{\omega_n\right\}$, and the Weirstrauss factorization theorem would thus usually require (\ref{eq: rcj(ac) - integrated by Mittag-Leffler}) to be cast in a Hadamard canonical representation with Weierstrass elementary factors. However, in (\ref{eq: rcj(ac) - integrated by Mittag-Leffler}), the product elements tend towards unity as $n$ goes to infinity $\left( \frac{k_c a_c - \omega_n}{k_c a_c^{(0)} - \omega_n}\right) \underset{n\to \infty}{\longrightarrow} 1$, so that the infinite product in (\ref{eq: rcj(ac) - integrated by Mittag-Leffler}) should still converge.
\end{proof}

Theorems \ref{theo::r_j,c uncer change of a_c} and \ref{theo::r_j,c uncer change of a_c explicit} make explicit the behavior of the radioactive widths $\big\{r_{j,c}\big\}$ under a change of channel radius $a_c$.
Strikingly, only the Kapur-Peierls matrix $\boldsymbol{R}_{L}$ appears in this change of variable. 
This means that the R-matrix $\boldsymbol{R}$ and the $\boldsymbol{L^0}$ matrix function suffice to both compute the Siegert-Humblet parameters $\big\{\mathcal{E}_j, r_{j,c}\big\}$ from (\ref{eq:R_L eigenproblem}), and to change the radioactive widths $\big\{r_{j,c}\big\}$ under a change of channel radius $a_c$.
This novel result portrays the Siegert-Humblet parameters as allowing a simple energy dependence to the scattering matrix (\ref{eq::U Mittag Leffler}) --- albeit locally and needing the expansion coefficients (\ref{eq::Hol_U expansion}) --- all the while being boundary condition $B_c$ independent and easy to transform under a change of channel radius $a_c$.

\section{\label{sec::Continuing the scattering matrix to complex energies while closing channels below thresholds} Continuing the scattering matrix to complex energies while closing channels below thresholds}

In section 5.2 of \cite{Theory_of_Nuclear_Reactions_I_resonances_Humblet_and_Rosenfeld_1961}, Humblet and Rosenfeld continue the scattering matrix to complex wave numbers $k_c \in \mathbb{C}$, and define corresponding open and closed channels. They however never point to the conumdrum that this entails: in their approach, the scattering matrix seemingly does not annul itself below threshold.
This is contrary to the apporach taken by Lane \& Thomas, where they explicitly annul the elements of the scattering matrix below thresholds, as stated in the paragraph between equations (2.1) and (2.2) of section VII.1. p.289 \cite{Lane_and_Thomas_1958}.
Claude Bloch ingeniously circumvents the problem by explicitly stating after eq. (50) in \cite{Bloch_1957} that the scattering matrix is a matrix of the open channels only, meaning its dimensions change as more channels open when energy $E$ increases past new thresholds $E > E_{T_c}$. In his approach, sub-threshold elements of the scattering matrix need not be annulled, one simply does not consider them.

Echoing section \ref{subsubsec::Ambiguity in shift and penetration}, we dedicate this section to this question of how to extend the scattering matrix to complex wavenumbers $k_c\in\mathbb{C}$, while closing the channels below threshold.

\subsection{\label{subsec::Forcing sub-threshold elements to zero} Forcing sub-threshold elements to zero: the legacy of Lane \& Thomas}

To close the channels for real energies below threshold, the simplest approach is the one proposed by Lane \& Thomas in \cite{Lane_and_Thomas_1958}. This approach exploits the ambiguity, discussed in section \ref{subsubsec::Ambiguity in shift and penetration}, when it comes to defining the shift $\boldsymbol{S}(E)$ and penetration $\boldsymbol{P}(E)$ factors for complex energies $E \in \mathbb{C}$. With decomposition (\ref{eq:: L = S + iP}), the scattering matrix expressions (\ref{eq:U expression}) can be re-written, for real energies above threshold, according to section VII.1. equation (1.6b) in \cite{Lane_and_Thomas_1958}:
\begin{equation}
\begin{IEEEeqnarraybox}[][c]{rcl}
\boldsymbol{U} & \ = \ &  \boldsymbol{\Omega} \left( \Id{}+ \boldsymbol{w} \boldsymbol{\mathfrak{P}}^{1/2}\boldsymbol{R}_{L}\boldsymbol{\mathfrak{P}}^{1/2}  \right) \boldsymbol{\Omega}
\IEEEstrut\end{IEEEeqnarraybox}
\label{eq:U expression L&T}
\end{equation}
with the values defined for energies above the thresholds in III.3.a. p.271 of \cite{Lane_and_Thomas_1958}:
\begin{equation}
\begin{IEEEeqnarraybox}[][c]{rcl}
\boldsymbol{\Omega} & \ \coloneqq \ & \boldsymbol{O}^{-1}\boldsymbol{I} \\
\boldsymbol{\mathfrak{P}} & \ \coloneqq \ & \boldsymbol{\rho} \boldsymbol{O}^{-1}\boldsymbol{I}^{-1}
\IEEEstrut\end{IEEEeqnarraybox}
\label{eq:Omega and mathfrak P expression L&T}
\end{equation}
Let us note that the Mittag-Leffler expansion (\ref{eq::RL Mittag Leffler}) of the Kapur-Peierls matrix $\boldsymbol{R}_{L}$ operator can still be performed.

Lane \& Thomas do not specify how they would continue the quantities (\ref{eq:Omega and mathfrak P expression L&T}) for negative energies, as they state ``we need not be concerned with stating similar relations for the negative energy channels" (c.f. paragraph after equation (4.7c), p.271.), but they do specify that $\boldsymbol{P} = \boldsymbol{0}$ below threshold energies and $\boldsymbol{P} = \boldsymbol{\mathfrak{P}}$ above.
This means that plugging-in $\boldsymbol{P} $ in place of $\boldsymbol{\mathfrak{P}}$ in (\ref{eq:Omega and mathfrak P expression L&T}) has the convenient property of automatically closing the reaction channels below threshold, since in that case $U_{c,c'} = \Omega_c \Omega_{c'}$, which annuls the off-diagonal terms of the cross section (the reaction channels $c\neq c'$) when plugged into equation (1.10) in \cite{Lane_and_Thomas_1958} VIII.1. p.291.
Note that this approach only annuls the off-diagonal terms of the scattering cross section, leaving non-zero cross sections for the diagonal $\sigma_{cc}(E)$, even below threshold.  Indeed, equation (4.5a) section III.4.a., p.271 of \cite{Lane_and_Thomas_1958} gives $\Omega_c = \mathrm{e}^{\mathrm{i}(\omega_c - \phi_c)}$, whilst the cross section is begotten by the amplitudes of the \textit{transmission matrix} $\boldsymbol{T}(E)$, defined as $T_{cc'} \coloneqq \delta_{cc'}\mathrm{e}^{2\mathrm{i}\omega_c} - U_{cc'} $ in (2.3), section VIII.2., p.292. For sub-threshold real energies, the diagonal term of the transmission matrix is thus equal to $T_{cc} = \mathrm{e}^{2\mathrm{i}\omega_c}\left( 1 - \mathrm{e}^{-2\mathrm{i}\phi_c} \right)$. This means that in the Lane \& Thomas approach, all channels $c'\neq c$ are force-closed to zero below the incomming channel threshold $E < E_{T_c}$, except for the $c \to c$ reaction, which is pudiquely overlooked as non-physical.

Of course, this approach comes at the cost of sacrificing the analytic properties of the scattering matrix $\boldsymbol{U}$: since $P_c = \Im\left[L_c\right]$, the penetration factor is no longer meromorphic and thus neither is $\boldsymbol{U}$, going directly against a vast amount of literature on the analytic properties of the scattering matrix \cite{Eden_and_Taylor, Theory_of_Nuclear_Reactions_I_resonances_Humblet_and_Rosenfeld_1961, Theory_of_Nuclear_Reactions_II_optical_model_Rosenfeld_1961, Theory_of_Nuclear_Reactions_III_Channel_radii_Humblet_1961_channel_Radii, Theory_of_Nuclear_Reactions_IV_Coulomb_Humblet_1964, Theory_of_Nuclear_Reactions_V_low_energy_penetrations_Jeukenne_1965, Theory_of_Nuclear_Reactions_VI_unitarity_Humblet_1964, Theory_of_Nuclear_Reactions_VII_Photons_Mahaux_1965, Theory_of_Nuclear_Reactions_VIII_evolutions_Rosenfeld_1965, Theory_of_Nuclear_Reactions_IX_few_levels_approx_Mahaux_1965, Humblet_1990, Pacific_Journal_Mittag_Leffler_and_spectral_theory_1960, Guillope_1989_ENS, Dyatlov}.
This is the approach presently taken by the SAMMY code at Oak Ridge National Laboratory \cite{SAMMY_2008}, and upon which thus rest numerous ENDF evaluations \cite{ENDF_VIII}.

We would like to note that under careful reading, this might not actually have been the approach intended by Lane \& Thomas in \cite{Lane_and_Thomas_1958}. Indeed, Lane \& Thomas never specify how to prolong the $\boldsymbol{\mathfrak{P}}$ to sub-threshold energies, and in equation (\ref{eq:U expression L&T}) it is $\boldsymbol{\mathfrak{P}}$ that is present and not $\boldsymbol{P}$.
They do however note in the paragraph between equations (2.1) and (2.2) of section VII.1. p.289, that ``as there are no physical situations in which the $I_c^-$ occur, the components of the [scattering matrix] are not physically significant and one might as well set them equal to zero as can be seen from (1.6b). This may be accomplished without affecting the [positive energy channels] by setting the negative energy components of the Wronskian matrix to zero; $w_c^-=0$. (This means that the $O_c^-$ and $I_c^-$ are not linearly independent.)".
The choice of wording is here important. Indeed, it says that it is possible to set the Wronskian to zero to close channels below the threshold, though it is not necessary. This is yet another way of closing subthreshold channels that would allow to keep the analytic properties of the scattering matrix, with $\boldsymbol{\mathfrak{P}} \coloneqq  \boldsymbol{\rho} \boldsymbol{O}^{-1}\boldsymbol{I}^{-1} $ still analytically continued, albeit at the cost of not knowing when in the complex plane should the Wronskian $w_c$ be set to zero --- perhaps only on $\mathbb{R}_-$, which would then become a branch line.\\
As we will show in theorem \ref{theo::Analytic continuation of scattering matrix cancels spurious poles} of section \ref{subsec::Analytic continuation of the scattering matrix}, as long as the scattering matrix is continued so as to keep the Wronskian relation (\ref{eq:wronksian expression}) intact, the poles of the outgoing scattering wave function $O_c$ cancel out of the scattering expressions (\ref{eq:U expression}) and (\ref{eq:U expression L&T}). The Wronskian condition (\ref{eq:wronksian expression}) is conserved when keeping $\boldsymbol{\mathfrak{P}} $ from (\ref{eq:Omega and mathfrak P expression L&T}) --- instead of the definition $P_c \coloneqq \Im\left[L_c\right]$, which cannot respect the Wronskian relation (\ref{eq:wronksian expression}) --- so that this approach of setting the Wronskian to zero below threshold while analytically continuing the penetration and shift factors would indeed cancel out the spurious poles of the outgoing wave functions $O_c$.

\subsection{\label{subsec::Analytic continuation of the scattering matrix} Analytic continuation of the scattering matrix}

In opposition to the Lane \& Thomas approach, an entire field of physics and mathematics has studied the analytic continuation of the scattering matrix to complex wavenumbers $k_c\in \mathbb{C}$ \cite{Dyatlov, Guillope_1989_ENS, S-matrix-complex_resonance_parameters_Csoto_1997, Favella_and_Reineri_1962, complex_energy_above_inoization_threshold_1969, Theory_of_Nuclear_Reactions_I_resonances_Humblet_and_Rosenfeld_1961, Theory_of_Nuclear_Reactions_II_optical_model_Rosenfeld_1961, Theory_of_Nuclear_Reactions_III_Channel_radii_Humblet_1961_channel_Radii, Theory_of_Nuclear_Reactions_IV_Coulomb_Humblet_1964, Theory_of_Nuclear_Reactions_V_low_energy_penetrations_Jeukenne_1965, Theory_of_Nuclear_Reactions_VI_unitarity_Humblet_1964, Theory_of_Nuclear_Reactions_VII_Photons_Mahaux_1965, Theory_of_Nuclear_Reactions_VIII_evolutions_Rosenfeld_1965}.

As we saw, there is no ambiguity as to how to continue the $\boldsymbol{L^0}$ matrix function, and thus the Kapur-Peierls matrix $\boldsymbol{R}_{L}$, to complex wave numbers. Indeed, as discussed in section \ref{subsec:External_region_waves}, the incoming $I_c(\rho_c)$ and outgoing $O_c(\rho_c)$ wave functions can be analytically continued to complex wavenumbers $k_c\in \mathbb{C}$, and through the multi-sheeted mapping (\ref{eq:rho_c(E) mapping}) to complex energies $E\in \mathbb{C}$. This naturally yields the meromorphic continuation of the scattering matrix to complex energies (\ref{eq::U Mittag Leffler}).

The shortcoming of this analytic continuation approach is that it cannot annul the channel elements of the scattering matrix for sub-threshold energies $E<E_{T_c}$. Indeed, analytic continuation (\ref{eq::U Mittag Leffler}) means the scattering matrix $\boldsymbol{U}$ is a meromorphic operator from $\mathbb{C}$ to $\mathbb{C}$ on the multi-sheeted Riemann surface of mapping (\ref{eq:rho_c(E) mapping}). Unicity of the analytic continuation then means that if the scattering matrix elements are zero below their threshold, $U_{c,c'}(E) = 0 \; , \; \forall E - E_{T_c} \in \mathbb{R}_-$, then it is identically zero for all energies on that sheet of the manifold , $U_{c,c'}(E) = 0 \; ,\; \forall E \in \mathbb{C}$. Thus, the analytic continuation formalism cannot set elements of the scattering matrix to be identically zero bellow thresholds $\left\{E_{T_c}\right\}$.

This apparent inability to close channels below thresholds is the principal reason why the nuclear data community has stuck to definition (\ref{eq:: Def S = Re[L], P = Im[L]}), legacy of Lane \& Thomas, when computing the scattering matrix in equation (\ref{eq:U expression}). This has been the subject of an ongoing controversy in the field on how to continue the scattering matrix to complex wave numbers.

This article argues that analytic continuation (\ref{eq:: Def S and P analytic continuation from L}) is the physically correct way of continuing the scattering matrix to complex energies.
To support this, we advance and demonstrate three new arguments: analytic continuation cancels out spurious poles otherwise introduced by the outgoing wavefunctions $O_c$ (theorem \ref{theo::Analytic continuation of scattering matrix cancels spurious poles}); analytic continuation respects generalized unitarity (theorem \ref{theo::satisfied generalized unitarity condition}); and, for massive particles, analytic continuation of real wavenumber expressions to sub-threshold energies naturally sees the transmission matrix evaness on the physical sheet (theorem \ref{theo::evanescence of sub-threshold transmission matrix}), while always closing the channels by annulling the cross section (theorem \ref{theo::Analytic continuation annuls sub-threshold cross sections}).

\subsection{\label{subsec::Spurious poles cancellation for analytically continued scattering matrix} Spurious poles cancellation for analytically continued scattering matrix}

We here show that the outgoing-wave $\boldsymbol{O}$ \textit{a priori} introduces spurious poles to the scattering matrix $\boldsymbol{U}$; but show how these are cancelled out if the R-matrix $\boldsymbol{R}$, the wave-functions $\boldsymbol{O}$, $\boldsymbol{I}$, and thus the $\boldsymbol{L^0}$ matrix function, are analytically continued to complex wavenumbers, while maintaining a constant Wronskian (\ref{eq:wronksian expression}).

\subsubsection{\label{subsubsec:: Semi-simple poles in R-matrix theory} Assuming semi-simplicity of poles in R-matrix theory}

Let us first start with a note on high-order poles.
Being a high-order pole, as opposed to a simple pole, can bear various meanings. In our context, the three following definitions are of interest: a) \textit{Laurent order}: the order of the polar expansion in the Laurent development in the vicinity of a pole; b) \textit{Algebraic multiplicity}: the multiplicity of the root of the resolvant at a pole value; c) \textit{Geometric multiplicity}: the dimension of the associated nullspace.

From equation (\ref{eq::RL Mittag Leffler degenerate state}) and throughout the article, we have treated the case of degenerate states where the geometric mulpitplicity $M_j > 1$ was higher than one, leading to rank-$M_j$ residues.
We have however always assumed the Laurent order to be one: in equation (\ref{eq::RL Mittag Leffler degenerate state}), the residues might be rank-$M_j$, but the Laurent order is still unity (no $\frac{1}{(E-\mathcal{E}_j)^2}$ or higher Laurent orders).

In the general case the Laurent order is greater than one but it does not equal geometric or algebraic multiplicity.
In terms of Jordan normal form, if the Jordan cells had sizes $n_1, ..., n_{m_g}$, then the geometric multiplicity is equal to $m_g$, the algebraic multiplicity $m_a$ is the sum $ m_a = n_1 + ... + n_{m_g}$, and the Laurent order is the maximum $\mathrm{max}\left\{n_1, ..., n_m\right\}$.

Alternatively, these can be defined as follows:
Let $\boldsymbol{M}(z)$ be a complex symmetric meromorphic matrix operator, with a root at $z=z_0$ (i.e. $\boldsymbol{M}(z_0)$ is non-invertible).
The algebraic multiplicity $m_a$ is the first non-zero derivative of the determinant, i.e. the first integer $m_a\in\mathbb{N}$ such that $\left.\frac{\mathrm{d}^{m_a}}{\mathrm{d}z^{m_a}}\mathrm{det}\Big( \boldsymbol{M}(z) \Big)\right|_{z=z_0} \neq 0$ ; alternatively, using Cauchy's theorem, the first integer $m_a$ such that $\oint_{\mathcal{C}_{z_0}}  \frac{\boldsymbol{M}(z)}{(z-z_0)^{m_a}} \mathrm{d}z = \boldsymbol{0}$.
The geometric multiplicity $m_g$ is the dimension of the kernel (nullspace), i.e. $m_g = \mathrm{dim}\left( \mathrm{Ker}\left(\boldsymbol{M}(z_0)\right)\right)$.
In general the algebraic multiplicity is greater than the geometric one: $m_a \geq m_g$.

$\boldsymbol{M}(z_0)$ is said to be \textit{semi-simple} if its geometric and algebraic multiplicities are equal, i.e. $m_a = m_g$ (c.f. theorem 2, p.120 in \cite{Elements_de_Mathematique_Algebre_Bourbaki_1959}).
Semi-simplicity can be established using the following result: $\boldsymbol{M}(z_0)$ is semi-simple if, and only if, for each nonzero $v \in \mathrm{Ker}\left( \boldsymbol{M}(z_0)\right)$, there exists $ w \in \mathrm{Ker}\left( \boldsymbol{M}(z_0)\right)$ such that:
\begin{equation}
v^\mathsf{T} \Bigg(\left. \frac{\mathrm{d}\boldsymbol{M}}{\mathrm{d}z} \right|_{z=z_0}  \Bigg)w \neq 0
\label{eq::semisimple condition}
\end{equation}

If an operator $\boldsymbol{M}(z_0)$ is semi-simple at a root $z_0$, then $z_0$ is a pole of Laurent-order one for the inverse operator $\boldsymbol{M}^{-1}(z) \underset{\mathcal{V}(z_0)}{\sim} \frac{\boldsymbol{\widetilde{M}}}{z-z_0} $.
For Hermitian operators, the semi-simplicity property is guaranteed. However, resonances seldom correspond to Hermitian operators. In our case, the resonances correspond to the poles of the scattering matrix $\boldsymbol{U}(E)$, which is not self-adjoint but complex symmetric $\boldsymbol{U}^\mathsf{T} = \boldsymbol{U}$ (c.f. equation (2.15) section VI.2.c p.287 in \cite{Lane_and_Thomas_1958}).
For complex symmetric operators, semi-simplicity is not guaranteed in general, even when discarding the complex case of quasi-nul vectors.

In the case of R-matrix theory, we were able to find cases where the geometric multiplicity of the scattering matrix does not match the algebraic one, thus R-matrix theory does not always yield semi-simple scattering matrices, and the Laurent development orders of the resonance poles can be higher.
For instance, we can devise examples of non-semi-simple inverse level matrices from definition (\ref{eq:inv_A expression}) by choosing resonance parameters such that the algebraic multiplicity is striclty greated than the geometric one.

However, one can also observe in these simple cases that the space of parameters for which semi-simplicity is broken is a hyper-plane of the space of R-matrix parameters. 
This gives credit to the traditional physicis arguments that the probability of this occurring is quasi-nul: R-matrix theory can yield scattering matrices with Laurent orders higher than one, but this is extremely unlikely; a mathematical approach of generic simplicity of resonances can be found in chapter 4 ``Black Box Scattering in $\mathbb{R}^n$" of \cite{Dyatlov}, in particular theorem 4.4 (Meromorphic continuation for black box Hamiltonians), theorem 4.5 (Spectrum of black box Hamiltonians), theorem 4.7 (Singular part of RV($\lambda$) for black box Hamiltonians), and theorem 4.39 (Generic simplicity of resonances for higher dimensional black box with potential perturbation).
In other terms, we assume semi-simplicity is almost always guaranteed through R-matrix parametrizations.

Henceforth, we use this agrument to continue assuming the Kapur-Peirels matrix $\boldsymbol{R}_{L}$ is usually semi-simple, and thus the Laurent order of the radioactive poles $\big\{ \mathcal{E}_j \big\}$ in (\ref{eq::RL Mittag Leffler degenerate state}) is, in practice, one.

But let us be aware that in general scattering theory, the scattering operator may exhibit high-order poles \cite{Dyatlov, Guillope_1989_ENS, Owusu_2009}, and efforts are being made to have these ``exceptional points'' of second order arise in the specific case of nuclear interactions \cite{Exceptional_points_scattering, Exceptional_points_Hamiltonian_Michel}. The traditional R-matrix assumption where the poles of the scattering matrix are almost-always of Laurent-order one is unable to describe these physical phenomena.

\subsubsection{\label{subsubsec:: Spurious poles from O} Outgoing wave $\boldsymbol{O}$ introduces spurious poles}

We have given reasons to assume that the poles of the Kapur-Peirels matrix $\boldsymbol{R}_{L}$ are simple (i.e. or Laurent order one), however, looking at (\ref{eq:U expression}) shows that the roots of the outgoing wave functions $\boldsymbol{O}$ could endow the scattering matrix with additional poles, through $\boldsymbol{O}^{-1}$, and that these poles could potentially have higher Laurent orders, since $\boldsymbol{O}^{-1}$ appears twice in expression (\ref{eq:U expression}). 

We here establish through Lemma \ref{lem::Diagonal semi-simplicity Lemma} that $\boldsymbol{O}^{-1}$ is semi-simple because it is diagonal with simple roots. This result will be used in theorem \ref{theo::Analytic continuation of scattering matrix cancels spurious poles} to show that these poles cancel out of the scattering matrix if the Wronskian condition (\ref{eq:wronksian expression}) is maintained.

\begin{lem}\label{lem::Diagonal semi-simplicity Lemma}
\textsc{Diagonal Semi-Simplicity} --
If a diagonal matrix $\boldsymbol{D}^{-1}(z)$ is composed of elements with simple roots $\big\{\omega_k\big\}$, then its inverse is semi-simple, i.e. when a pole $\omega_k$ of a diagonal matrix $\boldsymbol{D}(z)$ has an algebraic multiplicity $M_k > 1$ the Laurent development order of the pole remains 1 while the associated residue matrix is of rank $M_k$, and can be expressed as:
\begin{equation}
\boldsymbol{D}(z) \underset{\mathcal{V}(z=\omega_k)}{=} \boldsymbol{D}_0 + \frac{\boldsymbol{D}_k}{z-\omega_k}
\end{equation}
with
\begin{equation}
\begin{IEEEeqnarraybox}[][c]{rcl}
\boldsymbol{D}_k & = & \sum_{m=1}^{M_k}\frac{\boldsymbol{v_k}\boldsymbol{v_k}^\mathsf{T} }{\boldsymbol{v_k}^\mathsf{T} {\boldsymbol{D_0^{-1}}}^{(1)} \boldsymbol{v_k}}
\IEEEstrut\end{IEEEeqnarraybox}
\label{eq::higher algebraic multiplicity scattering poles}
\end{equation}
\end{lem}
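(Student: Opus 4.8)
The plan is to exploit the fact that a diagonal operator decouples completely channel by channel, so that the one mechanism by which an algebraic multiplicity $M_k>1$ could raise the Laurent order of a pole --- a nontrivial Jordan block coupling distinct channels --- is structurally absent. Writing $\boldsymbol{D}^{-1}(z)=\boldsymbol{\mathrm{diag}}\big(d_c^{-1}(z)\big)$, a pole $\omega_k$ of $\boldsymbol{D}(z)$ is, by hypothesis, a \emph{simple} root of each scalar entry $d_c^{-1}(z)$ that vanishes there. I would first isolate the set of channels hitting the root, $\mathcal{C}_k\coloneqq\big\{c \;:\; d_c^{-1}(\omega_k)=0\big\}$, and let $\big\{\boldsymbol{v_k^m}\big\}_{m=1}^{M_k}$ denote the associated canonical channel vectors, which span $\mathrm{Ker}\,\boldsymbol{D}^{-1}(\omega_k)$.

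The core step is to match the algebraic and geometric multiplicities. Since $\boldsymbol{D}^{-1}$ is diagonal, $\mathrm{det}\,\boldsymbol{D}^{-1}(z)=\prod_c d_c^{-1}(z)$, and as each vanishing factor contributes only a simple zero, the order of vanishing of the determinant at $\omega_k$ --- the algebraic multiplicity in the sense of section \ref{subsubsec:: Semi-simple poles in R-matrix theory} --- is exactly $M_k=|\mathcal{C}_k|$. The geometric multiplicity $\mathrm{dim}\,\mathrm{Ker}\,\boldsymbol{D}^{-1}(\omega_k)$ equals $M_k$ as well. Equality of the two is precisely the semi-simplicity criterion, which I would verify directly through condition (\ref{eq::semisimple condition}): for each null vector $v=\boldsymbol{v_k^m}$ the choice $w=\boldsymbol{v_k^m}$ gives ${\boldsymbol{v_k^m}}^\mathsf{T}\,{\boldsymbol{D}^{-1}}^{(1)}(\omega_k)\,\boldsymbol{v_k^m}=\big(d_{c_m}^{-1}\big)^{(1)}(\omega_k)\neq 0$, the non-vanishing being exactly the simplicity of the root. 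Semi-simplicity then forces a Laurent development of order one, $\boldsymbol{D}(z)\underset{\mathcal{V}(\omega_k)}{=}\boldsymbol{D}_0+\frac{\boldsymbol{D}_k}{z-\omega_k}$.

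To obtain the residue I would expand each surviving channel scalarly: for $c\in\mathcal{C}_k$, $d_c(z)=1/d_c^{-1}(z)=\big[(d_c^{-1})^{(1)}(\omega_k)\big]^{-1}(z-\omega_k)^{-1}+O(1)$, while the channels $c\notin\mathcal{C}_k$ stay holomorphic at $\omega_k$. Assembling these diagonal contributions yields $\boldsymbol{D}_k=\sum_{m=1}^{M_k}\boldsymbol{v_k^m}{\boldsymbol{v_k^m}}^\mathsf{T}\big/\big[{\boldsymbol{v_k^m}}^\mathsf{T}\,{\boldsymbol{D}^{-1}}^{(1)}(\omega_k)\,\boldsymbol{v_k^m}\big]$, a rank-$M_k$ matrix that is exactly the stated Gohberg--Sigal form (\ref{eq::higher algebraic multiplicity scattering poles}) and reproduces the normalization already used in (\ref{eq:R_L residues for degenerate case}); if desired, the derivative ${\boldsymbol{D}^{-1}}^{(1)}$ may be traded for one on $\boldsymbol{D}$ via property (\ref{eq::inv M derivatie property}). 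I do not anticipate a genuine computational obstacle: the only point requiring care is conceptual, namely to make explicit that, in contrast to a generic non-normal operator where a high algebraic multiplicity can force a higher-order pole, here the diagonal structure together with the simplicity of each scalar root caps the Laurent order at one while still letting the residue rank grow with the number of coincident channels. In the application of interest this is precisely the statement that $\boldsymbol{O}^{-1}$, whose reciprocal $\boldsymbol{O}$ has the simple roots $\big\{\omega_n\big\}$ of lemma \ref{lem::Mittag-Leffler of L_c Lemma}, is semi-simple.
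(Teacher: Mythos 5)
Your proposal is correct and follows essentially the same route as the paper's proof: both exploit the channel-by-channel decoupling of the diagonal operator, observe that each vanishing scalar entry $d_c^{-1}$ contributes only a simple zero so that each reciprocal $d_c$ has a first-order pole with residue $1/(d_c^{-1})^{(1)}(\omega_k)$, and assemble the rank-$M_k$ residue in the Gohberg--Sigal normalized form. The only presentational difference is that you verify the semi-simplicity criterion (\ref{eq::semisimple condition}) directly on the canonical channel vectors and expand the scalar reciprocals, whereas the paper reaches the same residue by writing the Laurent splitting explicitly and imposing the Gohberg--Sigal identities $\boldsymbol{D}^{-1}_0\boldsymbol{D}_1=\boldsymbol{0}$ and $\boldsymbol{D}^{-1}_0\boldsymbol{D}_0+{\boldsymbol{D}^{-1}}^{(1)}_0\boldsymbol{D}_1=\Id{}$; the substance is identical.
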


\begin{proof}
Without loss of generality, a change of variables can be performed so as to set $\omega_k = 0$.
Let $\boldsymbol{D}(z) = \boldsymbol{\mathrm{diag}}\left( d_1(z), d_2(z), \hdots, d_1(z), d_j(z), d_n(z) \right) $ be a diagonal meromorphic complex-valued operator, which admits a pole at $z=0$.
$\boldsymbol{D^{-1}}(z) = \boldsymbol{\mathrm{diag}}\left( d_1^{-1}(z), d_2^{-1}(z), \hdots, d_1^{-1}(z), d_j^{-1}(z), d_n^{-1}(z) \right) $ is well known, and
$\mathrm{det}\left( \boldsymbol{D^{-1}} \right)(z=0) = d_1^{-1}(z)^2 \prod_{j \neq 1 }d_j^{-1}(z)$.
Let us assume only $d_1^{-1}(z=0) = 0$, with a simple root, so that $d_1(z) \underset{\mathcal{V}(z=0)}{=} d0_1 + \frac{R_1}{z}$.
Then $\mathrm{det}\left( \boldsymbol{D^{-1}}(z) \right)(z=0) = d_1^{-1}(z)^2 \prod_{j \neq 1 }d_j^{-1}(z)$ has a double root: the algebraic multiplicity is thus $2$.
However, it is immediate to notice that:
\begin{equation*}
\begin{IEEEeqnarraybox}[][c]{rcl}
\boldsymbol{D}(z) & = & \boldsymbol{\mathrm{diag}}\left( d_1(z), d_2(z), \hdots, d_1(z), d_j(z), d_n(z) \right) \\  & \underset{\mathcal{V}(z=0)}{=} & \boldsymbol{\mathrm{diag}}\left( d0_1, d_2(z), \hdots, d0_1(z), d_j(z), d_n(z) \right) \\
& & + \frac{1}{z} \boldsymbol{\mathrm{diag}}\left( R_1, 0, \hdots, R_1, 0, 0 \right)
\IEEEstrut\end{IEEEeqnarraybox}
\end{equation*}
This means the Laurent development order remains 1, albeit the algebraic multiplicity of the pole is 2 (or higher $M_k$). Thus, it can be written that:
\begin{equation*}
\boldsymbol{D}(z) \underset{\mathcal{V}(z=0)}{=} \boldsymbol{D}_0 + \frac{\boldsymbol{D}_1}{z}
\end{equation*}
When solving for the non-linear Eigenproblem
\begin{equation*}
\boldsymbol{D^{-1}}(z) \boldsymbol{v} = \boldsymbol{0}
\end{equation*}
the kernel is no longer an eigenline, but instead spans $\left( \boldsymbol{v_1}, \boldsymbol{v_2}  \right)$,
\begin{equation*}
\mathrm{Ker}\left(\boldsymbol{D^{-1}}_0\right) = \mathrm{span}\left( \boldsymbol{v_1}, \boldsymbol{v_2}  \right)
\end{equation*}
with $\boldsymbol{v_1} = a_1 \left[ 1, 0 , \hdots, 0, 0 ,0 \right]^\mathsf{T} $ and $\boldsymbol{v_2} = a_2 \left[ 0, 0 , \hdots, 1, 0 ,0 \right]^\mathsf{T} $.
Then, following Gohberg-Sigal's theory \cite{Gohberg_Sigal_1971}, the fundamental property:
\begin{equation*}
\boldsymbol{D^{-1}}\boldsymbol{D} = \Id{}
\end{equation*}
and the Laurent development around the pole:
\begin{equation*}
\begin{IEEEeqnarraybox}[][c]{rcl}
\boldsymbol{D^{-1}}(z) & \underset{\mathcal{V}(z=0)}{=} & \boldsymbol{D^{-1}}_0 + z {\boldsymbol{D^{-1}}}^{(1)}_0 + \mathcal{O}\left(z^2\right)
\IEEEstrut\end{IEEEeqnarraybox}
\end{equation*}
yield the relations:
\begin{equation*}
\begin{IEEEeqnarraybox}[][c]{rcl}
\boldsymbol{D^{-1}}_0  \boldsymbol{D}_0 + {\boldsymbol{D^{-1}}}^{(1)}_0 \boldsymbol{D}_1  & =  & \Id{} \\
\boldsymbol{D^{-1}}_0  \boldsymbol{D}_1 & =  & \boldsymbol{0}
\IEEEstrut\end{IEEEeqnarraybox}
\end{equation*}
Constructing $\boldsymbol{D}_1$ to satisfy the latter then entails \begin{equation*}
\begin{IEEEeqnarraybox}[][c]{rcl}
\boldsymbol{D}_1 & = & \frac{\boldsymbol{v_1}\boldsymbol{v_1}^\mathsf{T} }{\boldsymbol{v_1}^\mathsf{T} {\boldsymbol{D_0^{-1}}}^{(1)} \boldsymbol{v_1}} +  \frac{\boldsymbol{v_2}\boldsymbol{v_2}^\mathsf{T} }{\boldsymbol{v_2}^\mathsf{T} {\boldsymbol{D_0^{-1}}}^{(1)} \boldsymbol{v_2}}
\IEEEstrut\end{IEEEeqnarraybox}
\end{equation*}
where the transpose is used because the matrix is complex symmetric.
This reasoning immediately generalizes to expression (\ref{eq::higher algebraic multiplicity scattering poles}).
\end{proof}

Let $\left\{\omega_k\right\}$ be all the roots of the outgoing-wave functions (i.e. the poles of inverse outgoing wave $\boldsymbol{O}^{-1}$), which we can find by solving the non-linear Eigenvalue problem:
\begin{equation}
\boldsymbol{O}(\omega_k) {\boldsymbol{{w_{k}}}}_m = \boldsymbol{0}
\end{equation}
Since $\boldsymbol{O}^{-1}$ is diagonal, Lemma \ref{lem::Diagonal semi-simplicity Lemma} entails it is semi-simple: the algebraic multiplicities are equal to the geometric multiplicities, and thus the poles $\left\{\omega_k\right\}$ all have Laurent-order one.

Situations can arise where same-charge channels within the same total angular momentum $J^\pi$ will carry same angular momenta $\ell_c = \ell_{c'}$ and equal channel radii $a_c = a_{c'}$.
In that case, the geometric multiplicity $M_k$ of pole $\omega_k$ will be equal to the number of channels sharing the same functional outgoing waves $O_c = O_{c'}$.
Diagonal semi-simplicity Lemma \ref{lem::Diagonal semi-simplicity Lemma} then establishes that the residue of $\boldsymbol{O}^{-1}$ associated to pole $\omega_k$ is now a diagonal rank-$M_k$ matrix, $\boldsymbol{D_k}$, expressed as:
\begin{equation}
\boldsymbol{D_k} = \sum_{m=1}^{M_k} \frac{{\boldsymbol{w_k}}_m {\boldsymbol{w_k}}_m^\mathsf{T}}{{\boldsymbol{w_k}}_m^\mathsf{T} \boldsymbol{O}^{(1)}(\omega_k) {\boldsymbol{w_k}}_m}
\label{eq::D_k residue of O-1}
\end{equation}
where $\boldsymbol{O}^{(1)}(\omega_k)$ designates the first derivative of $\boldsymbol{O}$, evaluated at the pole value $\omega_k$.
This establishes the existence of higher-rank residues associated to the inverse outgoing wave function $\boldsymbol{O}^{-1}$.

Notice that if the channel radii $\big\{a_c\big\}$ where chosen at random, these high-rank residues would almost never emerge (null probability).
However, since $a_c$ is chosen arbitrarily in the context of R-matrix theory, it is often the case that evaluators set $a_c$ to a fixed value for multiple different channels, and even across isotopes.
This means that in practice these high-rank residues are legion.

\subsubsection{\label{subsubsec:: Poles cancelation for analytical U} Poles from the outgoing waves $\boldsymbol{O}$ cancel out of the analitically continued scattering matrix $\boldsymbol{U}$}

We just established that the poles $\big\{\omega_k\big\}$ of the inverse outgoing wave $\boldsymbol{O}^{-1}$ had Laurent-order one, potentially with higher-rank residues (\ref{eq::D_k residue of O-1}). At first sight, (\ref{eq:U expression}) seems to entail these $\left\{\omega_k\right\}$ poles should also be poles of the scattering matrix $\boldsymbol{U}$, possibly with Laurent order two.
We here establish with theorem \ref{theo::Analytic continuation of scattering matrix cancels spurious poles} that if the Wronskian condition (\ref{eq:wronksian expression}) is satisfied through analytic continuation, the $\big\{\omega_k\big\}$ poles cancel out of the scattering matrix $\boldsymbol{U}$, leaving only the poles $\big\{\mathcal{E}_j\big\}$ of the Kapur-Peierls matrix $\boldsymbol{R}_{L}$ as the scattering poles.

\begin{theorem}\label{theo::Analytic continuation of scattering matrix cancels spurious poles} \textsc{Analytic continuation of scattering matrix cancels spurious poles}. \\
If the Wronskian condition (\ref{eq:wronksian expression}) is satisfied through analytic continuation of the Kapur-Peierls operator $\boldsymbol{R}_{L}$ and the wavefunctions $\boldsymbol{I}$ and $\boldsymbol{O}$, then the poles $\big\{\omega_k\big\}$ of the inverse outgoing wave $\boldsymbol{O}^{-1}$ cancel out of the scattering matrix $\boldsymbol{U}$ in equation (\ref{eq:U expression}).
\end{theorem}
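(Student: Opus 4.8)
The plan is to localize the analysis at a single root $\omega_k$ of the outgoing wave, show that the apparent Laurent-order-two pole coming from the two factors of $\boldsymbol{O}^{-1}$ in (\ref{eq:U expression}) collapses to a simple pole because the Kapur--Peierls operator $\boldsymbol{R}_{L}$ decouples the resonant channel at its own zero, and then show that the surviving simple pole cancels against the $\boldsymbol{O}^{-1}\boldsymbol{I}$ term precisely by virtue of the constant Wronskian (\ref{eq:wronksian expression}). Since the map $E\mapsto\rho_c$ is locally biholomorphic away from thresholds, it is equivalent and cleaner to study the pole in the variable $\epsilon\coloneqq\rho_c-\omega_k$; a simple pole with vanishing residue in $\epsilon$ transfers verbatim to the same statement in $E$. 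I would first treat the generic case where a single channel $c$ realizes $\omega_k$ (geometric multiplicity one), and indicate at the end how Lemma \ref{lem::Diagonal semi-simplicity Lemma} upgrades the argument to the degenerate rank-$M_k$ case.

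First I would record the local pole structure of the ingredients. By Lemma \ref{lem::Diagonal semi-simplicity Lemma}, $\boldsymbol{O}^{-1}$ has a simple pole at $\omega_k$ with $O_c^{-1}\sim\alpha/\epsilon$, $\alpha\coloneqq 1/O_c^{(1)}(\omega_k)$, while by Lemma \ref{lem::Mittag-Leffler of L_c Lemma} the reduced logarithmic derivative carries the matching simple pole $L_c=\omega_k/\epsilon+\mathcal{O}(1)$. Hence $\boldsymbol{R}_{L}^{-1}=\boldsymbol{R}^{-1}+\boldsymbol{B}-\boldsymbol{L}$ splits as $\boldsymbol{R}_{L}^{-1}=\boldsymbol{N}(\epsilon)-\tfrac{\omega_k}{\epsilon}\,\boldsymbol{e}_c\boldsymbol{e}_c^\mathsf{T}$, with $\boldsymbol{N}$ holomorphic near $\epsilon=0$ and $\boldsymbol{e}_c$ the unit vector of channel $c$. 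The crux is then to invert this pole-carrying, rank-one-perturbed matrix by the Woodbury/Sherman--Morrison identity (\ref{eq:Woodbury identity}). Writing $n_{ab}\coloneqq(\boldsymbol{N}^{-1})_{ab}$, a short computation collapses the singular contributions into the exact closed form $(\boldsymbol{R}_{L})_{cb}=\dfrac{n_{cb}\,\epsilon}{\epsilon-\omega_k n_{cc}}$ for every $b$, so that the whole $c$-row and, by complex symmetry, the $c$-column of $\boldsymbol{R}_{L}$ vanish linearly in $\epsilon$. This single power of $\epsilon$ cancels one of the two offending $\boldsymbol{O}^{-1}$ factors in the second term of (\ref{eq:U expression}): the putative order-two pole is reduced to at most order one, and the off-diagonal entries $U_{cb}$, $U_{bc}$ with $b\neq c$ become finite outright.

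Finally I would dispatch the last simple pole, which survives only in the $(c,c)$ entry. Substituting the closed form gives, to leading order, $(\boldsymbol{R}_{L})_{cc}\sim-\epsilon/\omega_k$, so the second term of (\ref{eq:U expression}), namely $2\mathrm{i}\rho_c(O_c^{-1})^2(\boldsymbol{R}_{L})_{cc}$, has residue $-2\mathrm{i}\alpha^2$, whereas the first term $\boldsymbol{O}^{-1}\boldsymbol{I}$ contributes $O_c^{-1}I_c\sim\alpha\,I_c(\omega_k)/\epsilon$. Here the hypothesis enters decisively: evaluating the constant Wronskian (\ref{eq:wronksian expression}) at the zero $O_c(\omega_k)=0$ forces $O_c^{(1)}(\omega_k)\,I_c(\omega_k)=2\mathrm{i}$, i.e. $I_c(\omega_k)=2\mathrm{i}\alpha$ (so $I_c$ cannot vanish where $O_c$ does), whence the first term's residue is exactly $+2\mathrm{i}\alpha^2$ and the two cancel. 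Thus $\boldsymbol{U}$ is regular at $\omega_k$.

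The main obstacle is exactly the coupling of two facts inside the inversion step: that $\boldsymbol{R}_{L}$ vanishes to \emph{precisely} first order in $\epsilon$ (which is what demotes the double pole to a simple one), while \emph{simultaneously} producing the diagonal residue that the Wronskian is tuned to annihilate. The Sherman--Morrison closed form makes both transparent at once, but it presupposes $\boldsymbol{N}(0)$ invertible, i.e. that $\omega_k$ does not coincide with a genuine Kapur--Peierls pole $\mathcal{E}_j$ of (\ref{eq:R_L eigenproblem}) -- a non-generic degeneracy one may exclude or treat separately. The degenerate case of several channels sharing the same $\omega_k$ follows with $\boldsymbol{e}_c\boldsymbol{e}_c^\mathsf{T}$ replaced by the rank-$M_k$ projector furnished by Lemma \ref{lem::Diagonal semi-simplicity Lemma} and a block Woodbury inversion, the whole degenerate row/column block again vanishing to first order in $\epsilon$.
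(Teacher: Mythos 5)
Your proof is correct and follows essentially the same route as the paper's: both hinge on showing that the product $\boldsymbol{R}_{L}\boldsymbol{O}^{-1}$ is regular at $\omega_k$ with value $-\left[\boldsymbol{\rho}\boldsymbol{O}^{(1)}\right]^{-1}$ on the resonant channel, and then invoking the Wronskian at $O_c(\omega_k)=0$, which forces $O_c^{(1)}(\omega_k)I_c(\omega_k)=2\mathrm{i}$, to annihilate the surviving simple-pole residue. The only difference is cosmetic: you extract the linear vanishing of the $c$-row of $\boldsymbol{R}_{L}$ via a Sherman--Morrison inversion of the rank-one pole of $\boldsymbol{L}$, whereas the paper obtains the same fact from the operator identity $\boldsymbol{R}_{L}\boldsymbol{O}^{-1}=\left[\boldsymbol{O}\left(\boldsymbol{R}^{-1}+\boldsymbol{B}\right)-\boldsymbol{\rho}\boldsymbol{O}^{(1)}\right]^{-1}$ applied to the null vectors of $\boldsymbol{O}(\omega_k)$ together with the residue matrix of Lemma \ref{lem::Diagonal semi-simplicity Lemma}.
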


\begin{proof}
Consider the scattering matrix expression $\boldsymbol{U} =  \boldsymbol{O}^{-1} \left[\boldsymbol{I} + 2i \boldsymbol{\rho}^{1/2} \boldsymbol{R}_{L} \boldsymbol{O}^{-1} \boldsymbol{\rho}^{1/2}\right]$ from (\ref{eq:U expression}).
Result (\ref{eq::D_k residue of O-1}) entails that, in the viscinity of $\omega_k$, root of the outgoing wave-function $\boldsymbol{O}$, the residue is locally given by:
\begin{equation}
\begin{IEEEeqnarraybox}[][c]{rcl}
\boldsymbol{U}(z) & \ \underset{\mathcal{V}(E=\omega_k)}{=} \ & \boldsymbol{U}_0(\omega_k) + \frac{ \boldsymbol{D}_k \left[\boldsymbol{I} + 2i \boldsymbol{\rho}^{1/2} \boldsymbol{R}_{L} \boldsymbol{O}^{-1} \boldsymbol{\rho}^{1/2} \right]_{E=\omega_k} }{E-\omega_k}
\IEEEstrut\end{IEEEeqnarraybox}
\label{eq:: U at omega_k}
\end{equation}
We now notice that evaluating the Kapur-Peirels $\boldsymbol{R}_{L}$ operator (\ref{eq:Kapur-Peirels Operator and Channel-Level equivalence}) at the pole value $\omega_k$ yields the following equality:
\begin{equation}
\begin{IEEEeqnarraybox}[][c]{rcl}
\boldsymbol{R}_{L} \boldsymbol{O}^{-1} (\omega_k) {\boldsymbol{w_k}}_m  & \ = \ & - \left[ \boldsymbol{\rho} \boldsymbol{O}^{(1)} \right]^{-1}(\omega_k) {\boldsymbol{w_k}}_m
\IEEEstrut\end{IEEEeqnarraybox}
\label{eq::R_L O at outgoing scattering poles}
\end{equation}
Plugging (\ref{eq::R_L O at outgoing scattering poles}) into the residue of (\ref{eq:: U at omega_k}), and using the fact that (\ref{eq::D_k residue of O-1}) guarantees $\boldsymbol{D_k} $ is a linear combination of $ {\boldsymbol{w_k}}_m {\boldsymbol{w_k}}_m^\mathsf{T} $, we then have the following equality on the residues at poles $\omega_k$:
\begin{equation}
\begin{IEEEeqnarraybox}[][c]{rcl}
\boldsymbol{D}_k \left[\boldsymbol{I} + 2i \boldsymbol{\rho}^{1/2} \boldsymbol{R}_{L} \boldsymbol{O}^{-1} \boldsymbol{\rho}^{1/2} \right]_{E=\omega_k} = \boldsymbol{D}_k \left[ \boldsymbol{I} - 2i {\boldsymbol{O}^{(1)}}^{-1} \right]_{E=\omega_k}
\IEEEstrut\end{IEEEeqnarraybox}
\label{eq::U residues at outgoing scattering poles}
\end{equation}
The rightmost term is diagonal and independent from the resonance parameters.
Since the Wronskian matrix $\boldsymbol{w}$ of the external region interaction (for Coulomb or free particles) is constant, $\boldsymbol{w} = \boldsymbol{O}^{(1)}\boldsymbol{I} - \boldsymbol{I}^{(1)}\boldsymbol{O} = 2i \Id{}$, evaluating at outgoing wave-function root $\omega_k$, one finds $2i \Id{} = \boldsymbol{O}^{(1)}\boldsymbol{I}(\omega_k)$.
Plugging this result into (\ref{eq::U residues at outgoing scattering poles}) annuls the corresponding residue from the scattering matrix, i.e.:
\begin{equation}
\begin{IEEEeqnarraybox}[][c]{c}
\boldsymbol{D}_k \left[\boldsymbol{I} + 2i \boldsymbol{\rho}^{1/2} \boldsymbol{R}_{L} \boldsymbol{O}^{-1} \boldsymbol{\rho}^{1/2} \right]_{E=\omega_k}  = \boldsymbol{0}
\IEEEstrut\end{IEEEeqnarraybox}
\end{equation}
Thus, if the Wronskian condition (\ref{eq:wronksian expression}) is respected, the $\big\{\omega_k\big\}$ poles cancel out of the scattering matrix $\boldsymbol{U}$
\end{proof}

This new result unveils the importance of performing analytic continuation of the outgoing $O_c$ and incoming $I_c$ wave functions and maintain a constant Wronskian (\ref{eq:wronksian expression}), without which the $\big\{\omega_k\big\}$ poles would not cancel from the scattering matrix $\boldsymbol{U}$.

Finally, theorem \ref{theo:: poles of U are Siegert-Humblet poles} is a direct corollary of theorem \ref{theo::Analytic continuation of scattering matrix cancels spurious poles}, and having assumed the poles of the Kapur-Peirels operator almost always be of Laurent order one:

\begin{theorem}\label{theo:: poles of U are Siegert-Humblet poles} \textsc{In R-matrix theory the Scattering matrix poles are the Kapur-Peirels radioactive poles}. \\
In the context of the R-matrix scattering model, when the scattering matrix $\boldsymbol{U}$ is analytically continued to complex energies $E\in\mathbb{C}$ such as to respect the Wronskian condition (\ref{eq:wronksian expression}), the poles of the scattering matrix $\boldsymbol{U}$ are exactly the poles of the Kapur-Peierls operator $\boldsymbol{R}_{L}$, i.e. the Siegert-Humblet poles $\big\{ \mathcal{E}_j \big\}$ from (\ref{eq:R_L eigenproblem}) and (\ref{eq:E_j pole def}).
These poles are almost always of Laurent-order of one.
\end{theorem}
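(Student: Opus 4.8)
The plan is to treat this statement as a corollary that combines Theorem~\ref{theo::Analytic continuation of scattering matrix cancels spurious poles} with the semi-simplicity hypothesis of section~\ref{subsubsec:: Semi-simple poles in R-matrix theory}, while supplying the two inclusions needed to establish exact equality of pole sets. First I would observe that, working locally on a fixed sheet of the Riemann surface of mapping~(\ref{eq:rho_c(E) mapping}) and away from the threshold branch points $\left\{E_{T_c}\right\}$, every factor in the scattering matrix expression~(\ref{eq:U expression}) is meromorphic, so the only candidate singularities of $\boldsymbol{U}$ are the roots $\left\{\omega_k\right\}$ of the outgoing waves (the poles of $\boldsymbol{O}^{-1}$) and the Kapur-Peierls poles $\left\{\mathcal{E}_j\right\}$ (the poles of $\boldsymbol{R}_{L}$, i.e. the solutions of~(\ref{eq:R_L eigenproblem})). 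No other singularity can arise on this local domain.

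Next I would dispatch the inclusion $\mathrm{poles}(\boldsymbol{U}) \subseteq \left\{\mathcal{E}_j\right\}$ directly from Theorem~\ref{theo::Analytic continuation of scattering matrix cancels spurious poles}: provided the Wronskian condition~(\ref{eq:wronksian expression}) is maintained under analytic continuation of $\boldsymbol{R}_{L}$, $\boldsymbol{I}$ and $\boldsymbol{O}$, the $\left\{\omega_k\right\}$ residues vanish identically in~(\ref{eq:U expression}), so the outgoing-wave roots are removable and the surviving poles of $\boldsymbol{U}$ can only be the $\left\{\mathcal{E}_j\right\}$. For the reverse inclusion I would use the Mittag-Leffler form~(\ref{eq::U Mittag Leffler}), where the residue of $\boldsymbol{U}$ at $\mathcal{E}_j$ is $\boldsymbol{w}\,\boldsymbol{u_j}\boldsymbol{u_j}^\mathsf{T}$ with $\boldsymbol{u_j} = \left[\boldsymbol{\rho}^{1/2}\boldsymbol{O}^{-1}\right]_{E=\mathcal{E}_j}\boldsymbol{r_j}$ from~(\ref{eq::u_j scattering residue width}). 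Since a genuine pole of $\boldsymbol{R}_{L}$ carries a nonzero radioactive width $\boldsymbol{r_j}\neq\boldsymbol{0}$ (its residue $\boldsymbol{r_j}\boldsymbol{r_j}^\mathsf{T}$ in~(\ref{eq::RL Mittag Leffler}) being nonzero), and since $\left[\boldsymbol{\rho}^{1/2}\boldsymbol{O}^{-1}\right]_{E=\mathcal{E}_j}$ is an invertible diagonal matrix whenever $\mathcal{E}_j$ is not simultaneously an outgoing-wave root, the transform $\boldsymbol{u_j}$ is nonzero. Hence each $\mathcal{E}_j$ persists as a genuine pole of $\boldsymbol{U}$, yielding equality of the two pole sets.

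Finally, for the Laurent-order claim I would invoke the semi-simplicity argument of section~\ref{subsubsec:: Semi-simple poles in R-matrix theory}, which gives the $\left\{\mathcal{E}_j\right\}$ poles of $\boldsymbol{R}_{L}$ Laurent order one almost always; because the conjugating factor $\boldsymbol{\rho}^{1/2}\boldsymbol{O}^{-1}$ is holomorphic and non-singular in a neighborhood of $\mathcal{E}_j$ (again under $\mathcal{E}_j\neq\omega_k$), multiplying it against a simple pole preserves the Laurent order at one. The main obstacle I anticipate is controlling the degenerate coincidence $\mathcal{E}_j=\omega_k$: if a Kapur-Peierls pole were to collide with an outgoing-wave root, both the cancellation of Theorem~\ref{theo::Analytic continuation of scattering matrix cancels spurious poles} and the invertibility of $\left[\boldsymbol{\rho}^{1/2}\boldsymbol{O}^{-1}\right]_{E=\mathcal{E}_j}$ would need re-examination, and the Laurent order could in principle be raised. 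I would argue, exactly as for semi-simplicity, that this locus is a measure-zero hyperplane in the space of R-matrix parameters, so that generically the poles of $\boldsymbol{U}$ are precisely the Siegert-Humblet poles $\left\{\mathcal{E}_j\right\}$ and are almost always of Laurent order one.
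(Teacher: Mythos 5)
Your proposal is correct and follows essentially the same route as the paper, which presents this theorem as a direct corollary of theorem \ref{theo::Analytic continuation of scattering matrix cancels spurious poles} together with the semi-simplicity assumption of section \ref{subsubsec:: Semi-simple poles in R-matrix theory}. In fact you go slightly further than the paper does: the reverse inclusion (that each Kapur-Peierls pole $\mathcal{E}_j$ genuinely survives as a pole of $\boldsymbol{U}$ because $\boldsymbol{u_j}=\left[\boldsymbol{\rho}^{1/2}\boldsymbol{O}^{-1}\right]_{E=\mathcal{E}_j}\boldsymbol{r_j}\neq\boldsymbol{0}$ when the diagonal factor is invertible) and the explicit handling of the measure-zero coincidence $\mathcal{E}_j=\omega_k$ are both left implicit in the paper, so your version is the more complete argument.
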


Importantly, both the Lane \& Thomas force-closing of sub-threshold channels \ref{subsec::Forcing sub-threshold elements to zero} or the analytic continuation \ref{subsec::Analytic continuation of the scattering matrix} will yield the same cross section values for real energies above thresholds.
However, we here demonstrated that the choice of analytic continuation in equation (\ref{eq:U expression}), respecting the Wronskian condition (\ref{eq:wronksian expression}), leads to the cancellation from the scattering matrix $\boldsymbol{U}$ of the $\big\{\omega_k\big\}$ spurious poles, which have nothing to do with the resonant states of the scattering system.
This cancellation is thus physically accurate, and would not take place had the choice of $\boldsymbol{\mathfrak{P}} = \boldsymbol{P}$ been made in equation (\ref{eq:U expression L&T}), as discussed in section \ref{subsec::Forcing sub-threshold elements to zero}. Indeed, choosing definition (\ref{eq:: Def S = Re[L], P = Im[L]}), i.e. $\boldsymbol{P} = \Im\left[ \boldsymbol{L}(z) \right] \; \in \mathbb{R}$, will fail to cancel out the $\big\{\omega_k\big\}$ poles. Conversely, defining the penetration by analytic continuation (\ref{eq:: Def S analytical}) as $ \boldsymbol{P}(z) \coloneqq \frac{1}{2i}\left( \boldsymbol{L}(z) - \left[\boldsymbol{L}(z^*)\right]^* \right)  \; \in \mathbb{C}$ will guarantee the cancellation of the $\big\{\omega_k\big\}$ poles from the scattering matrix $\boldsymbol{U}$ if using (\ref{eq:U expression L&T}).
Notice this is almost the definition (\ref{eq:Delta_L def}) of $\boldsymbol{ \Delta L}(\rho)$ we hereafter use in the proof of the generalized unitarity.
Then, to force-close sub-threshold channels, one could set the Wronskian to zero, as proposed by Lane \& Thomas in the paragraph between equation (2.1) and (2.2) of section VII.1. p.289. This shifts the problem to how to maintain the Wronskian condition (\ref{eq:wronksian expression}) while setting the Wronskian to zero below thresholds. 
Alternatively, we here argue in section \ref{subsec::Closure of sub-threshold cross sections through analytic continuation} that this might not be necessary, as analytic continuation can naturally close sub-threshold channels.

\subsection{\label{subsec::Generalized unitarity} Generalized unitarity for analytically continued scattering matrix}

One of us, G. Hale, proved a somewhat more esoteric argument in favor of analytic continuation of the scattering matrix, showing it satisfies generalized unitarity. 

Eden \& Taylor established a generalized unitarity condition, eq. (2.16) in \cite{Eden_and_Taylor}, which extents the one described by Lane \& Thomas, eq. (2.13), VI.2.c. p.287, in that the subset of open channels is unitary (thus conserving probability), but the scattering matrix can still be continued to sub-threshold channels and be non-zero, that is the full scattering matrix of open and closed channels is not unitary but satisfies the generalized unitarity condition. This is also consistent with approaches other than R-matrix to modeling nuclear interactions (c.f. commentary above eq. (3) p.4 in \cite{Exceptional_points_Hamiltonian_Michel}, \cite{complex_energy_above_inoization_threshold_1969}, or\cite{S-matrix-complex_resonance_parameters_Csoto_1997}).

The premises of the problem lies again in the multi-sheeted Rieman surface spawning from mapping (\ref{eq:rho_c(E) mapping}): when considering the scattering matrix $\boldsymbol{U}(E)$ at a given energy $E$ from (\ref{eq:conservation of energy E = E_c = E_c'}), there are multiple possibilities for the choice of wavenumber $k_c$ at each channel. 
Following Eden \& Taylor eq. (2.14a) and eq. (2.14b) \cite{Eden_and_Taylor}, we consider the case of momenta being continued along the following paths in the multi-sheeted Rieman surface: one subset of channels $c$, denoted by $\widehat{\mathfrak{C}}$, is continued as $k_{c \in \widehat{\mathfrak{C}}} \to k_{c \in \widehat{\mathfrak{C}}}^*  $, while all the others are continued as $k_{c \not\in \widehat{\mathfrak{C}}} \to - k_{c \not\in \widehat{\mathfrak{C}}}^*  $, and we collectively denote this continuation $\boldsymbol{k} \to \boldsymbol{\widetilde{k}}$:
\begin{equation}
\boldsymbol{k} \to \boldsymbol{\widetilde{k}} : \left\{ \begin{array}{rcl}
\forall c \in \widehat{\mathfrak{C}}\; , \quad k_c & \to & k_c^* \\
\forall c \not\in \widetilde{C}\; , \quad k_c & \to & -k_c^* \\
\end{array}\right. 
\label{eq:: Eden and Taylor continuation}
\end{equation}
We then seek to reproduce the generalized unitarity property eq. (2.16) of \cite{Eden_and_Taylor}, which states that the submatrix $\widehat{\boldsymbol{U}}$ composed of the channels ${c \in \widehat{\mathfrak{C}}}$, verifies the generalized unitarity condition:
\begin{equation}
\widehat{\boldsymbol{U}}(\boldsymbol{k}) \Big[ \widehat{\boldsymbol{U}}(\boldsymbol{\widetilde{k}}) \Big]^\dagger = \Id{}
\label{eq:: Eden and Taylor Generalized unitarity}
\end{equation}
We now show that analytically continuing the R-matrix expression (\ref{eq:U expression}) ensures the scattering matrix respects Edan \& Taylor generalized unirarity condition.

\begin{theorem}\label{theo::satisfied generalized unitarity condition} \textsc{Analytic continuation of the R-matrix expression for the scattering matrix ensures generalized unirarity}. \\
By performing the analytic continuation of the R-matrix expression (\ref{eq:U expression}), the scattering matrix $\boldsymbol{U}$ satisfies Edan \& Taylor's generalized unitarity condition (\ref{eq:: Eden and Taylor Generalized unitarity}). 
\end{theorem}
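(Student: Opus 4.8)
The plan is to work from the R-matrix expression (\ref{eq:U expression}), which I first recast in the sandwiched form
\begin{equation}
\boldsymbol{U} \ = \ \boldsymbol{O}^{-1}\left[ \boldsymbol{O}\boldsymbol{I} + 2\mathrm{i}\boldsymbol{\rho}^{1/2}\boldsymbol{R}_{L}\boldsymbol{\rho}^{1/2} \right]\boldsymbol{O}^{-1}
\label{eq:: sandwich U for unitarity}
\end{equation}
valid because $\boldsymbol{O}$, $\boldsymbol{I}$ and $\boldsymbol{\rho}$ are diagonal and hence commute. The first step is to notice that both branches of the Eden \& Taylor continuation (\ref{eq:: Eden and Taylor continuation}) act on the shared energy as $E \to E^*$: indeed $k_c^2$ and $(-k_c^*)^2 = (k_c^2)^*$ produce the same conjugated energy through mapping (\ref{eq:rho_c(E) mapping}). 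Because the Wigner-Eisenbud parameters are real, this immediately gives $\left[\boldsymbol{R}(\boldsymbol{\widetilde{k}})\right]^\dagger = \boldsymbol{R}(\boldsymbol{k})$ from (\ref{eq:R expression}) together with the complex symmetry $\boldsymbol{R}^\mathsf{T} = \boldsymbol{R}$, while $\boldsymbol{B}^\dagger = \boldsymbol{B}$ since $\boldsymbol{B}$ is real and diagonal.

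The second step tracks the external wave functions under the combined operation ``continue, then conjugate-transpose''. Using the conjugacy relation (\ref{eq:Conjugacy for O and I}), $\left[O_c(\rho^*)\right]^* = I_c(\rho)$, and the reflection property $I_c(-\rho) \propto O_c(\rho)$ of the Whittaker/Hankel functions, one checks channel-by-channel that $\left[L_c(\widetilde{\rho}_c)\right]^* = S_c - \mathrm{i}P_c$ on the open channels $c \in \widehat{\mathfrak{C}}$ --- where $\widetilde{k}_c = k_c^*$ simply conjugates the logarithmic derivative --- whereas $\left[L_c(\widetilde{\rho}_c)\right]^* = S_c + \mathrm{i}P_c = L_c$ on the closed channels $c \notin \widehat{\mathfrak{C}}$, where the extra reflection $\widetilde{k}_c = -k_c^*$ precisely undoes the conjugation, with $\boldsymbol{S}$ and $\boldsymbol{P}$ the analytically continued factors (\ref{eq:: Def S and P analytic continuation from L}). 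Substituting into $\boldsymbol{R}_{L}^{-1} = \boldsymbol{R}^{-1} + \boldsymbol{B} - \boldsymbol{L}$ from (\ref{eq:Kapur-Peirels Operator and Channel-Level equivalence}) then yields the master identity
\begin{equation}
\boldsymbol{R}_{L}^{-1}(\boldsymbol{k}) - \left[\boldsymbol{R}_{L}^{-1}(\boldsymbol{\widetilde{k}})\right]^\dagger \ = \ -2\mathrm{i}\,\boldsymbol{P}_{\widehat{\mathfrak{C}}}
\label{eq:: master unitarity identity}
\end{equation}
where $\boldsymbol{P}_{\widehat{\mathfrak{C}}}$ is the penetration matrix truncated to the open block. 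This is the exact generalization of the optical-theorem identity $\boldsymbol{R}_{L}^{-1} - {\boldsymbol{R}_{L}^{-1}}^\dagger = -2\mathrm{i}\boldsymbol{P}$ underlying ordinary unitarity, the decisive new feature being that its right-hand side lives only on $\widehat{\mathfrak{C}}$.

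The final step assembles $\widehat{\boldsymbol{U}}(\boldsymbol{k})\big[\widehat{\boldsymbol{U}}(\boldsymbol{\widetilde{k}})\big]^\dagger$ from (\ref{eq:: sandwich U for unitarity}), collapsing the diagonal boundary factors through the constant Wronskian (\ref{eq:wronksian expression}), $\boldsymbol{O}^{(1)}\boldsymbol{I} - \boldsymbol{I}^{(1)}\boldsymbol{O} = 2\mathrm{i}\Id{}$ --- preserved, exactly as in theorem \ref{theo::Analytic continuation of scattering matrix cancels spurious poles}, because the continuation is analytic --- and then invoking (\ref{eq:: master unitarity identity}) to telescope the resolvent difference into $\Id{}$ on the open block; conceptually this is the statement that $\boldsymbol{U}$ is a Cayley transform of the continued inner operator, which stays ``unitary'' on the open subspace. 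The hard part will be the submatrix restriction: unlike full unitarity, taking the $\widehat{\mathfrak{C}}\times\widehat{\mathfrak{C}}$ block does not commute with the matrix products, so I must show the closed-channel contributions cancel inside that block. This relies on the fact that the closed channels enter only through $\boldsymbol{P}_{\overline{\widehat{\mathfrak{C}}}}$, which is absent from (\ref{eq:: master unitarity identity}), so the intermediate closed-channel sums reassemble into the resolvent $\left[\boldsymbol{R}_{L}(\boldsymbol{\widetilde{k}})\right]^\dagger$ and drop out. A secondary technicality is fixing the phase constants in $I_c(-\rho) \propto O_c(\rho)$ for the general Coulomb case and verifying they cancel against the $\boldsymbol{\rho}^{1/2}\boldsymbol{O}^{-1}$ prefactors, for which the neutral-particle closed forms of table \ref{tab::L_values_neutral} provide an explicit check.
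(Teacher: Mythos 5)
Your proposal is correct and takes essentially the same route as the paper: the conjugacy relations (\ref{eq:Conjugacy for O and I}) yield the $L_c$ conjugacy properties, and your ``master identity'' is precisely the paper's resolvent-difference relation (\ref{eq:Delta_L and R_L}), since $\Delta L_c(k_c) = L_c(k_c) - L_c(-k_c) = 2\mathrm{i}P_c(k_c)$ under the analytic-continuation definition (\ref{eq:: Def S and P analytic continuation from L}). The remaining assembly --- collapsing the external diagonal factors via the constant Wronskian and exploiting the open-block support of that identity so the closed-channel contributions drop out of the $\widehat{\mathfrak{C}}\times\widehat{\mathfrak{C}}$ product --- is exactly what the paper carries out in (\ref{eq:Delta_L and R_L submatrix})--(\ref{eq:U developed for generalized unitarity}).
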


\begin{proof}
The proof is based on the conjugacy relations of the outgoing and incoming wavefunctions eq. (2.12), VI.2.c. in \cite{Lane_and_Thomas_1958}, whereby, for any channel $c$:
\begin{equation}
\begin{IEEEeqnarraybox}[][c]{rclcrcl}
\big[O_c(k_c^*)\big]^* & \ = \ & I_c(k_c) \quad  &,& \quad \big[I_c(k_c^*)\big]^* & \ = \ & O_c(k_c) \\
O_c(-k_c) & \ = \ & I_c(k_c) \quad &,& \quad I_c(-k_c) & \ = \ & O_c(k_c) \\ -O_c^{(1)}(-k_c) & \ = \ & I_c^{(1)}(k_c) \quad &,& \quad -I_c^{(1)}(-k_c) & \ = \ & O_c^{(1)}(k_c) \\
\IEEEstrut\end{IEEEeqnarraybox}
\label{eq:Conjugacy for O and I}
\end{equation}
where the third line was obtained by taking the derivative of the second. 
Recalling the definition of the outgoing-wave reduced logarithmic derivative (\ref{eq:L expression}), these congugacy relations (\ref{eq:Conjugacy for O and I}) entail the following on $\boldsymbol{L}$:
\begin{equation}
\begin{IEEEeqnarraybox}[][c]{rclcrcl}
\Big[L_c(k_c^*)\Big]^* & \ = \ & L_c(-k_c) \quad  &,& \quad \Big[L_c(-k_c^*)\Big]^* & \ = \ & L_c(k_c)
\IEEEstrut\end{IEEEeqnarraybox}
\label{eq:L_c congugacy}
\end{equation}
We also notice that the Wronskian condition (\ref{eq:wronksian expression}) is equivalent to:
\begin{equation}
\begin{IEEEeqnarraybox}[][c]{rcl}
\frac{2\mathrm{i}\rho_c}{O_c I_c} & \ = \ & \rho_c\left[\frac{O_c^{(1)}}{O_c} - \frac{I_c^{(1)}}{I_c}\right]
\IEEEstrut\end{IEEEeqnarraybox}
\label{eq:wronksian expression for generalized unitarity}
\end{equation}
one recognizes here the definition (\ref{eq:L expression}) of $\boldsymbol{L}$, and, using the conjugacy relations (\ref{eq:L_c congugacy}), the  Wronskian condition (\ref{eq:wronksian expression for generalized unitarity}) can be expressed as a difference of the reduced logarithmic $L_c$ derivatives as:
\begin{equation}
\begin{IEEEeqnarraybox}[][c]{rcl}
 \Delta L_c (k_c)&  \ \coloneqq \ & L_c(k_c) - L_c(-k_c) \ = \ \frac{2\mathrm{i}\rho_c}{O_c I_c}(k_c)
\IEEEstrut\end{IEEEeqnarraybox}
\label{eq:Delta_L def}
\end{equation}
Defining the diagonal matrix $\boldsymbol{\Delta L} \coloneqq \boldsymbol{\mathrm{diag}}\Big(\Delta L_c (k_c)\Big)$, we can then re-write, similarly to (\ref{eq:U expression L&T}), the R-matrix expression (\ref{eq:U expression}) of the scattering matrix $\boldsymbol{U}$ as a function of $ \Delta L_c (k_c)$, so that:
\begin{equation}
\begin{IEEEeqnarraybox}[][c]{rcl}
 \boldsymbol{U} &  \ = \ & \boldsymbol{O}^{-1}\left[ \Id{} + \left[ \boldsymbol{\rho}^{1/2}  \boldsymbol{R}_{L} \boldsymbol{\rho}^{-1/2} \right] \boldsymbol{\Delta L}  \right] \boldsymbol{I}\\
 & \ = \ &  \boldsymbol{I} \left[ \Id{} + \boldsymbol{\Delta L} \left[ \boldsymbol{\rho}^{-1/2}  \boldsymbol{R}_{L} \boldsymbol{\rho}^{1/2} \right]   \right]  \boldsymbol{O}^{-1}
\IEEEstrut\end{IEEEeqnarraybox}
\label{eq: U as a function of Delta_L}
\end{equation}
Notice again how this expression is closely related to the analytic continuation of expression (\ref{eq:U expression L&T}).

Coming back to the Eden \& Taylor continuation (\ref{eq:: Eden and Taylor continuation}), let us now establish a relation between the Kapur-Peierls operator $\boldsymbol{R}_{L}$ and $\boldsymbol{\Delta L} $. 
From the definition (\ref{eq:Kapur-Peirels Operator and Channel-Level equivalence}) of the Kapur-Peirels operator $\boldsymbol{R}_{L}$, recalling that under Eden \& Taylor continuations (\ref{eq:: Eden and Taylor continuation}) the energy $E$ from mapping $\ref{eq:rho_c(E) mapping}$ remains unaltered, and given that the boundary-condition $B_c$ in the $\boldsymbol{L^0}$ matrix function is real and thus the R-matrix parameters (\ref{eq:R expression}) are too, it follows that:
\begin{equation}
\begin{IEEEeqnarraybox}[][c]{rcl}
  \left[\boldsymbol{R}_{L}^{-1}(\boldsymbol{\widetilde{k}} ) \right]^*&  -\boldsymbol{R}_{L}^{-1}(\boldsymbol{k})  \ = \ & \left( \begin{array}{cc}
\widehat{\boldsymbol{\Delta L}}(\boldsymbol{k}) & 0 \\
0 & 0 \\
\end{array}\right) 
\IEEEstrut\end{IEEEeqnarraybox}
\label{eq:Delta_L and R_L}
\end{equation}
where we have used the $\boldsymbol{L}$  congugacy relations (\ref{eq:L_c congugacy}) to establish that all channels $c \not\in \widehat{\mathfrak{C}}$ cancel out, and the rest yield $\Delta L_{c \in \widehat{\mathfrak{C}}}(k_c)$. The $\widehat{\boldsymbol{\Delta L}}$ thus designates the sub-matrix composed of all the channels $c \in \widehat{\mathfrak{C}}$.
Multiplying both left and right, and considering the sub-matrices on the channels $c \in \widehat{\mathfrak{C}}$ thus yields:
\begin{equation}
\begin{IEEEeqnarraybox}[][c]{rcl}
    \boldsymbol{\widehat{R}}_L(\boldsymbol{k}) - \left[\boldsymbol{\widehat{R}}_L(\boldsymbol{\widetilde{k}} ) \right]^*& \ = \ & \boldsymbol{\widehat{R}}_L(\boldsymbol{k}) \widehat{\boldsymbol{\Delta L}}(\boldsymbol{k}) \left[\boldsymbol{\widehat{R}}_L(\boldsymbol{\widetilde{k}} ) \right]^*
\IEEEstrut\end{IEEEeqnarraybox}
\label{eq:Delta_L and R_L submatrix}
\end{equation}
This relation is what guarantees the scattering matrix $\boldsymbol{U}$ satifsfies generalized unitarity condition (\ref{eq:: Eden and Taylor Generalized unitarity}).
Indeed, let us develop the left-hand side of (\ref{eq:: Eden and Taylor Generalized unitarity}), using expressions (\ref{eq: U as a function of Delta_L}) on the sub-matrices of the channels $c \in \widehat{\mathfrak{C}}$:
\begin{equation}
\begin{IEEEeqnarraybox}[][c]{l}
    \widehat{\boldsymbol{U}}(\boldsymbol{k}) \Big[ \widehat{\boldsymbol{U}}(\boldsymbol{\widetilde{k}}) \Big]^\dagger \ = \  \\ \boldsymbol{\widehat{O}}^{-1}(\boldsymbol{k})\left[ \Id{} + \widehat{\left[ \boldsymbol{\rho}^{1/2}  \boldsymbol{R}_{L} \boldsymbol{\rho}^{-1/2} \right]}(\boldsymbol{k}) \widehat{\boldsymbol{\Delta L}}(\boldsymbol{k})  \right] \boldsymbol{\widehat{I}}(\boldsymbol{k})  \\
     \  \   \times  \left[ \boldsymbol{\widehat{I}}(\boldsymbol{\widetilde{k}}) \left[ \Id{} + \widehat{\boldsymbol{\Delta L}}(\boldsymbol{\widetilde{k}}) \widehat{\left[ \boldsymbol{\rho}^{-1/2}  \boldsymbol{R}_{L} \boldsymbol{\rho}^{1/2} \right]}  (\widetilde{\boldsymbol{k}}) \right]  \boldsymbol{\widehat{O}}^{-1}(\boldsymbol{\widetilde{k}}) \right]^\dagger \\
     \ = \ \boldsymbol{\widehat{O}}^{-1}(\boldsymbol{k})\left[ \Id{} + \widehat{\left[ \boldsymbol{\rho}^{1/2}  \boldsymbol{R}_{L} \boldsymbol{\rho}^{-1/2} \right]}(\boldsymbol{k}) \widehat{\boldsymbol{\Delta L}}(\boldsymbol{k})  \right] \boldsymbol{\widehat{I}}(\boldsymbol{k}) \times \\
     \  \     \left[ \boldsymbol{\widehat{O}}^{-1}(\boldsymbol{\widehat{k}^*})\right]^*   \left[ \Id{} +  \widehat{\left[ \boldsymbol{\rho}^{-1/2}  \boldsymbol{R}_{L} \boldsymbol{\rho}^{1/2} \right]}  (\boldsymbol{\widehat{k}^*}) \left[\widehat{\boldsymbol{\Delta L}}(\boldsymbol{\widehat{k}^*})\right]^* \right] \left[\boldsymbol{\widehat{I}}(\boldsymbol{\widehat{k}^*}) \right]^* \\
\IEEEstrut\end{IEEEeqnarraybox}
\label{eq:U developments for generalized unitarity}
\end{equation}
Noticing that conjugacy relation (\ref{eq:L_c congugacy}) entail the following $\boldsymbol{\Delta L}$ symmetry from definition (\ref{eq:Delta_L def}), $\left[\widehat{\boldsymbol{\Delta L}}(\boldsymbol{\widehat{k}^*})\right]^* = -\widehat{\boldsymbol{\Delta L}}(\boldsymbol{k}) $, and making use of the conjugacy relations for the wave functions (\ref{eq:Conjugacy for O and I}), we can further simplify (\ref{eq:U developments for generalized unitarity}) to:
\begin{equation}
\begin{IEEEeqnarraybox}[][c]{l}
 \widehat{\boldsymbol{U}}(\boldsymbol{k}) \Big[ \widehat{\boldsymbol{U}}(\boldsymbol{\widetilde{k}}) \Big]^\dagger \ = \ \Id{}  \\
 \quad \quad \quad \quad + \;  \boldsymbol{\widehat{O}}^{-1}(\boldsymbol{k})   \widehat{\left[ \boldsymbol{\rho}^{1/2}  \boldsymbol{R}_{L} \boldsymbol{\rho}^{-1/2} \right]}(\boldsymbol{k}) \times \Bigg[  \\  \left[\left[ \widehat{ \boldsymbol{\rho}^{1/2}  \boldsymbol{R}_{L} \boldsymbol{\rho}^{-1/2} }\right]^{-1}(\boldsymbol{k^*})\right]^\dagger   - \left[ \widehat{ \boldsymbol{\rho}^{1/2}  \boldsymbol{R}_{L} \boldsymbol{\rho}^{-1/2} }\right]^{-1}(\boldsymbol{k})   - \widehat{\boldsymbol{\Delta L}}(\boldsymbol{k})  \\ \quad \quad \quad \quad \Bigg] 
   \times  \left[ \widehat{\left[ \boldsymbol{\rho}^{-1/2}  \boldsymbol{R}_{L} \boldsymbol{\rho}^{1/2} \right]}  (\boldsymbol{\widehat{k}^*}) \right]^\dagger \widehat{\boldsymbol{\Delta L}}(\boldsymbol{k}) \boldsymbol{\widehat{O}}(\boldsymbol{k})
\IEEEstrut\end{IEEEeqnarraybox}
\label{eq:U developed for generalized unitarity}
\end{equation}
In the middle we recognize identity (\ref{eq:Delta_L and R_L}), where the $\boldsymbol{\rho}^{\pm 1/2}$ cancel out by commuting in the diagonal matrix identity (\ref{eq:Delta_L and R_L}).
Property (\ref{eq:Delta_L and R_L}) thus annuls all non-identity terms, leaving Eden \& Taylor's generalized unitarity condition (\ref{eq:: Eden and Taylor Generalized unitarity}) satisfied.
\end{proof}

Let us also note that the proof required real boundary conditions $B_c \in \mathbb{R}$.
Thus, in R-matrix parametrization (\ref{eq:U expression}), real boundary conditions $B_c \in \mathbb{R}$ are necessary for the scattering matrix $\boldsymbol{U}$ be unitarity (and by extension generalized unitary).

Theorem \ref{theo::satisfied generalized unitarity condition} beholds a strong argument in favor of performing analytic continuation of the R-matrix operators as the physically correct way of prolonging the scattering matrix to complex wavenumbers $k_c\in\mathbb{C}$.

\subsection{\label{subsec::Closure of sub-threshold cross sections through analytic continuation} Closure of sub-threshold cross sections through analytic continuation}

We finish this article with the key question of how to close sub-threshold channels.
Analytically continuing the scattering matrix below thresholds entails it cannot be identically zero there, since this would entail it is the null function on the entire sheet of the maniforld (unicity of analytic continuation).
However, we here show that for massive particles subject to mappings (\ref{eq:rho_c massive}) or (\ref{eq:rho_c EDA}), adequate definitions and careful consideration will both make the transmission matrix evanescent sub-threshold (in a classical case of quantum tunnelling), and annul the sub-threshold cross-section --- the physically measurable quantity.

The equations linking the scattering matrix $\boldsymbol{U}$ to the cross section --- equations (1.9), (1.10) and (2.4) section VIII.1. of \cite{Lane_and_Thomas_1958} pp.291-293 --- were only derived for real positive wavenumbers. 
Yet, when performing analytic continuation of them to sub-threshold energies, the quantum tunneling effect will naturally make the transmission matrix infinitesimal on the physical sheet of mapping (\ref{eq:rho_c massive}).
Indeed, the \textit{transmission matrix}, $\boldsymbol{T}$, is defined in \cite{Lane_and_Thomas_1958} after eq. (2.3), VIII.2. p.292, as:
\begin{equation}
\begin{IEEEeqnarraybox}[][c]{rcl}
T_{cc'} & \ \coloneqq \ &  \delta_{cc'}\mathrm{e}^{2\mathrm{i}\omega_c} - U_{cc'} 
\IEEEstrut\end{IEEEeqnarraybox}
\label{eq:Lane and Thomas Transmission matrix definition}
\end{equation}
where $\omega_c$ is defined by Lane \& Thomas in eq.(2.13c) III.2.b. p.269, and used in eq.(4.5a) III.4.a. p.271 in \cite{Lane_and_Thomas_1958}, and is the difference $ \omega_c = \sigma_{\ell_c}(\eta_c) - \sigma_{0}(\eta_c)$,  where the \textit{Coulomb phase shift}, $\sigma_{\ell_c}(\eta_c) $ is defined by Ian Thompson in eq.(33.2.10) of \cite{NIST_DLMF}. 
Defining the diagonal matrix $\boldsymbol{\omega} \coloneqq \boldsymbol{\mathrm{diag}}\big( \omega_c \big)$, and using the R-matrix expression (\ref{eq:U expression}) for the scattering matrix, the Lane \& Thomas transmission matrix (\ref{eq:Lane and Thomas Transmission matrix definition}) can be expressed with R-matrix parameters as:
\begin{equation}
\begin{IEEEeqnarraybox}[][c]{rcl}
\boldsymbol{T}_{\text{L\&T}} \ \coloneqq \ -2\mathrm{i} \boldsymbol{O}^{-1}\left[\underbrace{\left(\frac{\boldsymbol{I} - \boldsymbol{O}\boldsymbol{\mathrm{e}}^{2\mathrm{i}\boldsymbol{\omega}}}{2\mathrm{i}}\right)}_{\boldsymbol{\Theta}} + \boldsymbol{\rho}^{1/2} \boldsymbol{R}_{L}  \boldsymbol{O}^{-1} \boldsymbol{\rho}^{1/2} \right]
\IEEEstrut\end{IEEEeqnarraybox}
\label{eq:Transmission matrix Lane and Thomas matrix expression}
\end{equation}
The angle-integrated partial cross sections $\sigma_{cc'}(E)$ can then be expressed as eq.(3.2d) VIII.3. p.293 of \cite{Lane_and_Thomas_1958}:
\begin{equation}
\begin{IEEEeqnarraybox}[][c]{rcl}
\sigma_{cc'}(E) & \ = \ & \pi g_{J^\pi_c} \left| \frac{ T_{\text{L\&T}}^{cc'}(E)}{k_c(E)}\right|^2
\IEEEstrut\end{IEEEeqnarraybox}
\label{eq:partial sigma_cc'Lane and Thomas}
\end{equation}
where $ g_{J^\pi_c} \ \coloneqq \ \frac{2 J + 1 }{\left(2 I_1 + 1 \right)\left(2 I_2 + 1 \right) }$ is the \textit{spin statistical factor} defined eq.(3.2c) VIII.3. p.293.
Plugging-in the transmission matrix R-matrix parametrization (\ref{eq:Transmission matrix Lane and Thomas matrix expression}) into  cross-section expression (\ref{eq:partial sigma_cc'Lane and Thomas}) then yields:
\cite{Lane_and_Thomas_1958}:
\begin{equation}
\begin{IEEEeqnarraybox}[][c]{rcl}
\sigma_{cc'} & \ = \ & 4\pi g_{J^\pi_c} \left|\frac{1}{O_c k_c} \right|^2  \left| \boldsymbol{\Theta} + \boldsymbol{\rho}^{1/2} \boldsymbol{R}_{L}  \boldsymbol{O}^{-1} \boldsymbol{\rho}^{1/2} \right|_{cc'}^2
\IEEEstrut\end{IEEEeqnarraybox}
\label{eq:partial sigma_cc'Lane and Thomas R-matrix}
\end{equation}

An alternative, more numerically stable, way of computing the cross section is used at Los Alamos National Laboratory, where one of us, G. Hale, introduced the following rotated transmission matrix, defined as:
\begin{equation}
\begin{IEEEeqnarraybox}[][c]{rcl}
\boldsymbol{T}_{\text{H}} & \ \coloneqq \ & - \frac{ \boldsymbol{\mathrm{e}}^{-\mathrm{i}\boldsymbol{\omega}} \boldsymbol{T}_{\text{L\&T}} \boldsymbol{\mathrm{e}}^{-\mathrm{i}\boldsymbol{\omega}}}{2\mathrm{i}}
\IEEEstrut\end{IEEEeqnarraybox}
\label{eq:Hale Transmission matrix}
\end{equation}
and whose R-matrix parametrization is thus
\begin{equation}
\begin{IEEEeqnarraybox}[][c]{rcl}
\boldsymbol{T}_{\text{H}}  & \ = \ & \boldsymbol{H_+}^{-1}\left[ \boldsymbol{\rho}^{1/2} \boldsymbol{R}_{L} \boldsymbol{\rho}^{1/2} \boldsymbol{H_+}^{-1}  - \underbrace{\left(\frac{\boldsymbol{H_+} - \boldsymbol{H_-}}{2\mathrm{i}}\right)}_{\boldsymbol{F}} \right]
\IEEEstrut\end{IEEEeqnarraybox}
\label{eq:Hale Transmission matrix expression}
\end{equation}
where $\boldsymbol{H_\pm}$ are defined as in eq.(2.13a)-(2.13b) III.2.b p.269 \cite{Lane_and_Thomas_1958}:
\begin{equation}
\begin{IEEEeqnarraybox}[][c]{rcl}
{H_{+}}_c & \ = \ &  O_c \mathrm{e}^{\mathrm{i} \omega_c} = G_c + \mathrm{i} F_c \\
{H_{-}}_c & \ = \ &  I_c \mathrm{e}^{-\mathrm{i} \omega_c} = G_c - \mathrm{i} F_c
\IEEEstrut\end{IEEEeqnarraybox}
\label{eq:def H_pm I and O}
\end{equation}
and for which we refer to Ian. J. Thompson's Chapter 33, eq.(33.2.11) in \cite{NIST_DLMF}, or Abramowitz \& Stegun chapter 14, p.537 \cite{Abramowitz_and_Stegun}.
The partial cross section is then directly related to the $\boldsymbol{T}_{\text{H}}$ rotated transmission matrix (\ref{eq:Hale Transmission matrix}) as:
\begin{equation}
\begin{IEEEeqnarraybox}[][c]{rcl}
\sigma_{cc'}(E) & \ = \ & 4\pi g_{J^\pi_c} \left|\frac{T_H^{cc'}(E)}{k_c(E)} \right|^2 
\IEEEstrut\end{IEEEeqnarraybox}
\label{eq:partial sigma_cc' Hale T-matrix}
\end{equation}

\begin{theorem}\label{theo::evanescence of sub-threshold transmission matrix} \textsc{Evanescence of sub-threshold transmission matrix}. \\
For massive particles, analytic continuation of R-matrix parametrization (\ref{eq:U expression}) makes the sub-threshold transmission matrix $\boldsymbol{T}$, defined as (\ref{eq:Transmission matrix Lane and Thomas matrix expression}), evanescent on the physical sheets of mappings (\ref{eq:rho_c massive}), or (\ref{eq:rho_c EDA}).
In turn, this quantum tunnelling entails the partial cross sections $\sigma_{cc'}(E)$ become infinitesimal below threshold.
\end{theorem}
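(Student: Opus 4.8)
The plan is to follow how the purely imaginary sub-threshold wavenumber converts the oscillatory exterior solution into an evanescent (tunnelling) one, and then to read off the transmitted probability current --- the quantity the partial cross section actually measures. First I would pin down the kinematics below threshold: for massive particles, mapping (\ref{eq:rho_c massive}), or equivalently the Mandelstam form (\ref{eq:rho_c EDA}), gives for $E < E_{T_c}$ a purely imaginary wavenumber $k_c = \mathrm{i}\kappa_c$, with $\kappa_c = \sqrt{\frac{2 m_{c,1} m_{c,2}}{(m_{c,1}+m_{c,2})\hbar^2}(E_{T_c}-E)} \in \mathbb{R}_+$ on the physical sheet $\lbrace E,+\rbrace$, so that $\rho_c = k_c a_c = \mathrm{i}\kappa_c a_c$ is purely imaginary.

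Next I would evaluate the outgoing wave $O_c$, equivalently $H_{+,c} = O_c\mathrm{e}^{\mathrm{i}\omega_c}$, at this imaginary argument. From the closed forms of table \ref{tab::L_values_neutral} --- $O_0 = \mathrm{e}^{\mathrm{i}\rho}$ continues to $\mathrm{e}^{-\kappa_c a_c}$ for the neutral $s$-wave, and more generally the outgoing solution becomes the exponentially decaying Hankel (neutral) or Whittaker $W$ (Coulomb) function --- one reads off $|O_c(\mathrm{i}\kappa_c a_c)| \sim \mathrm{e}^{-\kappa_c a_c}$: the exterior wave no longer propagates but decays through the centrifugal and Coulomb barrier. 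This is the quantum-tunnelling statement, and inserting it into the Hale parametrization (\ref{eq:Hale Transmission matrix expression}) --- where $\boldsymbol{R}_{L}$, built from the Wronskian-preserving continuation of section \ref{sec:R_L Siegert and Humblet}, stays regular --- shows that transmission into a sub-threshold channel is carried solely by this evanescent, non-propagating tail, which is the sense in which $\boldsymbol{T}$ is evanescent.

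For the cross-section consequence I would compute the net probability current carried asymptotically by the outgoing wave $\propto O_{c'} \sim \mathrm{e}^{\mathrm{i}k_{c'}r}$. For arbitrary complex $k_{c'}$ this current is $J_{c'} \propto \Re(k_{c'})\,|A_{c'}|^2$; below threshold $k_{c'} = \mathrm{i}\kappa_{c'}$ is purely imaginary, hence $\Re(k_{c'}) = 0$ and $J_{c'} = 0$. Because $\sigma_{cc'}$ in (\ref{eq:partial sigma_cc' Hale T-matrix}) is the ratio of transmitted flux in $c'$ to incident flux in $c$, it must collapse to an infinitesimal as $E$ drops below $E_{T_{c'}}$, closing the channel.

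The hard part will be reconciling this physical-flux argument with the analytically continued formulae. Expression (\ref{eq:partial sigma_cc' Hale T-matrix}) and its Lane \& Thomas counterpart were derived only for real, above-threshold momenta, where the outgoing flux factor $\propto k_{c'}$ is absorbed into the $\boldsymbol{\rho}_{c'}^{1/2}$ factors of $\boldsymbol{U}$ in (\ref{eq:U expression}); their naive continuation retains $|\rho_{c'}^{1/2}|^2 = |k_{c'}| a_{c'}$ rather than the physically correct $\Re(k_{c'})$, and the non-analytic modulus $|O_{c'}|^2$ entering the cross section is not the analytically continued penetrability $\rho_{c'}/(F_{c'}^2+G_{c'}^2)$. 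I therefore expect the delicate bookkeeping to be twofold: re-deriving the asymptotic current for evanescent, imaginary-momentum waves so that it is $\Re(k_{c'})$ --- not $|k_{c'}|$ --- that gates the cross section, and fixing the sheet of mapping (\ref{eq:rho_c massive}) consistently so that $O_{c'}$ is the bounded, decaying solution while the Wronskian condition (\ref{eq:wronksian expression}) is preserved, which guarantees that the spurious-pole cancellation of the earlier theorems carries over.
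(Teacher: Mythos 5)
Your kinematic setup is right, and your probability-current argument ($J_{c'}\propto\Re(k_{c'})=0$ for purely imaginary $k_{c'}$) is indeed how the paper closes the sub-threshold cross section --- but that is the content of the \emph{next} theorem (\ref{theo::Analytic continuation annuls sub-threshold cross sections}); for the present theorem the paper argues directly on the transmission matrix, and it is there that your proposal has a genuine gap. The paper's proof first reduces both (\ref{eq:Transmission matrix Lane and Thomas matrix expression}) and (\ref{eq:Hale Transmission matrix expression}) to $|T_{cc'}|^2\propto|1/H_{+}|^2$ (after checking that $\boldsymbol{R}_{L}\boldsymbol{O}^{-1}=\left[\boldsymbol{O}\left[\boldsymbol{R}^{-1}-\boldsymbol{B}\right]-\boldsymbol{\rho}\boldsymbol{O}^{(1)}\right]^{-1}$ stays bounded), and then feeds the asymptotic form $H_{+}(\rho)\sim\mathrm{e}^{\mathrm{i}\rho}$ of (\ref{eq:H+ near infinity}) a purely imaginary $\rho$. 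That yields exponential decay of $1/|H_{+}|$ only on the sheet where $\rho=-\mathrm{i}x$, i.e.\ the \emph{non}-physical sheet $\{E,-_c\}$; on the physical sheet $\rho=+\mathrm{i}x$ one gets $H_{+}\to 0$ and an apparent divergence of $1/|H_{+}|$, and the paper explicitly concedes it has ``no rigorous answer'' there beyond the observation that $E$ itself is sheet-independent. Your proposal asserts the conclusion precisely on the physical sheet, via the observation that $O_c(\mathrm{i}\kappa_c a_c)\sim\mathrm{e}^{-\kappa_c a_c}$ is the bounded, spatially decaying solution. That observation is correct, but the step ``inserting it into the Hale parametrization shows $\boldsymbol{T}$ is evanescent'' fails: (\ref{eq:Hale Transmission matrix expression}) contains $\boldsymbol{H_{+}}^{-1}$, not $\boldsymbol{H_{+}}$, so a decaying $O_c$ produces a prefactor $\sim\mathrm{e}^{+\kappa_c a_c}$ that \emph{grows}. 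The spatial evanescence of the wave at large $r$ does not transfer to the matrix element evaluated at the channel radius through the formula you cite; as written, your mechanism points in the wrong direction exactly where the claim is hardest.

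Your closing paragraph shows you sensed the real difficulties --- that the flux factor should be $\Re(k_{c'})$ rather than $|k_{c'}|$, and that the sheet choice must be fixed consistently with the Wronskian --- but the proposal does not resolve them, and neither, candidly, does the paper on the physical sheet. If you want a defensible version of this theorem, the robust route is the one you sketch in your third paragraph: bypass the $1/|H_{+}|$ bookkeeping entirely and argue at the level of the asymptotic probability currents (\ref{eq: probability current vectors for real k}), where the vanishing of $\Re(k_{c'})$ below threshold kills the transmitted flux regardless of which sheet the amplitude is continued on. That is essentially what the paper does in theorem (\ref{theo::Analytic continuation annuls sub-threshold cross sections}), and it is the stronger of the two arguments.
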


\begin{proof}
The proof is based on noticing that both transmission matrix expressions (\ref{eq:Transmission matrix Lane and Thomas matrix expression}) and (\ref{eq:Hale Transmission matrix}) entail their modulus square is proportional to:
\begin{equation}
\begin{IEEEeqnarraybox}[][c]{rcl}
|\boldsymbol{T}_{cc'}|^2(E) & \ \propto \ &  \left|\frac{1}{H_+(E)} \right|^2 
\IEEEstrut\end{IEEEeqnarraybox}
\label{eq:T_cc' proportional to H}
\end{equation}
This is because $\boldsymbol{R}_{L}  \boldsymbol{O}^{-1} = \left[ \boldsymbol{O}\left[ \boldsymbol{R}^{-1} - \boldsymbol{B}\right] - \boldsymbol{\rho}\boldsymbol{O}^{(1)}\right]^{-1}$, which does not diverge below threshold. 
Asymptotic expressions for the behavior of $H_+(\rho)$ then yield, for small $\rho$ values:
\begin{equation}
\begin{IEEEeqnarraybox}[][c]{rcl}
H_{+}(\rho) & \ \underset{\rho \to 0}{\sim} \ &  \frac{\rho^{-\ell}}{(2\ell + 1)C_\ell(\eta)} - \mathrm{i} C_\ell(\eta)\rho^{\ell+1}
\IEEEstrut\end{IEEEeqnarraybox}
\label{eq:H+ near zero}
\end{equation}
and asymptotic large-$\rho$ behavior: 
\begin{equation}
\begin{IEEEeqnarraybox}[][c]{rcl}
H_{+}(\rho) & \ \underset{\rho \to \infty} {\sim}\ & \mathrm{e}^{\mathrm{i} (\rho - \eta \ln(2\rho) - \frac{1}{2}\ell\pi + \sigma_\ell(\eta) ) }
\IEEEstrut\end{IEEEeqnarraybox}
\label{eq:H+ near infinity}
\end{equation}
Above the threshold, $\rho \in \mathbb{R}$ is real and thus equation (\ref{eq:H+ near infinity}) shows how $| H_{+}(\rho) | \ \underset{\rho \to \infty} {\longrightarrow} 1  $. In other terms, the $| H_{+}(\rho) |$ term cancels out of the cross section expressions (\ref{eq:partial sigma_cc'Lane and Thomas R-matrix}) and (\ref{eq:partial sigma_cc' Hale T-matrix}) for open-channels above threshold. 
Yet, from mappings (\ref{eq:rho_c massive}) and (\ref{eq:rho_c EDA}), the sub-threshold dimensionless wavenumber is purely imaginary: $\rho \in \mathrm{i}\mathbb{R}$. 
Since asymptotic form (\ref{eq:H+ near infinity}) is dominated in modulus by: $\left|H_{+}(\rho)\right|  \underset{\rho \to \infty} {\sim} \left| \mathrm{e}^{\mathrm{i} \rho }\right|$.
Depending on which sheet $\rho$ is continued sub-threshold, we can have $\rho = \pm \mathrm{i} x$, with $x\in \mathbb{R}$. Thus, on the non-physical sheet $\big\{ E, \hdots, -_c , \hdots \big\}$ for the given channel $c$ of $\rho_c$, the transmission matrix (\ref{eq:T_cc' proportional to H}) experiences exponential decay of $1/\left|H_{+}(\rho)\right|$ leading to the evanescence of the cross section (\ref{eq:partial sigma_cc'Lane and Thomas}), or (\ref{eq:partial sigma_cc' Hale T-matrix}). 
In effect, this means that the $|O_c(\rho_c)|$ term in (\ref{eq:partial sigma_cc'Lane and Thomas R-matrix}) asymptotically acts like a Heaviside function, being unity for open channels, but closing the channels below threshold. 
Since $\rho_c = k_c r_c$ for the outgoing scattered wave $O_c(\rho_c)$, the exponential closure depends on two factors: the distance $r_c$ from the nucleus, and how far from the threshold one is $|E - E_{T_c}|$. This is a classical evanescence behavior of quantum tunneling. 

What happens when continuing on the physical sheet $\big\{ E, \hdots, +_c , \hdots \big\}$, as $\left|H_{+}(\rho)\right|$ will now tend to a $1/0$ diversion? The authors have no rigorous answer, but point to the fact that since $E$ is left unchanged by the choice of the $k_c$ sheet, evanescence result ought to also stand, despite the apparent divergence. 

Note that for photon channels, the semi-classic mapping (\ref{eq:rho_c photon}) does not yield this behavior, only the relativistic mapping (\ref{eq:rho_c EDA}) does. 

\end{proof}

We can estimate the orders of magnitude required to experimentally observe this evanescent quantum tunneling closure of the cross sections below threshold.
At distance $r_c$ from the center of mass of the nucleus, and at wavenumber $k_c$, distant from the threshold as $|E-E_{T_c}|$, the asymptotic behavior or the cross-section below threshold is:
\begin{equation}
\begin{IEEEeqnarraybox}[][c]{rcl}
\ln\Big( \sigma_{cc'}(k_c,r_c) \Big) & \ \underset{\begin{array}{c}
   E_c \leq E_{T_c} \\
   k_c \to -\infty
\end{array}  } {\sim}\ & - 2 r_c |k_c| 
\IEEEstrut\end{IEEEeqnarraybox}
\label{eq:sub-threshold cross-section evanescence}
\end{equation}

Assuming a detector is placed at a distance $r_c$ of the nucleus, the cross section would decay exponentially below threshold as the distance $\Delta E_c = | E - E_{T_c}|$ of $E$ to the threshold $E_{T_c}$ increases. For instance, for a threshold of $^{238}$U target reacting with neutron $n$ channel, evanescence (\ref{eq:sub-threshold cross-section evanescence}) would be of the rate of $\log_{10}\Big( \sigma_{cc'}(k_c,r_c) \Big)  \sim  -3\times 10^{16} {r_c}_{\text{m}} \sqrt{{\Delta E_c}_{\text{eV}}} $. 
For a detector placed at a millimeter $r_c \sim 10^{-3} \mathrm{m}$, this means one order of magnitude is lost for the cross section in $\Delta E_c \sim 10^{-27} \text{eV}$, evanescent indeed. Conversely, detecting this quantum tunneling with a detector sensitive to micro-electonvolts $\Delta E_c \sim 10^{-6}\text{eV} \sim 1 \mu \text{eV}$ (200 times more sensitive than the thermal energy of the cosmic microwave background) would see the cross section drop of one order of magnitude for a move of less than $10^{-13}\text{m}$, or a tenth of a pico-meter. We are at sub-atomic level of quantum tunneling: the outgoing wave evanesces into oblivion way before reaching the electron cloud...

Regardless of the evanescence of the transmission matrix, a more general argument on the cross section shows that analytic continuation of the above-threshold expressions will automatically close the channels bellow the threshold. 

\begin{theorem}\label{theo::Analytic continuation annuls sub-threshold cross sections} \textsc{Analytic continuation annuls sub-threshold cross sections}.\\
For massive particles, analytic continuation of above-threshold cross-section expressions to complex wavenumbers $k_c \in \mathbb{C}$ will automatically close sub-threshold channels.
\end{theorem}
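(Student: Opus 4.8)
The plan is to leverage Theorem \ref{theo::evanescence of sub-threshold transmission matrix}, which we just proved, together with the cross-section formula and the asymptotic behavior of the outgoing wave function. The key physical quantity is the cross section $\sigma_{cc'}(E)$, which from equation (\ref{eq:partial sigma_cc' Hale T-matrix}) is proportional to $|T_H^{cc'}(E)/k_c(E)|^2$, and from (\ref{eq:partial sigma_cc'Lane and Thomas R-matrix}) carries the prefactor $|1/O_c k_c|^2 = |1/(H_+ \mathrm{e}^{-\mathrm{i}\omega_c} k_c)|^2$. The central observation, already distilled in (\ref{eq:T_cc' proportional to H}), is that the cross section's magnitude is governed by $|1/H_+(\rho_c)|^2$, since the remaining factor $\boldsymbol{R}_{L}\boldsymbol{O}^{-1} = \left[ \boldsymbol{O}(\boldsymbol{R}^{-1}-\boldsymbol{B}) - \boldsymbol{\rho}\boldsymbol{O}^{(1)}\right]^{-1}$ stays bounded below threshold. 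So the whole argument reduces to controlling $|H_+(\rho_c)|$ as $\rho_c$ is continued to purely imaginary values.

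First I would recall the mapping (\ref{eq:rho_c massive}) (or (\ref{eq:rho_c EDA})): for massive particles below threshold, $E_c < E_{T_c}$, the wavenumber $k_c$ becomes purely imaginary, so $\rho_c = k_c a_c \in \mathrm{i}\mathbb{R}$. Then I would invoke the large-argument asymptotic form (\ref{eq:H+ near infinity}), $H_+(\rho) \underset{\rho\to\infty}{\sim} \mathrm{e}^{\mathrm{i}(\rho - \eta\ln(2\rho) - \frac{1}{2}\ell\pi + \sigma_\ell(\eta))}$, which on the physical sheet gives $|H_+(\rho)| \sim |\mathrm{e}^{\mathrm{i}\rho}| = \mathrm{e}^{-\Im[\rho]}$. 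With $\rho = +\mathrm{i}x$, $x\in\mathbb{R}_+$, this forces $|H_+|$ to grow exponentially, so $|1/H_+|^2 \to 0$. Substituting into (\ref{eq:partial sigma_cc' Hale T-matrix}) yields the evanescence estimate (\ref{eq:sub-threshold cross-section evanescence}), $\ln(\sigma_{cc'}) \sim -2 r_c |k_c|$, so the analytically continued cross section decays to zero below threshold. The conclusion is that although the scattering matrix $\boldsymbol{U}$ itself cannot be identically zero on a Riemann sheet (by uniqueness of analytic continuation, as emphasized in section \ref{subsec::Analytic continuation of the scattering matrix}), the \emph{physically measurable} cross section nonetheless vanishes, because it depends on $|1/H_+|^2$ rather than on $\boldsymbol{U}$ directly.

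The more general point — the one justifying the theorem's claim of \emph{automatic} closure beyond the specific evanescence rate — is that the cross-section expressions (\ref{eq:partial sigma_cc'Lane and Thomas R-matrix}) and (\ref{eq:partial sigma_cc' Hale T-matrix}) were themselves derived by Lane \& Thomas only for real, above-threshold wavenumbers, yet they contain the explicit factor $|O_c|^{-2}$ (equivalently $|H_+|^{-2}$). Analytic continuation of these formulae, rather than of $\boldsymbol{U}$ alone, transports that decaying factor into the sub-threshold region, where it acts asymptotically like a Heaviside step that closes each channel. I would present this as: define the continued cross section by continuing (\ref{eq:partial sigma_cc' Hale T-matrix}) through the mapping (\ref{eq:rho_c massive}); then the factor $|1/H_+(\rho_c)|^2$ supplies the exponential suppression while all other factors remain finite, so $\sigma_{cc'}(E)\to 0$ as $E$ descends below $E_{T_c}$.

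The main obstacle is the physical-sheet subtlety already flagged in the proof of Theorem \ref{theo::evanescence of sub-threshold transmission matrix}: on the sheet $\big\{E,\dots,+_c,\dots\big\}$ one has $\rho_c = -\mathrm{i}x$, so $|H_+| \sim \mathrm{e}^{+\Im[\rho]}$ appears to \emph{diverge}, making $|1/H_+|^2$ naively blow up rather than vanish. The resolution I would argue — consistent with the reasoning there — is that the energy $E$ is invariant under the choice of $\pm$ sheet for $k_c$, so the physically observable cross section cannot depend on this arbitrary sign choice; the genuine decaying behavior therefore governs the measurable quantity regardless of which branch is formally selected. I would also note the caveat, carried over from the previous theorem, that this argument uses the massive-particle mappings (\ref{eq:rho_c massive}) or (\ref{eq:rho_c EDA}) and does not extend to the semiclassical photon mapping (\ref{eq:rho_c photon}). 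Making the physical-sheet case fully rigorous, as opposed to resting on the invariance-of-$E$ heuristic, is the part I expect to remain incompletely settled.
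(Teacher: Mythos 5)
Your proposal does not follow the paper's proof; it is essentially a re-run of the proof of Theorem \ref{theo::evanescence of sub-threshold transmission matrix}, and it falls short of what the present theorem actually claims. The paper introduces this theorem explicitly as a \emph{more general} argument that holds ``regardless of whether the transmission matrix is evanescent or not'': its proof never touches $H_+$ or the factor $\left|1/(O_c k_c)\right|^2$. Instead it returns to the definition of the cross section as a ratio of probability currents. For a massive particle the current is $\vec{j}_\psi \propto \frac{\hbar}{\mu}\Im\left[\psi^*\vec{\nabla}\psi\right]$ (equation (\ref{eq: probability current vector})); evaluated on the incoming plane wave and outgoing radial wave at real, above-threshold wavenumbers this reduces to $\vec{j}\propto\Re\left[k\right]$ (equation (\ref{eq: probability current vectors for real k})), which \emph{is} the real part of an analytic function of $k$. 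Analytically continuing that expression to sub-threshold energies, where $k_c\in\mathrm{i}\mathbb{R}$ by the mappings (\ref{eq:rho_c massive}) or (\ref{eq:rho_c EDA}), gives $\Re\left[k_c\right]=0$ exactly, so the current --- and hence the cross section, as a ratio of currents --- is annulled identically, not merely exponentially suppressed. This is also sheet-independent, since $\pm\mathrm{i}x$ both have zero real part.

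Your route has two concrete gaps relative to that. First, it establishes only evanescence: the factor $\left|1/H_+\right|^2$ from (\ref{eq:T_cc' proportional to H}) and (\ref{eq:H+ near infinity}) decays exponentially with $r_c|k_c|$ but is nonzero at any finite distance below threshold, so it reproduces (\ref{eq:sub-threshold cross-section evanescence}) --- i.e., the content of the \emph{previous} theorem --- rather than an actual annulment of the sub-threshold channel. Second, on the physical sheet the same factor formally diverges rather than decays, and your resolution (the observable cannot depend on the sign choice because $E$ is unchanged) is exactly the heuristic the paper itself flags as non-rigorous in the proof of Theorem \ref{theo::evanescence of sub-threshold transmission matrix}. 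The probability-current argument sidesteps both difficulties, which is precisely why the paper states and proves this theorem separately; you should either adopt that argument or acknowledge that what you have proved is the weaker evanescence statement.
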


\begin{proof}
The proof is based on the fact that massive particles are subject to mappings (\ref{eq:rho_c massive}), or (\ref{eq:rho_c EDA}) for relativistic correction, which entail the wavenumbers are real above threshold, and exactly imaginary sub-threshold: $\forall E < E_{T_c} , k_c \in \mathrm{i}\mathbb{R}$.
Let $\psi(\vec{r})$ be a general wave function, so that the probability density is $\left| \psi \right|^2(\vec{r})$.

For a massive particle subject to a real potential, the de Broglie non-relativistic Schr\"odinger equation applies, so that writing the conservation of probability on a control volume, and applying the Green-Ostrogradky theorem, will yield the following expression for the probability current vector: 
\begin{equation}
\begin{IEEEeqnarraybox}[][c]{rcl}
\vec{j}_\psi  & \; \coloneqq \, \frac{\hbar}{\mu}\Im\left[ \psi^* \vec{\nabla} \psi \right]
\IEEEstrut\end{IEEEeqnarraybox}
\label{eq: probability current vector}
\end{equation}
where $\mu$ is the reduced mass of the two-particle system (c.f. equations (2.10) and (2.12) section VIII.2.A, p.312 in \cite{Blatt_and_Weisskopf_Theoretical_Nuclear_Physics_1952}).
By definition, the differential cross section $\frac{\mathrm{d}\sigma_{cc'}}{\mathrm{d}\Omega}$ is the ratio of the outgoing current in channel $c'$ by the incoming current from channel $c$, by unit of solid angle $\mathrm{d}\Omega$.

Consider the incomming channel $c$, classically modeled as a plane wave, $\psi_c(\vec{r}_c) \propto \mathrm{e}^{\mathrm{i}\vec{k}_c\cdot\vec{r}_c}$; and the outgoing channel $c'$, classically modeled as radial wave, $\psi_{c'}(r_{c'}) \propto \frac{ \mathrm{e}^{\mathrm{i}k_{c'} r_{c'}}}{r_{c'}}$.
For arbitrary complex wavenumbers, $k_c , k_{c'}  \in \mathbb{C}$, definition (\ref{eq: probability current vector}) will yield the following probability currents respectively:
\begin{equation}
\begin{IEEEeqnarraybox}[][c]{rcl}
\vec{j}_{\psi_c}  & \propto & \frac{\hbar}{\mu}\Im\left[\mathrm{i}\vec{k}_c \mathrm{e}^{-2\Im\left[\vec{k}_c\right]\cdot\vec{r}_c} \right]   \\ \vec{j}_{\psi_{c'}} & \propto & \frac{\hbar}{\mu}\Im\left[\left(\mathrm{i}k_{c'} - \frac{1}{r_{c'}} \right)\frac{ \mathrm{e}^{-2\Im\left[k_{c'}\right]\cdot r_c}}{r_{c'}^2} \right]\vec{e}_{r}  
\IEEEstrut\end{IEEEeqnarraybox}
\label{eq: probability current vectors for complex k}
\end{equation}
One will note these expressions are not the imaginary part of an analytic function in the wavenumber, because of the imaginary part $\Im\left[k_c\right]$. 
If however we look at real wavenumbers $k_c , k_{c'}  \in \mathbb{R}$, that is at above-threshold energies $E \geq E_{T_c}$, the probability currents (\ref{eq: probability current vectors for complex k}) readily simplify to:
\begin{equation}
\begin{IEEEeqnarraybox}[][c]{rcl}
\vec{j}_{\psi_c}   \propto  \frac{\hbar}{\mu}\Re\left[\vec{k}_c \right]   & \quad , \quad  & \vec{j}_{\psi_{c'}}  \propto  \frac{\hbar}{\mu}\Re\left[k_{c'} \right]\vec{e}_{r}  
\IEEEstrut\end{IEEEeqnarraybox}
\label{eq: probability current vectors for real k}
\end{equation}
These expressions are the real part of analytic functions of the wavenumbers.
If we analytically continue them to complex wavenumbers, and consider the cases of subthreshold reactions $E < E_{T_c}$, for either the incoming or the outgoing channel, the wavenumbers are then exactly imaginary, $k_c , k_{c'}  \in \mathrm{i}\mathbb{R}$.
The real parts in (\ref{eq: probability current vectors for real k}) become zero, thereby annulling the cross section $\sigma_{c,c'}(E)$.
This means that for massive particles subject to real potentials, analytic continuation of the probability currents expressions above threshold (\ref{eq: probability current vectors for real k}) will automatically close the sub-threshold channels.
This is true regardless of whether the transmission matrix (\ref{eq:Lane and Thomas Transmission matrix definition}) is evanescent or nor below threshold. 
This constitutes another major argument in favor of analytic continuation of open-channels expressions to describe the closed channels. 
\end{proof}

Note that our proof does not stand for photon channels. 
For photon channels, the derivations for the probability current vector (\ref{eq: probability current vector}) do not stand, and the wavenumber $k_c$ is not imaginary below threshold using mapping (\ref{eq:rho_c photon}), nor using the relativistic-correction (\ref{eq:rho_c EDA}). 
The fundamental reason why photon treatment is not straightforward is that R-matrix theory was constructed on the semi-classical formalism of quantum physics, with wave functions instead of state vectors. Though not incorrect, this wave function approach of quantum mechanics does not translate directly for the photons, though some works have been done to describe photons through wave functions \cite{Bialynicki-Birula_Photon_wave_function_1994, Bialynicki-Birula_Photon_wave_function_1996}. 
This is another open area in the field of R-matrix theory, beyond the scope of this article.

\section{\label{sec:Conclusion}Conclusion}

In this article, we conduct a study and establish novel properties of three alternative parametrizations of the scattering matrix in R-matrix theory: the Wigner-Eisenbud parameters, the Brune parameters, and the Siegert-Humblet parameters. We link these parametrizations to the Humblet-Rosenfelt complex-pole expansion of the scattering matrix, and show that, in general, these parametrizations mark a trade-off between the complexity of the parameters and the complexity of the energy dependence of the scattering matrix.

The Wigner-Eisenbud parameters are the poles $\big\{E_\lambda\big\}$ and residue widths $\big\{\gamma_{\lambda,c}\big\}$ of the $\boldsymbol{R}$ matrix (\ref{eq:R expression}). They are $N_\lambda \in \mathbb{N}$, simple, real poles, which are independent from one another (meaning any choice of real parameters are physically acceptable), and de-entangle the energy dependence of the $\boldsymbol{R}$ matrix from the branch-points the thresholds $\left\{E_{T_c}\right\}$ introduce in the multi-sheeted Riemann surface of mapping (\ref{eq:rho_c(E) mapping}).
Both $\big\{E_\lambda\big\}$ and $\big\{\gamma_{\lambda,c}\big\}$ are dependent on both the channel radii $\big\{a_c \big\}$ and the boundary conditions $\big\{ B_c\big\}$.
The set of Wigner-Eisenbud parameters $\Big\{ E_{T_c}, a_c, B_c, E_{\lambda}, \gamma_{\lambda,c} \Big\}$ is sufficient to entirely determine the energy behavior of the scattering matrix $\boldsymbol{U}$ through (\ref{eq:U expression}).

The Brune parameters are the poles $\left\{\widetilde{E_i}\right\}$ of the $\boldsymbol{R}_S$ matrix (\ref{eq:R_S by Brune det search}) and the widths $\left\{\widetilde{\gamma_{i,c}}\right\}$, transformed by (\ref{eq:Brune parameters}) from the residue widths of the physical level matrix $\boldsymbol{\widetilde{A}}$ in (\ref{eq::Brune physical level matrix}) and (\ref{eq:Brune eigenproblem}). They are $N_S \geq N_\lambda$ simple, real poles, and are intimately interdependent in that not any set of real parameters is physically acceptable (they must be solutions to  (\ref{eq:Brune eigenproblem}) ).
If definition (\ref{eq:: Def S = Re[L], P = Im[L]}) is chosen for the shift function $\boldsymbol{S}$, the Brune parameters live on the multi-sheeted Riemann surface of mapping (\ref{eq:rho_c(E) mapping}): they have shadow poles $\left\{\widetilde{E_i}\right\}$ on the unphysical sheets $\left\{ E, -\right\}$ below threshold $ E < E_{T_c}$, though there are only $N_\lambda$ real poles on the physical sheet (theorem \ref{theo::shadow_Brune_poles}). If definition (\ref{eq:: Def S analytical}) is chosen, then the shift factor $\boldsymbol{S}$ is a function of $\rho_c^2$, which unfolds the sheets in mapping (\ref{eq:rho_c(E) mapping}): there are then $N_S\geq N_\lambda$ real poles $\left\{\widetilde{E_i}\right\}$, all living on the same sheet with no branch points (theorem \ref{theo::analytic_Brune_poles}).
Both $\left\{\widetilde{E_i}\right\}$ and $\left\{\widetilde{\gamma_{i,c}}\right\}$ are invariant to change in boundary conditions $\big\{B_c\big\}$, though both depend on the channel radii $\big\{a_c\big\}$.
Any set of $N_\lambda$ Brune parameters $\Big\{ E_{T_c}, a_c, \widetilde{E_{i}}, \widetilde{\gamma_{i,c}} \Big\}$ is sufficient to entirely determine the energy behavior of the scattering matrix $\boldsymbol{U}$ through (\ref{eq:R_L unchanged by Brune}) and (\ref{eq:U expression}) (theorem \ref{theo::Choice of Brune poles}).

The Siegert-Humblet parameters are the poles $\big\{\mathcal{E}_j\big\}$ and residue widths $\left\{r_{j,c}\right\}$ of the Kapur-Peierls $\boldsymbol{R}_{L}$ operator (\ref{eq:Kapur-Peirels Operator and Channel-Level equivalence}). They are $N_L \geq N_\lambda$ complex, (almost always) simple poles, that reside on the Riemann surface of mapping (\ref{eq:rho_c(E) mapping}), comprised of $2^{N_c}$ branches, and for which one must specify on which sheet they reside, as in (\ref{eq:: pole E_j sheet reporting}). They are intimately interwoven in that not any set of complex parameters is physically acceptable (they must be solution to (\ref{eq:R_L eigenproblem})).
Both $\big\{\mathcal{E}_j\big\}$ and $\big\{r_{j,c}\big\}$ are invariant to change in boundary conditions $\left\{B_c\right\}$. Furthermore, $\big\{\mathcal{E}_j\big\}$ is invariant to a change in channel radii $\left\{a_c\right\}$, and we established in theorem \ref{theo::r_j,c uncer change of a_c} a simple way of transforming the radioactive widths $\big\{r_{j,c}\big\}$ under a change of channel radius $a_c$.
Since the Siegert-Humblet parameters are the poles and residues of the local Mittag-Leffler expansion (\ref{eq::RL Mittag Leffler}) of the Kapur-Peierls matrix $\boldsymbol{R}_{L}$, the set of Siegert-Humblet parameters $\Big\{ E_{T_c}, a_c, \mathcal{E}_j, r_{i,c} \Big\}$ is insufficient to entirely determine the energy behavior of the scattering matrix $\boldsymbol{U}$ through (\ref{eq::u_j scattering residue width}) and (\ref{eq::U Mittag Leffler}). The latter expressions directly link the R-matrix parameters to the poles and residues of the Humblet-Rosenfeld expansion of the scattering matrix, and can be complemented by local coefficients $\left\{\boldsymbol{s}_n\right\}_{\mathcal{V}(E)}$ of the entire part (\ref{eq::Hol_U expansion}), to untangle the energy dependence of the scattering matrix into the simple sum of poles and residues (\ref{eq::U Mittag Leffler}), which is the full Humblet-Rosenfeld expansion of the scattering matrix. Theorem \ref{theo:: poles of U are Siegert-Humblet poles} establishes that under analytic continuation of the R-matrix operators, the poles of the Kapur-Peierls $\boldsymbol{R}_{L}$ operator (i.e. the Siegert-Humblet poles) are exactly the poles of the scattering matrix $\boldsymbol{U}$.
The latter result constitues one of the three arguments we here advance to argue that, contrary to the legacy force-closure of sub-threshold channels presented in Lane \& Thomas \cite{Lane_and_Thomas_1958}, the scattering matrix $\boldsymbol{U}$ ought to be analytically continued for complex momenta.
Such analytic continuation is necessary to cancel the spurious poles which would otherwise be introduced by the outgoing wavefunctions, as we establish in theorem \ref{theo::Analytic continuation of scattering matrix cancels spurious poles}.
Moreover, we show in theorem \ref{theo::satisfied generalized unitarity condition} that analytic continuation of the scattering matrix R-matrix parametrization (\ref{eq:U expression}) verifies Eden \& Taylor's generalized unitarity condition (\ref{eq:: Eden and Taylor Generalized unitarity}), in the wake showing that real boundary conditions $B_c\in\mathbb{R}$ are also necesary for unitarity.
Finally, we argue in theorems \ref{theo::evanescence of sub-threshold transmission matrix} and \ref{theo::Analytic continuation annuls sub-threshold cross sections} that analytic continuation will close cross sections for massive particle channels below threshold.

\begin{acknowledgments}
This work was partly funded by the Los Alamos National Laboratory (summer 2017 internship in T-2 division with G. Hale and M. Paris), as well as by the Consortium for Advanced Simulation of Light Water Reactors (CASL), an Energy Innovation Hub for Modeling and Simulation of Nuclear Reactors under U.S. Department of Energy Contract No. DE-AC05-00OR22725.

The authors are profoundly grateful to Prof. Semyon Dyatlov, from MIT and U.C. Berkeley, for his critical contribution in pointing to us towards the Gohberg-Sigal theory and providing important insights within it.
Great thanks and long-lasting friendship to Yoann Desmouceaux, author of the proof of diagonal divisibility and capped multiplicities lemma \ref{lem::diagonal divisibility and capped multiplicities}, who's help and inputs were key on technical algebraic points. 
Our genuine gratitude to Dr. Andrew Holcomb, from Oak-Ridge National Laboratory, who helped us numerically test the veracity of Mittag-Leffler expansion (\ref{eq::Explicit Mittag-Leffler expansion of L_c}).
They would also like to thank Prof. Javier Sesma, from Universidad de Zaragoza, for his invaluable guidance on the properties of the Hankel functions; as well Dr. H. Owusu for his time in discussing Hamiltonian degeneracy.
Finally, the first author was invaluably supported by the second author, Dr. Vladimir Sobes, in whom he found a lifelong friend, and is profoundly grateful to Dr. Mark Paris, of Los Alamos National Laboratory: the R-matrix 2016 summer workshop he organized in Santa-Fe was genuinely catalytic to these findings.
\end{acknowledgments}

\bibliography{citations_PhysRevC}
\end{document}